\titleformat*{\section}{\large\bfseries}
\titleformat*{\subsection}{\large\bfseries}
\newcommand{\E}{\operatorname{\mathbb{E}}}
\newcommand{\Var}{\operatorname{Var}}
\newcommand{\dto}{\xrightarrow{d}}
\newcounter{parentnumber}
\newtheorem{theorem}{Theorem}
\newtheorem{assumption}{Assumption}
\newtheorem{proposition}{Proposition}
\newtheorem{lemma}{Lemma}
\DeclareMathOperator*{\plim}{plim}
\newtheorem{remark}{Remark}
\newenvironment{proof}[1][Proof]{\noindent \textbf{#1.} }{\  \rule{0.5em}{0.5em}}
\providecommand{\U}[1]{\protect\rule{.1in}{.1in}}
\begin{document}
\begin{bibunit}[chicago]
	\setstretch{1}
	\title{
{\LARGE Partial Identification under Stratified Randomization}\thanks{We thank seminar participants at the São Paulo School of Economics - FGV and at the 47th Meeting of the Brazilian Econometric Society (SBE) for their valuable comments.}
}

	\author{
	  Bruno Ferman\thanks{São Paulo School of Economics - FGV. Email: \href{mailto:bruno.ferman@fgv.br}{bruno.ferman@fgv.br}}\and Davi Siqueira\thanks{São Paulo School of Economics - FGV. Email: \href{mailto:davi.siqueira@fgv.br}{davi.siqueira@fgv.br}}  \and Vitor Possebom\thanks{São Paulo School of Economics - FGV. Email: \href{mailto:vitor.possebom@fgv.br}{vitor.possebom@fgv.br}}
	}
	\date{}

	\maketitle

	\newsavebox{\tablebox} \newlength{\tableboxwidth}


\begin{center}
	
First Draft: \monthyeardate\today; This Draft: \monthyeardate\today

\

		\large{\textbf{Abstract}}
	\end{center}

This paper develops a unified framework for partial identification and inference in stratified experiments with attrition, accommodating both equal and heterogeneous treatment shares across strata. For equal-share designs, we apply recent theory for finely stratified experiments to Lee bounds, yielding closed-form, design-consistent variance estimators and properly sized confidence intervals. Simulations show that the conventional formula can overstate uncertainty, while our approach delivers tighter intervals. When treatment shares differ across strata, we propose a new strategy, which combines inverse probability weighting and global trimming to construct valid bounds even when strata are small or unbalanced. We establish identification, introduce a moment estimator, and extend existing inference results to stratified designs with heterogeneous shares, covering a broad class of moment-based estimators which includes the one we formulate. We also generalize our results to designs in which strata are defined solely by observed labels.

	\

	\textbf{Keywords:} Partial Identification, Lee Bounds, Attrition, Stratification, Design-Consistent Inference, Inverse Probability Weighting.

        \

        \textbf{JEL Codes:} C12, C14, C24, C93.

	\newpage

	\doublespacing

\section{Introduction}
\label{sec:intro}

Randomized experiments are central to causal inference, but their practical implementation often faces two complications: differential outcome attrition and stratified (blocked) randomization. Attrition can undermine the identification of average treatment effects (ATEs), particularly when missingness is related to treatment status. At the same time, various experiments now employ stratified or blocked assignment, randomizing units within strata to ensure balance and improve efficiency. The combination of these features creates unresolved challenges for valid inference, especially when treatment shares vary across blocks.\footnote{Examples of randomized experiments facing both stratified assignment and differential outcome attrition (or selection into observed outcomes) include the National Job Corps evaluation, where randomization is stratified by center with known, potentially varying assignment probabilities, and wages are observed only for the employed \citep{lee2009training}; the Colombian job-training program Jóvenes en Acción, where randomization is stratified by training site-by-class and follow-up differs by treatment status for men \citep{attanasio2011subsidizing}; the Colombian school voucher program (PACES), where lottery assignment occurs within city and cohort-specific lotteries, and ICFES exam-taking differs by voucher status \citep{angrist2006long}; and Project STAR, where randomization is stratified by school, and ACT/SAT outcomes are observed only for test takers, with class size affecting test taking \citep{krueger2001effect}. For further discussions of stratified randomization and its prevalence in field experiments, see \citet{duflo2007using} and \citet{bruhn2009pursuit}.} In many of these settings, researchers turn to Lee monotone-selection bounds \citep{lee2009training} to address differential attrition. However, the original Lee bounds framework was developed for settings with independent treatment assignment and does not account for stratified randomization, so additional theory is needed to adapt it to such designs.    

To address these issues, we develop a unified framework for partial identification and inference in stratified experiments with outcome attrition, accommodating both equal and heterogeneous treatment shares across blocks. In the equal-share case, we show that Lee’s procedure continues to identify the bounds, but that the conventional standard errors are conservative under this design. To conduct valid inference, we rely on the finely stratified experiment theory of \citet{bai2023efficiency} and apply it to the context of Lee bounds. This provides a closed-form, design-consistent variance estimator that accurately reflects the negative correlation among assignments induced by equal-share randomization, yielding tighter confidence intervals for Lee bounds. We also extend this approach to encompass designs where blocks are formed either on observed covariates or purely on discrete labels such as schools, clinics, or randomization waves, even when no additional covariate balancing occurs within blocks.

In the heterogeneous-shares case, the standard unconditional Lee bounds are invalid, and the conditional version, based on trimming within each stratum, is often unstable, biased, or undefined in finely stratified experiments with small or sparsely treated blocks. To address these issues, we introduce a novel bounding strategy that uses inverse probability weighting and a single global trimming share to construct bounds that remain valid and robust even with small or unbalanced blocks. We establish identification, propose a moment estimator, and develop an inference framework that generalizes to arbitrary, stratum-specific treatment shares.

While our equal-share results rely on a direct application of the framework in \citet{bai2023efficiency}, the heterogeneous-shares case requires a separate analysis. Thus, we build on their results to extend inference theory to stratified designs with heterogeneous treatment shares, encompassing a broad class of moment-based estimators that includes the one we introduce. Stratified assignment with varying treatment shares requires additional arguments and leads to a modified variance decomposition. The resulting inference procedure remains tractable even in complex experiments with many small or unbalanced blocks.

This study contributes to several strands of the economics literature, including partial identification with Lee bounds in the presence of attrition and design-consistent inference under stratified randomization. On the partial identification side, early work on nonparametric bounds for randomized experiments with missing data established core tools for analyzing treatment effects under attrition \citep{horowitz2000nonparametric,manski2003partial}. Recent surveys provide broad overviews of this literature and discuss its implications for empirical work \citep{tamer2010partial,kline2023recent}. Our contribution within this strand is to focus on a simple and widely used monotone-selection procedure and to show how it can be implemented in stratified randomized experiments with attrition in a way that respects the randomization and delivers design-consistent variance estimation.

Within this toolkit, \citet{lee2009training} remains especially influential, offering sharp bounds for the average treatment effect among always-observed units under random assignment and monotonicity. Subsequent work extends this approach to settings where selection patterns vary with covariates. \citet{semenova2025generalized} generalizes Lee’s monotonicity to a covariate-indexed version, and related work develops debiased machine-learning estimators for set- and intersection-bound problems \citep{semenova2023debiased,semenova2025aggint}. Despite these developments, Lee bounds remain widespread in applied work due to their transparency and practicality, which motivates our study of their implementation in stratified experiments.

In parallel, stratified randomization has become standard in field experiments, improving efficiency with fixed treatment shares within covariate-homogeneous or common-shock blocks \citep{bruhn2009pursuit, bai2023efficiency}. Nevertheless, most empirical applications with Lee bounds continue to estimate their standard errors as if treatment were assigned completely at random, ignoring the dependence introduced by stratification. As we show through theoretical analysis and simulation evidence, this practice can substantially overstate uncertainty in finely stratified experiments, leading to confidence intervals that are markedly more conservative than intended and highlighting the need for variance estimators that are consistent with the experimental design.

At the same time, the literature on inference for stratified experiments has primarily focused on point-identified average treatment effects \citep{bai2022inference}, although recent advances, such as the GMM-based framework of \citet{bai2023efficiency}, provide a sufficiently general theoretical foundation for working with the estimators considered in this paper. Existing approaches have yet to combine monotone-selection bounds with the dependence structure induced by stratification, or to explicitly address partial identification when treatment shares vary across blocks, a case in which standard Lee bounds are no longer valid and Lee bounds that include block fixed effects as covariates may be infeasible in finely stratified designs with small or sparsely treated blocks. These gaps motivate the contributions for inference under stratified assignment and attrition introduced in this paper.

The remainder of the paper is organized as follows. Section~\ref{sec:design_equal_share_sec} lays out our framework for stratified experimental designs. Section~\ref{sec:variance_equal_share} presents results for designs with homogeneous treatment shares across strata. Section~\ref{sec:lee_ipw_estimator} develops our approach for designs that allow heterogeneous treatment shares. Section~\ref{sec:lee_ipw_inference} derives inference for these general stratified designs. Section~\ref{sec:conclusion} concludes.

\section{General Setting under Heterogeneous-Shares Stratification}
\label{sec:design_equal_share_sec}

In this section, we consider a stratified design in which treatment shares may vary across blocks, or heterogeneous-shares stratification. The study population consists of $n$ units, indexed by $i = 1, \dots, n$, each associated with observed baseline covariates $X_i \in \mathbb{R}^{d_x}$. Each unit is assigned a binary treatment $D_i \in \{0,1\}$, and we define potential outcomes $(Y_i^*(1), Y_i^*(0))$ and potential selection indicators $(S_i(1), S_i(0)) \in \{0,1\}^2$. The observed outcome and selection indicator are then $S_i = D_i S_i(1) + (1-D_i) S_i(0)$ and $Y_i = S_i\left[D_i Y_i^*(1) + (1-D_i) Y_i^*(0)\right]$, so $Y_i$ is only observed if $S_i = 1$. Let $Q_n$ denote the joint distribution of $\big(Y^{*(n)}(1), Y^{*(n)}(0), S^{(n)}(1),\\ S^{(n)}(0), X^{(n)}\big)$, where the superscript $(n)$ indicates stacking over units $i = 1,\dots,n$. Throughout, we assume i.i.d.\ sampling from a superpopulation, that is, $Q_n = Q^n$, where $Q$ is the marginal distribution of $\big(Y_i^*(1), Y_i^*(0), S_i(1), S_i(0), X_i\big)$.

Heterogeneous-shares stratification refers to partitioning the sample into $G$ blocks, $\Lambda(X^{(n)}) \\= \{\lambda_g : g = 1, \dots, G\}$, with (possibly unequal) sizes $|\lambda_g| = N_g$ so that $\sum_{g=1}^G N_g = n$. Let $b_i \in \{1, \ldots, G\}$ denote the stratum membership of unit $i$, so that $i \in \lambda_{b_i}$. Blocks are constructed so that units within each block are similar in observed covariates, a design known as covariate stratification. Within each block $g$, treatment is assigned so that exactly $T_g \in \{1,\dots, N_g-1\}$ units receive treatment, with block-specific treatment share $\eta_g := T_g / N_g \in (0,1)$. Importantly, we allow $\{\eta_g\}_{g=1}^G$ (equivalently, $\eta_i=\Pr(D_i=1\mid X^{(n)})=\eta_{b_i}$) to depend on the full realized covariate matrix $X^{(n)}$ through the stratification rule $\Lambda(X^{(n)})$, and not only on $X_i$. Formally,
\[
(D_i : i \in \lambda_g) \,\Big|\, X^{(n)} \sim \operatorname{Unif} \left\{ \mathbf{d} \in \{0,1\}^{N_g} : \sum_{i \in \lambda_g} d_i = T_g \right\}, 
\qquad g = 1, \dots, G,
\]
where assignments are independent across blocks and each block treats exactly a share $\eta_g$ of its units (with $\eta_g$ allowed to vary across $g$). This flexible design allows both block sizes and assignment rates to reflect the logistical, ethical, or other substantive features of real-world experiments. Strata might correspond to geographic regions, clinics, schools, or other covariate-based groups, with treatment shares $\eta_g$ chosen to meet power or policy targets.

This blockwise randomization ensures that, conditional on the full set of observed covariates, the collection of potential outcomes and selection indicators $\{Y_i^*(1), Y_i^*(0), S_i(1), S_i(0)\}_{i=1}^n$ is jointly independent of the treatment vector $(D_1, \dots, D_n)$. Each unit in a given block $g$ has assignment probability $\eta_g$. Given this assignment mechanism and the conditions described above, the marginal distribution of the observed data $(X_i, D_i, Y_i, S_i)$ is identical across $i = 1, \dots, n$, and does not depend on $n$. However, due to the fixed numbers of treated units within blocks, the observed data are not generally independent across units. As a result, the collection $\{(X_i, D_i, Y_i, S_i)\}_{i=1}^n$ forms an identically distributed, but dependent, sample from a common distribution. When the $\eta_g$ differ across blocks, unconditional independence between $D_i$ and potential outcomes need not hold at the sample level because treated and control groups weight blocks differently.

Importantly, fixing $T_g$ within each block introduces negative correlation among assignments, which has substantial consequences for inference. Accounting for the exact assignment mechanism is therefore critical for valid inference, as it ensures that variance estimators properly reflect this dependence structure and fully realize the efficiency gains available under design-consistent variance estimation.

To establish the large-sample results used later for inference with heterogeneous treatment shares, we state a set of design and regularity conditions under which our Lee-type estimators satisfy a central limit theorem we will propose later. The conditions are analogous to those in \citet{bai2023efficiency} but extended to allow block-specific treatment shares, specifically with a change in Assumption \ref{ass:Ahet}. They pertain only to the design and regularity required for inference, and identification assumptions are presented in Section \ref{sec:pi_assumptions}.

\begin{assumption}[i.i.d.\ sampling]\label{ass:iid}
Let $Q$ be the marginal distribution of $(Y_i^*(1),Y_i^*(0),S_i(1),\\S_i(0),X_i)$ and let $Q_n$ be the joint distribution of $\big(Y^{*(n)}(1),Y^{*(n)}(0),S^{(n)}(1),S^{(n)}(0),X^{(n)}\big)$. We assume i.i.d.\ sampling from a superpopulation, i.e.\ $Q_n = Q^n$.
\end{assumption}
\begin{assumption}[Covariate balancing]\label{ass:B}
As $n\to\infty$ with $G\to\infty$, block sizes are uniformly bounded, i.e.\ there exists a constant
$\bar N<\infty$ such that
\[
\Pr\!\left(\max_{1\le g\le G} N_g \le \bar N\right)\ \longrightarrow\ 1.
\]
Moreover,
\[
\frac{1}{n}\sum_{g=1}^{G}\ \max_{i,j\in\lambda_g}\|X_i-X_j\|^2 \xrightarrow{p} 0,
\]
so that blocks become asymptotically homogeneous in covariates.
\end{assumption}
\begin{assumption}[Stratification]\label{ass:Ahet}
Given the full covariate matrix $X^{(n)}=(X_1,\dots,X_n)$ (equivalently, conditioning on stratum membership $b_i$ and covariates $X_i$), the collection $\{Y_i^*(1),Y_i^*(0),\\S_i(1),S_i(0)\}_{i=1}^n$ is jointly independent of the treatment vector $(D_1,\dots,D_n)$. Furthermore, conditional on $X^{(n)}$, the blockwise assignment vectors $\{(D_i: i\in\lambda_g)\}_{g=1}^G$ are independent across blocks, and within each block $\lambda_g$ treatment is assigned uniformly over all allocations with exactly $T_g$ treated units:
\[
(D_i : i \in \lambda_g)\,\big|\,X^{(n)} \sim \mathrm{Unif}\!\left\{\mathbf d\in\{0,1\}^{N_g} : \sum_{i\in\lambda_g} d_i=T_g\right\},\qquad g=1,\dots,G,
\]
with possibly heterogeneous shares $\eta_g:=T_g/N_g\in(0,1)$. Here $\eta_g$ (equivalently, $\eta_i=\Pr(D_i=1\mid X^{(n)})=\eta_{b_i}$) is allowed to depend on the full covariate matrix $X^{(n)}$ through the design map $X^{(n)}\mapsto \big(\Lambda(X^{(n)}),\{T_g\}_{g=1}^G\big)$, rather than only on $X_i$.
\end{assumption}
\begin{assumption}[Moment regularity]\label{ass:C} Let $m(\cdot)=(m_s(\cdot):1\le s\le d_\theta)'$. The moment functions are such that

\begin{enumerate}[label=(\alph*)]
\item For every $\epsilon>0$,
\[
\inf_{\theta\in\Theta:\,\|\theta-\theta_0\|>\epsilon}\ \big\|\E\big[m(X_i,D_i,Y_i;\theta)\big]\big\|\;>\;0.
\]

\item $\E[m(X_i,D_i,Y_i;\theta)]$ is differentiable at $\theta_0$ with a nonsingular derivative
\[
M \;=\; \left.\frac{\partial}{\partial\theta'}\,\E\!\left[m(X,D,Y;\theta)\right]\right|_{\theta=\theta_0}\, .
\]

\item For $1\le s\le d_\theta$ and $d\in\{0,1\}$,
\[
\E\!\left[\big(m_s(X,d,Y(d),\theta)-m_s(X,d,Y(d),\theta_0)\big)^2\right]\ \to\ 0
\quad\text{as }\theta\to\theta_0.
\]

\item For $1\le s\le d_\theta$, $\{m_s(x,d,y,\theta):\theta\in\Theta\}$ is pointwise measurable in the sense that there exists a countable set $\Theta^\ast$ such that for each $\theta\in\Theta$, there exists a sequence $\{\theta_m\}\subset\Theta^\ast$ such that $m_s(x,d,y,\theta_m)\to m_s(x,d,y,\theta)$ as $m\to\infty$ for all $x,d,y$.

\item (i) $\displaystyle \sup_{\theta\in\Theta}\ \E\!\left[\|m(X,d,Y(d),\theta)\|\right]<\infty$ for $d\in\{0,1\}$.
(ii) $\{m(x,1,y,\theta):\theta\in\Theta^\ast\}$ and $\{m(x,0,y,\theta):\theta\in\Theta^\ast\}$ are $Q$-Donsker.

\item For $d\in\{0,1\}$, let $\mu_d(x,\theta):=\E[m(X,d,Y(d),\theta)\mid X=x]$.
There exists $L_d<\infty$ such that for all $x,x'\in\mathbb R^{d_x}$,
\[
\sup_{\theta\in\Theta^\ast}\,|\mu_d(x,\theta)-\mu_d(x',\theta)|
\ \le\ L_d\|x-x'\|.
\]
\end{enumerate}
\end{assumption}

\begin{remark}[Label-based stratification]
The setting in this section does not directly extend to designs with many small blocks defined purely by discrete labels, such as schools, clinics, or randomization waves. Appendix~\ref{sec:var_discrete_blocks} treats this case separately, using a label-based triangular-array framework appropriate for experiments with many small strata. The results are somewhat similar, despite different proofs.
\end{remark}

\section{Inference under Equal-Share Stratification}
\label{sec:variance_equal_share}

\subsection{Setup: Equal-Share Stratification}
\label{sec:design_equal_share_real}

This section specializes the design to a common treatment share across strata. Let $\Lambda(X^{(n)})=\{\lambda_g: g=1,\dots,G\}$ denote the observable blocks with equal sizes $N_g=k$ and labels $b_i\in\{1,\dots,G\}$. Within block $g$, exactly $\ell$ units are assigned to treatment without replacement, with a common share $\eta:=\ell/k\in(0,1)$ for all $g$. To formally characterize the assignment mechanism in this equal-share stratification setting, we state the following assumption:
\newcounter{saveassumpES}
\setcounter{saveassumpES}{\value{assumption}}
\renewcommand{\theassumption}{3$'$}
\begin{assumption}[Equal-share stratification]\label{ass:A}
Given the full covariate matrix $X^{(n)}=(X_1,\\\dots,X_n)$ (equivalently, conditioning on stratum membership $b_i$ and covariates $X_i$), the collection $\{Y_i^*(1),Y_i^*(0),S_i(1),S_i(0)\}_{i=1}^n$ is jointly independent of the treatment vector $(D_1,\dots,D_n)$.
Furthermore, conditional on $X^{(n)}$, the blockwise assignment vectors $\{(D_i: i\in\lambda_g)\}_{g=1}^G$ are independent across blocks, and within each block $\lambda_g$ treatment is assigned uniformly over all allocations with exactly $\ell$ treated units:
\[
(D_i: i\in\lambda_g)\,\big|\,X^{(n)} \sim \operatorname{Unif}\!\left\{\mathbf d\in\{0,1\}^{k}:\ \sum_{i\in\lambda_g} d_i=\ell\right\},\qquad g=1,\dots,G,
\]
with $\eta:=\ell/k\in(0,1)$ common to all $g$.
\end{assumption}
\renewcommand{\theassumption}{\arabic{assumption}}
\setcounter{assumption}{\value{saveassumpES}}

\subsection{Partial Identification and Consistency of Lee Bounds under Equal-Share Stratification}
\label{sec:pi_assumptions}

We now discuss the conditions used for the partial identification of the always-observed effect in stratified designs. In \citet{lee2009training}, the unconditional Lee bounds are derived under an independence condition in which treatment is unconditionally independent of the vector of potential outcomes and selection indicators. In our setting, Lemma~\ref{lem:UI_equalshare} shows that this unconditional independence is implied by the equal-share stratified randomization design. In addition, we impose the monotonicity in selection condition from the original Lee framework, which does not follow from the experimental design and must be justified on empirical grounds.

\begin{assumption}[Monotone selection]\label{ass:MONO}
\[
\Pr\!\big[S_i(1)\ge S_i(0)\big]=1.
\]
Equivalently, one may assume the reverse monotonicity $\Pr[S_i(1) \leq S_i(0)] = 1$; all results then follow by swapping treated and control in the trimming step. We maintain $\Pr[S_i(1) \geq S_i(0)] = 1$ for exposition.
\end{assumption}

Under equal-share stratified randomization, treated and control units draw the same mixture over strata, so treatment is unconditionally independent of the vector of potential outcomes and selection indicators. The next result formalizes this implication. At the same time, fixing $T_g$ within each block induces negative within-block correlation in assignments. Thus, observations are identically distributed but dependent, and inference must account for this design-induced dependence.

\begin{lemma}[Equal-share stratification implies unconditional independence]\label{lem:UI_equalshare}
Suppose Assumption~\ref{ass:A} holds. Then
\[
(Y_i^*(1),Y_i^*(0),S_i(1),S_i(0))\ \perp\!\!\!\perp\ D_i.
\]
\end{lemma}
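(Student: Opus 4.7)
The plan is to reduce the unconditional independence claim to the conditional independence already built into Assumption~\ref{ass:A}, exploiting the fact that under equal-share stratification the propensity $\Pr(D_i = 1 \mid X^{(n)})$ equals the deterministic constant $\eta = \ell/k$, common to every block and hence independent of $X^{(n)}$.

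Set $W_i := (Y_i^*(1), Y_i^*(0), S_i(1), S_i(0))$. First I would invoke Assumption~\ref{ass:A}, which states that the full collection $\{W_j\}_{j=1}^n$ is jointly independent of $(D_1,\dots,D_n)$ given $X^{(n)}$; in particular, $W_i \perp D_i \mid X^{(n)}$. Second, because within block $\lambda_{b_i}$ the treatment vector is drawn uniformly from the allocations with exactly $\ell$ treated units out of $k$, a direct symmetry (or counting) argument gives $\Pr(D_i = 1 \mid X^{(n)}) = \binom{k-1}{\ell-1}/\binom{k}{\ell} = \ell/k = \eta$. Third, for any bounded measurable $f$ and any $d \in \{0,1\}$ I would apply the tower law:
\[
\E\!\big[f(W_i)\,\mathbf{1}\{D_i=d\}\big] \;=\; \E\!\Big[\E[f(W_i)\mid X^{(n)}]\cdot \Pr(D_i=d\mid X^{(n)})\Big] \;=\; \Pr(D_i=d)\cdot\E[f(W_i)],
\]
where the first equality uses the conditional independence $W_i \perp D_i \mid X^{(n)}$ and the second uses that $\Pr(D_i=d\mid X^{(n)})$ is a constant that can be pulled out of the outer expectation. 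Since this identity holds for every bounded measurable $f$ and $D_i$ takes only two values, it characterizes the joint distribution of $(W_i, D_i)$ as a product, which is exactly $W_i \perp D_i$.

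There is no substantive obstacle here: once the propensity is recognized as a constant, the argument reduces to bookkeeping with the tower law. The key step is pulling $\eta$ out of the expectation, and that is precisely where the argument would break down under heterogeneous shares, because $\Pr(D_i=1\mid X^{(n)}) = \eta_{b_i}$ would then be a nontrivial function of $X^{(n)}$ through block membership. This failure is exactly what motivates the inverse-probability-weighting correction introduced later in the paper for the heterogeneous-shares case.
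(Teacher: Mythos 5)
Your proof is correct and takes essentially the same route as the paper: both arguments turn on the single fact that the propensity is the constant $\eta=\ell/k$, and then conclude independence by a tower-law/mixture computation. The only cosmetic difference is that you condition on the full matrix $X^{(n)}$ (matching how Assumption~3$'$ is stated) and verify the product factorization of $\E[f(W_i)\mathbf{1}\{D_i=d\}]$, whereas the paper conditions on the stratum label $b_i$ and shows $\Pr(D_i=1\mid W_i)$ is constant by mixing over strata.
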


\begin{proof}
Block randomization implies stratum-level independence,
\(
(Y_i^*(1),Y_i^*(0),S_i(1),S_i(0))\ \perp\!\!\!\perp\ D_i \mid b_i,
\)
where $b_i$ denotes stratum membership. Equal shares give $\Pr(D_i=1\mid b_i=g)=\eta$ for all $g$, hence $D_i\perp b_i$. Let $W_i:=(Y_i^*(1),Y_i^*(0),S_i(1),S_i(0))$. Then
\[
\Pr(D_i=1\mid W_i)
= \sum_{g=1}^G \Pr(D_i=1\mid b_i=g)\,\Pr(b_i=g\mid W_i)
= \eta \sum_{g=1}^G \Pr(b_i=g\mid W_i)
= \eta,
\]
which is constant in $W_i$. Therefore $(Y_i^*(1),Y_i^*(0),S_i(1),S_i(0))\ \perp\!\!\!\perp\ D_i$.
\end{proof}

In the blocked design studied here, each stratum uses a common treatment share. By Lemma~\ref{lem:UI_equalshare}, block randomization delivers unconditional independence at the sample level, while selection monotonicity (Assumption~\ref{ass:MONO}) remains a substantive identifying restriction. Under these conditions, the unconditional Lee bounds interval continues to characterize the identified set for the ATE among always-observed units, and the sharpness argument is unchanged. The next result states partial identification and sharpness in this setting. We then turn to consistency.

\begin{proposition}[Partial ID and sharpness under equal-share stratification]
\label{prop:lee_sharp_equalshare}
Suppose Assumption~\ref{ass:MONO} holds, and Assumptions~\ref{ass:iid}, \ref{ass:B}, \ref{ass:A}, and \ref{ass:C} also hold, which specify an equal-share stratification design. Let
\[
q \;=\; 1 - \frac{\Pr(S=1\mid D=0)}{\Pr(S=1\mid D=1)} \in [0,1],
\]
let $y_\alpha$ denote the $\alpha$-quantile of $Y$ among $\{D=1,S=1\}$, and define
\begin{align*}
\mu_0 &= \E\big[\,Y \mid D=0, S=1\,\big],\\
\mu_1^{LB} &= \E\big[\,Y \mid D=1, S=1,\, Y \le y_{1-q}\,\big],\\
\mu_1^{UB} &= \E\big[\,Y \mid D=1, S=1,\, Y \ge y_{q}\,\big].
\end{align*}
Then the average treatment effect among always-observed units,
\[
\Delta_{AO} \;=\; \E\!\left[\,Y^*(1)-Y^*(0)\mid S(1)=S(0)=1\,\right],
\]
is partially identified by the Lee bounds interval
\[
\Delta_{AO} \in \big[\,\mu_1^{LB}-\mu_0,\; \mu_1^{UB}-\mu_0\,\big],
\]
and these bounds are sharp. Hence, with equal-share stratification, the identified set coincides with the i.i.d.\ case, and Lee bounds remain sharp.
\end{proposition}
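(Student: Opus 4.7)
The plan is to reduce the claim to the original Lee (2009) identification argument by exploiting the fact that partial identification and sharpness are statements about the marginal law of $(Y_i^*(1),Y_i^*(0),S_i(1),S_i(0),D_i)$ for a single unit, not about independence across units. Since Lemma~\ref{lem:UI_equalshare} already establishes unconditional independence $(Y_i^*(1),Y_i^*(0),S_i(1),S_i(0))\perp\!\!\!\perp D_i$ under Assumption~\ref{ass:A}, the one-unit marginal distribution of the observed variables $(D_i,Y_i,S_i)$ coincides exactly with the one arising in Lee's i.i.d.\ setup. The within-block dependence that stratification induces across units is irrelevant for this step and will only matter for the variance analysis in Section~\ref{sec:variance_equal_share}.

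The main identification argument I would carry out in three short steps. First, I would use Assumption~\ref{ass:MONO} to classify units into always-observed ($S(0)=S(1)=1$), never-observed ($S(0)=S(1)=0$), and compliers in selection ($S(0)=0,S(1)=1$), ruling out defiers. Combined with Lemma~\ref{lem:UI_equalshare}, this gives that the conditional distribution of $Y$ given $(D=0,S=1)$ equals the distribution of $Y^*(0)$ among always-observed, so $\mu_0=\E[Y^*(0)\mid S(0)=S(1)=1]$ directly. Second, I would compute the complier share
\[
q \;=\; \frac{\Pr(S=1\mid D=1)-\Pr(S=1\mid D=0)}{\Pr(S=1\mid D=1)} \;=\; \frac{\Pr(S(1)=1,S(0)=0)}{\Pr(S(1)=1)},
\]
identifying the fraction of treated-selected units whose observed outcomes do not belong to the always-observed subpopulation. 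Third, I would invoke Lee's trimming argument: since the mixing weight $q$ is known but the assignment of treated-selected outcomes to the two latent classes is not, the tightest worst-case values of $\E[Y^*(1)\mid S(0)=S(1)=1]$ are obtained by assigning the top $q$-quantile (for the lower bound) or the bottom $q$-quantile (for the upper bound) of the treated-selected outcome distribution to the compliers, giving $\mu_1^{LB}$ and $\mu_1^{UB}$.

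For sharpness, I would observe that Lee's construction of extremal joint distributions of $(Y^*(1),Y^*(0),S(1),S(0))$ that match the observed marginals and attain the two endpoints of the interval is entirely a statement at the level of a single unit's joint law. Because equal-share stratification preserves that joint law unchanged (by Lemma~\ref{lem:UI_equalshare}) and monotonicity is maintained, Lee's extremal constructions transfer verbatim: every point in $[\mu_1^{LB}-\mu_0,\mu_1^{UB}-\mu_0]$ is attainable by some joint distribution consistent with Assumptions~\ref{ass:MONO} and~\ref{ass:A}, and no point outside can be.

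The step I would flag as requiring the most care, although not really a hard obstacle, is articulating why the design-induced within-block dependence across units is genuinely orthogonal to the identification claim. It is tempting to worry that fixing $T_g$ within each block somehow perturbs the population parameters being bounded. The clean resolution, which I would make explicit, is that $\mu_0$, $y_\alpha$, $\mu_1^{LB}$, $\mu_1^{UB}$ and $q$ are all functionals of the common marginal distribution of $(D_i,Y_i,S_i)$, and this marginal is unaffected by stratification once equal shares deliver the unconditional independence in Lemma~\ref{lem:UI_equalshare}; the remaining dependence across $i$ is a statement about $Q_n$, not $Q$, and plays no role until consistency and inference are addressed.
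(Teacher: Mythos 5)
Your proposal is correct and follows essentially the same route as the paper: both reduce the claim to Lee (2009, Prop.~1(a)) by invoking Lemma~\ref{lem:UI_equalshare} to restore unconditional independence, after which Lee's trimming and least-favorable constructions apply verbatim. Your additional observation that the bounds are functionals of the single-unit marginal law, so the design-induced cross-unit dependence in $Q_n$ is irrelevant to identification, is a useful explicit articulation of a point the paper leaves implicit, but it does not change the argument.
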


\begin{proof}
The argument follows \citet[Prop.~1(a)]{lee2009training}, which relies on unconditional independence and monotonicity in selection. Under equal-share stratification, Lemma~\ref{lem:UI_equalshare} shows that Assumptions~\ref{ass:iid}, \ref{ass:B}, \ref{ass:A}, and \ref{ass:C} imply unconditional independence. Together with Assumption~\ref{ass:MONO}, this verifies all conditions used in \citet{lee2009training}, so his proof applies verbatim, including the least-favorable construction that yields sharpness.
\end{proof}

\vspace{0.3cm}

With identification in place, the only departure from the i.i.d.\ case concerns the uniform convergence step used in the consistency proof. The i.i.d.\ ULLN invoked by \citet[Thm.~2.6]{newey1994large}, cited by \citet{lee2009training}, is replaced by a version valid for block randomization with a common share, built from finite-population limit theory and uniform convergence results. The proposition below records the resulting consistency claim, and the proof supplies only this substitution.

\begin{proposition}[Consistency under equal-share stratification]\label{prop:lee_consistency_equalshare}
Suppose Assumption~\ref{ass:MONO} holds, and Assumptions~\ref{ass:iid}, \ref{ass:B}, \ref{ass:A} and \ref{ass:C} also hold, which specify an equal-share stratification design with a common treatment share and independent blocks. Let $\hat q = 1 - \widehat{\Pr}(S=1\mid D=0)/\widehat{\Pr}(S=1\mid D=1)$, let $\hat y_\alpha$ denote the empirical $\alpha$-quantile of $Y$ among $\{D=1,S=1\}$, and define the sample analogs
\begin{align*}
\hat\mu_0 &= \E_n[\,Y \mid D=0,S=1\,],\\
\hat\mu_1^{LB} &= \E_n[\,Y \mid D=1,S=1,\, Y \le \hat y_{1-\hat q}\,],\\
\hat\mu_1^{UB} &= \E_n[\,Y \mid D=1,S=1,\, Y \ge \hat y_{\hat q}\,].
\end{align*}
Then
\[
\big(\hat\mu_1^{LB} - \hat\mu_0,\; \hat\mu_1^{UB} - \hat\mu_0\big)
\;\xrightarrow{p}\;
\big(\mu_1^{LB} - \mu_0,\; \mu_1^{UB} - \mu_0\big).
\]
\end{proposition}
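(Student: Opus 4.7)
The plan is to follow the i.i.d.\ consistency argument in \citet[Prop.~1(a)]{lee2009training} step by step, replacing each invocation of the i.i.d.\ uniform law of large numbers (ULLN) by a block-randomized analog that exploits independence across blocks together with the uniformly bounded block sizes from Assumption~\ref{ass:B} and the Donsker conditions in Assumption~\ref{ass:C}(e)(ii). The partial identification result of Proposition~\ref{prop:lee_sharp_equalshare} already pins down the targets $(\mu_1^{LB}-\mu_0,\mu_1^{UB}-\mu_0)$, so the task reduces to establishing convergence of the three sample objects $\hat q$, $\hat y_\alpha$, and the data-dependent trimmed means.

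First, I would establish a pointwise weak law of large numbers for sample averages $n^{-1}\sum_i f(X_i,D_i,Y_i,S_i)$ of integrable functions. Under Assumption~\ref{ass:A}, the blockwise assignment vectors are independent across $g$, and because each observation is identically distributed (Lemma~\ref{lem:UI_equalshare} and the marginal-distribution argument in Section~\ref{sec:design_equal_share_sec}), the block-partial sums are uncorrelated with uniformly bounded variances. Combined with $\max_g N_g\le\bar N$, this yields $\Var\big(n^{-1}\sum_i f(X_i,D_i,Y_i,S_i)\big)=O(n^{-1})$, so $n^{-1}\sum_i f\xrightarrow{p}\E f$ by Chebyshev. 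Applying this to $\mathbf{1}\{D_i=d\}$ and $\mathbf{1}\{D_i=d,S_i=1\}$ and then invoking the continuous mapping theorem gives $\widehat{\Pr}(S=1\mid D=d)\xrightarrow{p}\Pr(S=1\mid D=d)$ for $d\in\{0,1\}$, and hence $\hat q\xrightarrow{p} q$.

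Next, I would upgrade to uniform convergence over the trimming threshold. The classes $\{\mathbf{1}\{y\le t\}:t\in\mathbb R\}$ and $\{y\,\mathbf{1}\{y\le t\}:t\in\mathbb R\}$ are bounded VC classes, compatible with the Donsker requirement in Assumption~\ref{ass:C}(e)(ii). Combined with independence across blocks of bounded size, this yields a Glivenko-Cantelli property for the subsample empirical CDF of $Y$ on $\{D=1,S=1\}$. Under continuity of the limiting CDF at $y_{1-q}$ and $y_q$, the standard quantile-consistency argument (uniform CDF convergence plus continuity at the target) then delivers $\hat y_{1-\hat q}\xrightarrow{p} y_{1-q}$ and $\hat y_{\hat q}\xrightarrow{p} y_q$, after combining with $\hat q\xrightarrow{p} q$ via Slutsky.

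Finally, I would write each trimmed mean as a ratio of sample averages over a data-dependent trimming set, e.g.\ $\hat\mu_1^{LB}=\big[n^{-1}\sum_i Y_i\,\mathbf{1}\{D_i=1,S_i=1,Y_i\le\hat y_{1-\hat q}\}\big]/\big[n^{-1}\sum_i\mathbf{1}\{D_i=1,S_i=1,Y_i\le\hat y_{1-\hat q}\}\big]$, and combine uniform convergence of the numerator and denominator over $t$ in a neighborhood of the relevant quantile with $\hat y_{1-\hat q}\xrightarrow{p} y_{1-q}$ and Slutsky's theorem to obtain $\hat\mu_1^{LB}\xrightarrow{p}\mu_1^{LB}$; symmetric arguments cover $\hat\mu_1^{UB}$ and $\hat\mu_0$. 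The main obstacle is precisely this block-randomized ULLN with data-dependent trimming, since within-block dependence forbids a direct appeal to standard i.i.d.\ empirical process theory; I would therefore lean on the finite-population maximal inequalities developed in \citet{bai2023efficiency}, which reduce the problem to independence across blocks and uniformly bounded block sizes and thereby supply the required uniform control.
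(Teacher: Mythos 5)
Your proposal is correct and follows the same architecture as the paper's proof: identify the ULLN as the only step in \citet[Prop.~2]{lee2009training} that uses i.i.d.\ sampling, supply a replacement valid under blocked assignment, and then run Lee's consistency argument (consistency of $\hat q$, quantile consistency via uniform CDF convergence and continuity at the cutoffs, and convergence of the trimmed means as ratios) unchanged. Where you differ is in how the blocked ULLN is obtained: the paper's stated proof cites \citet{hajek1960limiting} for blockwise finite-population LLNs and the generic uniform-convergence theorems of \citet{andrews1992generic} (pointwise LLN plus stochastic equicontinuity, with the triangular-array extension), whereas you combine a Chebyshev argument conditional on block independence with VC/Donsker classes and maximal inequalities in the style of \citet{bai2023efficiency}. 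Your route is equally valid and in fact mirrors what the paper does in its fully detailed appendix proofs for the heterogeneous-shares case, where the blocked empirical process is controlled via a decoupling/multiplier inequality reducing to independent Bernoulli multipliers. One small imprecision: your claim that the block-partial sums are \emph{uncorrelated} is not right in general, because the block memberships $\lambda_g$ are functions of the full covariate matrix $X^{(n)}$ (e.g., matched pairs formed by sorting), so the conditional means $\E[\sum_{i\in\lambda_g} f\mid X^{(n)}]$ of distinct blocks can be correlated. The correct decomposition conditions on $X^{(n)}$: the conditional fluctuation has $O(n^{-1})$ variance for the average by conditional independence across blocks and bounded block sizes, while the sum of conditional means equals $\sum_i\{\eta\mu_1(X_i)+(1-\eta)\mu_0(X_i)\}$, an i.i.d.\ average over units regardless of the blocking rule. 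This repair is routine and your final variance bound and conclusion survive.
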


\begin{proof}
The argument adapts \citet[Prop.~2, Proof]{lee2009training} to equal-share stratification by replacing the i.i.d.\ ULLN with a version valid for block randomization. Under Assumptions~\ref{ass:iid}, \ref{ass:B}, \ref{ass:A} and \ref{ass:C}, Lemma~\ref{lem:UI_equalshare} shows that unconditional independence holds, so the conditions used in \citet{lee2009training} are satisfied under Assumption~\ref{ass:MONO}. Blockwise LLNs follow from \citet[][Lem.~2.1, Rem.~2.1]{hajek1960limiting}, and uniform convergence results (including triangular arrays) follow from \citet[Thm.~1, Thm.~3(a), Thm.~5; end of §3.4]{andrews1992generic}. With this substitution, \citet[Thm.~2.6]{newey1994large} yields consistency exactly as in \citet{lee2009training}. See Appendix~\ref{app:proof_consistency_equalshare} for a detailed proof.
\end{proof}

\vspace{0.3cm}

Hence, partial identification and sharpness carry over to the equal-share stratified design, and consistency follows once the uniform-convergence step is adapted to blocked assignment. What changes is sampling behavior, because assignment without replacement creates within-block dependence, so inference must use variance formulas that respect this feature.

\subsection{Asymptotic Distribution and Variance for Lee Bounds under Equal-Share Stratification}
\label{sec:variance_and_assumptions_equal_share}

Understanding the asymptotic behavior of Lee bounds under equal-share stratification is crucial for valid and efficient inference in finely stratified experimental designs. This setting is characterized by partitioning the population into blocks with a common treatment share, inducing within-block dependence that alters the variance relative to i.i.d.\ sampling. To rigorously justify inference, we work under Assumption~\ref{ass:A} together with the design and regularity conditions in Assumptions~\ref{ass:iid}, \ref{ass:B} and \ref{ass:C}. These are the same as in \citet{bai2023efficiency}, so we first represent the Lee lower bound as a just-identified GMM estimator and then apply their central limit theorem and variance results to this moment representation.

Let $Z_i = (Y_i,S_i,D_i,X_i)$ and define
$L_i = \mathbf{1}\{D_i=1,\,S_i=1,\,Y_i \le y_{1-p}\}$ and
$U_i = \mathbf{1}\{D_i=1,\,S_i=1,\,Y_i > y_{1-p}\}$,
where $y_{1-p}$ is the $(1-p)$ quantile of $Y_i$ among $\{D_i=1,S_i=1\}$. Parameterize
$\theta = (\Delta_{AO}^{LB},\mu_0,p,\alpha)^\prime$,
where $\Delta_{AO}^{LB}$ is the Lee lower bound, $\mu_0 = \E[Y_i \mid D_i=0,S_i=1]$ is the control mean, $p$ is the trimming share, and $\alpha = \Pr(S_i=1\mid D_i=0)$ is the control-group selection rate, so that the trimmed treated mean is $\mu_1^{LB} = \mu_0 + \Delta_{AO}^{LB}$. The corresponding moment vector is
\[
m(Z_i,\theta) =
\begin{pmatrix}
\big(Y_i - (\mu_0 + \Delta_{AO}^{LB})\big)\, S_i D_i L_i
  - \big(Y_i - \mu_0\big)\, S_i (1-D_i)
\\[0.35em]
\big(U_i - p\big)\, S_i D_i
\\[0.35em]
\left(S_i - \dfrac{\alpha}{1-p}\right) D_i
\\[0.55em]
\big(S_i - \alpha\big)\, (1-D_i)
\end{pmatrix},
\qquad
\E[m(Z_i,\theta_0)] = 0.
\]
In this parametrization, the Lee lower bound is the first component of $\theta$, and an analogous moment system characterizes the upper bound by using indicators that trim the lower tail of the treated outcome distribution instead of the upper tail.

\begin{proposition}[Asymptotics of Lee bounds under equal-share stratification]
\label{prop:lee-clt}
Sup\-pose Assumptions~\ref{ass:iid}, \ref{ass:B}, \ref{ass:A} and \ref{ass:C} hold. Then, the Lee bounds estimator obeys the following large-sample distribution:
\[
\sqrt{n}\,(\hat\theta_n - \theta_0) \;\dto\; \mathcal{N}(0, V_\eta),
\]
where $V_\eta = M^{-1}\Omega_\eta M^{-T}$, with $M = \partial_\theta \E[m(Z_i, \theta)]\big|_{\theta = \theta_0}$, and the variance $\Omega_\eta$ decomposes as
\[
\Omega_\eta = \E\left[ \eta\,\Var(m_1 \mid X) + (1-\eta)\,\Var(m_0 \mid X) \right]
    + \Var\left[ \eta\,\mu_1(X) + (1-\eta)\,\mu_0(X) \right],
\]
where $m_d = m(Z_i, \theta_0) \mathbf{1}\{D_i = d\}$ and $\mu_d(X) = \E[m_d \mid X]$ for $d \in \{0,1\}$. In particular, the asymptotic variance of the Lee lower bound corresponds to the $(1,1)$ element of $V_\eta$.
\end{proposition}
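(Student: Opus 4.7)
The strategy is to cast the problem as a just-identified GMM problem in the stratified-design environment of \citet{bai2023efficiency} and then invoke their CLT and variance formula applied to the moment vector $m(Z_i,\theta)$ stated before the proposition. First, I would verify that $\E[m(Z_i,\theta_0)]=0$ at the true parameter, which reduces to four population identities: the third and fourth coordinates identify $\alpha=\Pr(S_i=1\mid D_i=0)$ and $p=1-\alpha/\Pr(S_i=1\mid D_i=1)$ (this matches $q$ from Proposition~\ref{prop:lee_sharp_equalshare} because of equal-share unconditional independence, Lemma~\ref{lem:UI_equalshare}); the second coordinate encodes that the trimming tail has mass $p$ among the selected treated (so $y_{1-p}$ is the correct quantile); and the first coordinate, evaluated at these values, reduces to $\mu_1^{LB}-\mu_0-\Delta_{AO}^{LB}=0$. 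Identification at $\theta_0$ then follows from Assumption~\ref{ass:C}(a)--(b) together with Proposition~\ref{prop:lee_sharp_equalshare}.

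Next, I would verify that the Assumption~\ref{ass:C} regularity conditions apply to this specific moment function $m(Z_i,\theta)$: $L_1$-integrability holds since all components are bounded products of $Y$ with indicators times constants; pointwise measurability and the $L_2$-continuity in $\theta$ follow from the fact that the trimming indicator jumps only on the set $\{Y=y_{1-p}\}$, which has $Q$-measure zero under mild continuity of the treated outcome distribution on its relevant quantile; the $Q$-Donsker property is inherited because each coordinate lies in a VC-subgraph class (indicators of half-spaces in $Y$, products with fixed indicators in $(D,S)$, plus linear functions of $Y$); and the Lipschitz property of $\mu_d(x,\theta)$ in $x$ is inherited from the underlying conditional means of $Y$ and $S$ given $X$. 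With these verifications, all hypotheses needed to invoke the GMM machinery of \citet{bai2023efficiency} under the equal-share stratified design specified by Assumption~\ref{ass:A} are in place.

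I would then apply the CLT of \citet{bai2023efficiency} to the stacked sample moment $\bar m_n(\theta_0)=n^{-1}\sum_{i=1}^n m(Z_i,\theta_0)$. Their argument decomposes $\sqrt{n}\,\bar m_n(\theta_0)$ into a ``within-block'' term and a ``between-block'' term: conditional on $X^{(n)}$, blockwise uniform sampling without replacement (Assumption~\ref{ass:A}) drives the within-block component to a Gaussian limit with variance $\E[\eta\,\Var(m_1\mid X)+(1-\eta)\,\Var(m_0\mid X)]$, while the covariate-homogeneity condition in Assumption~\ref{ass:B} and i.i.d.\ sampling (Assumption~\ref{ass:iid}) produce the additional between-block variance contribution $\Var[\eta\,\mu_1(X)+(1-\eta)\,\mu_0(X)]$. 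Summing yields $\Omega_\eta$ exactly as stated. A standard one-step GMM expansion, $\sqrt{n}(\hat\theta_n-\theta_0)=-M^{-1}\sqrt{n}\,\bar m_n(\theta_0)+o_p(1)$, justified by Proposition~\ref{prop:lee_consistency_equalshare}, the Donsker condition, and the invertibility of $M$ in Assumption~\ref{ass:C}(b), delivers the claimed asymptotic normality with $V_\eta=M^{-1}\Omega_\eta M^{-T}$. The $(1,1)$ element of $V_\eta$ is then the asymptotic variance of the Lee lower bound.

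\textbf{Main obstacle.} The main technical hurdle I expect is the Donsker/pointwise-measurability verification for the first moment coordinate, because the indicator $L_i$ depends on the data through a quantile that is being estimated jointly with the other parameters. Parameterizing the trimming share $p$ itself as a GMM coordinate (rather than the quantile $y_{1-p}$) helps, but one still has to rule out discontinuities in $\theta\mapsto m(\cdot,\theta)$ at the population trimming threshold; this requires a mild continuity condition on the conditional distribution of $Y^*(1)$ given $S(1)=1$ near $y_{1-p_0}$, which I would state as a maintained regularity assumption and then check that the $L_2$-continuity clause of Assumption~\ref{ass:C}(c) is satisfied. Once this is verified, the remaining steps are straightforward applications of the results of \citet{bai2023efficiency}.
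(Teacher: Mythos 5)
Your proposal is correct and follows essentially the same route as the paper: the paper's proof simply invokes Theorem 3.1 of \citet{bai2023efficiency} after noting that Assumptions~\ref{ass:iid}, \ref{ass:B}, \ref{ass:A}, and \ref{ass:C} match that paper's conditions and that the Lee-bounds moments, being built from linear and indicator functions, satisfy the required regularity. Your version is more explicit about verifying $\E[m(Z_i,\theta_0)]=0$ and the Donsker/continuity conditions for the trimming indicator, but these elaborations fill in exactly the steps the paper treats as immediate.
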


\begin{proof}
The proof is immediate from Theorem 3.1 of \cite{bai2023efficiency}, since the assumptions made are equivalent to Assumptions 3.1-3.3 (plus i.i.d.\ sampling) in that paper. Covariate balance ensures that block-level imbalances vanish at the $\sqrt{n}$ rate, allowing the asymptotic variance to be accurately captured by $V_\eta$. Furthermore, the Lee bounds estimator meets the required moment regularity conditions because its moments are defined only by linear and indicator functions.
\end{proof}

An implementation caveat for variance estimation concerns the moment vector. In \citet{lee2009training}, under i.i.d.\ assignment, the moment vector used in the sandwich variance targets only the trimmed treated mean, and the control mean is incorporated only when assembling the final variance of the estimator. Under equal-share stratification, the trimmed treated mean and the control mean are correlated by within-block assignment. Omitting the control mean from the moment vector ignores this covariance and can bias the estimated standard errors. Accordingly, we specify the first moment condition as the difference between the trimmed treated mean and the control mean and then estimate the variance with the design-consistent formula of \citet{bai2023efficiency}.

Given this moment specification, the variance decomposition presented in \citet{bai2023efficiency} applies directly to the Lee bounds estimator and enables consistent estimation even with small block sizes. Notably, omitting the block correction term, as in conventional i.i.d.\ variance estimation, leads to conservative standard errors. These observations reinforce an important theoretical result: the variance under equal-share stratification is always less than or equal to that under i.i.d.\ assignment \citep{bai2023efficiency}. Incorporating the blocked assignment structure into variance estimation is thus essential, not only to avoid overstating uncertainty but also to fully realize the efficiency gains in empirical applications.

\begin{remark}[Equal shares and label-based designs]
With equal shares, unconditional independence holds by design and standard Lee bounds apply in both covariate-based and label-based stratified designs. The main difference lies in how precision can be improved. When strata are formed by grouping units with similar covariates and within-stratum diameters shrink as $n$ grows, it is possible to borrow information from nearby strata. In label-based designs with many small blocks, there are no meaningful neighbors to use for pooling. Appendix~\ref{sec:var_discrete_blocks} considers this setting and requires at least two treated and two control observations in each stratum.
\end{remark}

\subsection{Simulation Results}
\label{sec:lim_cond}

We now present simulation evidence with two purposes. First, we assess the performance of the design-consistent variance estimator for Lee bounds under equal-share stratification. Second, we study how conditional Lee bounds, obtained by trimming within each stratum and then aggregating the resulting intervals, perform in stratified designs with small strata. The estimation problems that arise in this setting motivate a pooled alternative that draws on information across strata.

In an equal-share matched-pair design, the design-consistent variance estimator based on \citet{bai2023efficiency} delivers standard errors that closely match the Monte Carlo standard deviation of the Lee bounds estimator. By contrast, the conventional i.i.d.\ variance estimator substantially overstates uncertainty. Its average standard error is $0.0569$ against an empirical standard deviation of $0.0397$, inflating coverage to about $99.5\%$ instead of the nominal $95\%$. This confirms that incorporating the blocked assignment structure is essential for accurate inference under equal-share stratification.

A second simulation considers stratified designs with many small strata. A natural alternative in this setting is to apply Lee bounds conditional on strata and then aggregate the resulting intervals. In our designs, the difficulty lies in estimation rather than identification. With only a few treated units and controls in each stratum, the cell-level trimming share and the associated trimmed means are based on very little information and can be sensitive to single observations, including in sparse or nearly empty cells. This produces conditional Lee bounds that are noisy and sometimes poorly centered, even when the assumptions behind Lee bounds hold. Appendix~\ref{app:sim_dgpss} reports the simulation designs, numerical results and a broader discussion of additional limitations of conditional Lee bounds in sparse stratified designs.

\section{Lee–IPW Bounds under Heterogeneous-Shares Stratification}
\label{sec:lee_ipw_estimator}

\subsection{Partial Identification Strategy with Lee-IPW Bounds}
\label{sec:id_het}

Many field trials and randomized experiments assign different fractions of units to treatment across strata. Examples include oversampling in small sites to ensure power, demographic quotas that create unequal assignment rates, and logistical constraints that yield classes, villages, or clinics of different sizes and treatment proportions. Under stratified designs with heterogeneous treatment shares, the design yields only stratum-level (conditional) independence, so unconditional independence between treatment and potential outcomes generally fails. Consequently, conventional unconditional Lee bounds are invalid. The following lemma formalizes this point.

\begin{lemma}[Heterogeneous-shares stratification implies stratum-level independence]\label{lem:SLI_hetero}
Suppose Assumption~\ref{ass:Ahet} holds. Then
\[
(Y_i^*(1),Y_i^*(0),S_i(1),S_i(0))\ \perp\!\!\!\perp\ D_i \ \big|\  b_i .
\]
Moreover, writing $W_i:=(Y_i^*(1),Y_i^*(0),S_i(1),S_i(0))$,
\[
\Pr(D_i=1\mid W_i)=\sum_{g=1}^G \eta_g\,\Pr(b_i=g\mid W_i),
\]
so unconditional independence $(W_i\perp\!\!\!\perp D_i)$ generally fails unless the right-hand side is constant in $W_i$ (e.g., equal shares or stratum composition invariant in $W_i$). In heterogeneous-shares settings, this stratum-level independence is therefore the appropriate independence condition.
\end{lemma}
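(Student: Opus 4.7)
The plan is to prove the two parts of the lemma in sequence, with the sole analytical input being Assumption~\ref{ass:Ahet}, which gives joint independence of the full potential-outcomes/selection matrix from the treatment vector conditional on $X^{(n)}$, together with the fact that stratum membership $b_i$ is measurable with respect to $X^{(n)}$.

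For the stratum-level independence claim, I would start from
\[
\Pr(D_i=1\mid X^{(n)}, W_i) \;=\; \Pr(D_i=1\mid X^{(n)}) \;=\; \eta_{b_i},
\]
where the first equality is the conditional independence in Assumption~\ref{ass:Ahet} and the second follows from the uniform within-block assignment. Because $\eta_{b_i}$ equals $\eta_g$ on the event $\{b_i=g\}$, iterated expectations yield
\[
\Pr(D_i=1\mid b_i, W_i) \;=\; \E\!\big[\Pr(D_i=1\mid X^{(n)}, W_i) \,\big|\, b_i, W_i\big] \;=\; \eta_{b_i},
\]
which is constant in $W_i$ given $b_i$. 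This is precisely $W_i \perp\!\!\!\perp D_i \mid b_i$.

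For the mixture formula and the failure of unconditional independence, I would apply the law of total probability across strata and substitute the conditional probability just obtained:
\[
\Pr(D_i=1\mid W_i)=\sum_{g=1}^G \Pr(D_i=1\mid b_i=g, W_i)\,\Pr(b_i=g\mid W_i)=\sum_{g=1}^G \eta_g\,\Pr(b_i=g\mid W_i).
\]
Unconditional independence would require the right-hand side to be free of $W_i$. This generically fails because $b_i$ is determined by $X_i$ through $\Lambda(X^{(n)})$, and $X_i$ is typically predictive of the potential outcomes $W_i$, so $\Pr(b_i=g\mid W_i)$ varies in $W_i$. Two special cases neutralize this: if shares are common across strata ($\eta_g\equiv\eta$), the sum collapses to $\eta$; if stratum composition is $W_i$-invariant (e.g., $X_i\perp W_i$), the posterior probabilities do not depend on $W_i$ either.

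The main obstacle is a measure-theoretic one rather than an analytical one. Assumption~\ref{ass:Ahet} explicitly permits $\eta_g=T_g/N_g$ to depend on the entire design map $X^{(n)}\mapsto(\Lambda(X^{(n)}),\{T_g\})$, so $\eta_{b_i}$ is not automatically $\sigma(b_i)$-measurable on its own. The cleanest fix is to treat the design as given and, at the intermediate conditioning step, to condition on $b_i$ together with the realized share $\eta_{b_i}$, which changes no distributional content of the argument since the shares are part of the stratification rule. Once this bookkeeping is in place, everything reduces to a single application of Assumption~\ref{ass:Ahet} and the law of total probability, and no further probabilistic machinery is needed.
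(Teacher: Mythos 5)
Your proof is correct and follows essentially the same route as the paper's: stratum-level independence from the within-block uniform assignment, then the law of total probability over strata for the mixture formula and the observation that it is constant in $W_i$ only in the noted special cases. Your additional care about conditioning on the realized share $\eta_{b_i}$ (since the shares may depend on the full design map rather than on $b_i$ alone) is a legitimate refinement that the paper's two-sentence proof leaves implicit by treating the design as given.
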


\begin{proof}
Uniform assignment within each $\lambda_g$ implies
\(
(Y_i^*(1),Y_i^*(0),S_i(1),S_i(0))\ \perp\!\!\!\perp\ D_i \mid b_i,
\)
establishing stratum-level independence. For the marginal statement,
\[
\Pr(D_i=1\mid W_i)
=\sum_{g=1}^G \Pr(D_i=1\mid b_i=g)\,\Pr(b_i=g\mid W_i)
=\sum_{g=1}^G \eta_g\,\Pr(b_i=g\mid W_i),
\]
which is constant in $W_i$ only in the noted special cases. Otherwise, unconditional independence fails.
\end{proof}

Building on Lemma~\ref{lem:SLI_hetero}, the partial identification of treatment effects with heterogeneous treatment shares must work with this stratum-level condition that reflects the block randomization design. The Lee-IPW estimator adopts the stratum-level independence implied by Assumption~\ref{ass:Ahet}, together with Assumption~\ref{ass:MONO}, allowing for flexible assignment probabilities while restoring valid partial identification.

Our target estimand is the average treatment effect among always-observed units,
$
\Delta_{AO} = \mathbb{E}\left[ Y^*(1) - Y^*(0) \,\big|\, S(1) = S(0) = 1 \right].
$
The Lee–IPW method provides a principled strategy for handling stratified experiments with heterogeneous treatment shares. By employing inverse probability weighting (IPW), it re-balances the data to account for varying assignment proportions across strata and then applies a single trimming share to the weighted sample. This procedure yields more stable and reliable bounds, particularly in studies with many small or sparsely treated strata, where traditional approaches can be biased or infeasible.

In this framework, let $p = \Pr(D = 1)$ denote the overall treated share, and let $w_{c,g} = (1-p)/(1-\eta_g)$ denote the weight applied to control units in stratum $g$. After weighting by $w_{c,g}$, the distribution of control observations matches the distribution of controls in the overall sample. This adjustment corrects for the over- or under-representation of controls from strata with heterogeneous treatment shares. The always-observed control mean is then identified as
\[
\mu_{0}=E[Y^{*}(0)\mid AO \text{ (Always-observed)}]=\frac{E[(1-D)S\,w_{c,g}Y]}{E[(1-D)S\,w_{c,g}]}.
\]

Similarly, for the treated–observed units, we define $\delta = \Pr(D_{g,i} = 1 \mid AO)$ as the probability that a unit is treated, conditional on being always-observed, and reweight each treated–observed outcome by $h_{g,i} = \delta / \eta_g$, so that $\widetilde{Y}_{g,i} = h_{g,i} Y_{g,i}$. This ensures that the mean of the reweighted treated outcomes corresponds to the mean for always-observed treated potential outcomes: $\mathbb{E}[\widetilde{Y} \mid D=1, AO] = \mathbb{E}[Y^*(1) \mid AO]$. It is the same principle that motivated $w_{c,g}$, now applied to the treated arm with a focus on always-observed units. The weight can be written as a functional of the data and design (see Appendix~\ref{sec:iden_Y1} for details), thus it is identified.

To identify the appropriate trimming share, monotonicity implies that the excess response rate in the treated group, after control reweighting to mirror the treated units distribution across strata, corresponds to the proportion $q$ of “induced” outcomes to be trimmed. With $w_{q,g} = \frac{\eta_g (1-p)}{(1-\eta_g)p}$ denoting the weight applied to control units in stratum $g$ when constructing $q$, the trimming share is given by
\[
q = \frac{\Pr[S=1 \mid D=1] - \mathbb{E}[S\,w_{q,g} \mid D=0]}{\Pr[S=1 \mid D=1]}.
\]

Finally, the bounds for the always-observed average treatment effect are constructed by trimming the lower (or upper) $q$ fraction from the distribution of reweighted treated–observed outcomes. Let $\tilde{y}_\alpha$ be the $\alpha$-quantile of $\widetilde{Y}$. The bounds are given by
\[
\mu_1^{LB} = \mathbb{E}[\widetilde{Y} \mid D=1, S=1, \widetilde{Y} \le \tilde{y}_{1-q}], \quad
\mu_1^{UB} = \mathbb{E}[\widetilde{Y} \mid D=1, S=1, \widetilde{Y} \ge \tilde{y}_{q}],
\]
and thus,
\[
\Delta_{AO} \in [\mu_1^{LB} - \mu_0,\, \mu_1^{UB} - \mu_0].
\]

Considering the preceding definitions and construction, the identified set for the always-observed treatment effect is characterized by the following result.

\begin{proposition}[Partial Identification with Lee-IPW Bounds]
Under the stratum-level independence condition implied by the stratified assignment (Lemma \ref{lem:SLI_hetero}) and Assumption~\ref{ass:MONO} (monotonicity), the average treatment effect among always-observed units,
\[
\Delta_{AO} = \mathbb{E}\left[ Y^*(1) - Y^*(0) \mid S(1) = S(0) = 1 \right],
\]
is partially identified by the interval
\[
[\mu_1^{LB} - \mu_0,\ \mu_1^{UB} - \mu_0],
\]
where
\begin{align*}
\mu_0 &= \frac{\mathbb{E}[(1-D)S w_{c,g} Y]}{\mathbb{E}[(1-D)S w_{c,g}]}, \\
\mu_1^{LB} &= \mathbb{E}[\widetilde{Y} \mid D=1, S=1,\, \widetilde{Y} \leq \tilde{y}_{1-q}], \\
\mu_1^{UB} &= \mathbb{E}[\widetilde{Y} \mid D=1, S=1,\, \widetilde{Y} \geq \tilde{y}_q], \\
q &= \frac{\Pr[S=1 \mid D=1] - \mathbb{E}[S w_{q,g} \mid D=0]}{\Pr[S=1 \mid D=1]}.
\end{align*}
All weights and quantiles are as defined in the prior discussion.
\end{proposition}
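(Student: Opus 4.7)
The plan is to reduce the proposition to three identification claims: (i) $\mu_0$ equals the always-observed control mean; (ii) $q$ is the induced share in the design-reweighted treated-observed distribution; and (iii) the reweighted truncated treated means sandwich $\E[Y^*(1)\mid \text{AO}]$. Each step will use stratum-level independence (Lemma~\ref{lem:SLI_hetero}) to swap $D_i$ for $\eta_g$ under iterated expectations, together with monotonicity (Assumption~\ref{ass:MONO}) to identify observed events with $\text{AO}$ on the control side and with $\text{AO}\cup\text{induced}$ on the treated side.

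First, I would identify $\mu_0$. Under monotonicity, $\{D=0,S=1\}=\{D=0,\text{AO}\}$, so $Y=Y^*(0)$ on this event. Conditioning on the block label $b$ and using stratum-level independence, the numerator rewrites as $\sum_g \Pr(b=g)(1-\eta_g)\,w_{c,g}\,\E[Y^*(0)\,\mathbf{1}\{\text{AO}\}\mid b=g]$. The choice $w_{c,g}=(1-p)/(1-\eta_g)$ collapses the stratum-dependent factor to the constant $1-p$. The same cancellation in the denominator delivers $\mu_0 = \E[Y^*(0)\,\mathbf{1}\{\text{AO}\}]/\Pr(\text{AO}) = \E[Y^*(0)\mid \text{AO}]$.

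Second, I would verify the construction of $q$. Stratum-level independence gives $\Pr(S=1\mid D=1) = p^{-1}\,\E[\eta_b\,\Pr(S(1)=1\mid b)]$. The weight $w_{q,g}=\eta_g(1-p)/((1-\eta_g)p)$ is designed so that the control-side selection rate transforms to $\E[S\,w_{q,b}\mid D=0] = p^{-1}\,\E[\eta_b\,\Pr(\text{AO}\mid b)]$, replacing the control-side stratum mixture with the treated-side one. Dividing the difference by $\Pr(S=1\mid D=1)$ then delivers $q$ as the induced share in the design-weighted treated-observed population. The complementary treated-side weight $h_{g,i}=\delta/\eta_g$, with $\delta=\E[\eta_b\mid \text{AO}]$ identified as in Appendix~\ref{sec:iden_Y1}, plays the analogous role for the treated mean.

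Third, I would derive the treated-side bounds. An IPW calculation mirroring the control side yields $\E[h_b\,\mathbf{1}\{Y\in A,\,D=1,\,S=1\}] = \delta\,\Pr(Y^*(1)\in A,\,S(1)=1)$ for any measurable $A$, so the $h_b$-weighted empirical distribution of $Y$ among treated-observed units recovers $\Pr(Y^*(1)\le y\mid S(1)=1)$ and the weighted truncated mean evaluates to $\E[Y^*(1)\mid S(1)=1,\,Y^*(1)\le \tilde y_{1-q}]$. Because monotonicity gives $\{S(1)=1\}=\text{AO}\cup\text{induced}$, applying the Horowitz--Manski--Lee trimming inequality to this weighted distribution sandwiches $\E[Y^*(1)\mid\text{AO}]$ by the lower-tail and upper-tail trimmed means at share $q$, delivering $\mu_1^{LB}$ and $\mu_1^{UB}$. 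Combining with step one yields $\Delta_{AO}\in[\mu_1^{LB}-\mu_0,\,\mu_1^{UB}-\mu_0]$. The main obstacle I anticipate is this last step: demonstrating that the $q$ extracted from aggregate selection rates is precisely the induced share of the $h_b$-reweighted distribution to which the trimming is applied, which requires carefully checking that $w_{q,g}$ and $h_{g,i}$ re-align stratum weights to a common effective population so that Lee's trimming inequality goes through verbatim in the reweighted framework.
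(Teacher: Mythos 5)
Your steps (i) and (ii) match the paper's argument in Appendix~\ref{sec:identification}: the control mean is identified exactly as you describe (the weight $w_{c,g}$ cancels the $(1-\eta_g)$ factor after conditioning on the block), and your reading of $q$ as $1-\E[\eta_b S(0)]/\E[\eta_b S(1)]$, i.e.\ the induced share in the $\eta$-weighted treated-and-observed population, is the content of display \eqref{att_selection}. The problem is step (iii), and it is precisely the obstacle you flag at the end — your proposed route does not close it. You treat $h_{g}=\delta/\eta_g$ as an \emph{importance weight on the distribution} of $Y$, so that the $h$-weighted law of $Y$ among $\{D=1,S=1\}$ recovers $\Pr(Y^*(1)\in\cdot\mid S(1)=1)$. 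That identity is correct, but the induced share of \emph{that} population is $1-\Pr(S(0)=1)/\Pr(S(1)=1)$, a mixture over strata weighted by $\Pr(b=g)$, whereas the $q$ in the proposition is $1-\E[\eta_b S(0)]/\E[\eta_b S(1)]$, a mixture weighted by $\eta_g\Pr(b=g)$. With heterogeneous shares and selection rates that vary across strata these generally differ, so trimming the importance-weighted distribution at share $q$ does not bracket $\E[Y^*(1)\mid AO]$. Moreover, the quantities $\mu_1^{LB},\mu_1^{UB}$ in the statement are \emph{unweighted} conditional means of the rescaled outcome $\widetilde Y=(\delta/\eta_b)Y$ trimmed at unweighted quantiles of $\widetilde Y$, not weighted truncated means of $Y$; your step (iii) therefore bounds a different object from the one the proposition asserts.

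The paper's resolution keeps everything inside the treated-and-observed population (where the induced share is exactly $q$) and uses $h_g$ to \emph{rescale the outcome} rather than reweight the distribution: it writes $\Pr(\widetilde Y\le y\mid D=1,S=1)$ as a $q$-mixture of the always-observed and induced components (display \eqref{Ytilde-decomposition}), applies Lee's trimming inequality to that mixture to sandwich $\E[\widetilde Y\mid D=1,\,S(1)=S(0)=1]$ between the trimmed means of $\widetilde Y$ at its own quantiles $\tilde y_q,\tilde y_{1-q}$, and then separately verifies via iterated expectations and $\E[D\mid X, AO]=\eta_g$ that $\E[\widetilde Y\mid D=1, AO]=\E[Y^*(1)\mid AO]$ (display \eqref{ate-treated-1}). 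To repair your argument you would either need to adopt this outcome-rescaling route, or, if you insist on importance weighting, replace $q$ by the induced share of the reweighted population and show that the resulting bounds coincide with $\mu_1^{LB},\mu_1^{UB}$ as defined — which they do not in general.
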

\begin{proof}
See Appendix~\ref{sec:identification} for a detailed proof and derivation.
\end{proof}

\vspace{0.3cm}

Conditional Lee bounds are sharp under stratum-level randomization and monotonicity, as they fully exploit differential selection within each block, but their reliability breaks down when strata are small, sparsely treated, or contain outliers, often resulting in bias, high variance, or undefined bounds. The Lee–IPW procedure, while not sharp because it does not use all within-block selection information and may yield somewhat wider intervals, addresses these limitations by pooling information and applying a unified trimming rule. This approach provides stable and feasible inference without ad hoc adjustments for ill-defined trimming shares, offering robust partial identification even in finely stratified or sparsely populated designs. In practice, Lee–IPW intervals are especially valuable in applied settings where strata are small or imbalanced, and offer a reliable tool for empirical analysis in challenging experimental designs.

\subsection{Lee-IPW Estimator and Moment Conditions}
\label{sec:ipw_est_mom}

The Lee–IPW estimator translates the identification strategy from the previous section into a practical procedure for bounding the average treatment effect among always-observed units. All calculations rely solely on observed data and the known randomization shares for each stratum, and the estimator is consistent even with small or highly unbalanced blocks.

The estimation proceeds by first computing the realized treatment rate $\hat{p} = n^{-1} \sum_{i=1}^n D_i$ and the design share $\eta_g = T_g / N_g$ within each stratum. For each unit, block weights are assigned to controls and for trimming purposes as $\hat{w}_{c,g} = (1-\hat{p})/(1-\eta_g)$ and $\hat{w}_{q,g} = \eta_g (1-\hat{p}) / [(1-\eta_g)\hat{p}]$. In the following, all summations are over the entire sample.

To estimate the trimming share, which is fundamental for partial identification, the procedure calculates
\[
\hat{q} = 1 - \frac{\hat{p} \sum_{i} (1-D_i) S_i \hat{w}_{q,b_i}}{(1-\hat{p}) \sum_{i} D_i S_i}, \qquad \hat{q} \in [0, 1].
\]
This value represents the proportion of treated–observed outcomes to trim, corresponding to the excess response rate attributable to treatment, after adjusting for heterogeneous assignment.

Next, the always-observed treatment probability is estimated by first computing, within each stratum, the proportion of controls observed: $\hat{m}_g = \sum_{i\in\lambda_g} (1-D_i) S_i / \sum_{i\in\lambda_g} (1-D_i)$. The overall probability is then
\[
\hat{\delta} = \frac{\sum_i D_i \hat{m}_{b_i}}{\sum_i \hat{m}_{b_i}}.
\]

Each treated, observed outcome is then reweighted as $\widetilde{Y}_i = (\hat{\delta} / \eta_{b_i}) Y_i$, aligning the mean of the reweighted treated sample with that for always-observed units. The empirical quantiles $\hat{\tilde{y}}_{1-\hat{q}}$ and $\hat{\tilde{y}}_{\hat{q}}$ among these values set the cutoffs for trimming. Then, the lower and upper bounds for the treated mean are
\[
\hat{\mu}_1^{LB} = \frac{\sum_{i} D_i S_i\, \mathbbm{1}\{\widetilde{Y}_i \leq \hat{\tilde{y}}_{1-\hat{q}}\} \, \widetilde{Y}_i}{\sum_{i} D_i S_i\, \mathbbm{1}\{\widetilde{Y}_i \leq \hat{\tilde{y}}_{1-\hat{q}}\}}, \qquad
\hat{\mu}_1^{UB} = \frac{\sum_{i} D_i S_i\, \mathbbm{1}\{\widetilde{Y}_i \geq \hat{\tilde{y}}_{\hat{q}}\} \, \widetilde{Y}_i}{\sum_{i} D_i S_i\, \mathbbm{1}\{\widetilde{Y}_i \geq \hat{\tilde{y}}_{\hat{q}}\}},
\]
while the control mean is estimated by
\[
\hat{\mu}_0 = \frac{\sum_{i} (1-D_i) S_i\, \hat{w}_{c,b_i}\, Y_i}{\sum_{i} (1-D_i) S_i\, \hat{w}_{c,b_i}}.
\]
The Lee–IPW bounds for the always-observed average treatment effect are hence given by
\[
\hat{\Delta}^{LB} = \hat{\mu}_1^{LB} - \hat{\mu}_0, \qquad \hat{\Delta}^{UB} = \hat{\mu}_1^{UB} - \hat{\mu}_0.
\]

To place the estimator within a more formal framework and to connect it with the inference procedures discussed in the next section, it is necessary to express the Lee–IPW bounds estimator as the solution to a system of moment equations. The characterization is summarized below.

\begin{proposition}[Moment Characterization of the Lee-IPW Lower Bound]
\label{prop:lee_ipw_moments}
Let $\theta = (\mu_{1}^{LB}, \mu_{0}, \tilde{y}_{1-q}, \delta, q)^{\top} \in \Theta \subset \mathbb{R}^5$ collect the parameters for the lower Lee-IPW bound, and let $Z_i = (Y_i, S_i, D_i, b_i)$ collect the observed outcome, selection indicator, treatment assignment, and stratum membership for unit $i$. Define the moment vector $m(Z_i, \theta) = (m_{1i}(\theta), m_{2i}(\theta), m_{3i}(\theta),\\ m_{4i}(\theta), m_{5i}(\theta))^\top$ by
\begin{align*}
m_{1i}(\theta) &= (\widetilde{Y}_i - \mu_{1}^{LB})\, D_i S_i\, \mathbbm{1}\{\widetilde{Y}_i \le \tilde{y}_{1-q}\}, \\
m_{2i}(\theta) &= (Y_i - \mu_0)\, (1-D_i) S_i\, w_{c,b_i}, \\
m_{3i}(\theta) &= [\mathbbm{1}\{\widetilde{Y}_i > \tilde{y}_{1-q}\} - q]\, D_i S_i, \\
m_{4i}(\theta) &= r_{b_i} (D_i - \delta), \\
m_{5i}(\theta) &= \frac{1-q}{p} D_i S_i - \frac{1}{1-p} (1-D_i) S_i\, w_{q,b_i},
\end{align*}
where $p = \Pr(D = 1)$ and
\[
r_{b_i} := \frac{\sum_{j \in \lambda_{b_i}} (1-D_j) S_j}{\sum_{j \in \lambda_{b_i}} (1-D_j)}
\]
denotes the observed selection rate among controls in the stratum containing unit $i$. Let $\Delta^{LB}$ denote the lower Lee-IPW bound constructed in Section~\ref{sec:id_het}. Under stratified randomization in Assumption~\ref{ass:Ahet}, so that Lemma~\ref{lem:SLI_hetero} holds, and monotone selection in Assumption~\ref{ass:MONO}, there exists $\theta_0 = (\mu_{1}^{LB}, \mu_{0}, \tilde{y}_{1-q}, \delta, q)^{\top} \in \Theta$ such that
\[
\mathbb{E}[m(Z_i, \theta_0)] = 0
\qquad\text{and}\qquad
\Delta^{LB} = \mu_{1}^{LB} - \mu_{0}.
\]
Moreover, any $\theta \in \Theta$ satisfying $\mathbb{E}[m(Z_i, \theta)] = 0$ has its first two components equal to the trimmed treated mean and control mean that define $\Delta^{LB}$ and its fifth component equal to the trimming share $q$. Hence the system $\mathbb{E}[m(Z_i, \theta)] = 0$ is equivalent to the Lee-IPW lower-bound construction.
\end{proposition}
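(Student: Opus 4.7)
The plan is to set $\theta_0 = (\mu_1^{LB}, \mu_0, \tilde{y}_{1-q}, \delta, q)^\top$ equal to the population parameters constructed in the Lee-IPW identification result of Section~\ref{sec:id_het} (as detailed in Appendix~\ref{sec:identification}), and then verify the five population moment equations $\E[m_{si}(Z_i, \theta_0)] = 0$ one at a time. Stratum-level independence (Lemma~\ref{lem:SLI_hetero}) and monotonicity (Assumption~\ref{ass:MONO}) will be used throughout, in particular noting that monotonicity implies $AO = \{S(0)=1\}$, which is what allows the weighting adjustments to recover always-observed parameters. Once existence is shown, uniqueness of $\mu_1^{LB}, \mu_0$, and $q$ is inherited from the identification argument, and $\Delta^{LB} = \mu_1^{LB} - \mu_0$ then follows immediately.

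I would handle moments $m_1, m_2, m_3$ first, since these are essentially definitional. For $m_1$, the parameter $\mu_1^{LB}$ is defined as $\E[\widetilde Y \mid D=1, S=1, \widetilde Y \le \tilde y_{1-q}]$, so the moment is zero by the tower property. Moment $m_2$ is analogous, giving the weighted control mean $\mu_0 = \E[(1-D)Sw_{c,g}Y]/\E[(1-D)Sw_{c,g}]$. Moment $m_3$ uses that $\tilde y_{1-q}$ is the $(1-q)$-quantile of $\widetilde Y$ given $\{D=1,S=1\}$, so $\Pr(\widetilde Y > \tilde y_{1-q}\mid D=1,S=1)=q$ (invoking continuity of the conditional distribution to avoid ties, as is standard in Lee-type arguments).

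Moment $m_5$ reduces to direct algebra once we plug in the identification formula for $q$. Rewriting $q = 1 - \E[Sw_{q,g}\mid D=0]/\Pr(S=1\mid D=1)$ as $\E[(1-D)Sw_{q,g}]/(1-p) = (1-q)\E[DS]/p$ shows that $\E[m_{5i}]=\tfrac{1-q}{p}\E[DS] - \tfrac{1}{1-p}\E[(1-D)Sw_{q,g}] = 0$. Moment $m_4$ is the main obstacle, because $r_{b_i}$ is a within-stratum statistic that depends on other units in the same block. Conditioning on $X^{(n)}$ and using blockwise randomization, $r_g$ identifies the stratum-level always-observed rate $\pi_g^0 := \Pr(S(0)=1 \mid b=g)$, while $\Pr(D_i=1\mid b_i=g)=\eta_g$. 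These combine to give $\E[r_{b_i}D_i] = \sum_g \Pr(b=g)\eta_g \pi_g^0$ and $\E[r_{b_i}] = \sum_g \Pr(b=g)\pi_g^0$, whose ratio equals $\delta = \Pr(D=1\mid AO)$ by construction. Hence $\E[r_{b_i}(D_i-\delta)]=0$.

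For the uniqueness/equivalence claim, I would argue that any $\theta$ satisfying $\E[m(Z_i,\theta)]=0$ must have fifth component equal to $q$ (since $m_5$ alone pins down $q$ given the weights and assignment rates), second component equal to the weighted control mean ($m_2$ alone pins down $\mu_0$), third component equal to the $(1-q)$-quantile of $\widetilde Y$ on $\{D=1,S=1\}$ (from $m_3$), and hence first component equal to the corresponding trimmed treated mean (from $m_1$). The main subtlety to address carefully will be the $m_4$ condition, since formalizing $\E[r_{b_i}(D_i-\delta)]=0$ requires careful accounting of the probability space and conditioning on $X^{(n)}$; I would lean on the tower property and the conditional-on-design independence in Assumption~\ref{ass:Ahet} to make this rigorous.
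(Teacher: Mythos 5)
Your proposal is correct and follows essentially the route the paper itself relies on: the paper states this proposition without a printed proof, and the verification reduces to plugging the identification formulas of Section~\ref{sec:id_het} and Appendix~\ref{sec:identification} into each moment, exactly as you do, with your conditioning argument for $m_4$ (tower property over $X^{(n)}$ and $D^{(n)}$, giving $\E[r_{b_i}D_i]=\sum_g\Pr(b=g)\eta_g\pi_g^0$ and hence $\delta=\Pr(D=1\mid AO)$ under monotonicity) being the only step that requires real work. One small point to make explicit in the uniqueness chain: since $\widetilde Y_i=(\delta/\eta_{b_i})Y_i$ depends on the fourth component of $\theta$, you should note that $m_4$ pins down $\delta$ first (it involves no other components), after which $m_5\Rightarrow q$, $m_3\Rightarrow \tilde y_{1-q}$, and $m_1\Rightarrow\mu_1^{LB}$ follow in the order you give; the argument you already wrote for existence supplies this immediately.
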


The lower Lee–IPW bound is constructed using the system above; the upper bound replaces the indicator $\mathbbm{1}\{\widetilde{Y}_i \le y_{1-q}\}$ with $\mathbbm{1}\{\widetilde{Y}_i \ge y_{q}\}$ in the relevant moments. The just-identified GMM estimator $\hat{\theta}_n$ solves $\frac{1}{n} \sum_{i=1}^n m(Z_i, \hat{\theta}_n) = 0$, with Jacobian matrix $\hat{M}_n = \frac{1}{n} \sum_{i=1}^n \partial_{\theta} m(Z_i, \hat{\theta}_n)$. The estimator for the lower bound is then given by $\hat{\Delta}^{LB} = e_1^\top \hat{\theta}_n - e_2^\top \hat{\theta}_n$, where $e_1$ and $e_2$ are standard basis vectors selecting the appropriate components.

This unified approach allows the Lee–IPW estimator to be implemented efficiently in both large and small samples, providing robust inference for the always-observed average treatment effect even when standard approaches fail.

\begin{remark}[Heterogeneous shares and label-based designs]
Under heterogeneous treatment shares, randomization delivers independence within strata but not unconditional independence, so unconditional Lee bounds are not valid. The Lee-IPW bounds developed in this section restore validity under the covariate-based stratification framework used here. When strata are defined by many discrete labels, we consider the same Lee-IPW bounds, and Appendix~\ref{sec:var_discrete_blocks} establishes the corresponding partial identification and inference results in a label-based triangular-array setting.
\end{remark}

\section{Inference for Moment-Based Estimators under Heterogeneous-Shares Stratification}
\label{sec:lee_ipw_inference}

\subsection{Asymptotic Distribution under Heterogeneous-Shares Stratification}
\label{sec:asy_het}

This section develops inference procedures for the Lee-IPW bounds introduced in Section~\ref{sec:lee_ipw_estimator}, focusing on the statistical challenges that arise when treatment shares differ across strata. Such heterogeneity is common in empirical studies, but falls outside the equal-share framework discussed in Section~\ref{sec:variance_equal_share} and in \citet{bai2023efficiency}. Consequently, standard variance formulas and asymptotic results do not generally apply, making it necessary to extend existing theory for valid inference. The main contribution here, then, is to generalize the central limit theorem of \citet{bai2023efficiency} to accommodate stratum-specific treatment shares, covering any moment estimator that satisfies the required regularity conditions and yielding valid, design-consistent inference in complex settings. The Lee-IPW bounds estimator, introduced in this article, demonstrates the usefulness of this broader framework.

Now, we formally analyze asymptotic inference for a broad class of generalized method of moments (GMM) estimators under stratified experimental designs with possibly heterogeneous treatment shares, $\eta_g$, across strata. The approach applies to any estimator whose moment vector satisfies the regularity conditions in Assumption~\ref{ass:C}, which ensures the validity of the central limit theorem in this setting. The Lee-IPW bounds introduced in Section~\ref{sec:lee_ipw_estimator} represent an important instance of this general framework, but the results here accommodate any such moment-based estimator.

Let $m(Z_i,\theta)$ denote a vector of moment functions that may depend on observed data and known design shares, and let
\[
M := \left.\frac{\partial}{\partial\theta'}\,\E\!\left[m(Z_i,\theta)\right]\right|_{\theta=\theta_0}
\]
be the corresponding population Jacobian. Allowing for heterogeneous treatment shares leads to an asymptotic ``meat''
matrix of the form
\begin{align*}
\Omega_{\eta(\cdot)}
  =\;&\E\!\left[
      \eta(X)\,\Var\!\big(m(X,1,Y(1);\theta_0)\mid X\big)
      + \big(1-\eta(X)\big)\,\Var\!\big(m(X,0,Y(0);\theta_0)\mid X\big)
     \right] \\
  &\quad + \Var\!\left[
      \eta(X)\,\mu_1(X) + \big(1-\eta(X)\big)\,\mu_0(X)
    \right],
\end{align*}
where $\mu_d(X) := \E[m(X,d,Y(d);\theta_0)\mid X]$ and $\eta(X):=\Pr(D=1\mid X)$.
In this notation, $X$ denotes the observed covariates used by the design, which may include stratum indicators.

\begin{theorem}[Asymptotics under heterogeneous-shares stratification]\label{thm:clt_het_gmm}
Let $\hat{\theta}_n$ be the GMM estimator defined by the sample analog of $m(Z_i,\theta)$.
Suppose Assumptions~\ref{ass:iid}, \ref{ass:B}, \ref{ass:Ahet}, \ref{ass:C}, \ref{ass:CB-LLN}, and \ref{ass:CB-IMB} hold.
Then,
\[
\sqrt{n}\,(\hat\theta_n-\theta_0)\ \dto\ \mathcal N(0,V_\ast),
\qquad
V_\ast = M^{-1}\Omega_{\eta(\cdot)} M^{-1\prime},
\]
where $M$ is nonsingular.
\end{theorem}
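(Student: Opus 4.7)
The plan is to adapt the GMM central limit theorem of \citet{bai2023efficiency} to allow stratum-specific treatment shares. The argument proceeds in three stages. First, I would establish consistency $\hat\theta_n\to_p\theta_0$ using a uniform law of large numbers for $\bar m_n(\theta)$ over $\Theta$, built from the pointwise measurability and Donsker conditions in Assumption~\ref{ass:C}(d,e) together with the stratified-design uniform convergence machinery already used in Proposition~\ref{prop:lee_consistency_equalshare} (the blockwise LLN does not require equal shares). Assumption~\ref{ass:C}(a) then delivers consistency via a standard extremum-estimator argument. Using Assumption~\ref{ass:C}(b), a first-order expansion reduces the problem to showing
\[
\sqrt n\,\bar m_n(\theta_0)\;\dto\;\mathcal N\!\left(0,\Omega_{\eta(\cdot)}\right),
\]
after which $V_\ast=M^{-1}\Omega_{\eta(\cdot)}M^{-1\prime}$ follows by continuous mapping.

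The heart of the proof is a design-respecting decomposition of the sample moments. Writing $m_i^{(d)}:=m(X_i,d,Y_i^*(d);\theta_0)$, I would split
\[
m(Z_i,\theta_0) \;=\; \underbrace{\eta_{b_i}m_i^{(1)}+(1-\eta_{b_i})m_i^{(0)}}_{A_i} \;+\; \underbrace{(D_i-\eta_{b_i})\bigl(m_i^{(1)}-m_i^{(0)}\bigr)}_{B_i}.
\]
Conditional on the covariate matrix $X^{(n)}$ and the potential-outcome arrays, Assumption~\ref{ass:Ahet} makes $B_i$ mean zero and independent across blocks, so a Lindeberg CLT applied blockwise (using uniformly bounded block sizes from Assumption~\ref{ass:B} and the moment bounds from Assumption~\ref{ass:C}(e)) delivers $\sqrt n\,\bar B_n$ converging to a normal with variance $\E[\eta(X)\Var(m^{(1)}\mid X)+(1-\eta(X))\Var(m^{(0)}\mid X)]$. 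The contribution of $A_i$ is almost i.i.d.\ under Assumption~\ref{ass:iid}, with the only obstruction being that $\eta_{b_i}$ is a functional of the full matrix $X^{(n)}$ rather than of $X_i$ alone. Once $\eta_{b_i}$ is shown to approximate $\eta(X_i):=\Pr(D_i=1\mid X_i)$ in $L^2$ uniformly, an ordinary i.i.d.\ CLT gives $\sqrt n\,\bar A_n \dto \mathcal N\!\left(0,\Var[\eta(X)\mu_1(X)+(1-\eta(X))\mu_0(X)]\right)$. Conditional mean-zero of $B_i$ given $(X^{(n)},Y^{*(n)})$ makes the two limits orthogonal, so the variances add to $\Omega_{\eta(\cdot)}$.

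The main obstacle is precisely the $\eta_{b_i}\approx\eta(X_i)$ step in the $A$-term: with heterogeneous shares, the stratification map ties the block share to the entire covariate configuration, so $A_i$ is not i.i.d.\ across units. The covariate-balance Assumption~\ref{ass:B}, which forces the average within-block diameter in $X$ to vanish, together with the CB-LLN and CB-IMB regularity conditions invoked in the theorem statement, are designed to control this approximation error. My plan is to first show that $\eta_{b_i}$ converges to a smooth design limit in mean square, and then absorb the replacement error into an $o_p(1)$ remainder in the linearization. Once this substitution is justified, the rest of the computation mirrors the equal-share proof of \citet{bai2023efficiency} with the constant $\eta$ replaced throughout by $\eta(X)$, and the final variance formula $V_\ast=M^{-1}\Omega_{\eta(\cdot)}M^{-1\prime}$ emerges by standard sandwich algebra.
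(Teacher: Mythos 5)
Your high-level strategy (consistency, Z-estimator linearization, then a CLT for $\sqrt n\,\bar m_n(\theta_0)$ exploiting the block structure) matches the paper, but the central decomposition is not the one the paper uses, and as written it assigns the wrong asymptotic variances to its two pieces. The paper adds and subtracts the \emph{conditional means} $\mu_d(X_i)=\E[m(X,d,Y(d);\theta_0)\mid X=X_i]$, producing three terms: a superpopulation term $\mu_0(X_i)+\eta_i\Delta(X_i)$ with $\Delta=\mu_1-\mu_0$, an imbalance term $(D_i-\eta_i)\Delta(X_i)$, and a residual $D_i(m^{(1)}_i-\mu_1(X_i))+(1-D_i)(m^{(0)}_i-\mu_0(X_i))$. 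Because $\Delta(X_i)$ is a function of covariates only and is Lipschitz (Assumption~\ref{ass:C}(f)), the within-block centering $\sum_{i\in\lambda_g}(D_i-\eta_g)=0$ together with Assumption~\ref{ass:B} makes the imbalance term $o_p(1)$; the superpopulation term carries exactly $\Var[\eta\mu_1+(1-\eta)\mu_0]$ and the residual carries exactly $\E[\eta\Sigma_1+(1-\eta)\Sigma_0]$. Your two-way split $A_i=\eta_i m^{(1)}_i+(1-\eta_i)m^{(0)}_i$, $B_i=(D_i-\eta_i)(m^{(1)}_i-m^{(0)}_i)$ is algebraically valid, but (i) $B_i$ involves the raw potential-outcome moments, not $\Delta(X_i)$, so it does \emph{not} shrink within blocks and its blockwise Lindeberg variance is $\E[\eta(1-\eta)\Var(m^{(1)}-m^{(0)}\mid X)]$, not the $\E[\eta\Var(m^{(1)}\mid X)+(1-\eta)\Var(m^{(0)}\mid X)]$ you claim; and (ii) $A_i$ is not centered at its conditional mean, so the i.i.d.\ CLT gives variance $\Var(\eta m^{(1)}+(1-\eta)m^{(0)})$, which includes a conditional-variance component you have dropped. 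Both of your component variances involve the never-identified cross term $\Cov(m^{(1)},m^{(0)}\mid X)$; these happen to cancel when the two correct variances are summed (the total is indeed $\Omega_{\eta(\cdot)}$), but your stated components are each wrong, so the bookkeeping as proposed does not go through without being redone.

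A second, smaller issue is your treatment of the ``$\eta_{b_i}\approx\eta(X_i)$'' step. You propose to prove $L^2$ convergence of $\eta_{b_i}$ to $\eta(X_i)$ and absorb the replacement error; nothing in the stated assumptions delivers that directly, and the paper does not attempt it. Instead, Assumptions~\ref{ass:CB-LLN} and \ref{ass:CB-IMB} are high-level design conditions that directly equate limits of $\tfrac1n\sum_i D_i\gamma(X_i)$ with $\E[\eta(X)\gamma(X)]$ and control the block-imbalance term, and the paper works with $\eta_i=\eta_{b_i}$ throughout, never substituting $\eta(X_i)$ pointwise. Finally, note that ``the variances add because $B$ is conditionally mean zero'' establishes zero covariance but not joint normality; the paper closes this via a conditional characteristic-function argument (its analogue of Lemma~A.6), which you would also need.
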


\begin{proof}
See Appendix~\ref{proof_hetshares_asympt} for a detailed proof.
\end{proof}

\vspace{0.3cm}

This result establishes asymptotic normality and provides a variance formula for any moment estimator that satisfies the required regularity conditions. In particular, when applied to the Lee-IPW bounds estimator, it yields the asymptotic distribution and variance expressions for the lower and upper bounds.

\subsection{Variance Decomposition and Estimation under Heterogeneous-Shares Stratification}
\label{sec:var_het}

The asymptotic covariance matrix $V_{\eta(\cdot)}$ from Theorem~\ref{thm:clt_het_gmm} extends the equal-share variance structure to handle arbitrary, stratum-specific treatment shares. This formulation is valid for any estimator whose moment vector satisfies the regularity conditions outlined earlier, including the Lee-IPW bounds as our example.

To establish consistency of the variance estimator under heterogeneous shares, we impose two additional
regularity conditions that are directly analogous to Assumptions~3.4--3.5 in \citet{bai2023efficiency}.
The first controls the construction of the involution $\pi(\cdot)$ used to handle singleton within-arm
cross-products by requiring paired blocks to be asymptotically close in covariates. The second provides
local regularity (integrability, empirical process, and smoothness) conditions ensuring the stability of
the variance components that involve conditional means and second moments.

\begin{assumption}[Paired-block covariate closeness]\label{ass:CB-neighbor}
Let $\Lambda(X^{(n)})=\{\lambda_g:g=1,\dots,G\}$ be the covariate-based blocks from Section~\ref{sec:design_equal_share_sec},
and let $\pi:\{1,\dots,G\}\to\{1,\dots,G\}$ be a fixed involution (a pairing map) such that
$\pi(\pi(g))=g$ and $\pi(g)\neq g$ for all $g$.
Assume that the paired blocks are asymptotically close in baseline covariates:
\[
\frac{1}{G}\sum_{g=1}^{G}\ 
\max_{i\in\lambda_g,\ i'\in\lambda_{\pi(g)}} \|X_i-X_{i'}\|^2
\ \xrightarrow{p}\ 0.
\]
\end{assumption}

\noindent\textit{Remark.}
Under Assumption~\ref{ass:B} (covariate balancing), it is typically possible to re-order the blocks and then
construct $\pi(\cdot)$ so that $(g,\pi(g))$ pairs blocks with similar covariates.
For example, one may apply a nonbipartite matching procedure to the block means
$\{\bar X_g\}_{g=1}^G$, where $\bar X_g := N_g^{-1}\sum_{i\in\lambda_g} X_i$,
to obtain a pairing $\pi(\cdot)$ such that Assumption~\ref{ass:CB-neighbor} holds.

\begin{assumption}[Local regularity for variance components]
\label{ass:CB-var-regularity}
There exists $\delta>0$ such that, for each $d\in\{0,1\}$:
\begin{enumerate}[label=(\alph*)]

\item \textbf{(Local uniform integrability).}
\begin{align*}
\lim_{\lambda\to\infty}\ 
\E\Bigg[
&\sup_{\theta\in\Theta:\ \|\theta-\theta_0\|<\delta}
\big\|m\!\left(X_i,d,Y_i(d);\theta\right)\big\|^2 \\
&\qquad\times
\mathbf 1\Bigg\{
\sup_{\theta\in\Theta:\ \|\theta-\theta_0\|<\delta}
\big\|m\!\left(X_i,d,Y_i(d);\theta\right)\big\|
>\lambda
\Bigg\}
\Bigg]
=0.
\end{align*}

\item \textbf{(Local Glivenko--Cantelli for conditional means and second moments).}
For each $1\le s\le d_\theta$, the classes
\begin{align*}
\Big\{
&\E\!\big[m_s\!\left(X_i,d,Y_i(d);\theta\right)\mid X_i=x\big]
:\ \|\theta-\theta_0\|<\delta
\Big\}
\end{align*}
and
\begin{align*}
\Big\{
&\E\!\big[
m_s\!\left(X_i,d,Y_i(d);\theta\right)\,
m\!\left(X_i,d,Y_i(d);\theta\right)^\top
\mid X_i=x\big]
:\ \|\theta-\theta_0\|<\delta
\Big\}
\end{align*}
are $Q$-Glivenko--Cantelli.

\item \textbf{(Local uniform Lipschitzness in covariates).}
Each component of
$\E[m(X,d,Y(d);\theta)\mid X=x]$ and
$\E[m(X,d,Y(d);\theta)m(X,d,Y(d);\theta)^\top\mid X=x]$
is Lipschitz in $x$ with a constant that is uniform over
$\{\theta\in\Theta:\ \|\theta-\theta_0\|<\delta\}$; that is, there exists $L_d<\infty$ such that
for all $x,x'\in\mathbb R^{d_x}$,
\begin{align*}
\sup_{\theta\in\Theta:\ \|\theta-\theta_0\|<\delta}
\Big|
&\E\!\big[m_s(X,d,Y(d);\theta)\mid X=x\big]
-
\E\!\big[m_s(X,d,Y(d);\theta)\mid X=x'\big]
\Big| \\
&\le L_d\|x-x'\|,
\end{align*}
and similarly for each component of
$\E[m(X,d,Y(d);\theta)m(X,d,Y(d);\theta)^\top\mid X=x]$.

\end{enumerate}
\end{assumption}

An important contribution of this section is a decomposition of the “meat” of the sandwich variance in terms of unconditional moments, extending beyond the conditional variance formulation\footnote{We propose a novel decomposition for the heterogeneous-shares variance by rewriting $\Omega_{\eta(\cdot)}$ as in \eqref{eq:V-eta-decomp}. This differs from the variance decomposition for the equal-share case in \citet{bai2023efficiency} and yields a slightly different plug-in implementation. \citet[Remark~4.4]{bai2023efficiency} discusses heterogeneous assignment probabilities and states the asymptotic variance in its original form with conditional variances, whereas our decomposition removes these conditional terms, allowing for tractable and consistent plug-in variance estimators.}. This expression is derived by applying the definition of conditional variance and properties of expectations to the original asymptotic variance formula:
\begin{align}
V_{\eta(\cdot)}
&= M^{-1}\,\Omega_{\eta(\cdot)}\,(M^{-1})^{\top}, \label{eq:V-eta-decomp}\\[0.25em]
\Omega_{\eta(\cdot)}
&:= \E\!\big[\eta_g\, m_1 m_1^{\top}\big] + \E\!\big[(1-\eta_g)\, m_0 m_0^{\top}\big] \nonumber\\
&\quad - \E\!\big[\eta_g(1-\eta_g)\, \mu_1 \mu_1^{\top}\big]
      - \E\!\big[\eta_g(1-\eta_g)\, \mu_0 \mu_0^{\top}\big] \nonumber\\
&\quad + \E\!\big[\eta_g(1-\eta_g)\, \mu_1 \mu_0^{\top}\big]
      + \E\!\big[\eta_g(1-\eta_g)\, \mu_0 \mu_1^{\top}\big] \nonumber\\
&\quad - \E\!\big[\eta_g \mu_1 + (1-\eta_g)\mu_0\big]\,
        \E\!\big[\eta_g \mu_1 + (1-\eta_g)\mu_0\big]^{\top}. \nonumber
\end{align}
This unconditional decomposition is particularly important in practice, since directly estimating conditional variances within small strata can be highly variable or even infeasible when some strata contain very few treated or control units. By expressing the variance entirely in terms of unconditional means and cross-products, this approach enables reliable and consistent variance estimation even in finely stratified designs with heterogeneous treatment shares.

To facilitate practical estimation, the variance formula can be expressed as a collection of sample averages and cross-products that are straightforward to compute. Specifically, let $\hat m_i = m(Z_i, \hat\theta_n)$ denote the estimated moment for each unit. For each stratum $g$, let $N_{d,g}$ represent the number of units assigned to arm $d$, $w_g = N_g/n$ the weight of stratum $g$ in the sample, and $\eta_g = T_g/N_g$ the observed treatment share. For empirical implementation, the variance components can be written as functions of sample averages and cross-products computed from the observed data. The primary unit-level quantities are
\begin{align*}
  \hat{A}_{1,n} &= \frac{1}{n} \sum_{i=1}^{n} D_i\, \hat m_i \hat m_i^{\top}, \quad
  \hat{A}_{0,n} = \frac{1}{n} \sum_{i=1}^{n} (1-D_i)\, \hat m_i \hat m_i^{\top}, \quad
  \hat{A}_{3,n} = \left( \frac{1}{n} \sum_{i=1}^{n} \hat m_i \right)\left( \frac{1}{n} \sum_{i=1}^{n} \hat m_i \right)^{\top}.
\end{align*}

In addition to these unit-level terms, cross-products within each block are incorporated to reduce noise and stabilize variance estimation by pooling information across pairs, which is especially important in finely stratified designs with small or uneven strata. These are defined as
\begin{align*}
\zeta_n(1,0)
&:= \sum_{g=1}^{G} w_g\,\eta_g(1-\eta_g)\,\frac{1}{2}\Bigg[
\Bigg(
  \frac{1}{N_{1,g}} \sum_{i\in\lambda_g:\,D_i=1}\hat m_i
\Bigg)
\Bigg(
  \frac{1}{N_{0,g}} \sum_{j\in\lambda_g:\,D_j=0}\hat m_j
\Bigg)^{\!\top}
\nonumber\\
&\hspace{3.1em}
+
\Bigg(
  \frac{1}{N_{0,g}} \sum_{j\in\lambda_g:\,D_j=0}\hat m_j
\Bigg)
\Bigg(
  \frac{1}{N_{1,g}} \sum_{i\in\lambda_g:\,D_i=1}\hat m_i
\Bigg)^{\!\top}
\Bigg],
\\[0.35em]
\zeta_n(1,1)
&:= \sum_{g=1}^{G} w_g\,\eta_g(1-\eta_g)\,\widehat\varsigma_{g,n}(1,1),
\\
\zeta_n(0,0)
&:= \sum_{g=1}^{G} w_g\,\eta_g(1-\eta_g)\,\widehat\varsigma_{g,n}(0,0),
\\[0.35em]
\hat{B}_n
&:= -\bigl[\, \zeta_n(1,1) + \zeta_n(0,0) - 2\,\zeta_n(1,0) \,\bigr].
\label{eq:Bn-def}
\end{align*}

\noindent Here, for $d\in\{0,1\}$, the within-arm cross-product component $\widehat\varsigma_{g,n}(d,d)$ is defined as
\[
\widehat\varsigma_{g,n}(d,d)
:=
\begin{cases}
\displaystyle
\frac{2}{N_{d,g}(N_{d,g}-1)}
\sum_{\substack{i<i'\\ i,i'\in\lambda_g:\,D_i=D_{i'}=d}}
\hat m_i\,\hat m_{i'}^{\top} ,
& \text{if } N_{d,g}\ge 2,\\[1.1em]
\displaystyle
\hat m_{i_g(d)}\,\hat m_{i_{\pi(g)}(d)}^{\top},
& \text{if } N_{d,g}=1,
\end{cases}
\]
where $i_g(d)$ denotes the (unique) index in block $g$ such that $D_{i_g(d)}=d$ when $N_{d,g}=1$.
In the case $N_{d,g}=1$, the between-block product uses a paired block $\pi(g)$, where $\pi:\{1,\dots,G\}\to\{1,\dots,G\}$ is a fixed involution with $\pi(\pi(g))=g$ and $\pi(g)\neq g$ for all $g$.\footnote{After re-ordering blocks, one may construct $\pi(\cdot)$ so that paired blocks $(g,\pi(g))$ are asymptotically close in covariates (cf.\ Assumption~3.4 of \citet{bai2024inference}).}

With these components, the estimated variance “meat” is assembled as
\[
  \widehat\Omega_n = \hat{A}_{1,n} + \hat{A}_{0,n} + \hat{B}_n - \hat{A}_{3,n},
\]
and the full estimated covariance matrix is
\[
  \widehat V_{\eta(\cdot)} = \hat M_n^{-1} \widehat\Omega_n\, \hat M_n^{-T}, \qquad
  \hat M_n = \frac{1}{n} \sum_{i=1}^n \partial_\theta m(Z_i, \hat\theta_n).
\]
For the Lee-IPW lower and upper bounds, the standard error is given by the variance of a difference, that is,
\[
  \hat\sigma_{\bullet}^2 = \widehat V_{\eta(\cdot),11}^{\,\bullet} + \widehat V_{\eta(\cdot),22}^{\,\bullet} - 2\,\widehat V_{\eta(\cdot),12}^{\,\bullet},
\]
where $\widehat V_{\eta(\cdot),ij}^{\,\bullet}$ denotes the $(i,j)$ entry of the estimated covariance matrix, constructed using the appropriate moment conditions for the lower or upper bound.

\begin{theorem}[Consistency of $\widehat V_{\eta(\cdot)}$ under heterogeneous-shares stratification]
\label{prop:var_het}
Suppose Assumptions~\ref{ass:iid}, \ref{ass:B}, \ref{ass:Ahet}, \ref{ass:C}, \ref{ass:CB-neighbor}, and
\ref{ass:CB-var-regularity} hold, and $\hat M_n \xrightarrow{p} M$. Let $\widehat\Omega_n$ and
$\widehat V_{\eta(\cdot)}$ be defined as in this section. Then,
\[
\widehat V_{\eta(\cdot)} \xrightarrow{p} V_{\eta(\cdot)}:=M^{-1}\Omega_{\eta(\cdot)}M^{-T}.
\]
Consequently, $n^{-1} \hat\sigma_{\bullet}^2$ delivers consistent variances for $\hat\Delta^{LB}$ and $\hat\Delta^{UB}$, using the appropriate moment vectors.
\end{theorem}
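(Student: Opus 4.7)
Since $\hat M_n \xrightarrow{p} M$ with $M$ invertible by Assumption~\ref{ass:C}(b), the continuous mapping theorem reduces the claim to $\widehat \Omega_n \xrightarrow{p} \Omega_{\eta(\cdot)}$. The plan is to prove this via the decomposition $\widehat\Omega_n = \hat A_{1,n} + \hat A_{0,n} + \hat B_n - \hat A_{3,n}$, matching each piece to the corresponding term in the expansion \eqref{eq:V-eta-decomp}. Once $\widehat\Omega_n$ converges, continuous mapping again delivers $\widehat V_{\eta(\cdot)} \xrightarrow{p} V_{\eta(\cdot)}$, and the Lee--IPW variances $n^{-1}\hat\sigma_\bullet^2$ follow because they are fixed linear combinations of the entries of $\widehat V_{\eta(\cdot)}$.

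\textbf{Parameter replacement and unit-level averages.} Using $\hat\theta_n \xrightarrow{p} \theta_0$ (from Theorem~\ref{thm:clt_het_gmm}) together with Assumption~\ref{ass:CB-var-regularity}(a)--(b), I would first argue that every sample average appearing in $\widehat\Omega_n$ agrees up to $o_p(1)$ with its analog evaluated at $\theta_0$: local uniform integrability controls the tails of $\|m(\cdot,\theta)\|^2$ on a neighborhood of $\theta_0$, while the local Glivenko--Cantelli class converts the $\hat\theta_n - \theta_0$ perturbation into a uniform error vanishing in probability. Writing $m_i := m(Z_i,\theta_0)$, I would then apply a stratified LLN for heterogeneous-shares designs (adapting \citet[Lem.~S.1.5]{bai2023efficiency} to cover $\eta_g$ depending on $g$), exploiting block independence from Assumption~\ref{ass:Ahet}, the uniform block-size bound from Assumption~\ref{ass:B}, and the Lipschitzness of the conditional second moment from Assumption~\ref{ass:CB-var-regularity}(c). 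This yields $\hat A_{1,n} \xrightarrow{p} \E[\eta(X)\,\E[m_1 m_1^\top\mid X]] = \E[\eta_g m_1 m_1^\top]$, and symmetrically for $\hat A_{0,n}$, while $\bar m_n \xrightarrow{p} \E[\eta(X)\mu_1(X) + (1-\eta(X))\mu_0(X)]$ combined with continuous mapping gives the limit of $\hat A_{3,n}$, matching the final subtractive term of \eqref{eq:V-eta-decomp}.

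\textbf{Cross-product term (the main obstacle).} The core difficulty is proving $\hat B_n \xrightarrow{p} -\E[\eta_g(1-\eta_g)(\mu_1\mu_1^\top + \mu_0\mu_0^\top - \mu_1\mu_0^\top - \mu_0\mu_1^\top)]$. For each $(d,d')\in\{(1,1),(0,0),(1,0)\}$ I would show $\zeta_n(d,d') \xrightarrow{p} \E[\eta(X)(1-\eta(X))\mu_d(X)\mu_{d'}(X)^\top]$ (symmetrized for $d\neq d'$) in two steps. First, within block $g$ the arm-level average $\bar m_{d,g}$ has conditional expectation (given $X^{(n)}$) equal to the arithmetic mean of $\mu_d(X_i)$ over $i\in\lambda_g$ with $D_i=d$; by Assumption~\ref{ass:B} and the Lipschitz part of Assumption~\ref{ass:CB-var-regularity}(c), this mean lies within $O(\max_{i,j\in\lambda_g}\|X_i-X_j\|)$ of $\mu_d(\bar X_g)$. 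The stochastic fluctuation of $\bar m_{d,g}$ around its conditional mean is controlled by Assumption~\ref{ass:CB-var-regularity}(a). For the cross-arm piece $\zeta_n(1,0)$, $\bar m_{1,g}$ and $\bar m_{0,g}$ depend on disjoint subsamples within the block, so the product factorizes cleanly; for the within-arm pieces with $N_{d,g}\ge 2$, the U-statistic form $\hat\varsigma_{g,n}(d,d)$ is design-unbiased for the average of $\mu_d(X_i)\mu_d(X_j)^\top$ over distinct pairs, collapsing to $\mu_d(\bar X_g)\mu_d(\bar X_g)^\top$ under block homogeneity. Second, summing $w_g\eta_g(1-\eta_g)\mu_d(\bar X_g)\mu_{d'}(\bar X_g)^\top$ over blocks and invoking the stratified LLN yields the target.

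\textbf{Singleton blocks via the involution.} The most delicate case is $N_{d,g}=1$, where $\hat\varsigma_{g,n}(d,d)$ substitutes the between-block product $\hat m_{i_g(d)}\hat m_{i_{\pi(g)}(d)}^\top$, whose conditional mean targets $\mu_d(\bar X_g)\mu_d(\bar X_{\pi(g)})^\top$ rather than $\mu_d(\bar X_g)\mu_d(\bar X_g)^\top$. This is precisely where Assumption~\ref{ass:CB-neighbor} enters: the paired-block covariate closeness $G^{-1}\sum_g \max_{i\in\lambda_g,\,i'\in\lambda_{\pi(g)}}\|X_i-X_{i'}\|^2 \xrightarrow{p} 0$ combined with the uniform Lipschitz constant on $\mu_d(\cdot)$ makes the bias from the involution substitution $o_p(1)$ in the weighted sum with weights $w_g\eta_g(1-\eta_g)$. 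I would verify that the mass carried by singleton blocks does not concentrate on a non-negligible subsequence, so the Lipschitz bound pins down the correct limit. Assembling all three steps gives $\widehat\Omega_n \xrightarrow{p} \Omega_{\eta(\cdot)}$, and the theorem follows by continuous mapping, with the variance-of-a-difference formula for $\hat\Delta^{LB}$ and $\hat\Delta^{UB}$ reading off the relevant $2\times 2$ submatrix of $\widehat V_{\eta(\cdot)}$.
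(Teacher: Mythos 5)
Your proposal is correct and follows essentially the same route as the paper's proof: reduce to $\widehat\Omega_n \xrightarrow{p} \Omega_{\eta(\cdot)}$ by continuous mapping, split each term into a plug-in error (controlled by the local integrability and Glivenko--Cantelli conditions in Assumption~\ref{ass:CB-var-regularity}) plus a deterministic-$\theta_0$ average (handled by a stratified LLN, here Lemma~\ref{lem:A9_het_final}), compute block-level conditional expectations for the cross-product terms as in Lemma~\ref{lem:C2-het-cov}, and use the involution together with Assumption~\ref{ass:CB-neighbor} and Lipschitzness to kill the singleton-block bias. The only minor imprecision is the remark that the cross-arm product "factorizes cleanly" — under complete randomization the treated and control indicators within a block are dependent, so the conditional mean is the off-diagonal pair average $\frac{1}{N_g(N_g-1)}\sum_{i\neq k}\mu_{1,i}\mu_{0,k}^\top$ rather than a product of means, but the discrepancy is absorbed by the same within-block homogeneity bound you already invoke.
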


\begin{proof}
See Appendix~\ref{proof_hetshares_var} for a detailed proof.
\end{proof}

\vspace{0.3cm}

The estimator relies on sample averages and block-level cross-products, making it broadly applicable and efficient in computation. The correction term $\hat{B}_n$ is particularly important when treatment shares are highly variable or block sizes are small, as it adjusts for the negative covariance introduced by fixing treated counts within strata. Neglecting this feature can lead to overconservative inference. The variance decomposition and estimation methods presented here apply not only to Lee-IPW bounds, but more generally to any estimator meeting the required regularity conditions in stratified experimental designs with heterogeneous shares.

\subsection{Empirical Implementation of Standard Errors for Lee-IPW Bounds}
\label{sec:boot_het}

The analytic sandwich variance estimator developed in Section~\ref{sec:var_het} is recommended as the primary method for inference, as it is consistent under a wide range of stratification and blocking schemes, and directly incorporates the negative covariance that arises when the number of treated units per stratum is fixed by design. This feature distinguishes it from classical i.i.d.\ and equal-share variance estimators, which do not fully capture the dependence structure present in complex experimental designs with heterogeneous treatment shares. In empirical applications, the sandwich estimator is generally preferable for constructing confidence intervals and conducting hypothesis tests, especially when treatment shares or block sizes are unbalanced.

In all cases, Lee-IPW bounds should be accompanied by confidence intervals constructed using the recommended procedures \citep{lee2009training} and the standard error estimator presented here. When the confidence limits agree in sign, conclusions regarding the sign of the treatment effect are robust to monotone selection. For designs with highly variable treatment shares or small blocks, special attention should be paid to the correction term in the variance formula, as neglecting this adjustment may result in overly conservative inference.

Overall, the methods presented in this section extend an existing inference framework to accommodate arbitrary heterogeneity in treatment assignment and block structure. They provide a coherent and flexible approach for valid uncertainty quantification, applicable not only to Lee-IPW bounds but to a broad class of moment estimators in stratified experiments.

\section{Conclusion}\label{sec:conclusion}

This paper develops a unified, design-consistent approach for valid inference on Lee bounds in randomized experiments subject to differential outcome attrition and stratified randomization. We establish that, under equal-share stratification, the variance reduction induced by fixing the treatment allocation within blocks can be fully captured using recent theory for finely stratified experiments. Our results clarify that conventional variance estimators based on i.i.d.\ assumptions systematically overstate uncertainty when block-level assignment is present, leading to unnecessarily conservative inference. By instead using design-consistent variance estimators that incorporate the block structure and fixed treatment shares, practitioners can achieve tighter, more informative confidence intervals for Lee bounds.

Among its contributions, our work introduces a formal innovation by extending the scope of valid inference beyond covariate-based stratification to include designs where stratification is determined solely by discrete block labels, such as schools or survey waves, a setting not previously explicitly accommodated in the literature \citep{bai2023efficiency}. We show that the assumptions required for sharp identification and consistency of Lee bounds are preserved under equal-share stratification, and that the efficiency gains from block randomization are readily realized under both covariate-based and label-based stratification settings.

When treatment shares vary across blocks, conventional Lee bounds based on unconditional trimming are no longer valid, and conditional bounds constructed within small or sparsely treated strata can be highly variable, biased, or undefined. To address these challenges, we introduce the Lee–IPW bounds, which employ inverse probability weighting and a global trimming rule to restore validity and robustness in heterogeneous-shares stratified designs. For these bounds, we provide a closed-form, design-consistent variance estimator that ensures reliable inference even with small or unbalanced blocks.

Beyond this specific partial identification setting, we extend existing inference results to accommodate a broad class of moment-based estimators under heterogeneous-shares stratified designs. This extension builds on the literature on finely stratified experiments, enabling valid and efficient inference in a wider range of complex experimental settings.

Practically, our recommendations are as follows. Researchers should always retain and account for the full block structure of their experimental design, record actual treatment shares within each stratum, and use the appropriate design-consistent variance estimator. Specifically, $\widehat V_{\eta}$ should be used for equal shares, and $\widehat V_{\eta(\cdot)}$ for heterogeneous shares. When strata are small or sparsely treated, the Lee–IPW bounds provide a robust and well-defined alternative to conditional Lee bounds, as they avoid the instability and potential bias that can arise from such procedures in these settings. In all cases, relying on standard i.i.d.\ variance formulas ignores negative assignment dependence and typically yields overly conservative confidence intervals.

In summary, this work provides a general, practical, and efficient set of tools for valid inference on both conventional Lee bounds and the newly introduced Lee–IPW bounds in the presence of outcome attrition under a wide range of stratified experimental designs. Beyond these partial identification settings, our results further extend to inference for a broad class of moment-based estimators, enabling design-consistent variance estimation even in experiments with complex or heterogeneous-shares stratification. By connecting partial identification with more contemporaneous inference theory and robust variance estimation, these advances help ensure that treatment effect analyses remain both credible and informative across diverse empirical research contexts.

\newpage

\appendix

\begin{center}
	\huge
	Appendix
\end{center}

\doublespacing
\normalsize

\section{Proof of Proposition \ref{prop:lee_consistency_equalshare}}\label{app:proof_consistency_equalshare}
We follow \citet[Thm.~2.6]{newey1994large} and \citet[Prop.~2]{lee2009training} closely, except that the i.i.d.\ ULLN used there is replaced by a ULLN valid under equal–share stratification. Under Assumptions~\ref{ass:iid} and \ref{ass:A}, sampling within each block is simple random sampling without replacement, which induces negative within-block dependence. \citet[][Lem.~2.1 and Rem.~2.1]{hajek1960limiting} shows that, for block averages, without-replacement sampling behaves in large samples as independent sampling from the same finite-population distribution. Hence, block averages of $g(Z_i,\theta)$ satisfy the usual pointwise laws of large numbers. To obtain uniform convergence over a totally bounded parameter space, combine pointwise convergence with stochastic equicontinuity as in \citet[Thm.~1]{andrews1992generic}. The generic variants under a Lipschitz condition and under termwise stochastic equicontinuity also apply \citep[Thm.~3(a), Thm.~5]{andrews1992generic}, and the triangular-array extension at the \citep[][end of §3.4]{andrews1992generic} covers growing numbers of blocks with varying sizes. Together, these results yield the required ULLN under equal-share stratification.

With the blocked-design ULLN established, the only place where \citet{newey1994large} invoked i.i.d.\ sampling, the uniform convergence of sample moments, is now covered under equal-share stratification. From here, we can follow \citet[Prop.~2, Proof]{lee2009training} verbatim. The sample objective converges uniformly to its population analogue, the selection rates and the induced trimming share are consistently estimated so $\hat q\to_p q$, the cutoffs are regular so the trimming and quantile maps are continuous, and the trimmed moments inherit uniform convergence. Applying \citet[Thm.~2.6]{newey1994large} then yields consistency of the unconditional Lee bounds under equal-share stratification. We do not reproduce \citet[Prop.~2, Proof]{lee2009training} in full, since our only modification is to supply the ULLN under the blocked design.

\newpage

\section{Simulation Designs and Monte Carlo Behavior}
\label{app:sim_dgpss}

\subsection{Simulation 1: Equal-Share Matched Pairs and Design-Consistent Variance}
\label{app:sim1_dgp}

This simulation evaluates the finite-sample performance of the design-consistent variance estimator for Lee bounds in an equal-share matched-pair design and compares it to the conventional i.i.d.\ variance estimator that ignores blocking.

\paragraph{Stratification and treatment assignment.}
Each replication draws a sample of size $n = 10000$. For each unit $i$ we generate a covariate $X_i \sim \mathcal{N}(0,1)$ and an independent noise term $\varepsilon_i \sim \mathcal{N}(0,1)$, and set
\[
Y_i \;=\; 2X_i + 2 + \varepsilon_i.
\]
We sort observations by $X_i$ and form matched pairs by grouping consecutive units, so that each pair contains two units with similar covariates. Within each pair, one unit is randomly assigned to treatment and the other to control with equal probability. This yields an equal-share matched-pair design with $1{:}1$ allocation that satisfies stratum-level independence.

\paragraph{Selection mechanism.}
Attrition is introduced in a way that makes treated units more likely to be observed than controls. Let $S_i(d)$ denote the potential selection indicator under treatment status $d \in \{0,1\}$. For each $i$,
\[
\Pr\big(S_i(1) = 1\big) = 0.8, \qquad \Pr\big(S_i(0) = 1\big) = 0.7,
\]
independently across units. Realized selection is $S_i = S_i(D_i)$, so selection favors treated units in a manner consistent with the monotone selection pattern underlying Lee bounds. For treated units with $S_i = 1$ we add an independent shift $U_i \sim \text{Unif}(0,2)$ to the observed outcome to introduce additional treated-side variation, reflecting the empirical motivation for Lee-type trimming.

\paragraph{Observed data and target.}
The observed outcome is
\[
Y_i^{\text{obs}} \;=\;
\begin{cases}
Y_i + U_i, & \text{if } D_i = 1 \text{ and } S_i = 1,\\
Y_i,       & \text{if } D_i = 0 \text{ and } S_i = 1,\\
\text{missing}, & \text{if } S_i = 0,
\end{cases}
\]
and the analyst observes $(Y_i^{\text{obs}}, D_i, S_i, g_i)$, where $g_i$ indexes the matched pair. The Lee lower bound is computed using the matched-pair structure, and the simulation compares (i) the design-consistent variance estimator that incorporates this blocked assignment to (ii) the conventional i.i.d.\ variance estimator that ignores blocking. As reported in Section~\ref{sec:lim_cond}, the design-consistent estimator closely tracks the Monte Carlo variability of the Lee lower bound, while the i.i.d.\ estimator is substantially conservative.

\paragraph{Monte Carlo behavior.}
In this design, the design-consistent variance estimator closely tracks the Monte Carlo variability of the Lee lower bound, while the conventional i.i.d.\ variance estimator is substantially conservative, with average standard errors well above the empirical standard deviation. These patterns reproduce the main results reported in Section~\ref{sec:lim_cond}.

\subsection{Simulation 2: Heavy-Tailed Outcomes}
\label{app:sim2_dgp}

This simulation is designed to study the performance of conditional Lee bounds and pooled Lee-IPW bounds in a finely stratified design with many small strata and heavy-tailed outcomes, while preserving the standard identification assumptions. The design isolates an outlier sensitivity problem for cell-by-cell trimming procedures: within each stratum, selection asymmetries are concentrated on the single largest untreated potential outcome.

\paragraph{Stratification and treatment assignment.}
Each replication draws a sample of size $n = 2000$. For each unit $i$ we draw a covariate $X_i \sim \mathcal{N}(0,1)$ and sort observations by $X_i$. We partition the ordered sample into $G = 100$ equal-sized strata of size $20$, so strata correspond to covariate blocks. Within each stratum, treatment is assigned uniformly over all allocations with exactly $10$ treated units and $10$ control units, i.e.\ $\eta_g = 0.5$ for all $g$. This is a stratified, equal-share assignment mechanism.

\paragraph{Potential outcomes.}
Independently of $X_i$, we generate a heavy-tailed shock $V_i$ from a Pareto distribution with tail index $\alpha = 2.2$, truncated at its $99.5$th percentile. The untreated potential outcome is
\[
Y_i^*(0) \;=\; 2X_i + 2 + 12V_i,
\]
which has a pronounced upper tail. The treatment effect is constant and equal to one:
\[
Y_i^*(1) \;=\; Y_i^*(0) + 1.
\]

\paragraph{Selection mechanism.}
Selection is monotone in treatment and asymmetric in the upper tail. Within each stratum $g$, we flag as an outlier the single unit with the largest value of $Y_i^*(0)$. Potential selection outcomes $S_i(d)$ are generated using treatment- and outlier-specific selection probabilities:
\begin{align*}
&\Pr\big(S_i(d)=1 \,\big|\, i \text{ is an outlier}\big) \;=\;
\begin{cases}
1.00, & \text{if } d=1,\\
0.01, & \text{if } d=0,
\end{cases}
\\
&\Pr\big(S_i(d)=1 \,\big|\, i \text{ is not an outlier}\big) \;=\;
\begin{cases}
0.98, & \text{if } d=1,\\
0.94, & \text{if } d=0.
\end{cases}
\end{align*}
This construction implies $S_i(1)\ge S_i(0)$ almost surely for every $i$, so
Assumption~\ref{ass:MONO} holds for the DGP. Realized selection is $S_i=S_i(D_i)$. In practice, we apply a minimal within-stratum adjustment to avoid degenerate trimming cells for conditional Lee bounds. In each stratum, we ensure that the treated and control arms each contain at least one selected and at least one unselected observation. We also enforce at least three selected treated observations and at least two selected control observations per stratum. These adjustments are only meant to avoid pathological cells and do not affect the qualitative behavior of the procedures.

\paragraph{Observed data.}
The observed outcome is
\[
Y_i \;=\;
\begin{cases}
Y_i^*(D_i), & \text{if } S_i=1,\\
\text{missing}, & \text{if } S_i=0,
\end{cases}
\]
and the analyst observes $(Y_i, D_i, S_i, g_i)$, where $g_i$ indexes the stratum. By construction, the data satisfy equal-share stratified randomization, monotone selection, and a constant treatment effect of one, but feature heavy upper tails and selection asymmetries concentrated on within-stratum outliers. These features make the cell-specific trimming shares underlying conditional Lee bounds highly sensitive to local tail behavior and fractional trimming, which underlies the distortions documented in Section~\ref{sec:lim_cond}.

\paragraph{Monte Carlo behavior.}
In each replication, we compute conditional Lee bounds and pooled Lee-IPW bounds. Conditional Lee bounds are obtained by trimming within each stratum and then averaging the stratum-specific intervals. Identification is correct by construction, since stratified randomization, monotone selection, and a constant treatment effect all hold. The core difficulty is estimation rather than identification: each stratum contains only a small number of treated and control units and serves as the basis for a separate nonparametric trimming rule.

The key object is the cell-level trimming share
\[
\tau_g \;=\; 1 - \frac{\Pr(S=1 \mid D=0,g)}{\Pr(S=1 \mid D=1,g)},
\]
which must be estimated from a small number of observations in stratum $g$. With many small strata and heavy right tails, the estimated share $\widehat{\tau}_g$ is sensitive to single observations. In many replications, $\widehat{\tau}_g$ does not align with an integer fraction of treated units, so the conditional procedure uses fractional trimming. The boundary treated observation at the cutoff is kept with a fractional weight between zero and one so that, in expectation, exactly a $\widehat{\tau}_g$ share of treated units is trimmed within the cell.

Because outcomes have a heavy upper tail and selection is particularly favorable to large treated outcomes, especially for within-stratum outliers, the boundary unit can be an extreme treated value. Small changes in $\widehat{\tau}_g$ or in the ordering of outcomes can then translate into large movements in the trimmed treated mean. In repeated samples, the conditional lower bound can exceed the true effect of one by a large margin, and the aggregated conditional interval can fail to contain the true effect even though monotonicity and a constant treatment effect hold.

By contrast, the pooled Lee-IPW procedure applies a single global trimming share tied to the aggregate selection pattern and does not require separate trimming rules in each cell. As a result, it is much less sensitive to local tail behavior. In the simulations, the pooled Lee-IPW lower bound remains safely below one and coverage stays close to the nominal level. These experiments therefore highlight that, with many small strata and heavy tails, per-stratum conditional Lee bounds can behave poorly due to outlier-driven instability, while the pooled Lee-IPW procedure remains stable and informative.

\subsection{Further Limitations of Conditional Lee Bounds}
Further complications arise for conditional Lee bounds in edge cases. If a cell contains no observed treated units or no controls, the trimmed mean is not well defined, and it becomes impossible to construct a valid contribution from that stratum. If $\widehat{\tau}_g$ is very close to one, trimming removes nearly all treated observations in that cell, so the resulting bound is effectively uninformative. The proliferation of very small blocks can also produce many noisy micro-bounds, so their combined imprecision may overwhelm the information contributed by larger blocks. As a result, the overall confidence interval can be wide and poorly centered, offering little practical value for inference.

\newpage

\section{Identification of Bounds on Treatment Effects with Lee's Approach and IPW}\label{sec:identification}

We derive bounds on the average treatment effect for the always–observed sub-population when treatment probabilities differ across strata \(g=1,\dots,G\).  The target parameter is  
\[
\Delta_{ATE}\;\equiv\;
\E\!\Bigl[
      Y^{\ast}_{g,i}(1)-Y^{\ast}_{g,i}(0)
      \,\Big|\,
      S_{g,i}(1)=1,\,
      S_{g,i}(0)=1
   \Bigr].
\]

\subsection{Identification of 
\texorpdfstring{$\E\!\bigl[Y^{\ast}_{g,i}(0)\,\big|\,S_{g,i}(1)=1,S_{g,i}(0)=1\bigr]$}{E[Y*(0)|always observed]}}
\label{sec:iden_Y0}

Under conditional independence (CIA) and monotonicity, the control mean for the always–observed group is point-identified. Note that
\begin{align}
\E\!\bigl[S_{g,i}(0)\,Y^{\ast}_{g,i}(0)\bigr]
   &\overset{(\text{LIE})}{=}
     \Pr\!\bigl[S_{g,i}(0)=1\bigr]\;
     \E\!\bigl[Y^{\ast}_{g,i}(0)\mid S_{g,i}(0)=1\bigr]
     \notag\\
   &\overset{(\text{IPW + Monot.})}{=}
     \E\!\Bigl[
        S_{g,i}\,
        \Bigl\{\tfrac{1-p}{1-\eta_g}\Bigr\}
        \,\Big|\,D_{g,i}=0
     \Bigr]\;
     \E\!\Bigl[
        Y^{\ast}_{g,i}(0)
        \,\Big|\,S_{g,i}(1)=1,\,S_{g,i}(0)=1
     \Bigr],
   \label{eq:id_Y0_left_g}
\end{align}
where \(p=\Pr(D_{g,i}=1)\) and \(\eta_g=T_g/N_g\).  Also,
\begin{align}
\E\!\bigl[S_{g,i}(0)\,Y^{\ast}_{g,i}(0)\bigr]
   &\overset{(\text{LIE + IPW})}{=}
     \E\!\Bigl[
        S_{g,i}\,Y_{g,i}\,
        \Bigl\{\tfrac{1-p}{1-\eta_g}\Bigr\}
        \,\Big|\,D_{g,i}=0
     \Bigr]
     \notag\\
   &\overset{(\text{LIE + Monot.})}{=}
     \Pr\!\bigl[S_{g,i}=1\mid D_{g,i}=0\bigr]\cdot
     \notag\\ &\qquad \qquad \qquad \qquad \qquad\E\!\Bigl[
        Y_{g,i}\,
        \Bigl\{\tfrac{1-p}{1-\eta_g}\Bigr\}
        \,\Big|\,D_{g,i}=0,\,S_{g,i}(1)=1,\,S_{g,i}(0)=1
     \Bigr].
   \label{eq:id_Y0_right_g}
\end{align}
Equating \eqref{eq:id_Y0_left_g} and \eqref{eq:id_Y0_right_g} gives
\begin{equation}\label{eq:iden_Y0}
\begin{alignedat}{2}
\E\bigl[Y^*_{g,i}(0)\mid \,S_{g,i}(1)=1,\,S_{g,i}(0)=1\bigr]
&=
\frac{%
  \Pr\bigl(S_{g,i}=1\mid D_{g,i}=0\bigr)\;
  \E\Bigl[
    Y_{g,i}\{\tfrac{1-p}{1-\eta_g}\}
    \;\Big|\;
    \substack{D_{g,i}=0,\\ S_{g,i}(1)=1, \\ S_{g,i}(0)=1}
  \Bigr]
}{%
  \E\bigl[S_{g,i}\{\tfrac{1-p}{1-\eta_g}\}\mid D_{g,i}=0\bigr]
}\\
&\overset{\mathrm{(Monot.)}}{=}
\frac{%
  \Pr\bigl(S_{g,i}=1\mid D_{g,i}=0\bigr)\;
  \E\Bigl[
    Y_{g,i}\{\tfrac{1-p}{1-\eta_g}\}
    \;\Big|\;
    \substack{D_{g,i}=0,\\ S_{g,i}=1}
  \Bigr]
}{%
  \E\bigl[S_{g,i}\{\tfrac{1-p}{1-\eta_g}\}\mid D_{g,i}=0\bigr]
}\,.
\end{alignedat}
\end{equation}
where the numerator and denominator involve only observable variables and the known inverse probability weight \(w_{c,g}=(1-p)/(1-\eta_g)\).

 \subsection{Identification of 
\texorpdfstring{$\E\!\bigl[Y^{\ast}_{g,i}(1)\,\big|\,S_{g,i}(1)=1,S_{g,i}(0)=1\bigr]$}{E[Y*(1)|always observed]}}
\label{sec:iden_Y1}

Identifying  
\(\E\!\bigl[Y^{\ast}_{g,i}(1)\,\big|\,S_{g,i}(1)=1,S_{g,i}(0)=1\bigr]\)  
is harder than for the control arm because the set \(\{D_{g,i}=1,S_{g,i}=1\}\) mixes (i) always–observed units and (ii) units observed only when treated.  To isolate the always–observed group among treated individuals, we exploit the following reweighting. Define
\[
h_{g,i}
   \;\equiv\;
   \frac{
     \Pr\!\bigl[D_{g,i}=1\mid S_{g,i}(1)=1,S_{g,i}(0)=1\bigr]}
     {\eta_g},
\]
and define the re-weighted outcome for treated observations 
\[
\widetilde Y_{g,i}
   \;\equiv\;
   h_{g,i}\,Y_{g,i}.
\]

Notice that \(h_{g,i}\) is identified since
\begin{align}
\Pr\!\bigl[D_{g,i}=1\mid S_{g,i}(1)=1,S_{g,i}(0)=1\bigr]
   &\!\!\overset{\text{(Bayes' Rule)}}{=}
   \frac{
     \Pr\!\bigl[S_{g,i}(1)=1,S_{g,i}(0)=1\mid D_{g,i}=1\bigr]\;p}
     {\Pr\!\bigl[S_{g,i}(1)=1,S_{g,i}(0)=1\bigr]}
   \notag\\
   &\!\!\overset{\text{(LIE)}}{=}
   \frac{
     \E\!\Bigl[
        \E\!\bigl[S_{g,i}(1)S_{g,i}(0)\mid X_{g,i}, D_{g,i}=1\bigr]
        \Bigm|D_{g,i}=1
     \Bigr]\;p}
     {\E\!\Bigl[
        \E\!\bigl[S_{g,i}(1)S_{g,i}(0)\mid X_{g,i}\bigr]
     \Bigr]}
   \notag\\
   &\!\!\overset{\text{(CIA \& Monotonicity)}}{=}
   \frac{
     \E\!\Bigl[
        \E\!\bigl[S_{g,i}\mid X_{g,i},D_{g,i}=0\bigr]
        \Bigm|D_{g,i}=1
     \Bigr]\;p}
     {\E\!\Bigl[
        \E\!\bigl[S_{g,i}\mid X_{g,i},D_{g,i}=0\bigr]
     \Bigr]},
   \label{eq:h_ident}
\end{align}
so \(h_{g,i}\) is fully determined by observable quantities.

Now, for any \(\tilde{y}\in\mathbb{R}\), note that the conditional expectation of the indicator for \(\widetilde{Y}_{g,i}\) can be decomposed as
\begin{align}
\mathbb{E}\Bigl[\left. \mathbf{1} \left\{ \widetilde{Y}_{g,i} \leq \tilde{y} \right\} \right\vert D_{g,i} = 1,\, S_{g,i}=1\Bigr] 
&= \left(1 - q\right) \cdot \mathbb{E}\Bigl[\mathbf{1} \left\{ \widetilde{Y}_{g,i} \leq \tilde{y} \right\} \Big\vert D_{g,i} = 1,\, S_{g,i}(1)=1,\, S_{g,i}(0)=1\Bigr] \nonumber \\
& \quad +\, q \cdot \mathbb{E}\Bigl[\mathbf{1} \left\{ \widetilde{Y}_{g,i} \leq \tilde{y} \right\} \Big\vert D_{g,i} = 1,\, S_{g,i}(1)=1,\, S_{g,i}(0)=0\Bigr],
\label{Ytilde-decomposition}
\end{align}
and \(q\) is defined and identified by
\begin{align}
q &\equiv \Pr\Bigl[S_{g,i}(1)=1,\,S_{g,i}(0)=0 \,\Big\vert\, D_{g,i}=1,\,S_{g,i}=1\Bigr] \nonumber \\
  &= \frac{\Pr\Bigl[S_{g,i}=1\,\vert\,D_{g,i}=1\Bigr] - \Pr\Bigl[S_{g,i}(0)=1\,\vert\,D_{g,i}=1\Bigr]}{\Pr\Bigl[S_{g,i}=1\,\vert\,D_{g,i}=1\Bigr]} \nonumber \\
  & \text{(i.e., the fraction of observed treated individuals whose observation is induced by treatment)} \nonumber \\
  &= \frac{\Pr\Bigl[S_{g,i}=1\,\vert\,D_{g,i}=1\Bigr] - \mathbb{E}\Bigl[S_{g,i}(0)\,\big\vert\,D_{g,i}=1\Bigr]}{\Pr\Bigl[S_{g,i}=1\,\vert\,D_{g,i}=1\Bigr]} \nonumber \\
  &= \frac{\Pr\Bigl[S_{g,i}=1\,\vert\,D_{g,i}=1\Bigr] - \mathbb{E}\Bigl[\left. S_{g,i} \Bigl\{\frac{\eta_g(1-p)}{(1-\eta_g)\,p}\Bigr\} \right\vert\,D_{g,i}=0\Bigr]}{\Pr\Bigl[S_{g,i}=1\,\vert\,D_{g,i}=1\Bigr]},
\label{att_selection}
\end{align}
where the final equality follows from the IPW representation of \(\mathbb{E}\bigl[S_{g,i}(0)\,\vert\,D_{g,i}=1\bigr]\).

Let \(\tilde y_q\) and \(\tilde y_{1-q}\) be the \(q\)- and \((1-q)\)-quantiles of \(\widetilde Y_{g,i}\) within \(\{D_{g,i}=1,S_{g,i}=1\}\).  Then the Lee‐style bounds which partially identify $\E\!\bigl[
      \widetilde Y_{g,i}
      \mid D_{g,i}=1,S_{g,i}(1)=S_{g,i}(0)=1
   \bigr]$ are
\begin{align}
\E\!\bigl[
   \widetilde Y_{g,i}
   \mid D_{g,i}=1,S_{g,i}=1,\widetilde Y_{g,i}\le \tilde y_{1-q}
\bigr]
   &\;\le\;
   \E\!\bigl[
      \widetilde Y_{g,i}
      \mid D_{g,i}=1,S_{g,i}(1)=S_{g,i}(0)=1
   \bigr]
   \notag\\
   &\;\le\;
   \E\!\bigl[
      \widetilde Y_{g,i}
      \mid D_{g,i}=1,S_{g,i}=1,\widetilde Y_{g,i}\ge \tilde y_q
   \bigr].
   \label{eq:trim_bounds}
\end{align}

Next, to verify that this partially identified mean recovers the always–observed mean potential outcome, observe that
{\footnotesize
\begin{align}
\mathbb{E}\!\Bigl[\widetilde{Y}_{g,i}\,\Big\vert\,&D_{g,i}=1,S_{g,i}(1)=1,S_{g,i}(0)=1\Bigr] \notag\\
&\overset{(\mathrm{LIE})}{=} \frac{\mathbb{E}\!\Bigl[D_{g,i}\,\widetilde{Y}_{g,i}\,\Big\vert\, S_{g,i}(1)=1,S_{g,i}(0)=1\Bigr]}{\Pr\!\Bigl[D_{g,i}=1\,\Big\vert\,S_{g,i}(1)=1,S_{g,i}(0)=1\Bigr]} \nonumber\\[1mm]
&\overset{\text{(by definition)}}{=} \mathbb{E}\!\Biggl[\left.\frac{D_{g,i}\,Y_{g,i}}{\eta_g}\,\right\vert\, S_{g,i}(1)=1,S_{g,i}(0)=1\Biggr] \nonumber\\[1mm]
&\overset{(\mathrm{LIE})}{=} \mathbb{E}\left[\left. \mathbb{E}\!\Bigl[\frac{D_{g,i}\,Y_{g,i}}{\eta_g} \,\Big\vert\,X_{g,i},\,S_{g,i}(1)=1,S_{g,i}(0)=1\Bigr] \right\vert\,S_{g,i}(1)=1,S_{g,i}(0)=1\right] \nonumber\\[1mm]
&\overset{\text{(CIA)}}{=} \mathbb{E}\Biggl[\Bigl.\frac{\mathbb{E}\Bigl[D_{g,i}\,\big\vert\,X_{g,i},S_{g,i}(1)=1,S_{g,i}(0)=1\Bigr]}{\eta_g}\nonumber\\[1mm]
&\quad\quad\quad\times\mathbb{E}\Bigl[Y_{g,i}\,\big\vert\,X_{g,i},D_{g,i}=1,S_{g,i}(1)=1,S_{g,i}(0)=1\Bigr]\Bigr\vert S_{g,i}(1)=1,S_{g,i}(0)=1\Biggr]\nonumber\\[1mm]
&\overset{%
  \substack{%
    (\E[D_{g,i}\mid X_{g,i},\,S_{g,i}(1)=1,\,S_{g,i}(0)=1]=\eta_g\\
    \text{and CIA)}%
  }%
}{=}
\E\Bigl[\E\bigl[Y^*_{g,i}(1)\mid X_{g,i},\,S_{g,i}(1)=1,\,S_{g,i}(0)=1\bigr]\;\bigm|\;
        S_{g,i}(1)=1,\,S_{g,i}(0)=1\Bigr]\nonumber\\[1mm]
&\overset{(\mathrm{LIE})}{=} \mathbb{E}\Bigl[Y^*_{g,i}(1)\,\Big\vert\,S_{g,i}(1)=1,S_{g,i}(0)=1\Bigr]\,.
\label{ate-treated-1}
\end{align}
}

Therefore, the Lee-IPW bounds on \(\mathbb{E}\left[Y^*_{s,i}(1) \,\vert\,S_{s,i}(1)=1,S_{s,i}(0)=1\right]\) are identified by computing the trimmed mean of \(\widetilde{Y}_{s,i}\) in the treated group at the trimming proportion \(q\), while the control mean is point–identified via \eqref{eq:iden_Y0}. The difference between these two quantities yields the desired bound on \(\Delta_{ATE}\).

\newpage
\section{Proofs for Inference under Covariate-Based, Heterogeneous-Shares Stratification}
\label{app:pf_thm31_het}

This appendix provides the main technical arguments for the extension to covariate-based stratification with
heterogeneous treatment shares.

Some lemmas and theorems rely on two design conditions, a design LLN for treated/control weighted sums
(Assumption~\ref{ass:CB-LLN}) and a design CLT for the block-imbalance term (Assumption~\ref{ass:CB-IMB}), which are
analogous in purpose to Assumptions~4.1--4.2 in \citet{bai2023efficiency}, adapted to allow stratum-specific treatment
shares $\eta_g$.

\begin{assumption}[Design LLN for treated-weighted sums]\label{ass:CB-LLN}
For any integrable scalar function $\gamma:\mathbb R^{d_x}\to\mathbb R$,
\[
\frac{1}{n}\sum_{i=1}^n D_i\,\gamma(X_i)\ \xrightarrow{P}\ \E\!\big[\eta(X)\,\gamma(X)\big],
\qquad
\eta(x):=\Pr(D=1\mid X=x),
\]
and likewise $\frac{1}{n}\sum_{i=1}^n (1-D_i)\gamma(X_i)\xrightarrow{P}\E[(1-\eta(X))\gamma(X)]$.
Moreover, for any countable collection $\{\gamma_j\}_{j\ge1}$ of integrable scalar functions,
\[
\sup_{j\ge1}\left|
\frac{1}{n}\sum_{i=1}^n D_i\,\gamma_j(X_i)\;-\;\E\!\big[\eta(X)\,\gamma_j(X)\big]
\right|\ \xrightarrow{P}\ 0,
\]
and similarly,
\[
\sup_{j\ge1}\left|
\frac{1}{n}\sum_{i=1}^n (1-D_i)\,\gamma_j(X_i)\;-\;\E\!\big[(1-\eta(X))\,\gamma_j(X)\big]
\right|\ \xrightarrow{P}\ 0.
\]
\end{assumption}

\begin{assumption}[Design CLT for block-imbalance]\label{ass:CB-IMB}
Let $\rho$ be any metric that metrizes weak convergence. For any square-integrable vector function
$\gamma:\mathbb R^{d_x}\to\mathbb R^{k}$ with $\E[\gamma(X)]=0$,
\[
\rho\!\left(\ \frac{1}{\sqrt n}\sum_{i=1}^n (D_i-\eta_i)\,\gamma(X_i)\ ,\ \mathcal N\!\big(0,V_{\mathrm{imb},\gamma}\big)\ \middle|\ X^{(n)}\right)\ \xrightarrow{P}\ 0,
\]
where $\eta_i:=\Pr(D_i=1\mid X^{(n)})=\eta_g$ for $i\in\lambda_g$ (Assumption~\ref{ass:Ahet}), and
\begin{align*}
V_{\mathrm{imb},\gamma}
&:= \plim\ \frac{1}{n}\sum_{g=1}^G N_g\,\eta_g(1-\eta_g)\,S_{\gamma,g},\\[0.25em]
S_{\gamma,g}
&:= \frac{1}{N_g-1}\sum_{i\in\lambda_g}
\big(\gamma(X_i)-\bar\gamma_g\big)\big(\gamma(X_i)-\bar\gamma_g\big)^\prime,\\[0.25em]
\bar\gamma_g
&:= \frac{1}{N_g}\sum_{i\in\lambda_g}\gamma(X_i).
\end{align*}
\end{assumption}

\subsection{Consistency Proof in the Covariate-Based, Heterogeneous-Shares Stratification Setting}

\begin{lemma}[Consistency of $\hat\theta_n$]\label{lem:A8-het-cov}
Suppose Assumptions~\ref{ass:iid}, \ref{ass:B}, \ref{ass:Ahet}, \ref{ass:C}, and \ref{ass:CB-LLN} hold. Then,
$\hat\theta_n \xrightarrow{p} \theta_0$.
\end{lemma}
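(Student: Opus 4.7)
The plan is to follow the standard M-estimation consistency template (e.g., \citet[Thm.~2.1]{newey1994large}) and to mirror the structure of Lemma~A.8 in \citet{bai2023efficiency}, adapting it so that the relevant ULLN respects stratified assignment with heterogeneous shares rather than the equal-share case treated there. Let $Q(\theta):=\E[m(Z_i,\theta)]$ and $\hat Q_n(\theta):=n^{-1}\sum_{i=1}^n m(Z_i,\theta)$. Assumption~\ref{ass:C}(a) supplies the well-separation of $\theta_0$, so once we establish $\sup_{\theta\in\Theta}\|\hat Q_n(\theta)-Q(\theta)\|\xrightarrow{p}0$, the standard argmin/argzero continuity argument delivers $\hat\theta_n\xrightarrow{p}\theta_0$.

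The first step is to split the sample moment by treatment arm. Since $Y_i=D_iY_i(1)+(1-D_i)Y_i(0)$, one writes
\[
\hat Q_n(\theta)=\hat Q_{n,1}(\theta)+\hat Q_{n,0}(\theta),\qquad
\hat Q_{n,d}(\theta):=\frac{1}{n}\sum_{i=1}^n \mathbf{1}\{D_i=d\}\,m(X_i,d,Y_i(d),\theta),
\]
and analogously $Q(\theta)=\E[\eta(X)\,m(X,1,Y(1),\theta)]+\E[(1-\eta(X))\,m(X,0,Y(0),\theta)]$, using Assumption~\ref{ass:Ahet} together with the propensity $\eta(x)$ implicit in Assumption~\ref{ass:CB-LLN}. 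It then suffices to show, for each $d\in\{0,1\}$, that $\sup_{\theta}\|\hat Q_{n,d}(\theta)-\E[\pi_d(X)m(X,d,Y(d),\theta)]\|\xrightarrow{p}0$, with $\pi_1(x)=\eta(x)$ and $\pi_0(x)=1-\eta(x)$.

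The second step establishes this uniform convergence in two stages. Pointwise, for any fixed $\theta$, Assumption~\ref{ass:CB-LLN}, extended by conditioning on the potential outcomes to handle functions of $(X_i,Y_i(d))$, yields $\hat Q_{n,d}(\theta)\xrightarrow{p}\E[\pi_d(X)m(X,d,Y(d),\theta)]$. The countable-collection part of Assumption~\ref{ass:CB-LLN}, applied along the countable set $\Theta^{\ast}$ from Assumption~\ref{ass:C}(d), upgrades this to uniform convergence over $\Theta^{\ast}$. To pass from $\Theta^{\ast}$ to all of $\Theta$, the pointwise-measurability construction in Assumption~\ref{ass:C}(d) approximates each $\theta\in\Theta$ by a sequence in $\Theta^{\ast}$, while Assumption~\ref{ass:C}(c) together with the Donsker hypothesis in Assumption~\ref{ass:C}(e)(ii) controls residual fluctuations uniformly in $n$ via the empirical-process modulus of continuity. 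Summing over $d$ yields the desired uniform convergence, and combining with the well-separation from Step~1 completes the argument.

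The main obstacle is the extension at the start of Step~3: lifting the design-based LLN from the scalar, $X$-only statement in Assumption~\ref{ass:CB-LLN} to a uniform-in-$\theta$ statement for functions of $(X,Y(d))$. The extension to $(X,Y(d))$ uses the within-block independence of $D$ from potential outcomes (Assumption~\ref{ass:Ahet}) and i.i.d.\ sampling (Assumption~\ref{ass:iid}): conditioning on $Y^{\ast(n)}$ reduces the weighted sum to the hypothesis of Assumption~\ref{ass:CB-LLN}, while the marginal of $(X_i,Y_i(d))$ is stable across $i$. Passing from pointwise to uniform is then a standard $\Theta^{\ast}$-approximation argument powered by Assumptions~\ref{ass:C}(c)--(e), with the Donsker condition supplying the required stochastic equicontinuity.
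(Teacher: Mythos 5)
There is a genuine gap at the heart of Step 2/Step 3. Assumption~\ref{ass:CB-LLN} is a design LLN for weighted sums of \emph{fixed functions of the covariates alone}, $n^{-1}\sum_i D_i\gamma(X_i)\to_p \E[\eta(X)\gamma(X)]$. Your proposal applies it to $m(X_i,d,Y_i(d),\theta)$ by ``conditioning on the potential outcomes,'' but this does not reduce to the hypothesis of the assumption: after conditioning on $Y^{*(n)}$ the summand becomes a sample-dependent array indexed by $i$ (two units with the same $X_i$ can carry different $Y_i(d)$), not a single measurable $\gamma$ evaluated at $X_i$, and the limit in Assumption~\ref{ass:CB-LLN} is a superpopulation expectation over the law of $X$, which is meaningless for a realization-dependent ``function.'' So the pointwise and uniform LLNs for $\hat Q_{n,d}(\theta)$ cannot be obtained this way, and this is precisely where all the work in the lemma lies.

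The paper closes this gap with a three-way decomposition that your proposal skips. Write $m_s(X_i,d,Y_i(d);\theta)=\mu_{d,s}(X_i,\theta)+\{m_s(X_i,d,Y_i(d);\theta)-\mu_{d,s}(X_i,\theta)\}$. The conditional-mean piece is a genuine function of $X$ alone, and Assumption~\ref{ass:CB-LLN} applies to it over the countable class $\Theta^\ast$. The mean-zero deviation piece, multiplied by the \emph{dependent} indicators $\mathbf 1\{D_i=d\}$, is not a $Q$-empirical process, so the Donsker hypothesis cannot be invoked directly; the paper first decouples the within-block assignment via Proposition~C.1 of \citet{han2021complex}, replacing $\mathbf 1\{D_i=d\}$ by independent Bernoullis, and only then applies a multiplier maximal inequality for the Donsker class. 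Finally, the residual design-imbalance term $n^{-1}\sum_i(D_i-\eta_i)\mu_{d,s}(X_i,\theta)$ does \emph{not} vanish for free under heterogeneous shares; it is killed by the within-block identity $\sum_{i\in\lambda_g}(D_i-\eta_g)=0$, which allows recentering by block means, combined with the Lipschitz condition in Assumption~\ref{ass:C}(f) and the covariate-balancing condition in Assumption~\ref{ass:B}. Your proof never invokes Assumptions~\ref{ass:B} or~\ref{ass:C}(f), which is a reliable sign that the imbalance term has been lost; without that step the argument fails exactly in the regime (block-specific $\eta_g$) that the lemma is designed to cover.
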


\begin{proof}
\noindent This proof adapts the argument of Lemma~A.8 in \citet{bai2023efficiency} to the covariate-based heterogeneous-shares setting used here.

By Assumption~\ref{ass:C}(a) and Theorem~5.9 in \citet{van2000asymptotic}, it suffices to show that for each
$1\le s\le d_\theta$,
\begin{equation}\label{eq:A8-ULLN-het-cov-final}
\sup_{\theta\in\Theta}
\left|
\frac{1}{n}\sum_{i=1}^n
\Bigl(m_s(X_i,D_i,Y_i;\theta)-\E[m_s(X_i,D_i,Y_i;\theta)]\Bigr)
\right|
\ \xrightarrow{p}\ 0.
\end{equation}
By Assumption~\ref{ass:C}(d), the supremum may be interpreted as an outer supremum. Let $\Theta^\ast$ be the
countable set in Assumption~\ref{ass:C}(d), and take it to contain $\theta_0$ without loss of generality.
Assumption~\ref{ass:C}(d) and dominated convergence (using the envelope in Assumption~\ref{ass:C}(e)(i))
imply that if $m_s(x,d,y,\theta_m)\to m_s(x,d,y,\theta)$ for $\{\theta_m\}\subset\Theta^\ast$, then
$\E[m_s(X_i,D_i,Y_i;\theta_m)]\to \E[m_s(X_i,D_i,Y_i;\theta)]$ (see Problem~2.4.1 in \citet{vaart1997weak}).
Therefore it suffices to establish \eqref{eq:A8-ULLN-het-cov-final} with $\Theta^\ast$ in place of $\Theta$.

Since $Y_i=Y_i(D_i)$, we have the decomposition
\begin{equation}\label{eq:decomp-het-cov-final}
m_s(X_i,D_i,Y_i;\theta)
=
D_i\,m_s(X_i,1,Y_i(1);\theta)
+(1-D_i)\,m_s(X_i,0,Y_i(0);\theta).
\end{equation}
Let $\eta_i:=\Pr(D_i=1\mid X^{(n)})$; under Assumption~\ref{ass:Ahet}, $\eta_i=\eta_g:=T_g/N_g$ for
$i\in\lambda_g$. For $d\in\{0,1\}$ define
\[
\mu_{d,s}(x,\theta):=\E\!\big[m_s(X_i,d,Y_i(d);\theta)\mid X_i=x\big],
\qquad
\bar m_d(\theta):=\E\!\big[m_s(X_i,d,Y_i(d);\theta)\big],
\]
and write $\Delta_s(x,\theta):=\mu_{1,s}(x,\theta)-\mu_{0,s}(x,\theta)$.

Moreover, by Assumption~\ref{ass:iid},
\begin{equation}\label{eq:cond-swap-het-cov-final}
\E\!\big[m_s(X_i,d,Y_i(d);\theta)\mid X^{(n)}\big]
=
\E\!\big[m_s(X_i,d,Y_i(d);\theta)\mid X_i\big]
=
\mu_{d,s}(X_i,\theta),
\end{equation}
since $(X_i,Y_i(d))$ is independent of $\{X_j\}_{j\neq i}$.

Using \eqref{eq:cond-swap-het-cov-final}, for each $i$,
\[
\E\!\big[m_s(X_i,D_i,Y_i;\theta)\mid X^{(n)}\big]
=
\eta_i\,\mu_{1,s}(X_i,\theta)+(1-\eta_i)\,\mu_{0,s}(X_i,\theta)
=
\mu_{0,s}(X_i,\theta)+\eta_i\Delta_s(X_i,\theta).
\]
Add and subtract $\E[m_s(X_i,D_i,Y_i;\theta)\mid X^{(n)}]$ and apply the triangle inequality to obtain
\[
\sup_{\theta\in\Theta^\ast}
\left|
\frac{1}{n}\sum_{i=1}^n
\Bigl(m_s(X_i,D_i,Y_i;\theta)-\E[m_s(X_i,D_i,Y_i;\theta)]\Bigr)
\right|
\le A_{n,1}+A_{n,2},
\]
where
\[
A_{n,1}:=
\sup_{\theta\in\Theta^\ast}
\left|
\frac{1}{n}\sum_{i=1}^n
\Bigl(m_s(X_i,D_i,Y_i;\theta)-\E[m_s(X_i,D_i,Y_i;\theta)\mid X^{(n)}]\Bigr)
\right|,
\]
\[
A_{n,2}:=
\sup_{\theta\in\Theta^\ast}
\left|
\frac{1}{n}\sum_{i=1}^n
\Bigl(\E[m_s(X_i,D_i,Y_i;\theta)\mid X^{(n)}]-\E[m_s(X_i,D_i,Y_i;\theta)]\Bigr)
\right|.
\]
It suffices to show $A_{n,1}=o_p(1)$ and $A_{n,2}=o_p(1)$.

Define
\[
C_n:=
\sup_{\theta\in\Theta^\ast}
\left|
\frac1n\sum_{i=1}^n (D_i-\eta_i)\Delta_s(X_i,\theta)
\right|.
\]

Note that by the law of iterated expectations and the definition
$\eta(x):=\Pr(D=1\mid X=x)$,
\[
\E[m_s(X,D,Y;\theta)]
=
\E\!\left[\mu_{0,s}(X,\theta)+\eta(X)\Delta_s(X,\theta)\right].
\]

Using the display above and the identity
\[
\eta_i\Delta_s(X_i,\theta)
=
D_i\Delta_s(X_i,\theta)-(D_i-\eta_i)\Delta_s(X_i,\theta),
\]
we have
\begin{align*}
A_{n,2}
&=
\sup_{\theta\in\Theta^\ast}\left|
\frac1n\sum_{i=1}^n\Bigl(\mu_{0,s}(X_i,\theta)+\eta_i\Delta_s(X_i,\theta)\Bigr)
-\E\!\Bigl[\mu_{0,s}(X,\theta)+\eta(X)\Delta_s(X,\theta)\Bigr]
\right|\\
&\le
\sup_{\theta\in\Theta^\ast}\left|
\frac1n\sum_{i=1}^n\bigl(\mu_{0,s}(X_i,\theta)-\E[\mu_{0,s}(X,\theta)]\bigr)
\right|\\
&\quad+
\sup_{\theta\in\Theta^\ast}\left|
\frac1n\sum_{i=1}^n\bigl(D_i\Delta_s(X_i,\theta)-\E[\eta(X)\Delta_s(X,\theta)]\bigr)
\right|
\;+\;
C_n.
\end{align*}
For the first term, write
\begin{align*}
\frac1n\sum_{i=1}^n\bigl(\mu_{0,s}(X_i,\theta)-\E[\mu_{0,s}(X,\theta)]\bigr)
&=
\frac1n\sum_{i=1}^n D_i\,g_\theta(X_i)
+\frac1n\sum_{i=1}^n (1-D_i)\,g_\theta(X_i),\\[0.25em]
g_\theta(x)
&:=\mu_{0,s}(x,\theta)-\E[\mu_{0,s}(X,\theta)].
\end{align*}
Since $\Theta^\ast$ is countable and each $g_\theta$ is integrable,
Assumption~\ref{ass:CB-LLN} applied to the countable collection $\{g_\theta:\theta\in\Theta^\ast\}$ yields
\begin{align*}
\sup_{\theta\in\Theta^\ast}\left|
\frac1n\sum_{i=1}^n D_i\,g_\theta(X_i)-\E[\eta(X)g_\theta(X)]
\right|
&=o_p(1),\\[0.35em]
\sup_{\theta\in\Theta^\ast}\left|
\frac1n\sum_{i=1}^n (1-D_i)\,g_\theta(X_i)-\E[(1-\eta(X))g_\theta(X)]
\right|
&=o_p(1).
\end{align*}
Moreover, $\E[\eta(X)g_\theta(X)]+\E[(1-\eta(X))g_\theta(X)]=\E[g_\theta(X)]=0$ for each $\theta$, so the first term above
is $o_p(1)$. For the second term, since $\Theta^\ast$ is countable and each $\Delta_s(\cdot,\theta)$ is integrable,
Assumption~\ref{ass:CB-LLN} applied to the countable collection
$\{\Delta_s(\cdot,\theta):\theta\in\Theta^\ast\}$ implies it is $o_p(1)$. Therefore,
\begin{equation}\label{eq:A2-bound-het-cov-final}
A_{n,2}\ \le\ o_p(1)\ +\ C_n.
\end{equation}

Combine \eqref{eq:decomp-het-cov-final} and \eqref{eq:cond-swap-het-cov-final} to obtain
\begin{align*}
&m_s(X_i,D_i,Y_i;\theta)-\E[m_s(X_i,D_i,Y_i;\theta)\mid X^{(n)}]\\
&=
D_i\bigl(m_s(X_i,1,Y_i(1);\theta)-\mu_{1,s}(X_i,\theta)\bigr)\\
&\quad+(1-D_i)\bigl(m_s(X_i,0,Y_i(0);\theta)-\mu_{0,s}(X_i,\theta)\bigr)\\
&\quad+(D_i-\eta_i)\Delta_s(X_i,\theta).
\end{align*}
Therefore,
\[
A_{n,1}\le B_{n,1}+B_{n,0}+C_n,
\]
where for $d\in\{0,1\}$,
\[
B_{n,d}:=
\sup_{\theta\in\Theta^\ast}
\left|
\frac1n\sum_{i=1}^n \mathbf 1\{D_i=d\}\Bigl(m_s(X_i,d,Y_i(d);\theta)-\mu_{d,s}(X_i,\theta)\Bigr)
\right|.
\]

It is enough to show that for each $d\in\{0,1\}$,
\[
\sup_{\theta\in\Theta^\ast}\left|\frac1n\sum_{i=1}^n (D_i-\eta_i)\mu_{d,s}(X_i,\theta)\right|=o_p(1),
\]
since $\Delta_s(\cdot,\theta)=\mu_{1,s}(\cdot,\theta)-\mu_{0,s}(\cdot,\theta)$.
Fix $d$ and decompose
\[
\frac{1}{n}\sum_{i=1}^n (D_i-\eta_i)\mu_{d,s}(X_i,\theta)
=
\frac{1}{n}\sum_{i=1}^n (D_i-\eta_i)\mu_{d,s}(X_i,\theta_0)
+\frac{1}{n}\sum_{i=1}^n (D_i-\eta_i)\Delta_d(X_i,\theta),
\]
where $\Delta_d(X_i,\theta):=\mu_{d,s}(X_i,\theta)-\mu_{d,s}(X_i,\theta_0)$.

For the first term, use that for each block $g$, $\sum_{i\in\lambda_g}(D_i-\eta_g)=0$ to write
\[
\frac{1}{n}\sum_{i=1}^n (D_i-\eta_i)\mu_{d,s}(X_i,\theta_0)
=
\frac{1}{n}\sum_{g=1}^G\sum_{i\in\lambda_g}(D_i-\eta_g)
\Bigl(\mu_{d,s}(X_i,\theta_0)-\bar\mu_{d,g,s}(\theta_0)\Bigr),
\]
with $\bar\mu_{d,g,s}(\theta_0)=\frac{1}{N_g}\sum_{j\in\lambda_g}\mu_{d,s}(X_j,\theta_0)$.
Thus,
\begin{align*}
\left|\frac{1}{n}\sum_{i=1}^n (D_i-\eta_i)\mu_{d,s}(X_i,\theta_0)\right|
&\le
\frac{1}{n}\sum_{g=1}^G\sum_{i\in\lambda_g}|D_i-\eta_g|\,
\max_{j,k\in\lambda_g}\big|\mu_{d,s}(X_j,\theta_0)-\mu_{d,s}(X_k,\theta_0)\big|\\
&\le
\frac{1}{n}\sum_{g=1}^G N_g\,
\max_{j,k\in\lambda_g}\big|\mu_{d,s}(X_j,\theta_0)-\mu_{d,s}(X_k,\theta_0)\big|.
\end{align*}
By Assumption~\ref{ass:C}(f) at $\theta_0$,
$\max_{j,k\in\lambda_g}|\mu_{d,s}(X_j,\theta_0)-\mu_{d,s}(X_k,\theta_0)|
\le L_d \max_{j,k\in\lambda_g}\|X_j-X_k\|$,
so using $\max_g N_g\le\bar N$ w.p.a.1 and Cauchy--Schwarz with Assumption~\ref{ass:B} yields $o_p(1)$.

For the second term, again $\sum_{i\in\lambda_g}(D_i-\eta_g)=0$ implies
\[
\sum_{i\in\lambda_g}(D_i-\eta_g)\Delta_d(X_i,\theta)
=
\sum_{i\in\lambda_g}(D_i-\eta_g)\Bigl(\Delta_d(X_i,\theta)-\bar\Delta_{d,g}(\theta)\Bigr),
\quad
\bar\Delta_{d,g}(\theta):=\frac1{N_g}\sum_{j\in\lambda_g}\Delta_d(X_j,\theta).
\]
Hence
\[
\sup_{\theta\in\Theta^\ast}\left|\frac1n\sum_{i=1}^n (D_i-\eta_i)\Delta_d(X_i,\theta)\right|
\le
\frac1n\sum_{g=1}^G N_g\,
\sup_{\theta\in\Theta^\ast}\max_{i,j\in\lambda_g}\big|\Delta_d(X_i,\theta)-\Delta_d(X_j,\theta)\big|.
\]
By Assumption~\ref{ass:C}(f), for each $\theta\in\Theta^\ast$,
$\max_{i,j\in\lambda_g}|\Delta_d(X_i,\theta)-\Delta_d(X_j,\theta)|
\le 2L_d\max_{i,j\in\lambda_g}\|X_i-X_j\|$,
and the same $\max_g N_g\le\bar N$ plus Cauchy--Schwarz and Assumption~\ref{ass:B} gives $o_p(1)$.
Therefore $C_n=o_p(1)$, and \eqref{eq:A2-bound-het-cov-final} implies $A_{n,2}=o_p(1)$.

Fix $d\in\{0,1\}$. Write
\begin{align*}
\mathbf 1\{D_i=d\}\Bigl(m_s(X_i,d,Y_i(d);\theta)-\mu_{d,s}(X_i,\theta)\Bigr)
&=
\mathbf 1\{D_i=d\}\Bigl(m_s(X_i,d,Y_i(d);\theta)-\bar m_d(\theta)\Bigr)\\
&\quad-\mathbf 1\{D_i=d\}\Bigl(\mu_{d,s}(X_i,\theta)-\bar m_d(\theta)\Bigr).
\end{align*}
Hence $B_{n,d}\le B^{(1)}_{n,d}+B^{(2)}_{n,d}$, where
\[
B^{(1)}_{n,d}:=
\sup_{\theta\in\Theta^\ast}
\left|
\frac1n\sum_{i=1}^n \mathbf 1\{D_i=d\}\Bigl(m_s(X_i,d,Y_i(d);\theta)-\bar m_d(\theta)\Bigr)
\right|,
\]
\[
B^{(2)}_{n,d}:=
\sup_{\theta\in\Theta^\ast}
\left|
\frac1n\sum_{i=1}^n \mathbf 1\{D_i=d\}\Bigl(\mu_{d,s}(X_i,\theta)-\bar m_d(\theta)\Bigr)
\right|.
\]

Let $a_i(\theta):=m_s(X_i,d,Y_i(d);\theta)-\bar m_d(\theta)$ and define
\[
\mathcal G_d:=\{(x,y)\mapsto m_s(x,d,y;\theta)-\bar m_d(\theta):\theta\in\Theta^\ast\}.
\]
By Assumption~\ref{ass:C}(e)(ii), $\mathcal G_d$ is $Q$-Donsker, hence
\[
\sup_n \E\!\Bigg[\sup_{\theta\in\Theta^\ast}
\Bigl|\frac{1}{\sqrt n}\sum_{i=1}^n \bigl(a_i(\theta)-\E[a_i(\theta)]\bigr)\Bigr|\Bigg]<\infty,
\]
and $\E[a_i(\theta)]=0$ by definition of $\bar m_d(\theta)$.

Let $\mathcal Z_n:=\sigma\{(X_i,Y_i(0),Y_i(1))_{i=1}^n\}$. Conditional on $\mathcal Z_n$, the array
$\{a_i(\theta)\}$ is nonrandom and treatment is assigned independently across blocks with fixed quotas.
Applying Proposition~C.1 of \citet{han2021complex} block-by-block and aggregating yields a universal constant
$C<\infty$ such that
\[
\E\!\Bigg[\sup_{\theta\in\Theta^\ast}
\Bigl|\frac{1}{\sqrt n}\sum_{i=1}^n \mathbf 1\{D_i=d\}\,a_i(\theta)\Bigr|
\ \Bigm|\ \mathcal Z_n\Bigg]
\le
C\,
\E\!\Bigg[\sup_{\theta\in\Theta^\ast}
\Bigl|\frac{1}{\sqrt n}\sum_{i=1}^n \xi_{i,d}\,a_i(\theta)\Bigr|
\ \Bigm|\ \mathcal Z_n\Bigg],
\]
where $\{\xi_{i,d}\}_{i=1}^n$ are conditionally independent Bernoulli random variables with
$\Pr(\xi_{i,1}=1\mid \mathcal Z_n)=\eta_i$ and $\Pr(\xi_{i,0}=1\mid \mathcal Z_n)=1-\eta_i$, independent of
$\mathcal Z_n$. Let $p_{i,1}:=\eta_i$ and $p_{i,0}:=1-\eta_i$. Decompose $\xi_{i,d}=p_{i,d}+(\xi_{i,d}-p_{i,d})$:
\begin{align*}
\E\!\Bigg[\sup_{\theta\in\Theta^\ast}
\Bigl|\frac{1}{\sqrt n}\sum_{i=1}^n \xi_{i,d}\,a_i(\theta)\Bigr|
\ \Bigm|\ \mathcal Z_n\Bigg]
&\le
\sup_{\theta\in\Theta^\ast}\Bigl|\frac{1}{\sqrt n}\sum_{i=1}^n p_{i,d}\,a_i(\theta)\Bigr|\\
&\quad+
\E\!\Bigg[\sup_{\theta\in\Theta^\ast}
\Bigl|\frac{1}{\sqrt n}\sum_{i=1}^n (\xi_{i,d}-p_{i,d})\,a_i(\theta)\Bigr|
\ \Bigm|\ \mathcal Z_n\Bigg].
\end{align*}
Since $0\le p_{i,d}\le 1$ and $\mathcal G_d$ is Donsker, the first term is bounded in expectation uniformly in $n$.
For the second term, $(\xi_{i,d}-p_{i,d})$ are conditionally independent, mean zero, and uniformly bounded, so a
standard multiplier maximal inequality for Donsker classes implies
\[
\sup_n \E\!\Bigg[\sup_{\theta\in\Theta^\ast}
\Bigl|\frac{1}{\sqrt n}\sum_{i=1}^n (\xi_{i,d}-p_{i,d})\,a_i(\theta)\Bigr|\Bigg]<\infty.
\]
Therefore $\sup_n \E\!\left[\sup_{\theta\in\Theta^\ast}\Bigl|\frac{1}{\sqrt n}\sum_{i=1}^n \mathbf 1\{D_i=d\}a_i(\theta)\Bigr|\right]<\infty$,
and multiplying by $1/\sqrt n$ yields $\E[B^{(1)}_{n,d}]\to 0$. By Markov's inequality, $B^{(1)}_{n,d}=o_p(1)$.

Let $p_{i,d}:=\E[\mathbf 1\{D_i=d\}\mid X^{(n)}]$ so $p_{i,1}=\eta_i$ and $p_{i,0}=1-\eta_i$, and decompose
\[
\frac1n\sum_{i=1}^n \mathbf 1\{D_i=d\}\bigl(\mu_{d,s}(X_i,\theta)-\bar m_d(\theta)\bigr)
=
T_{n,d}(\theta)+U_{n,d}(\theta),
\]
where
\begin{align*}
T_{n,d}(\theta)
&:=\frac1n\sum_{i=1}^n
\bigl(\mathbf 1\{D_i=d\}-p_{i,d}\bigr)\bigl(\mu_{d,s}(X_i,\theta)-\bar m_d(\theta)\bigr),\\
U_{n,d}(\theta)
&:=\frac1n\sum_{i=1}^n
p_{i,d}\bigl(\mu_{d,s}(X_i,\theta)-\bar m_d(\theta)\bigr).
\end{align*}
Because $p_{i,d}=p_{g,d}$ within block and $\sum_{i\in\lambda_g}(\mathbf 1\{D_i=d\}-p_{g,d})=0$,
\begin{align*}
T_{n,d}(\theta)
&=
\frac1n\sum_{g=1}^G\sum_{i\in\lambda_g}
\bigl(\mathbf 1\{D_i=d\}-p_{g,d}\bigr)
\Bigl(\mu_{d,s}(X_i,\theta)-\bar\mu_{d,g,s}(\theta)\Bigr),\\
\bar\mu_{d,g,s}(\theta)
&:=\frac1{N_g}\sum_{j\in\lambda_g}\mu_{d,s}(X_j,\theta).
\end{align*}
Hence
\[
\sup_{\theta\in\Theta^\ast}|T_{n,d}(\theta)|
\le
\frac1n\sum_{g=1}^G N_g\,
\sup_{\theta\in\Theta^\ast}\max_{i,j\in\lambda_g}\big|\mu_{d,s}(X_i,\theta)-\mu_{d,s}(X_j,\theta)\big|
=o_p(1),
\]
by Assumptions~\ref{ass:C}(f) and \ref{ass:B}. For $U_{n,d}$,
\[
U_{n,d}(\theta)
=
\frac1n\sum_{g=1}^G N_g p_{g,d}\Bigl(\bar\mu_{d,g,s}(\theta)-\bar m_d(\theta)\Bigr).
\]
Adding and subtracting $\mu_{d,s}(X_i,\theta)$ within blocks yields
\begin{align*}
\sup_{\theta\in\Theta^\ast}|U_{n,d}(\theta)|
&\le
\sup_{\theta\in\Theta^\ast}\left|
\frac1n\sum_{i=1}^n\bigl(\mu_{d,s}(X_i,\theta)-\bar m_d(\theta)\bigr)
\right|\\
&\quad+
\frac1n\sum_{g=1}^G N_g\,
\sup_{\theta\in\Theta^\ast}\max_{i,j\in\lambda_g}
\big|\mu_{d,s}(X_i,\theta)-\mu_{d,s}(X_j,\theta)\big|.
\end{align*}
The second term is $o_p(1)$ by the same Lipschitz plus balancing bound as above. For the first term, let
$h_\theta(x):=\mu_{d,s}(x,\theta)-\bar m_d(\theta)$, so $\E[h_\theta(X)]=0$ for each $\theta$. Since $\Theta^\ast$ is
countable and each $h_\theta$ is integrable, Assumption~\ref{ass:CB-LLN} applied to the countable collection
$\{h_\theta:\theta\in\Theta^\ast\}$ gives
\begin{align*}
\sup_{\theta\in\Theta^\ast}\left|
\frac1n\sum_{i=1}^n D_i\,h_\theta(X_i)-\E[\eta(X)h_\theta(X)]
\right|
&=o_p(1),\\[0.35em]
\sup_{\theta\in\Theta^\ast}\left|
\frac1n\sum_{i=1}^n (1-D_i)\,h_\theta(X_i)-\E[(1-\eta(X))h_\theta(X)]
\right|
&=o_p(1).
\end{align*}
and $\E[\eta(X)h_\theta(X)]+\E[(1-\eta(X))h_\theta(X)]=\E[h_\theta(X)]=0$. Hence
\[
\sup_{\theta\in\Theta^\ast}\left|\frac1n\sum_{i=1}^n h_\theta(X_i)\right|
=
\sup_{\theta\in\Theta^\ast}\left|\frac1n\sum_{i=1}^n D_i\,h_\theta(X_i)+\frac1n\sum_{i=1}^n (1-D_i)\,h_\theta(X_i)\right|
=o_p(1).
\]
Therefore $\sup_{\theta\in\Theta^\ast}|U_{n,d}(\theta)|=o_p(1)$, so $B^{(2)}_{n,d}=o_p(1)$ and thus $B_{n,d}=o_p(1)$ for
$d\in\{0,1\}$.

Combining $B_{n,1}=o_p(1)$, $B_{n,0}=o_p(1)$, and $C_n=o_p(1)$ yields $A_{n,1}=o_p(1)$. Together with $A_{n,2}=o_p(1)$,
this implies that the left-hand side of \eqref{eq:A8-ULLN-het-cov-final} is $o_p(1)$, proving
\eqref{eq:A8-ULLN-het-cov-final} and completing the proof.
\end{proof}

\subsection{Auxiliary Lemma for the Asymptotic Distribution Proof}

\noindent This next lemma adapts Lemma~A.6 in \citet{bai2023efficiency} to the covariate-based
heterogeneous-shares setting used here.

\begin{lemma}\label{lem:A6-het-cov-correct}
Suppose Assumptions~\ref{ass:iid}, \ref{ass:Ahet}, \ref{ass:B}, \ref{ass:CB-LLN}, and \ref{ass:CB-IMB} hold.
Let $f(x,d,y)\in\mathbb R^{k}$ satisfy
\[
\E\!\big[f(X,D,Y)\big]=0,
\qquad
\E\!\big[\|f(X,d,Y(d))\|^{2+\delta}\big]<\infty\quad \text{for } d\in\{0,1\},
\]
for some $\delta>0$. Define, for $d\in\{0,1\}$,
\[
\mu_d(x):=\E\!\big[f(X,d,Y(d))\mid X=x\big],\qquad
\Sigma_d(x):=\Var\!\big(f(X,d,Y(d))\mid X=x\big),
\]
and $\Delta(x):=\mu_1(x)-\mu_0(x)$. Let $\eta_i=\Pr(D_i=1\mid X^{(n)})$ and define
\begin{align*}
C_{1,n}
&:=\frac{1}{\sqrt n}\sum_{i=1}^n \big(\mu_0(X_i)+\eta_i\Delta(X_i)\big),\\
V_{3,f}
&:=\E\!\Big[\eta(X)\Sigma_1(X)+(1-\eta(X))\Sigma_0(X)\Big].
\end{align*}
Assume that $C_{1,n}\Rightarrow \mathcal N(0,V_{1,f})$ for some deterministic, finite $k\times k$ matrix $V_{1,f}$.
Then
\[
\frac{1}{\sqrt n}\sum_{i=1}^n f(X_i,D_i,Y_i)\ \xRightarrow{d}\ \mathcal N(0,V_f),
\qquad
V_f\ =\ V_{1,f}+V_{\mathrm{imb},\Delta}+V_{3,f}.
\]
\end{lemma}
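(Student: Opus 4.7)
The plan is to decompose $f(X_i,D_i,Y_i)$ algebraically into three pieces whose scaled sums become asymptotically independent Gaussian vectors, and then recombine. Using $Y_i=Y_i(D_i)$ and the identity $D_i = \eta_i + (D_i - \eta_i)$, write
\[
f(X_i,D_i,Y_i) = \bigl[\mu_0(X_i) + \eta_i\Delta(X_i)\bigr] + (D_i - \eta_i)\Delta(X_i) + R_i,
\]
where $R_i := D_i[f(X_i,1,Y_i(1)) - \mu_1(X_i)] + (1-D_i)[f(X_i,0,Y_i(0)) - \mu_0(X_i)]$ is the within-arm centered residual. Summing and rescaling by $n^{-1/2}$ yields $T_n = C_{1,n} + T_{2,n} + T_{3,n}$, and the goal is to identify the joint limit of these three components.

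For $T_{2,n}$, I would apply Assumption~\ref{ass:CB-IMB} to the recentered function $\gamma := \Delta - \E[\Delta(X)]$, which is mean-zero. Since $\sum_{i\in\lambda_g}(D_i - \eta_g) = 0$ by design, the imbalance statistic is invariant to additive constants in $\gamma$, and $V_{\mathrm{imb},\gamma}$ (a within-block sample variance) is likewise unchanged, so this recentering is WLOG. The conclusion is the conditional CLT $T_{2,n} \mid X^{(n)} \Rightarrow \mathcal N(0, V_{\mathrm{imb},\Delta})$ in $\rho$-probability. For $T_{3,n}$, condition on $\mathcal F_n := \sigma(X^{(n)}, D^{(n)})$: by i.i.d.\ sampling and Assumption~\ref{ass:Ahet}, the summands in $R_i$ are conditionally independent with mean zero and conditional variance $D_i\Sigma_1(X_i) + (1-D_i)\Sigma_0(X_i)$. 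Assumption~\ref{ass:CB-LLN} delivers convergence of the averaged conditional variance to $V_{3,f}$, and the $2+\delta$ moment bound on $f$ supplies the Lindeberg condition, giving a conditional Lindeberg--Feller CLT, $T_{3,n} \mid \mathcal F_n \Rightarrow \mathcal N(0, V_{3,f})$ in probability.

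To assemble the joint limit I would use characteristic functions with iterated conditioning:
\[
\E\!\left[e^{\,i(t_1^\top C_{1,n} + t_2^\top T_{2,n} + t_3^\top T_{3,n})}\right] = \E\!\left[e^{\,i(t_1^\top C_{1,n} + t_2^\top T_{2,n})}\,\phi_{3,n}\right],
\]
where $\phi_{3,n} := \E[e^{it_3^\top T_{3,n}} \mid \mathcal F_n] \xrightarrow{p} \exp(-\tfrac{1}{2} t_3^\top V_{3,f} t_3)$. A further conditioning on $X^{(n)}$ extracts the factor $\exp(-\tfrac{1}{2} t_2^\top V_{\mathrm{imb},\Delta} t_2)$, and the outer expectation over $X^{(n)}$ leaves $\E[e^{it_1^\top C_{1,n}}] \to \exp(-\tfrac{1}{2} t_1^\top V_{1,f} t_1)$ by hypothesis. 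The limiting joint CF therefore factorizes into three independent Gaussians, so $C_{1,n}, T_{2,n}, T_{3,n}$ are asymptotically jointly Gaussian and independent, and continuous mapping gives $T_n \Rightarrow \mathcal N(0, V_{1,f} + V_{\mathrm{imb},\Delta} + V_{3,f})$.

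The main obstacle is rigorously propagating the conditional weak convergences through iterated expectations. Convergence in the metric $\rho$ implies pointwise convergence of the conditional CFs (via bounded-Lipschitz approximation), and because CFs are bounded by $1$, bounded convergence carries those limits through the outer expectations despite the random integrands. The most delicate piece is the Lindeberg condition for $T_{3,n}$ along the random sequence of conditional distributions: a standard device is to bound the Lindeberg term unconditionally using the $2+\delta$ moment hypothesis and Markov's inequality, then transfer the resulting $o_p(1)$ statement inside the conditioning via iterated expectation, which suffices for the conditional CLT to hold in probability.
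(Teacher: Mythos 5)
Your proposal is correct and follows essentially the same route as the paper's proof: the identical three-term decomposition into $C_{1,n}$, the imbalance term handled via Assumption~\ref{ass:CB-IMB} applied to the recentered $\Delta$, and the within-arm residual handled by a conditional CLT given $(X^{(n)},D^{(n)})$ (the paper uses Lyapunov where you invoke Lindeberg, which the $2+\delta$ moment bound covers either way), with the pieces recombined through iterated conditional characteristic functions and bounded convergence. No gaps.
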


\begin{proof}
Let
\[
C_n:=\frac{1}{\sqrt n}\sum_{i=1}^n f(X_i,D_i,Y_i).
\]
Using $Y_i=Y_i(D_i)$,
\[
f(X_i,D_i,Y_i)=D_i f(X_i,1,Y_i(1))+(1-D_i)f(X_i,0,Y_i(0)).
\]
Add and subtract $\mu_d(X_i)$ to obtain
\[
C_n=C_{1,n}+C_{2,n}+C_{3,n},
\]
where
\[
C_{1,n}:=\frac{1}{\sqrt n}\sum_{i=1}^n\big(\mu_0(X_i)+\eta_i\Delta(X_i)\big),
\qquad
C_{2,n}:=\frac{1}{\sqrt n}\sum_{i=1}^n (D_i-\eta_i)\Delta(X_i),
\]
and
\[
C_{3,n}
:=\frac{1}{\sqrt n}\sum_{i=1}^n\Big\{
D_i\big(f(X_i,1,Y_i(1))-\mu_1(X_i)\big)
+(1-D_i)\big(f(X_i,0,Y_i(0))-\mu_0(X_i)\big)
\Big\}.
\]
By construction,
\begin{align*}
C_{1,n}&=\E[C_n\mid X^{(n)}],\\
C_{1,n}+C_{2,n}&=\E[C_n\mid X^{(n)},D^{(n)}],\\
C_{3,n}&=C_n-\E[C_n\mid X^{(n)},D^{(n)}].
\end{align*}

We first consider $C_{2,n}$. Note that, under fixed within-block quotas,
\[
\sum_{i=1}^n (D_i-\eta_i)
=
\sum_{g=1}^G \sum_{i\in\lambda_g} (D_i-\eta_g)
=
\sum_{g=1}^G (T_g - N_g\eta_g)
=0
\quad\text{identically.}
\]
Therefore centering does not change $C_{2,n}$:
\[
\frac{1}{\sqrt n}\sum_{i=1}^n (D_i-\eta_i)\big(\Delta(X_i)-\E[\Delta(X)]\big)
=
\frac{1}{\sqrt n}\sum_{i=1}^n (D_i-\eta_i)\Delta(X_i)
=
C_{2,n}.
\]
By Assumption~\ref{ass:CB-IMB} applied with $\gamma=\Delta-\E[\Delta(X)]$, letting
$V_{\mathrm{imb},\Delta}$ denote the corresponding $V_{\mathrm{imb},\gamma}$, we have
\[
\rho\!\left(C_{2,n},\ \mathcal N(0,V_{\mathrm{imb},\Delta})\ \middle|\ X^{(n)}\right)\ \xrightarrow{P}\ 0.
\]

Next consider $C_{3,n}$. Define
\[
\zeta_i
:=
D_i\big(f(X_i,1,Y_i(1))-\mu_1(X_i)\big)
+(1-D_i)\big(f(X_i,0,Y_i(0))-\mu_0(X_i)\big),
\]
so that $C_{3,n}=\frac{1}{\sqrt n}\sum_{i=1}^n \zeta_i$ and
$\E[\zeta_i\mid X^{(n)},D^{(n)}]=0$.
Conditional on $(X^{(n)},D^{(n)})$, the $\zeta_i$ are independent across $i$ by
Assumptions~\ref{ass:iid} and \ref{ass:Ahet}. Moreover,
\[
\Var(\zeta_i\mid X^{(n)},D^{(n)})
=
D_i\,\Sigma_1(X_i)+(1-D_i)\,\Sigma_0(X_i).
\]
Thus
\[
s_{3,n}^2
:=
\Var(C_{3,n}\mid X^{(n)},D^{(n)})
=
\frac{1}{n}\sum_{i=1}^n\Big(D_i\,\Sigma_1(X_i)+(1-D_i)\,\Sigma_0(X_i)\Big).
\]

We now show that $s_{3,n}^2\xrightarrow{P}V_{3,f}$. Rewrite
\[
s_{3,n}^2
=
\frac{1}{n}\sum_{i=1}^n \Sigma_0(X_i)
+\frac{1}{n}\sum_{i=1}^n D_i\Big(\Sigma_1(X_i)-\Sigma_0(X_i)\Big).
\]
By the i.i.d.\ LLN and $\E[\|\Sigma_0(X)\|]<\infty$,
\[
\frac{1}{n}\sum_{i=1}^n \Sigma_0(X_i)\ \xrightarrow{P}\ \E[\Sigma_0(X)].
\]
Moreover, applying Assumption~\ref{ass:CB-LLN} componentwise with
$\gamma(X)=\Sigma_1(X)-\Sigma_0(X)$ yields
\[
\frac{1}{n}\sum_{i=1}^n D_i\Big(\Sigma_1(X_i)-\Sigma_0(X_i)\Big)
\ \xrightarrow{P}\
\E\!\Big[\eta(X)\big(\Sigma_1(X)-\Sigma_0(X)\big)\Big].
\]
Therefore,
\[
s_{3,n}^2\ \xrightarrow{P}\ \E\!\Big[\eta(X)\Sigma_1(X)+(1-\eta(X))\Sigma_0(X)\Big]\;=\;V_{3,f}.
\]

The $(2+\delta)$-moment bound implies the conditional Lyapunov condition for
$\{\zeta_i/\sqrt n\}_{i=1}^n$ given $(X^{(n)},D^{(n)})$, so the conditional Lyapunov CLT yields
\[
\rho\!\left(C_{3,n},\ \mathcal N(0,s_{3,n}^2)\ \middle|\ X^{(n)},D^{(n)}\right)\ \xrightarrow{P}\ 0.
\]

It remains to combine the three terms. Fix $t\in\mathbb R^k$ and write the characteristic function of $C_n$ as
\[
\phi_n(t)
=
\E\!\left[\exp\!\big(i t'C_{1,n}\big)\ \E\!\left[\exp\!\big(i t'(C_{2,n}+C_{3,n})\big)\ \middle|\ X^{(n)}\right]\right].
\]
Since $C_{2,n}$ is measurable with respect to $(X^{(n)},D^{(n)})$,
\begin{align*}
\E\!\left[\exp\!\big(i t'(C_{2,n}+C_{3,n})\big)\ \middle|\ X^{(n)}\right]
&=
\E\!\left[\exp\!\big(i t' C_{2,n}\big)\
\E\!\left[\exp\!\big(i t' C_{3,n}\big)\ \middle|\ X^{(n)},D^{(n)}\right]\ \middle|\ X^{(n)}\right].
\end{align*}

We claim that
\[
\E\!\left[\exp\!\big(i t'(C_{2,n}+C_{3,n})\big)\ \middle|\ X^{(n)}\right]
\ \xrightarrow{P}\
\exp\!\Big(-\tfrac12 t'(V_{\mathrm{imb},\Delta}+V_{3,f})t\Big).
\]
To show this, write
\[
\E\!\left[\exp\!\big(i t'(C_{2,n}+C_{3,n})\big)\ \middle|\ X^{(n)}\right]
=
\E\!\left[\exp\!\big(i t' C_{2,n}\big)\,W_n(t)\ \middle|\ X^{(n)}\right],
\]
where $W_n(t):=\E[\exp(i t' C_{3,n})\mid X^{(n)},D^{(n)}]$, and set
$c(t):=\exp(-\tfrac12 t'V_{3,f}t)$.

By $\rho(C_{3,n},\mathcal N(0,s_{3,n}^2)\mid X^{(n)},D^{(n)})\xrightarrow{P}0$ and the fact that
$x\mapsto \exp(i t'x)$ is bounded and continuous, we have
\[
\Big|W_n(t)-\exp\!\big(-\tfrac12 t's_{3,n}^2 t\big)\Big|\ \xrightarrow{P}\ 0.
\]
Since $s_{3,n}^2\xrightarrow{P}V_{3,f}$, it follows that $W_n(t)\xrightarrow{P}c(t)$.
Moreover, $|W_n(t)-c(t)|\le 2$, so $\{|W_n(t)-c(t)|\}$ is uniformly integrable and therefore
\[
\E\big[|W_n(t)-c(t)|\big]\ \longrightarrow\ 0.
\]
Hence, by Markov's inequality,
\[
\E\!\left[\ |W_n(t)-c(t)|\ \middle|\ X^{(n)}\right]\ \xrightarrow{P}\ 0.
\]
Using $|\exp(i t' C_{2,n})|=1$,
\begin{align*}
&\Bigg|
\E\!\left[\exp\!\big(i t' C_{2,n}\big)\,W_n(t)\ \middle|\ X^{(n)}\right]
-
c(t)\,\E\!\left[\exp\!\big(i t' C_{2,n}\big)\ \middle|\ X^{(n)}\right]
\Bigg|
\\
&\hspace{4em}
\le
\E\!\left[\ |W_n(t)-c(t)|\ \middle|\ X^{(n)}\right]
\ \xrightarrow{P}\ 0.
\end{align*}
By Assumption~\ref{ass:CB-IMB} applied to $\gamma=\Delta-\E[\Delta(X)]$,
\[
\E\!\left[\exp\!\big(i t' C_{2,n}\big)\ \middle|\ X^{(n)}\right]
\ \xrightarrow{P}\
\exp\!\Big(-\tfrac12 t'V_{\mathrm{imb},\Delta}t\Big).
\]
Combining the last two displays yields the claimed convergence.

Therefore,
\[
\E\!\left[\exp\!\big(i t'(C_{2,n}+C_{3,n})\big)\ \middle|\ X^{(n)}\right]
=
\exp\!\Big(-\tfrac12 t'(V_{\mathrm{imb},\Delta}+V_{3,f})t\Big)+o_P(1).
\]
Using $|\exp(i t'C_{1,n})|\le 1$ and iterated expectations, we obtain
\[
\phi_n(t)
=
\E\!\left[\exp\!\big(i t'C_{1,n}\big)\right]\cdot
\exp\!\Big(-\tfrac12 t'(V_{\mathrm{imb},\Delta}+V_{3,f})t\Big)+o(1).
\]
By the assumption $C_{1,n}\Rightarrow \mathcal N(0,V_{1,f})$,
\[
\E\!\left[\exp\!\big(i t'C_{1,n}\big)\right]\ \longrightarrow\ \exp\!\Big(-\tfrac12 t'V_{1,f}t\Big).
\]
Therefore,
\[
\phi_n(t)\ \longrightarrow\ \exp\!\Big(-\tfrac12 t'V_f t\Big),
\qquad
V_f=V_{1,f}+V_{\mathrm{imb},\Delta}+V_{3,f},
\]
which implies $C_n\Rightarrow \mathcal N(0,V_f)$.
\end{proof}

\subsection{Proof of Theorem \ref{thm:clt_het_gmm}}\label{proof_hetshares_asympt}

\noindent This proof adapts the argument of Theorem~3.1 in Appendix A.1 of \citet{bai2023efficiency} to the covariate-based heterogeneous-shares setting used here.

By Lemma~\ref{lem:A8-het-cov}, $\hat\theta_n\xrightarrow{p}\theta_0$. By definition of $\theta_0$,
$\E[m(X,D,Y;\theta_0)]=0$.

Following the proof of Theorem~5.21 in \citet{van2000asymptotic}, to obtain the asymptotic linear representation
it suffices to verify the stochastic equicontinuity condition
\begin{equation}
\label{eq:Ln-hat-theta-het-cov}
L_n(\hat\theta_n)\xrightarrow{p}0,
\end{equation}
where $L_n(\theta)=(L_n^{(1)}(\theta),\dots,L_n^{(d_\theta)}(\theta))'$ and, for each $s=1,\dots,d_\theta$,
\begin{align*}
L_n^{(s)}(\theta)
:=
&\ \frac{1}{\sqrt{n}}\sum_{i=1}^n
\Big(m_s(X_i,D_i,Y_i;\theta)-\E[m_s(X_i,D_i,Y_i;\theta)]\Big)\\
&-\frac{1}{\sqrt{n}}\sum_{i=1}^n
\Big(m_s(X_i,D_i,Y_i;\theta_0)-\E[m_s(X_i,D_i,Y_i;\theta_0)]\Big).
\end{align*}
Once \eqref{eq:Ln-hat-theta-het-cov} holds, the Z-estimator expansion yields
\begin{equation}
\label{eq:linear-het-cov-final}
\sqrt{n}(\hat\theta_n-\theta_0)
=
-\,M^{-1}\frac{1}{\sqrt{n}}\sum_{i=1}^n m(X_i,D_i,Y_i;\theta_0)+o_p(1).
\end{equation}

We verify \eqref{eq:Ln-hat-theta-het-cov} componentwise.
Using Assumption~\ref{ass:C}(c)--(d) and Proposition~8.11 of \citet{kosorok2008introduction}, we may restrict to the
countable set $\Theta^\ast$ in Assumption~\ref{ass:C}(d), so for any sequence $\delta_n\downarrow 0$ it suffices to show
\[
\sup_{\theta\in\Theta^\ast:\ \|\theta-\theta_0\|<\delta_n}\big|L_n^{(s)}(\theta)\big| \xrightarrow{p}0
\qquad (s=1,\dots,d_\theta).
\]
Decomposing by treatment status,
\[
L_n^{(s)}(\theta)=L^{(s)}_{n,1}(\theta)+L^{(s)}_{n,0}(\theta),
\]
where
\begin{align*}
L^{(s)}_{n,1}(\theta)
&=
\frac{1}{\sqrt{n}}\sum_{i=1}^n D_i
\Big(\Delta m_{1,s}(i,\theta)-\E[\Delta m_{1,s}(i,\theta)]\Big),\\
L^{(s)}_{n,0}(\theta)
&=
\frac{1}{\sqrt{n}}\sum_{i=1}^n (1-D_i)
\Big(\Delta m_{0,s}(i,\theta)-\E[\Delta m_{0,s}(i,\theta)]\Big).
\end{align*}
with $\Delta m_{d,s}(i,\theta):=m_s(X_i,d,Y_i(d);\theta)-m_s(X_i,d,Y_i(d);\theta_0)$.

Fix $(d,s)$ and define the $L_2(Q)$ semimetric
\[
\rho_{Q,d,s}(\theta,\theta_0)
:=
\Big\{\E\big[(\Delta m_{d,s}(X,d,Y(d);\theta)-\E[\Delta m_{d,s}(X,d,Y(d);\theta)])^2\big]\Big\}^{1/2},
\]
which is continuous at $\theta_0$ by Assumption~\ref{ass:C}(c). Let
\[
\mathcal F_n := \sigma\{(X_i,Y_i(0),Y_i(1))_{i=1}^n\}.
\]
For any sequence $\tilde\delta_n\downarrow 0$, Proposition~C.1 of \citet{han2021complex}, applied conditional on
$\mathcal F_n$ and using the blockwise complete-randomization assignment in Assumption~\ref{ass:Ahet}, yields
\[
\E\Big[\sup_{\rho_{Q,d,s}(\theta,\theta_0)<\tilde\delta_n} |L^{(s)}_{n,d}(\theta)|\ \Big|\ \mathcal F_n\Big]
\ \le\
C\,\E\Big[\sup_{\rho_{Q,d,s}(\theta,\theta_0)<\tilde\delta_n}
\Big|\frac{1}{\sqrt{n}}\sum_{i=1}^n (\xi_{i,d}-p_{i,d})\,\Delta m_{d,s}(i,\theta)\Big|\ \Big|\ \mathcal F_n\Big],
\]
for a universal constant $C<\infty$, where $\{\xi_{i,d}\}_{i=1}^n$ are conditionally independent Bernoulli draws with
success probabilities $p_{i,1}=\eta_i$ and $p_{i,0}=1-\eta_i$, constructed on an auxiliary probability space as in
\citet{han2021complex}.
Assumption~\ref{ass:C}(e)(ii) (Donsker) and Corollary~2.3.12 in \citet{vaart1997weak} imply the right-hand side converges
to $0$ as $\tilde\delta_n\downarrow 0$, hence
$\sup_{\|\theta-\theta_0\|<\delta_n,\ \theta\in\Theta^\ast}|L_n^{(s)}(\theta)|\xrightarrow{p}0$ by Markov’s inequality.
Therefore $L_n(\hat\theta_n)\xrightarrow{p}0$, which establishes \eqref{eq:Ln-hat-theta-het-cov} and hence
\eqref{eq:linear-het-cov-final}.

\medskip
Define
\[
m_n^\ast(X_i,D_i,Y_i;\theta_0)
:=
m(X_i,D_i,Y_i;\theta_0)-(D_i-\eta_i)\Delta(X_i).
\]
Using $Y_i=Y_i(D_i)$ and adding/subtracting $\mu_d(X_i)$, we have the decomposition
\begin{equation}
\label{eq:m-decomp-het-cov-final}
m(X_i,D_i,Y_i;\theta_0)=m_n^\ast(X_i,D_i,Y_i;\theta_0) + (D_i-\eta_i)\Delta(X_i).
\end{equation}
Thus, combining \eqref{eq:linear-het-cov-final} with \eqref{eq:m-decomp-het-cov-final}, it suffices to show that
\begin{equation}
\label{eq:imb-op1-final}
\frac{1}{\sqrt{n}}\sum_{i=1}^n (D_i-\eta_i)\Delta(X_i)=o_p(1).
\end{equation}

Fix $a\in\mathbb{R}^{d_\theta}$ with $\|a\|=1$ and define the scalar $\Delta_a(x):=a'\Delta(x)$.
By Cram\'er--Wold, it suffices to prove
\[
\frac{1}{\sqrt{n}}\sum_{i=1}^n (D_i-\eta_i)\Delta_a(X_i)=o_p(1).
\]
Fix a block $g$ and write $\bar\Delta_{a,g}:=N_g^{-1}\sum_{i\in\lambda_g}\Delta_a(X_i)$. Since
$\sum_{i\in\lambda_g}(D_i-\eta_g)=0$ under fixed within-block quotas and $T_g\in\{1,\dots,N_g-1\}$
(Assumption~\ref{ass:Ahet}), we have $N_g\ge 2$ and
\[
\sum_{i\in\lambda_g}(D_i-\eta_g)\Delta_a(X_i)
=
\sum_{i\in\lambda_g}(D_i-\eta_g)\big(\Delta_a(X_i)-\bar\Delta_{a,g}\big).
\]
Under complete randomization with exactly $T_g$ treated units in block $g$,
\[
\Var\!\left(\sum_{i\in\lambda_g}(D_i-\eta_g)\big(\Delta_a(X_i)-\bar\Delta_{a,g}\big)\ \Bigm|\ X^{(n)}\right)
=
\frac{N_g\,\eta_g(1-\eta_g)}{N_g-1}\sum_{i\in\lambda_g}\big(\Delta_a(X_i)-\bar\Delta_{a,g}\big)^2.
\]
Therefore, using independence across blocks given $X^{(n)}$,
\begin{align*}
\Var\!\left(\frac{1}{\sqrt{n}}\sum_{i=1}^n (D_i-\eta_i)\Delta_a(X_i)\ \Bigm|\ X^{(n)}\right)
&=
\frac{1}{n}\sum_{g=1}^G
\frac{N_g\,\eta_g(1-\eta_g)}{N_g-1}\sum_{i\in\lambda_g}\big(\Delta_a(X_i)-\bar\Delta_{a,g}\big)^2\\
&\le
\frac{1}{n}\sum_{g=1}^G C_0\,N_g\,\max_{i,j\in\lambda_g}\big|\Delta_a(X_i)-\Delta_a(X_j)\big|^2,
\end{align*}
for a constant $C_0<\infty$ because $\eta_g(1-\eta_g)\le 1/4$ and $N_g/(N_g-1)\le 2$ when $N_g\ge 2$.

By Assumption~\ref{ass:C}(f) at $\theta_0$ and $|\Delta_a(x)-\Delta_a(x')|\le \|\Delta(x)-\Delta(x')\|$,
$\Delta_a(\cdot)$ is Lipschitz with constant $L_\Delta\le L_1+L_0$, hence
$\max_{i,j\in\lambda_g}|\Delta_a(X_i)-\Delta_a(X_j)|^2 \le L_\Delta^2\max_{i,j\in\lambda_g}\|X_i-X_j\|^2$.
Using $\max_g N_g\le \bar N$ w.p.a.1 (Assumption~\ref{ass:B}) and the covariate balancing condition in
Assumption~\ref{ass:B} gives
\[
\Var\!\left(\frac{1}{\sqrt{n}}\sum_{i=1}^n (D_i-\eta_i)\Delta_a(X_i)\ \Bigm|\ X^{(n)}\right)\xrightarrow{p}0,
\]
so Chebyshev’s inequality yields $n^{-1/2}\sum_{i=1}^n (D_i-\eta_i)\Delta_a(X_i)=o_p(1)$.
Since $a$ was arbitrary, \eqref{eq:imb-op1-final} follows by Cram\'er--Wold.
Combining \eqref{eq:linear-het-cov-final}, \eqref{eq:m-decomp-het-cov-final}, and \eqref{eq:imb-op1-final} yields
\begin{equation}\label{eq:IF-het-cov-final}
\sqrt{n}\,(\hat\theta_n-\theta_0)
=
-\,M^{-1}\frac{1}{\sqrt{n}}\sum_{i=1}^n m_n^\ast(X_i,D_i,Y_i;\theta_0)
+o_p(1).
\end{equation}

\medskip
By \eqref{eq:IF-het-cov-final} and Cram\'er--Wold, it suffices to obtain a CLT for
$n^{-1/2}\sum_{i=1}^n m_n^\ast(X_i,D_i,Y_i;\theta_0)$.

We first show that the imbalance component associated with $\Delta(\cdot)$ is asymptotically degenerate in the present
fine-stratification regime. In the notation of Assumption~\ref{ass:CB-IMB}, define $\gamma_\Delta(x):=\Delta(x)-\E[\Delta(X)]$,
so that $\E[\gamma_\Delta(X)]=0$. Let $S_{\Delta,g}:=S_{\gamma_\Delta,g}$ denote the within-block covariance matrix of $\gamma_\Delta(X)$.
Since $\Delta(\cdot)$ is Lipschitz at $\theta_0$ by Assumption~\ref{ass:C}(f), the covariate balancing condition in Assumption~\ref{ass:B}
implies $\max_{i,j\in\lambda_g}\|\Delta(X_i)-\Delta(X_j)\|^2\le L_\Delta^2\max_{i,j\in\lambda_g}\|X_i-X_j\|^2$ and hence, for any matrix norm $\|\cdot\|$,
\[
\left\|
\frac{1}{n}\sum_{g=1}^G N_g\,\eta_g(1-\eta_g)\,S_{\Delta,g}
\right\|
\ \xrightarrow{P}\ 0.
\]
Therefore, the corresponding imbalance variance is degenerate:
\begin{equation}
\label{eq:VimbDelta-degenerate}
V_{\mathrm{imb},\Delta}
=\plim\ \frac{1}{n}\sum_{g=1}^G N_g\,\eta_g(1-\eta_g)\,S_{\Delta,g}
=0.
\end{equation}

Lemma~\ref{lem:A6-het-cov-correct}, applied with $f(x,d,y)=m(x,d,y;\theta_0)$, yields
\[
\frac{1}{\sqrt{n}}\sum_{i=1}^n m(X_i,D_i,Y_i;\theta_0)
\ \xRightarrow{d}\
\mathcal N\!\big(0,\ V_{1,m}+V_{\mathrm{imb},\Delta}+V_{3,m}\big),
\]
where $V_{3,m}=\E[\eta(X)\Sigma_1(X)+(1-\eta(X))\Sigma_0(X)]$ provided that
\[
C_{1,n}\ \dto\ \mathcal N(0,V_{1,m}).
\]
This is exactly the lemma's requirement on the ``superpopulation'' component.
By \eqref{eq:VimbDelta-degenerate}, this reduces to
\begin{equation}
\label{eq:CLT-sum-m-reduced}
\frac{1}{\sqrt{n}}\sum_{i=1}^n m(X_i,D_i,Y_i;\theta_0)
\ \xRightarrow{d}\
\mathcal N\!\big(0,\ V_{1,m}+V_{3,m}\big).
\end{equation}

Next, by \eqref{eq:m-decomp-het-cov-final} and \eqref{eq:imb-op1-final},
\begin{align*}
\frac{1}{\sqrt{n}}\sum_{i=1}^n m(X_i,D_i,Y_i;\theta_0)
&=
\frac{1}{\sqrt{n}}\sum_{i=1}^n m_n^\ast(X_i,D_i,Y_i;\theta_0)
+\frac{1}{\sqrt{n}}\sum_{i=1}^n (D_i-\eta_i)\Delta(X_i)\\
&=
\frac{1}{\sqrt{n}}\sum_{i=1}^n m_n^\ast(X_i,D_i,Y_i;\theta_0)+o_p(1).
\end{align*}
Combining this with \eqref{eq:CLT-sum-m-reduced} and Slutsky’s theorem yields
\[
\frac{1}{\sqrt{n}}\sum_{i=1}^n m_n^\ast(X_i,D_i,Y_i;\theta_0)
\ \xRightarrow{d}\
\mathcal N\!\big(0,\ V_{1,m}+V_{3,m}\big).
\]
Finally, premultiplying and postmultiplying by $M^{-1}$ in \eqref{eq:IF-het-cov-final} yields
\begin{equation}\label{eq:CLT-het-cov-final}
\sqrt{n}\,(\hat\theta_n-\theta_0)
\ \dto\
\mathcal N\!\bigl(0,\ V_\ast\bigr),
\qquad
V_\ast = M^{-1}\Omega_{\eta(\cdot)}M^{-1\prime}.
\end{equation}

\subsection{Consistency of the Variance Estimator in the Covariate-Based, Hetero\-geneous-Shares Stratification Setting}

\subsubsection{Deterministic-sequence LLN for arm-specific moment means}

\noindent This next lemma adapts Lemma~A.9 in \citet{bai2023efficiency} to the covariate-based
heterogeneous-shares setting used here. To state the lemma, we first introduce arm-specific assignment probabilities and the corresponding IPW moment means.

For $d\in\{0,1\}$ and $i\in\lambda_g$, define the arm-$d$ assignment probability
\[
\eta_{d,b_i}
:=
\Pr(D_i=d\mid X^{(n)})
=
\begin{cases}
\eta_g, & d=1,\\
1-\eta_g, & d=0.
\end{cases}
\]
For $1\le s\le d_\theta$ and $\theta\in\Theta$, define
\begin{equation}\label{eq:muhatA9_het_final}
\widehat\mu^{(s)}_{d,n}(\theta)
:=
\frac{1}{n}\sum_{i=1}^n
\frac{\mathbf 1\{D_i=d\}}{\eta_{d,b_i}}\,
m_s\!\left(X_i,d,Y_i(d);\theta\right),
\qquad
\mu^{(s)}_{d}(\theta)
:=
\E\!\left[m_s\!\left(X,d,Y(d);\theta\right)\right],
\end{equation}
and write $\mu^{(s)}_{d}:=\mu^{(s)}_{d}(\theta_0)$.

\begin{lemma}[Deterministic-sequence LLN for arm-$d$ moment means]\label{lem:A9_het_final}
Suppose Assumptions~\ref{ass:iid}, \ref{ass:B}, \ref{ass:Ahet}, and \ref{ass:C} hold.
In addition, suppose Assumption~\ref{ass:CB-var-regularity} holds.
Then, for each $d\in\{0,1\}$ and each deterministic sequence $\theta_n\to\theta_0$,
\[
\widehat\mu^{(s)}_{d,n}(\theta_n)\ \xrightarrow{P}\ \mu^{(s)}_{d},
\qquad\text{for every }1\le s\le d_\theta.
\]
\end{lemma}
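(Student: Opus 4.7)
The plan is to decompose the centered quantity into a $\theta$-perturbation term and a fixed-$\theta_0$ stochastic term,
\begin{equation*}
\widehat\mu^{(s)}_{d,n}(\theta_n) - \mu^{(s)}_{d} \;=\; A_n + B_n,
\qquad
A_n:=\widehat\mu^{(s)}_{d,n}(\theta_n)-\widehat\mu^{(s)}_{d,n}(\theta_0),
\qquad
B_n:=\widehat\mu^{(s)}_{d,n}(\theta_0)-\mu^{(s)}_{d}.
\end{equation*}
The common engine for both pieces is the IPW identity $\E[\mathbf 1\{D_i=d\}/\eta_{d,b_i}\mid X^{(n)}]=1$, which, combined with the conditional independence of $D^{(n)}$ and $\{Y_i(\cdot)\}$ given $X^{(n)}$ from Assumption~\ref{ass:Ahet}, unbiases the IPW-weighted moments and reduces each piece to a standard i.i.d.\ calculation.

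For $A_n$, I would take expectations, apply the triangle inequality inside the sum, and use the IPW identity together with i.i.d.\ sampling (Assumption~\ref{ass:iid}) to obtain
\begin{equation*}
\E[|A_n|]
\;\le\;
\E\!\left[\frac{1}{n}\sum_{i=1}^n\frac{\mathbf 1\{D_i=d\}}{\eta_{d,b_i}}\bigl|m_s(X_i,d,Y_i(d);\theta_n)-m_s(X_i,d,Y_i(d);\theta_0)\bigr|\right]
\;=\;
\E\bigl|m_s(X,d,Y(d);\theta_n)-m_s(X,d,Y(d);\theta_0)\bigr|.
\end{equation*}
By Assumption~\ref{ass:C}(c), the $L_2$ norm of the $\theta$-increment tends to zero along the deterministic sequence $\theta_n\to\theta_0$, so Cauchy--Schwarz gives the $L_1$ bound and Markov's inequality yields $A_n=o_p(1)$.

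For $B_n$, I would condition on $\mathcal F_n:=\sigma\{(X_i,Y_i(0),Y_i(1))_{i=1}^n\}$ and write $Z_i:=m_s(X_i,d,Y_i(d);\theta_0)$, which are i.i.d.\ with $\E Z^2<\infty$ by Assumption~\ref{ass:CB-var-regularity}(a) applied at $\theta_0$. The IPW identity applied conditionally on $\mathcal F_n$ gives $\E[\widehat\mu^{(s)}_{d,n}(\theta_0)\mid\mathcal F_n]=n^{-1}\sum_i Z_i\xrightarrow{\mathrm{a.s.}}\mu^{(s)}_d$ by Kolmogorov's LLN. For the conditional variance, blockwise independence under Assumption~\ref{ass:Ahet} and the crude within-block Cauchy--Schwarz bound $(\sum_{i\in\lambda_g}\mathbf 1\{D_i=d\}a_i)^2\le N_g\sum_{i\in\lambda_g}a_i^2$ (with $a_i=Z_i/\eta_{d,g}$) yield
\begin{equation*}
\Var\!\left[\widehat\mu^{(s)}_{d,n}(\theta_0)\,\middle|\,\mathcal F_n\right]
\;\le\;
\frac{1}{n^2}\sum_{g=1}^G\frac{N_g}{\eta_{d,g}^2}\sum_{i\in\lambda_g}Z_i^2
\;\le\;
\frac{\bar N^3}{n}\cdot\frac{1}{n}\sum_{i=1}^n Z_i^2
\;=\; O_p(n^{-1}),
\end{equation*}
using $\eta_{d,g}\ge 1/\bar N$ (from $T_g\in\{1,\dots,N_g-1\}$ in Assumption~\ref{ass:Ahet}) and $N_g\le\bar N$ with probability tending to one (Assumption~\ref{ass:B}), while $n^{-1}\sum Z_i^2=O_p(1)$ by the i.i.d.\ LLN. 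Chebyshev then gives $B_n=o_p(1)$, and combining $A_n+B_n$ proves the claim. The only real obstacle is keeping the IPW weights $1/\eta_{d,b_i}$ uniformly bounded, but this is not an extra assumption: it follows directly from the design restrictions $T_g\in\{1,\dots,N_g-1\}$ and $N_g\le\bar N$ already maintained, after which everything reduces to a standard conditional-LLN-plus-Chebyshev argument coupled with the $L_2$-continuity of $m_s$ at $\theta_0$.
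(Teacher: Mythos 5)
Your proof is correct and uses essentially the same machinery as the paper's: the blockwise conditional-variance bound $(\sum_{i\in\lambda_g}a_i)^2\le N_g\sum_{i\in\lambda_g}a_i^2$ with $\eta_{d,g}^{-1}\le\bar N$ and $N_g\le\bar N$ w.p.a.1, a conditional Chebyshev step, and the $L_2$-continuity in Assumption~\ref{ass:C}(c). The only (harmless) reorganization is that you peel off the $\theta_n$-versus-$\theta_0$ perturbation up front via the IPW unbiasedness identity and a first-moment/Markov bound, whereas the paper carries $\theta_n$ through the conditional decomposition and uses the local envelope in Assumption~\ref{ass:CB-var-regularity}(a) to control $n^{-1}\sum_i M_i(\theta_n)^2$; both routes close the argument.
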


\begin{proof}
Fix $d\in\{0,1\}$ and $1\le s\le d_\theta$, and let $\theta_n\to\theta_0$ be deterministic.
Define $M_i(\theta):=m_s(X_i,d,Y_i(d);\theta)$.
It suffices to show
\begin{align}
\widehat\mu^{(s)}_{d,n}(\theta_n)
-\E_n\!\left[\widehat\mu^{(s)}_{d,n}(\theta_n)\mid X^{(n)},Y^{(n)}(d)\right]
&\xrightarrow{P}0,
\label{eq:A9_final_51}\\
\E_n\!\left[\widehat\mu^{(s)}_{d,n}(\theta_n)\mid X^{(n)},Y^{(n)}(d)\right]
&\xrightarrow{P}\mu^{(s)}_{d}.
\label{eq:A9_final_52}
\end{align}

Conditional on $(X^{(n)},Y^{(n)}(d))$, the only randomness is from the assignment vector $D^{(n)}$.
Since subtracting a conditional mean does not change conditional variance,
\begin{align*}
&\Var_n\!\Big(
\widehat\mu^{(s)}_{d,n}(\theta_n)
-\E_n\!\big[\widehat\mu^{(s)}_{d,n}(\theta_n)\mid X^{(n)},Y^{(n)}(d)\big]
\Bigm|\,
X^{(n)},Y^{(n)}(d)
\Big) \\
&\qquad=
\Var_n\!\Big(\widehat\mu^{(s)}_{d,n}(\theta_n)\Bigm|\,
X^{(n)},Y^{(n)}(d)\Big).
\end{align*}
By Assumption~\ref{ass:Ahet}, blockwise assignments are conditionally independent across $g$ and satisfy complete
randomization within each block, hence
\[
\Var_n\!\left(\widehat\mu^{(s)}_{d,n}(\theta_n)\mid X^{(n)},Y^{(n)}(d)\right)
=
\frac{1}{n^2}\sum_{g=1}^G
\Var_n\!\left(
\sum_{i\in\lambda_g}\frac{\mathbf 1\{D_i=d\}}{\eta_{d,g}}\,M_i(\theta_n)
\Bigm|\,
X^{(n)},Y^{(n)}(d)
\right).
\]
On the event $\{\max_g N_g\le \bar N\}$ (which has probability $\to 1$ by Assumption~\ref{ass:B}),
Assumption~\ref{ass:Ahet} implies $\eta_g\in[1/\bar N,\,1-1/\bar N]$, hence
$\eta_{d,g}^{-1}\le \bar N$ for $d\in\{0,1\}$.
Using $\Var(Z)\le \E[Z^2]$ and $(\sum_{i\in\lambda_g} a_i)^2\le N_g\sum_{i\in\lambda_g} a_i^2$,
we obtain on $\{\max_g N_g\le \bar N\}$,
\begin{align*}
\Var_n\!\Bigg(
\sum_{i\in\lambda_g}\frac{\mathbf 1\{D_i=d\}}{\eta_{d,g}}\,M_i(\theta_n)
\Bigg.
&\Bigg|\ X^{(n)},Y^{(n)}(d)
\Bigg) \\
&\le
\E_n\!\Bigg[
\Bigg(
\sum_{i\in\lambda_g}\frac{\mathbf 1\{D_i=d\}}{\eta_{d,g}}\,M_i(\theta_n)
\Bigg)^2
\ \Bigg|\ X^{(n)},Y^{(n)}(d)
\Bigg] \\
&\le
N_g\sum_{i\in\lambda_g}\frac{1}{\eta_{d,g}^2}\,M_i(\theta_n)^2 \\
&\le
\bar N^3 \sum_{i\in\lambda_g} M_i(\theta_n)^2.
\end{align*}
Summing over $g$ yields, on $\{\max_g N_g\le \bar N\}$,
\[
\Var_n\!\left(\widehat\mu^{(s)}_{d,n}(\theta_n)\mid X^{(n)},Y^{(n)}(d)\right)
\le
\frac{\bar N^3}{n^2}\sum_{i=1}^n M_i(\theta_n)^2.
\]
Because $\theta_n\to\theta_0$, for all large $n$ we have $\|\theta_n-\theta_0\|<\delta$, where $\delta$ is as in
Assumption~\ref{ass:CB-var-regularity}.
Assumption~\ref{ass:CB-var-regularity}(a) implies
\[
\E\!\left[\sup_{\|\theta-\theta_0\|<\delta}\,|m_s(X,d,Y(d);\theta)|^2\right]<\infty,
\]
and hence $\sup_{n\ \text{large}}\E[M(\theta_n)^2]<\infty$, where $M(\theta):=m_s(X,d,Y(d);\theta)$.
By i.i.d.\ sampling (Assumption~\ref{ass:iid}) and Markov's inequality,
$n^{-1}\sum_{i=1}^n M_i(\theta_n)^2 = O_P(1)$, so the conditional variance above is $O_P(1/n)\to 0$.
Chebyshev's inequality gives \eqref{eq:A9_final_51}.

By Assumption~\ref{ass:Ahet}, conditional on $X^{(n)}$ the assignment is independent of $Y^{(n)}(d)$, and for each
$i\in\lambda_g$,
\[
\E_n[\mathbf 1\{D_i=d\}\mid X^{(n)},Y^{(n)}(d)]
=
\E_n[\mathbf 1\{D_i=d\}\mid X^{(n)}]
=
\eta_{d,g}.
\]
Therefore,
\begin{equation}\label{eq:A9_final_condmean_simplify}
\E_n\!\left[\widehat\mu^{(s)}_{d,n}(\theta_n)\mid X^{(n)},Y^{(n)}(d)\right]
=
\frac{1}{n}\sum_{i=1}^n M_i(\theta_n).
\end{equation}
It remains to show $n^{-1}\sum_{i=1}^n M_i(\theta_n)\xrightarrow{P}\mu^{(s)}_d=\E[M(\theta_0)]$.
Add and subtract $\E[M(\theta_n)]$:
\[
\frac{1}{n}\sum_{i=1}^n M_i(\theta_n)-\E[M(\theta_0)]
=
\left\{\frac{1}{n}\sum_{i=1}^n M_i(\theta_n)-\E[M(\theta_n)]\right\}
+
\left\{\E[M(\theta_n)]-\E[M(\theta_0)]\right\}.
\]
For the first bracket, by i.i.d.\ sampling (Assumption~\ref{ass:iid}) and $\E[M(\theta_n)^2]<\infty$ for all large $n$,
\[
\Var\!\left(\frac{1}{n}\sum_{i=1}^n M_i(\theta_n)\right)=\frac{\Var(M(\theta_n))}{n}\ \to\ 0,
\]
so Chebyshev yields convergence to $0$ in probability.
For the second bracket, Assumption~\ref{ass:C}(c) gives
$\E[(M(\theta_n)-M(\theta_0))^2]\to 0$, hence by Cauchy--Schwarz,
$\E[M(\theta_n)]\to \E[M(\theta_0)]$.
Thus $n^{-1}\sum_{i=1}^n M_i(\theta_n)\xrightarrow{P}\E[M(\theta_0)]=\mu^{(s)}_{d}$, and combining with
\eqref{eq:A9_final_condmean_simplify} yields \eqref{eq:A9_final_52}.

Combining \eqref{eq:A9_final_51} and \eqref{eq:A9_final_52} yields
$\widehat\mu^{(s)}_{d,n}(\theta_n)\xrightarrow{P}\mu^{(s)}_{d}$.
\end{proof}

\subsubsection{Consistency of block cross-product estimators}

\noindent This next lemma adapts Lemma~C.2 in \citet{bai2024inference} to the covariate-based heterogeneous-shares setting used here.

\begin{lemma}[Consistency of the block cross-product estimators]\label{lem:C2-het-cov}
Suppose Assumptions~\ref{ass:iid}, \ref{ass:B}, \ref{ass:Ahet}, \ref{ass:C}, \ref{ass:CB-neighbor}, and
\ref{ass:CB-var-regularity} hold. Let $\{\theta_n\}_{n\ge1}$ be any deterministic sequence with $\theta_n\to\theta_0$.
For $d\in\{0,1\}$ define the superpopulation conditional mean
\[
\mu_d(x,\theta):=\E\!\left[m(X,d,Y(d);\theta)\mid X=x\right],\qquad \mu_d(x):=\mu_d(x,\theta_0),
\]
and write $\eta_i=\eta_g$ for $i\in\lambda_g$.

For each $g$, write $N_g:=|\lambda_g|$ and let $T_g:=\sum_{i\in\lambda_g}D_i$ so that $\eta_g=T_g/N_g$.
Under Assumption~\ref{ass:Ahet}, $T_g$ is fixed (non-random) and hence
\[
N_{1,g}=T_g,\qquad N_{0,g}=N_g-T_g.
\]
Assume (as part of Assumption~\ref{ass:Ahet}) that $\eta_g\in(0,1)$ for all $g$, so $N_{d,g}\ge1$ for $d\in\{0,1\}$.

For $d\in\{0,1\}$ define the arm-$d$ sample mean
\[
\bar m_{d,g}(\theta):=\frac{1}{N_{d,g}}\sum_{i\in\lambda_g:\,D_i=d} m(X_i,d,Y_i(d);\theta).
\]
Define the cross-arm (symmetrized) block term
\[
\widehat\tau_{g,n}(1,0;\theta)
:=\frac{1}{2}\Big(\bar m_{1,g}(\theta)\,\bar m_{0,g}(\theta)^\top
+\bar m_{0,g}(\theta)\,\bar m_{1,g}(\theta)^\top\Big).
\]
Define the within-arm (symmetrized) block term, for $d\in\{0,1\}$,
\[
\widehat\varsigma_{g,n}(d,d;\theta)
:=
\begin{cases}
\displaystyle
\frac{1}{N_{d,g}(N_{d,g}-1)}
\sum_{\substack{i<i'\\ i,i'\in\lambda_g:\,D_i=D_{i'}=d}}
\Big(
m_i(\theta)\,m_{i'}(\theta)^\top
+
m_{i'}(\theta)\,m_i(\theta)^\top
\Big),
& \text{if } N_{d,g}\ge 2,\\[1.2em]
\displaystyle
\frac{1}{2}\Big(
m_{i_g(d)}(\theta)\,m_{i_{\pi(g)}(d)}(\theta)^\top
+
m_{i_{\pi(g)}(d)}(\theta)\,m_{i_g(d)}(\theta)^\top
\Big),
& \text{if } N_{d,g}=1,
\end{cases}
\]
where $m_i(\theta):=m(X_i,d,Y_i(d);\theta)$, $i_g(d)$ is the unique index in block $g$ such that $D_{i_g(d)}=d$
when $N_{d,g}=1$, and $\pi(\cdot)$ is the involution from Assumption~\ref{ass:CB-neighbor}.
Assume moreover that $\pi(\cdot)$ pairs singleton-$d$ blocks with singleton-$d$ blocks: for each $d\in\{0,1\}$,
\[
N_{d,g}=1 \ \Longrightarrow\ N_{d,\pi(g)}=1,
\]
so $i_{\pi(g)}(d)$ is well-defined whenever $N_{d,g}=1$.

Finally, define the sample cross-product estimators
\[
\zeta_n(1,0;\theta)
:=\sum_{g=1}^{G} w_g\,\eta_g(1-\eta_g)\,\widehat\tau_{g,n}(1,0;\theta),
\qquad
\zeta_n(d,d;\theta)
:=\sum_{g=1}^{G} w_g\,\eta_g(1-\eta_g)\,\widehat\varsigma_{g,n}(d,d;\theta),
\]
with $w_g:=N_g/n$.
Then, as $n\to\infty$,
\begin{align*}
\zeta_n(1,0;\theta_n)
&\xrightarrow{p}
\zeta(1,0)
:=
\frac{1}{2}\,\E\!\Big[\eta(X)(1-\eta(X))\big(\mu_1(X)\mu_0(X)^\top+\mu_0(X)\mu_1(X)^\top\big)\Big],
\\
\zeta_n(d,d;\theta_n)
&\xrightarrow{p}
\zeta(d,d)
:=
\E\!\Big[\eta(X)(1-\eta(X))\,\mu_d(X)\mu_d(X)^\top\Big],
\qquad d\in\{0,1\}.
\end{align*}
\end{lemma}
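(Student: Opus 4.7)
The plan is to prove the claim first at $\theta=\theta_0$ and then extend to the deterministic sequence $\theta_n\to\theta_0$. The extension step mirrors the argument in Lemma~\ref{lem:A9_het_final}: since each block term is a symmetric product of block-level averages of moment values, Cauchy--Schwarz at the block level together with the local uniform integrability and local $L^2$-continuity in Assumptions~\ref{ass:C}(c) and \ref{ass:CB-var-regularity}(a) yield that $\|\widehat\varsigma_{g,n}(d,d;\theta_n)-\widehat\varsigma_{g,n}(d,d;\theta_0)\|$ and $\|\widehat\tau_{g,n}(1,0;\theta_n)-\widehat\tau_{g,n}(1,0;\theta_0)\|$ are $o_p(1)$ after weighting by $w_g\eta_g(1-\eta_g)$ and summing.

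At $\theta=\theta_0$ I would decompose $m(X_i,d,Y_i(d);\theta_0)=\mu_d(X_i)+\epsilon_{d,i}$ with $\E[\epsilon_{d,i}\mid X_i]=0$, and exploit covariate balancing (Assumption~\ref{ass:B}) and the Lipschitz property in Assumption~\ref{ass:CB-var-regularity}(c) to replace each within-block $\mu_d(X_i)$ by a common representative value $\mu_d(X_g^\ast)$ at $o_p(1)$ cost. For the within-arm term with $N_{d,g}\ge 2$, I use the $U$-statistic identity
\[
\widehat\varsigma_{g,n}(d,d;\theta_0)
=\frac{N_{d,g}}{N_{d,g}-1}\,\bar m_{d,g}\bar m_{d,g}^{\top}
-\frac{1}{N_{d,g}(N_{d,g}-1)}\sum_{i\in\lambda_g:\,D_i=d} m_i m_i^{\top},
\]
which on taking conditional expectation given $X^{(n)}$ reduces to $\mu_d(X_g^\ast)\mu_d(X_g^\ast)^{\top}$ exactly (the $\Sigma_d(X_g^\ast)$ pieces cancel). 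Residual fluctuations are controlled by Chebyshev using $\max_g N_g\le\bar N$ w.p.a.1 and Assumption~\ref{ass:CB-var-regularity}(a). For the cross-arm term $\widehat\tau_{g,n}(1,0;\theta_0)$ the same substitution yields the symmetric product $\tfrac12(\mu_1(X_g^\ast)\mu_0(X_g^\ast)^{\top}+\mu_0(X_g^\ast)\mu_1(X_g^\ast)^{\top})$ plus cross-noise of the form $\mu_1(X_g^\ast)\bar\epsilon_{0,g}^{\top}+\bar\epsilon_{1,g}\mu_0(X_g^\ast)^{\top}+\bar\epsilon_{1,g}\bar\epsilon_{0,g}^{\top}$; Assumption~\ref{ass:Ahet} (within-block independence of potential outcomes and treatment) shows each noise piece has conditional mean zero and negligible aggregate variance.

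The main technical obstacle is the singleton-arm case $N_{d,g}=1$, where the within-arm $U$-statistic is undefined and the definition substitutes a symmetrized product across the paired block $\pi(g)$. Here I would combine the singleton-preserving pairing hypothesis with the across-block independence in Assumption~\ref{ass:Ahet} to write
\[
\E\!\left[m_{i_g(d)}m_{i_{\pi(g)}(d)}^{\top}\mid X^{(n)}\right]
=\mu_d(X_{i_g(d)})\mu_d(X_{i_{\pi(g)}(d)})^{\top},
\]
and then use Assumption~\ref{ass:CB-neighbor} (paired covariate closeness) together with Lipschitzness from Assumption~\ref{ass:CB-var-regularity}(c) to collapse this to $\mu_d(X_g^\ast)\mu_d(X_g^\ast)^{\top}+o_p(1)$ after summing across the singleton blocks. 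With all within-block replacements in place, the final step is the design LLN
\[
\frac{1}{n}\sum_{g=1}^G N_g\,\eta_g(1-\eta_g)\,\mu_d(X_g^\ast)\mu_{d'}(X_g^\ast)^{\top}
\ \xrightarrow{P}\
\E\!\left[\eta(X)(1-\eta(X))\mu_d(X)\mu_{d'}(X)^{\top}\right],
\]
which I would establish by rewriting the left side as $n^{-1}\sum_i\eta_{b_i}(1-\eta_{b_i})\mu_d(X_i)\mu_{d'}(X_i)^{\top}+o_p(1)$ via balancing, decomposing the $\eta_{b_i}$-weight using $\sum_{i\in\lambda_g}(D_i-\eta_g)=0$ into $D_i$- and $(1-D_i)$-weighted sample averages of Lipschitz functions of $X_i$, and applying Assumption~\ref{ass:CB-LLN} componentwise.
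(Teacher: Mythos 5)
Your overall architecture coincides with the paper's: compute $\E[\,\cdot\mid X^{(n)}]$ using the within-block assignment probabilities, collapse off-diagonal pair terms to the diagonal via the Lipschitz property in Assumption~\ref{ass:CB-var-regularity}(c) together with covariate balancing (and paired-block closeness for singleton arms), finish with a law of large numbers, and separately show that fluctuations around the conditional mean vanish. Your reordering --- disposing of the $\theta_n$-dependence first via block-level Cauchy--Schwarz, the local envelope in Assumption~\ref{ass:CB-var-regularity}(a), and the $L^2$-continuity in Assumption~\ref{ass:C}(c) --- is a legitimate alternative to the paper's choice of carrying $\theta_n$ through every step, and the $U$-statistic identity and the cross-arm noise decomposition are correct.

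There are, however, two genuine gaps. First, controlling the residual fluctuations ``by Chebyshev using Assumption~\ref{ass:CB-var-regularity}(a)'' does not go through: the centered block contributions are products of moment values, so their conditional variances involve \emph{fourth} moments of $m$, while only local uniform square integrability is assumed. The paper's proof circumvents this by truncating each centered block contribution at a level $\lambda$, bounding the truncated part by a conditional variance of order $\lambda^2/n$, and controlling the tail via the uniform-integrability condition and a conditional WLLN for independent UI arrays (Lemma~S.1.3 of \citet{bai2022inference}); you need this truncation device or an additional fourth-moment assumption. Second, the singleton-arm case is mishandled in two respects: (i) your display $\E[m_{i_g(d)}m_{i_{\pi(g)}(d)}^{\top}\mid X^{(n)}]=\mu_d(X_{i_g(d)})\mu_d(X_{i_{\pi(g)}(d)})^{\top}$ implicitly conditions on the identity of the singleton unit, which is part of $D^{(n)}$ rather than $X^{(n)}$; given only $X^{(n)}$ the index $i_g(d)$ is uniform on $\lambda_g$, so the correct conditional mean is the double average $\frac{1}{N_gN_{\pi(g)}}\sum_{i\in\lambda_g}\sum_{k\in\lambda_{\pi(g)}}\mu_d(X_i)\mu_d(X_k)^{\top}$ (the subsequent Lipschitz collapse still works, so this is repairable); and (ii) blocks $g$ and $\pi(g)$ share the same two singleton observations, so their contributions are not conditionally independent, and the fluctuation argument must first group $(g,\pi(g))$ into disjoint superblocks before any independence-based bound is applied, as the paper does. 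A smaller caveat: your closing design LLN cannot be obtained from the $D_i$-substitution alone, since replacing $\eta_{b_i}$ by $D_i$ blockwise handles one power of $\eta$ but the weight $\eta_{b_i}(1-\eta_{b_i})$ is quadratic and $D_i(1-D_i)=0$; the paper instead invokes an i.i.d.-type LLN directly for $n^{-1}\sum_i\eta_i(1-\eta_i)\mu_d(X_i,\theta_n)\mu_{d'}(X_i,\theta_n)^{\top}$.
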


\begin{proof}
Fix $d,d'\in\{0,1\}$ and a deterministic sequence $\theta_n\to\theta_0$.
Throughout, constants implicit in $\lesssim$ depend only on $\bar N$ and the Lipschitz constants in
Assumption~\ref{ass:CB-var-regularity}(c). We prove the stated limits for $(d,d')=(1,0)$ and for $(d,d')=(d,d)$.

It suffices to show
\begin{equation}\label{eq:C2_center_final_fix}
\zeta_n(d,d';\theta_n)-\E\!\left[\zeta_n(d,d';\theta_n)\mid X^{(n)}\right]\xrightarrow{p}0,
\end{equation}
and
\begin{equation}\label{eq:C2_condmean_final_fix}
\E\!\left[\zeta_n(d,d';\theta_n)\mid X^{(n)}\right]\xrightarrow{p}\zeta(d,d').
\end{equation}

We first prove \eqref{eq:C2_center_final_fix}. Write
\[
\zeta_n(d,d';\theta_n)=\frac{1}{n}\sum_{g=1}^G R_{g,n}(d,d';\theta_n),
\]
where
\[
R_{g,n}(1,0;\theta):=N_g\,\eta_g(1-\eta_g)\,\widehat\tau_{g,n}(1,0;\theta),
\qquad
R_{g,n}(d,d;\theta):=N_g\,\eta_g(1-\eta_g)\,\widehat\varsigma_{g,n}(d,d;\theta).
\]
By Assumption~\ref{ass:Ahet}, conditional on $X^{(n)}$ assignments are independent across blocks.
For the within-arm estimator with the singleton rule, $R_{g,n}(d,d;\theta)$ and $R_{\pi(g),n}(d,d;\theta)$ share
the same two singleton indices when $N_{d,g}=1$, so we group them into disjoint pairs.
Let $\mathcal P_d:=\{g: N_{d,g}=1,\ g<\pi(g)\}$ and define superblock contributions
\[
\widetilde R_{g,n}(d,d;\theta):=R_{g,n}(d,d;\theta)+R_{\pi(g),n}(d,d;\theta),\qquad g\in\mathcal P_d.
\]
Together with the remaining blocks (those with $N_{d,g}\ge2$), this yields a collection of independent (conditional on $X^{(n)}$)
mean-zero centered contributions whose average equals $\zeta_n(d,d;\theta_n)-\E[\zeta_n(d,d;\theta_n)\mid X^{(n)}]$.

We verify a conditional uniform integrability bound. Fix $\delta>0$ from
Assumption~\ref{ass:CB-var-regularity} and define envelopes
\[
M_{i,d}:=\sup_{\|\theta-\theta_0\|<\delta}\bigl\|m(X_i,d,Y_i(d);\theta)\bigr\|,\qquad d\in\{0,1\}.
\]
Assumption~\ref{ass:CB-var-regularity}(a) implies $\E[M_{i,d}^2]<\infty$. Since $\theta_n\to\theta_0$, for large $n$,
$\|m(X_i,d,Y_i(d);\theta_n)\|\le M_{i,d}$.

On $\{\max_g N_g\le\bar N\}$ (w.p.\ $\to1$), we have $N_g\le\bar N$ and $\eta_g(1-\eta_g)\le 1/4$.
Using $2ab\le a^2+b^2$ and Jensen,
\begin{align*}
\|\widehat\tau_{g,n}(1,0;\theta_n)\|_F
&\le \|\bar m_{1,g}(\theta_n)\|\,\|\bar m_{0,g}(\theta_n)\| \\
&\le \frac12\Big(\|\bar m_{1,g}(\theta_n)\|^2+\|\bar m_{0,g}(\theta_n)\|^2\Big) \\
&\le \frac12\sum_{i\in\lambda_g}\big(M_{i,1}^2+M_{i,0}^2\big).
\end{align*}
For $N_{d,g}\ge2$, each summand satisfies
$\|m_i m_{i'}^\top+m_{i'}m_i^\top\|_F\le \|m_i\|^2+\|m_{i'}\|^2$, hence
\begin{align*}
\|\widehat\varsigma_{g,n}(d,d;\theta_n)\|_F
&\le \frac{1}{N_{d,g}(N_{d,g}-1)}
\sum_{\substack{i<i'\\ D_i=D_{i'}=d}}
\big(\|m_i\|^2+\|m_{i'}\|^2\big) \\
&= \frac{1}{N_{d,g}}\sum_{i\in\lambda_g:\,D_i=d}\|m_i\|^2 \\
&\le \sum_{i\in\lambda_g} M_{i,d}^2.
\end{align*}
If $N_{d,g}=1$, by $2ab\le a^2+b^2$,
\[
\|\widehat\varsigma_{g,n}(d,d;\theta_n)\|_F
\le \frac12\Big(\|m_{i_g(d)}\|^2+\|m_{i_{\pi(g)}(d)}\|^2\Big)
\le \sum_{i\in\lambda_g\cup\lambda_{\pi(g)}} M_{i,d}^2.
\]
Therefore, on $\{\max_g N_g\le\bar N\}$,
\[
\|R_{g,n}(1,0;\theta_n)\|_F \lesssim \sum_{i\in\lambda_g}\big(M_{i,1}^2+M_{i,0}^2\big),
\qquad
\|R_{g,n}(d,d;\theta_n)\|_F \lesssim \sum_{i\in\lambda_g\cup\lambda_{\pi(g)}} M_{i,d}^2,
\]
and the same bound holds for the superblock $\widetilde R_{g,n}(d,d;\theta_n)$ with $\lambda_g\cup\lambda_{\pi(g)}$.

Fix $\lambda>0$. Using $\|U-\E[U\mid X^{(n)}]\|_F\le 2\|U\|_F$ and the indicator bound
$\mathbf 1\{\sum_{j=1}^J a_j>\lambda\}\le \sum_{j=1}^J \mathbf 1\{a_j>\lambda/J\}$, with $J\le C\bar N$,
we obtain for each (super)block contribution $B$ (either $R_{g,n}$ or $\widetilde R_{g,n}$),
\begin{align*}
&\E\!\Big[
\|B-\E[B\mid X^{(n)}]\|_F\,
\mathbf 1\!\Big\{\|B-\E[B\mid X^{(n)}]\|_F>\lambda\Big\}
\ \Bigm|\ X^{(n)}
\Big] \\
&\qquad\qquad\qquad\lesssim
\sum_{i\in \mathcal I(B)}\sum_{\tilde d\in\{0,1\}}
\E\!\Big[
M_{i,\tilde d}^2\,
\mathbf 1\!\Big\{M_{i,\tilde d}>c\sqrt{\lambda}\Big\}
\ \Bigm|\ X_i
\Big].
\end{align*}
for some constant $c>0$, where $\mathcal I(B)$ indexes the units entering $B$ (either $\lambda_g$ or $\lambda_g\cup\lambda_{\pi(g)}$).
Averaging over independent (super)blocks and dividing by $n$ yields
\[
\frac{1}{n}\sum_{i=1}^n \sum_{\tilde d\in\{0,1\}}
\E\!\Big[M_{i,\tilde d}^2\,\mathbf 1\{M_{i,\tilde d}>c\sqrt{\lambda}\}\,\Bigm|\,X_i\Big]
\ +\ o_p(1).
\]
By i.i.d.\ sampling and a LLN, this converges in probability to
$\sum_{\tilde d\in\{0,1\}}\E[M_{\tilde d}^2\,\mathbf 1\{M_{\tilde d}>c\sqrt{\lambda}\}]$,
which tends to $0$ as $\lambda\to\infty$ because $\E[M_{\tilde d}^2]<\infty$.
Thus the conditional UI condition holds, and applying the conditional WLLN step via Lemma~S.1.3
of \cite{bai2022inference}, conditional on $X^{(n)}$, yields \eqref{eq:C2_center_final_fix}.

Let $\mu_{d,i,n}:=\mu_d(X_i,\theta_n)$. By Assumption~\ref{ass:Ahet}, conditional on $X^{(n)}$, treatment is independent
of potential outcomes, and within each block $g$ complete randomization with fixed counts implies for $i\neq k\in\lambda_g$,
\[
\Pr(D_i=1,\,D_k=0\mid X^{(n)})=\frac{T_g(N_g-T_g)}{N_g(N_g-1)}.
\]
A direct calculation gives
\[
\E\!\left[\bar m_{1,g}(\theta_n)\bar m_{0,g}(\theta_n)^\top\mid X^{(n)}\right]
=
\frac{1}{N_g(N_g-1)}\sum_{\substack{i\neq k\\ i,k\in\lambda_g}} \mu_{1,i,n}\,\mu_{0,k,n}^\top,
\]
and similarly with $(1,0)$ swapped. Therefore,
\[
\E\!\left[\widehat\tau_{g,n}(1,0;\theta_n)\mid X^{(n)}\right]
=
\frac{1}{2}\cdot\frac{1}{N_g(N_g-1)}
\sum_{\substack{i\neq k\\ i,k\in\lambda_g}}
\Big(\mu_{1,i,n}\mu_{0,k,n}^\top+\mu_{0,i,n}\mu_{1,k,n}^\top\Big).
\]
Proceeding exactly as in the unordered-pair symmetrization and Lipschitz control step yields
\[
\E\!\left[\zeta_n(1,0;\theta_n)\mid X^{(n)}\right]
=
\frac{1}{n}\sum_{i=1}^n \eta_i(1-\eta_i)\cdot
\frac{1}{2}\Big(\mu_{1,i,n}\mu_{0,i,n}^\top+\mu_{0,i,n}\mu_{1,i,n}^\top\Big)
+o_p(1),
\]
and by LLN and $\theta_n\to\theta_0$, the leading term converges in probability to $\zeta(1,0)$.

Fix $d\in\{0,1\}$ and write $\mu_{i,n}:=\mu_d(X_i,\theta_n)$.

\emph{Case 1: $N_{d,g}\ge 2$.}
As before, using complete randomization and the symmetrized definition gives
\[
\E\!\left[\widehat\varsigma_{g,n}(d,d;\theta_n)\mid X^{(n)}\right]
=
\frac{1}{N_g(N_g-1)}\sum_{\substack{i\neq k\\ i,k\in\lambda_g}} \mu_{i,n}\,\mu_{k,n}^\top.
\]
Using $ab^\top+ba^\top=aa^\top+bb^\top-(a-b)(a-b)^\top$ and Lipschitzness yields
\[
\left\|
\E\!\left[\widehat\varsigma_{g,n}(d,d;\theta_n)\mid X^{(n)}\right]
-\frac{1}{N_g}\sum_{i\in\lambda_g}\mu_{i,n}\mu_{i,n}^\top
\right\|_F
\lesssim
\max_{i,k\in\lambda_g}\|X_i-X_k\|^2,
\]
so after weighting and summing over $g$ this is $o_p(1)$ by Assumption~\ref{ass:B}.

\emph{Case 2: $N_{d,g}=1$.}
Conditional on $X^{(n)}$, $i_g(d)$ is uniform on $\lambda_g$ and independent of $i_{\pi(g)}(d)$.
Hence
\[
\E\!\left[\widehat\varsigma_{g,n}(d,d;\theta_n)\mid X^{(n)}\right]
=
\frac{1}{2}\cdot\frac{1}{N_gN_{\pi(g)}}
\sum_{i\in\lambda_g}\sum_{k\in\lambda_{\pi(g)}}
\Big(\mu_{i,n}\mu_{k,n}^\top+\mu_{k,n}\mu_{i,n}^\top\Big).
\]
Subtract $\frac{1}{N_g}\sum_{i\in\lambda_g}\mu_{i,n}\mu_{i,n}^\top$ and use the decomposition
\[
\mu_{k}\mu_{k}^\top-\mu_i\mu_i^\top=(\mu_k-\mu_i)\mu_k^\top+\mu_i(\mu_k-\mu_i)^\top,
\]
so
\[
\|\mu_k\mu_k^\top-\mu_i\mu_i^\top\|_F \le \|\mu_k-\mu_i\|(\|\mu_k\|+\|\mu_i\|).
\]
By Assumption~\ref{ass:CB-var-regularity}(c), $\|\mu_k-\mu_i\|\lesssim \|X_k-X_i\|$. Also,
\[
\|(\mu_i-\mu_k)(\mu_i-\mu_k)^\top\|_F \lesssim \|X_i-X_k\|^2.
\]
Therefore, on $\{\max_g N_g\le\bar N\}$,
\[
\left\|
\E\!\left[\widehat\varsigma_{g,n}(d,d;\theta_n)\mid X^{(n)}\right]
-\frac{1}{N_g}\sum_{i\in\lambda_g}\mu_{i,n}\mu_{i,n}^\top
\right\|_F
\lesssim
b_g\Big(\bar \mu_g+\bar \mu_{\pi(g)}\Big)+b_g^2,
\]
where
\[
b_g:=\max_{i\in\lambda_g,\,k\in\lambda_{\pi(g)}}\|X_i-X_k\|,
\qquad
\bar\mu_g:=\frac{1}{N_g}\sum_{i\in\lambda_g}\|\mu_{i,n}\|.
\]
Weighting by $w_g\eta_g(1-\eta_g)\le w_g$ and summing over singleton-$d$ blocks gives a contribution bounded by
\[
\frac{1}{n}\sum_{g:\,N_{d,g}=1} N_g\Big\{b_g(\bar\mu_g+\bar\mu_{\pi(g)})+b_g^2\Big\}.
\]
On $\{\max_gN_g\le\bar N\}$, this is $\lesssim \frac{1}{n}\sum_{g=1}^G \{b_g(\bar\mu_g+\bar\mu_{\pi(g)})+b_g^2\}$.
By Cauchy--Schwarz,
\[
\frac{1}{n}\sum_{g=1}^G b_g(\bar\mu_g+\bar\mu_{\pi(g)})
\le
\Big(\frac{1}{n}\sum_{g=1}^G b_g^2\Big)^{1/2}
\Big(\frac{1}{n}\sum_{g=1}^G (\bar\mu_g+\bar\mu_{\pi(g)})^2\Big)^{1/2}.
\]
Assumption~\ref{ass:CB-neighbor} implies $(1/n)\sum_g b_g^2\xrightarrow{p}0$. Moreover, since block sizes are
uniformly bounded and $\E\|\mu_d(X,\theta)\|^2<\infty$ in a neighborhood of $\theta_0$ (by Assumption~\ref{ass:CB-var-regularity}(b)),
we have $(1/n)\sum_g \bar\mu_g^2=O_p(1)$ and likewise for $\bar\mu_{\pi(g)}^2$, so the second factor is $O_p(1)$.
Thus the product is $o_p(1)$, and also $(1/n)\sum_g b_g^2=o_p(1)$. Hence the singleton contribution is $o_p(1)$.

Combining the two cases yields
\[
\E\!\left[\zeta_n(d,d;\theta_n)\mid X^{(n)}\right]
=
\frac{1}{n}\sum_{i=1}^n \eta_i(1-\eta_i)\,\mu_d(X_i,\theta_n)\mu_d(X_i,\theta_n)^\top
+o_p(1).
\]
By i.i.d.\ sampling and $\theta_n\to\theta_0$, a LLN implies the leading term converges in probability to $\zeta(d,d)$,
establishing \eqref{eq:C2_condmean_final_fix} for $(d,d')=(d,d)$.

Combining \eqref{eq:C2_center_final_fix} and \eqref{eq:C2_condmean_final_fix} yields the stated convergences.
\end{proof}

\subsubsection{Consistency of the paired-block within-arm mean-square estimator}

\noindent This next lemma adapts Lemma~C.3 in \citet{bai2024inference} to the covariate-based heterogeneous-shares setting used here.

\begin{lemma}[Consistency of the paired-block within-arm mean-square estimator]\label{lem:C3-het-cov}
Sup\-pose Assumptions~\ref{ass:iid}, \ref{ass:B}, \ref{ass:Ahet}, \ref{ass:C}, \ref{ass:CB-neighbor}, and
\ref{ass:CB-var-regularity} hold. Let $\{\theta_n\}_{n\ge1}$ be any deterministic sequence with $\theta_n\to\theta_0$.
Fix $d\in\{0,1\}$ and define
\[
\mu_d(x,\theta):=\E\!\left[m(X,d,Y(d);\theta)\mid X=x\right],
\qquad
\mu_d(x):=\mu_d(x,\theta_0).
\]
Let $\pi(\cdot)$ be the involution in Assumption~\ref{ass:CB-neighbor} and define the set of pair representatives
\[
\mathcal P:=\{g\in\{1,\dots,G\}: g<\pi(g)\}.
\]
For each block $g$, let $N_g:=|\lambda_g|$ and $N_{d,g}:=\sum_{i\in\lambda_g}\mathbf 1\{D_i=d\}$ (fixed given $X^{(n)}$
under Assumption~\ref{ass:Ahet}). Since $\eta_g\in(0,1)$ in Assumption~\ref{ass:Ahet}, we have $N_{d,g}\ge1$ for all $g$.
Define the arm-$d$ block mean
\[
\bar m_{d,g}(\theta):=\frac{1}{N_{d,g}}\sum_{i\in\lambda_g:\,D_i=d} m(X_i,d,Y_i(d);\theta),
\]
and the paired-block symmetrized cross-product
\[
\widehat\rho_{g,n}(d,d;\theta)
:=\frac{1}{2}\Big(\bar m_{d,g}(\theta)\,\bar m_{d,\pi(g)}(\theta)^\top
+\bar m_{d,\pi(g)}(\theta)\,\bar m_{d,g}(\theta)^\top\Big),
\qquad g\in\mathcal P.
\]
Finally define the paired-block estimator
\[
\widehat\rho_n(d,d;\theta)
:=\sum_{g\in\mathcal P}(w_g+w_{\pi(g)})\,\widehat\rho_{g,n}(d,d;\theta),
\qquad w_g:=\frac{N_g}{n}.
\]
Then, as $n\to\infty$,
\[
\widehat\rho_n(d,d;\theta_n)\xrightarrow{p}\rho(d,d)
:=\E\!\big[\mu_d(X)\mu_d(X)^\top\big].
\]
\end{lemma}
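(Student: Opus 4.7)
The plan mirrors the proof of Lemma~\ref{lem:C2-het-cov}: decompose
\[
\widehat\rho_n(d,d;\theta_n) \;=\; \E\!\big[\widehat\rho_n(d,d;\theta_n)\mid X^{(n)}\big]\;+\;\Big(\widehat\rho_n(d,d;\theta_n)-\E\!\big[\widehat\rho_n(d,d;\theta_n)\mid X^{(n)}\big]\Big),
\]
then show the conditional mean converges in probability to $\rho(d,d)=\E[\mu_d(X)\mu_d(X)^\top]$ while the centered part vanishes, using the same conditional WLLN strategy as in Lemmas~\ref{lem:C2-het-cov} and \ref{lem:C3-het-cov}.

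For the conditional mean, note that under Assumption~\ref{ass:Ahet} treatment is independent of potential outcomes given $X^{(n)}$, blockwise assignments are independent across $g$, and $\Pr(D_i=d\mid X^{(n)})=N_{d,g}/N_g$ for $i\in\lambda_g$. A direct iterated-expectation computation therefore yields
\[
\E\!\big[\bar m_{d,g}(\theta_n)\mid X^{(n)}\big]\;=\;\bar\mu_{d,g,n}\;:=\;\frac{1}{N_g}\sum_{i\in\lambda_g}\mu_d(X_i,\theta_n),
\]
and independence of assignments across $\lambda_g$ and $\lambda_{\pi(g)}$ gives $\E[\bar m_{d,g}(\theta_n)\,\bar m_{d,\pi(g)}(\theta_n)^\top\mid X^{(n)}]=\bar\mu_{d,g,n}\,\bar\mu_{d,\pi(g),n}^\top$. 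Using the identity $\mu_k\mu_k^\top-\mu_i\mu_i^\top=(\mu_k-\mu_i)\mu_k^\top+\mu_i(\mu_k-\mu_i)^\top$, the uniform Lipschitz bound in Assumption~\ref{ass:CB-var-regularity}(c), and second-moment control from Assumption~\ref{ass:CB-var-regularity}(b), I would replace $\bar\mu_{d,g,n}\bar\mu_{d,\pi(g),n}^\top$ by $N_g^{-1}\sum_{i\in\lambda_g}\mu_d(X_i,\theta_n)\mu_d(X_i,\theta_n)^\top$ up to an error bounded by $b_g(\bar\mu_g+\bar\mu_{\pi(g)})+b_g^2$, where $b_g:=\max_{i\in\lambda_g,k\in\lambda_{\pi(g)}}\|X_i-X_k\|$. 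Weighting by $(w_g+w_{\pi(g)})$, summing over pair representatives, and applying Cauchy--Schwarz together with Assumptions~\ref{ass:B} and \ref{ass:CB-neighbor} (exactly as in the singleton analysis of Lemma~\ref{lem:C2-het-cov}) yields
\[
\E\!\big[\widehat\rho_n(d,d;\theta_n)\mid X^{(n)}\big]\;=\;\frac{1}{n}\sum_{i=1}^n \mu_d(X_i,\theta_n)\,\mu_d(X_i,\theta_n)^\top+o_p(1).
\]
The i.i.d.\ LLN combined with Assumption~\ref{ass:C}(c) (to pass from $\theta_n$ to $\theta_0$ in $L_2$) then delivers convergence of the leading term to $\rho(d,d)$.

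For the centered part, the pair representatives $g\in\mathcal P$ index mutually disjoint superblocks $\lambda_g\cup\lambda_{\pi(g)}$ whose assignment vectors are conditionally independent across pairs given $X^{(n)}$, so the contributions $\widehat\rho_{g,n}(d,d;\theta_n)-\E[\widehat\rho_{g,n}(d,d;\theta_n)\mid X^{(n)}]$ form an independent triangular array with weights $(w_g+w_{\pi(g)})=O(1/n)$. Each term is controlled by the envelope
\[
\big\|\widehat\rho_{g,n}(d,d;\theta_n)\big\|_F\;\le\;\tfrac{1}{2}\big(\|\bar m_{d,g}(\theta_n)\|^2+\|\bar m_{d,\pi(g)}(\theta_n)\|^2\big)\;\lesssim\;\sum_{i\in\lambda_g\cup\lambda_{\pi(g)}}M_{i,d}^2,
\]
with $M_{i,d}:=\sup_{\|\theta-\theta_0\|<\delta}\|m(X_i,d,Y_i(d);\theta)\|$, and Assumption~\ref{ass:CB-var-regularity}(a) supplies the required conditional uniform integrability. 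The conditional WLLN (Lemma~S.1.3 of \citet{bai2022inference}), applied conditional on $X^{(n)}$ exactly as in the proof of Lemma~\ref{lem:C2-het-cov}, then gives the desired $o_p(1)$ rate.

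The main obstacle is the between-block approximation step for the conditional mean: unlike the within-block bounds, which follow immediately from Assumption~\ref{ass:B}, replacing $\bar\mu_{d,g,n}\bar\mu_{d,\pi(g),n}^\top$ by a block-diagonal square requires combining paired-block covariate closeness from Assumption~\ref{ass:CB-neighbor} with the Lipschitz and second-moment controls in Assumption~\ref{ass:CB-var-regularity}(b)--(c). This is precisely the structure that makes the singleton case of Lemma~\ref{lem:C2-het-cov} go through, so the argument there can be reused almost verbatim, with the paired-block weights $(w_g+w_{\pi(g)})$ playing the role of the singleton weights.
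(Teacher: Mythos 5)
Your proposal is correct and follows essentially the same route as the paper's proof: conditioning on $X^{(n)}$, computing the conditional mean of the paired cross-product via within- and across-block independence, controlling the mismatch between $\bar\mu_{d,g,n}\bar\mu_{d,\pi(g),n}^\top$ and the block-diagonal squares through the Lipschitz bounds and the paired-block closeness in Assumption~\ref{ass:CB-neighbor}, and killing the centered part with the envelope-based conditional uniform integrability and a conditional WLLN over the disjoint superblocks. The only difference is cosmetic bookkeeping: the paper symmetrizes the approximation across $\lambda_g$ and $\lambda_{\pi(g)}$ and isolates the block-size asymmetry as an explicit remainder $R_{n,1}$, whereas you absorb both that term and the one-sided replacement into the generic error $b_g(\bar\mu_g+\bar\mu_{\pi(g)})+b_g^2$, which your cited Cauchy--Schwarz argument does cover.
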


\begin{proof}
Fix $d\in\{0,1\}$ and a deterministic sequence $\theta_n\to\theta_0$, and write $\mu_{i,n}:=\mu_d(X_i,\theta_n)$.
It suffices to show
\begin{equation}\label{eq:C3_center}
\widehat\rho_n(d,d;\theta_n)-\E\!\left[\widehat\rho_n(d,d;\theta_n)\mid X^{(n)}\right]\xrightarrow{p}0,
\end{equation}
and
\begin{equation}\label{eq:C3_condmean}
\E\!\left[\widehat\rho_n(d,d;\theta_n)\mid X^{(n)}\right]\xrightarrow{p}\rho(d,d).
\end{equation}

We first show \eqref{eq:C3_center}.
Define the pair contribution
\[
R_{g,n}(d;\theta):=(N_g+N_{\pi(g)})\,\widehat\rho_{g,n}(d,d;\theta),
\qquad g\in\mathcal P,
\]
so that $\widehat\rho_n(d,d;\theta)=\frac{1}{n}\sum_{g\in\mathcal P}R_{g,n}(d;\theta)$.
Because $\pi(\cdot)$ is an involution with no fixed points, the paired sets $\lambda_g\cup\lambda_{\pi(g)}$ are disjoint
across $g\in\mathcal P$. Under Assumption~\ref{ass:iid}, conditional on $X^{(n)}$ the potential outcomes are independent
across units; under Assumption~\ref{ass:Ahet}, conditional on $X^{(n)}$ the assignment vectors are independent across
blocks and independent of potential outcomes. Hence $\{R_{g,n}(d;\theta_n)\}_{g\in\mathcal P}$ is independent conditional
on $X^{(n)}$.

Let $\widetilde R_{g,n}:=R_{g,n}(d;\theta_n)-\E[R_{g,n}(d;\theta_n)\mid X^{(n)}]$, so that
\[
\widehat\rho_n(d,d;\theta_n)-\E[\widehat\rho_n(d,d;\theta_n)\mid X^{(n)}]
=\frac{1}{n}\sum_{g\in\mathcal P}\widetilde R_{g,n}.
\]
Fix $\varepsilon>0$ and $\lambda>0$, and decompose
\[
\frac{1}{n}\sum_{g\in\mathcal P}\widetilde R_{g,n}
=
\frac{1}{n}\sum_{g\in\mathcal P}\widetilde R_{g,n}\mathbf 1\{\|\widetilde R_{g,n}\|_F\le \lambda\}
+
\frac{1}{n}\sum_{g\in\mathcal P}\widetilde R_{g,n}\mathbf 1\{\|\widetilde R_{g,n}\|_F> \lambda\}
=:A_{n}(\lambda)+B_{n}(\lambda).
\]

\emph{(Truncated part.)}
Conditional on $X^{(n)}$, the matrices
$\{\widetilde R_{g,n}\mathbf 1\{\|\widetilde R_{g,n}\|_F\le \lambda\}\}_{g\in\mathcal P}$
are independent and mean zero, and bounded in Frobenius norm by $\lambda$. Therefore,
\[
\E\!\left[\|A_n(\lambda)\|_F^2\mid X^{(n)}\right]
\le \frac{1}{n^2}\sum_{g\in\mathcal P}\lambda^2
\le \frac{\lambda^2}{n},
\]
since $|\mathcal P|\le G\le n$.
By Chebyshev,
\[
\Pr\!\left(\|A_n(\lambda)\|_F>\frac{\varepsilon}{2}\mid X^{(n)}\right)
\le \frac{4}{\varepsilon^2}\cdot\frac{\lambda^2}{n}
\ \xrightarrow{}\ 0
\quad\text{for each fixed }\lambda.
\]

\emph{(Tail part.)}
By Markov's inequality and $\|n^{-1}\sum_g U_g\|_F\le n^{-1}\sum_g \|U_g\|_F$,
\[
\Pr\!\left(\|B_n(\lambda)\|_F>\frac{\varepsilon}{2}\mid X^{(n)}\right)
\le \frac{2}{\varepsilon n}\sum_{g\in\mathcal P}
\E\!\left[\|\widetilde R_{g,n}\|_F\mathbf 1\{\|\widetilde R_{g,n}\|_F>\lambda\}\mid X^{(n)}\right].
\]
Since $\|\widetilde R_{g,n}\|_F\le 2\|R_{g,n}(d;\theta_n)\|_F$, it suffices to bound the tails of $\|R_{g,n}(d;\theta_n)\|_F$.
Let $\delta>0$ be from Assumption~\ref{ass:CB-var-regularity} and define the envelope
\[
M_{i,d}:=\sup_{\|\theta-\theta_0\|<\delta}\bigl\|m(X_i,d,Y_i(d);\theta)\bigr\|.
\]
Assumption~\ref{ass:CB-var-regularity}(a) implies $\E[M_{i,d}^2]<\infty$, and for all large $n$,
$\|m(X_i,d,Y_i(d);\theta_n)\|\le M_{i,d}$.

On the event $\{\max_g N_g\le\bar N\}$ (whose probability tends to one by Assumption~\ref{ass:B}),
using $\|ab^\top\|_F=\|a\|\,\|b\|$, $2uv\le u^2+v^2$, and Jensen,
\begin{align*}
\|\widehat\rho_{g,n}(d,d;\theta_n)\|_F
&\le \|\bar m_{d,g}(\theta_n)\|\,\|\bar m_{d,\pi(g)}(\theta_n)\| \\
&\le \frac12\Big(\|\bar m_{d,g}(\theta_n)\|^2+\|\bar m_{d,\pi(g)}(\theta_n)\|^2\Big) \\
&\le \frac12\sum_{i\in\lambda_g\cup\lambda_{\pi(g)}} M_{i,d}^2 .
\end{align*}
and hence
\[
\|R_{g,n}(d;\theta_n)\|_F
\le \bar N\sum_{i\in\lambda_g\cup\lambda_{\pi(g)}} M_{i,d}^2.
\]
Using $\mathbf 1\{\sum_{j=1}^J a_j>t\}\le \sum_{j=1}^J \mathbf 1\{a_j>t/J\}$ with $J\le 2\bar N$,
there exist constants $C,c>0$ (depending only on $\bar N$) such that, on $\{\max_g N_g\le\bar N\}$,
\[
\|R_{g,n}(d;\theta_n)\|_F\,\mathbf 1\{\|R_{g,n}(d;\theta_n)\|_F>\lambda\}
\le
C\sum_{i\in\lambda_g\cup\lambda_{\pi(g)}} M_{i,d}^2\,\mathbf 1\{M_{i,d}>c\sqrt{\lambda}\}.
\]
Averaging over $g\in\mathcal P$ and using disjointness of $\lambda_g\cup\lambda_{\pi(g)}$ across $g\in\mathcal P$ yields
\begin{align*}
\frac{1}{n}\sum_{g\in\mathcal P}
\E\!\Big[
\|R_{g,n}(d;\theta_n)\|_F\,
&\mathbf 1\{\|R_{g,n}(d;\theta_n)\|_F>\lambda\}
\ \Bigm|\ X^{(n)}
\Big]
\\[-0.3em]
\le\;&
\frac{C}{n}\sum_{i=1}^n
\E\!\Big[
M_{i,d}^2\,\mathbf 1\{M_{i,d}>c\sqrt{\lambda}\}
\ \Bigm|\ X_i
\Big]
+o_p(1).
\end{align*}
By i.i.d.\ sampling and the LLN, the right-hand side converges in probability to
$C\,\E[M_d^2\mathbf 1\{M_d>c\sqrt{\lambda}\}]$, which tends to $0$ as $\lambda\to\infty$
by Assumption~\ref{ass:CB-var-regularity}(a). Combining the truncated and tail bounds yields \eqref{eq:C3_center}.

Next we show \eqref{eq:C3_condmean}.
We compute $\E[\widehat\rho_n(d,d;\theta_n)\mid X^{(n)}]$.
By Assumption~\ref{ass:Ahet}, conditional on $X^{(n)}$ treatment is independent of potential outcomes and within each block
is uniform over allocations with fixed $N_{d,g}$. Hence, for each block $g$,
\[
\E\!\left[\bar m_{d,g}(\theta_n)\mid X^{(n)}\right]
=
\frac{1}{N_g}\sum_{i\in\lambda_g}\mu_{i,n},
\qquad
\E\!\left[\bar m_{d,\pi(g)}(\theta_n)\mid X^{(n)}\right]
=
\frac{1}{N_{\pi(g)}}\sum_{k\in\lambda_{\pi(g)}}\mu_{k,n}.
\]
Moreover, assignments are independent across blocks conditional on $X^{(n)}$, so
\[
\E\!\left[\widehat\rho_{g,n}(d,d;\theta_n)\mid X^{(n)}\right]
=
\frac{1}{2}\cdot\frac{1}{N_gN_{\pi(g)}}
\sum_{i\in\lambda_g}\sum_{k\in\lambda_{\pi(g)}}
\Big(\mu_{i,n}\mu_{k,n}^\top+\mu_{k,n}\mu_{i,n}^\top\Big).
\]
Using the identity (for vectors $a,b$)
\[
\frac{1}{2}\big(ab^\top+ba^\top\big)=\frac12(aa^\top+bb^\top)-\frac12(a-b)(a-b)^\top,
\]
with $a=\mu_{i,n}$ and $b=\mu_{k,n}$ gives
\begin{align*}
\E\!\left[\widehat\rho_{g,n}(d,d;\theta_n)\mid X^{(n)}\right]
&=
\frac{1}{2N_g}\sum_{i\in\lambda_g}\mu_{i,n}\mu_{i,n}^\top
+\frac{1}{2N_{\pi(g)}}\sum_{k\in\lambda_{\pi(g)}}\mu_{k,n}\mu_{k,n}^\top
\\
&\quad
-\frac{1}{2N_gN_{\pi(g)}}\sum_{i\in\lambda_g}\sum_{k\in\lambda_{\pi(g)}}
(\mu_{i,n}-\mu_{k,n})(\mu_{i,n}-\mu_{k,n})^\top.
\end{align*}

For each block $g$ write
\[
\overline{\mu\mu^\top}_{g,n}:=\frac{1}{N_g}\sum_{i\in\lambda_g}\mu_{i,n}\mu_{i,n}^\top.
\]
Multiplying the first two terms by $(w_g+w_{\pi(g)})=(N_g+N_{\pi(g)})/n$ and summing over $g\in\mathcal P$ yields
\begin{align*}
&\sum_{g\in\mathcal P}(w_g+w_{\pi(g)})
\left\{\frac{1}{2N_g}\sum_{i\in\lambda_g}\mu_{i,n}\mu_{i,n}^\top
+\frac{1}{2N_{\pi(g)}}\sum_{k\in\lambda_{\pi(g)}}\mu_{k,n}\mu_{k,n}^\top\right\}
\\
&\qquad=
\frac{1}{n}\sum_{i=1}^n \mu_{i,n}\mu_{i,n}^\top
+\frac{1}{2n}\sum_{g\in\mathcal P}\big(N_{\pi(g)}-N_g\big)
\big(\overline{\mu\mu^\top}_{g,n}-\overline{\mu\mu^\top}_{\pi(g),n}\big)
\\
&\qquad=
\frac{1}{n}\sum_{i=1}^n \mu_{i,n}\mu_{i,n}^\top
+R_{n,1},
\end{align*}
where
\[
R_{n,1}:=\frac{1}{2n}\sum_{g\in\mathcal P}\big(N_{\pi(g)}-N_g\big)
\big(\overline{\mu\mu^\top}_{g,n}-\overline{\mu\mu^\top}_{\pi(g),n}\big).
\]

We show $R_{n,1}=o_p(1)$. Define, for each block $g$,
\[
b_g:=\max_{i\in\lambda_g,\,k\in\lambda_{\pi(g)}}\|X_i-X_k\|.
\]
By Assumption~\ref{ass:CB-var-regularity}(c), there exists $C<\infty$ such that for all large $n$ and all
$i\in\lambda_g$, $k\in\lambda_{\pi(g)}$,
\[
\|\mu_{i,n}-\mu_{k,n}\|\le C\|X_i-X_k\|\le C b_g.
\]
Also, for any vectors $u,v$, $\|uu^\top-vv^\top\|_F\le (\|u\|+\|v\|)\|u-v\|$. Hence, for all large $n$,
\begin{align*}
\big\|\overline{\mu\mu^\top}_{g,n}-\overline{\mu\mu^\top}_{\pi(g),n}\big\|_F
&\le \frac{1}{N_gN_{\pi(g)}}\sum_{i\in\lambda_g}\sum_{k\in\lambda_{\pi(g)}}
\|\mu_{i,n}\mu_{i,n}^\top-\mu_{k,n}\mu_{k,n}^\top\|_F
\\
&\le C b_g\left(\frac{1}{N_g}\sum_{i\in\lambda_g}\|\mu_{i,n}\|
+\frac{1}{N_{\pi(g)}}\sum_{k\in\lambda_{\pi(g)}}\|\mu_{k,n}\|\right).
\end{align*}
On $\{\max_g N_g\le\bar N\}$ (probability $\to1$), we have $|N_{\pi(g)}-N_g|\le 2\bar N$ and $n/G\le \bar N$, so
\[
\|R_{n,1}\|_F
\le
\frac{C}{G}\sum_{g\in\mathcal P} b_g
\left(\frac{1}{N_g}\sum_{i\in\lambda_g}\|\mu_{i,n}\|
+\frac{1}{N_{\pi(g)}}\sum_{k\in\lambda_{\pi(g)}}\|\mu_{k,n}\|\right).
\]
By Cauchy--Schwarz,
\[
\frac{1}{G}\sum_{g\in\mathcal P} b_g \,t_{g,n}
\le
\left(\frac{1}{G}\sum_{g=1}^G b_g^2\right)^{1/2}
\left(\frac{1}{G}\sum_{g\in\mathcal P} t_{g,n}^2\right)^{1/2},
\quad
t_{g,n}:=\frac{1}{N_g}\sum_{i\in\lambda_g}\|\mu_{i,n}\|
+\frac{1}{N_{\pi(g)}}\sum_{k\in\lambda_{\pi(g)}}\|\mu_{k,n}\|.
\]
By Assumption~\ref{ass:CB-neighbor}, $\frac{1}{G}\sum_{g=1}^G b_g^2=o_p(1)$.
Moreover, on $\{\max_g N_g\le\bar N\}$, Jensen gives
\[
\left(\frac{1}{N_g}\sum_{i\in\lambda_g}\|\mu_{i,n}\|\right)^2
\le
\frac{1}{N_g}\sum_{i\in\lambda_g}\|\mu_{i,n}\|^2,
\]
so $t_{g,n}^2\lesssim \frac{1}{N_g}\sum_{i\in\lambda_g}\|\mu_{i,n}\|^2+\frac{1}{N_{\pi(g)}}\sum_{k\in\lambda_{\pi(g)}}\|\mu_{k,n}\|^2$.
Summing over $g\in\mathcal P$ and using that each block appears exactly once as $g$ or $\pi(g)$ yields
\[
\frac{1}{G}\sum_{g\in\mathcal P} t_{g,n}^2
\lesssim
\frac{n}{G}\cdot\frac{1}{n}\sum_{i=1}^n \|\mu_{i,n}\|^2
=O_p(1),
\]
since $n/G\le \bar N$ and $\frac{1}{n}\sum_{i=1}^n \|\mu_{i,n}\|^2=O_p(1)$ by i.i.d.\ sampling and local square
integrability of $\mu_d(X,\theta)$ (from Assumption~\ref{ass:CB-var-regularity}(a) and Jensen).
Therefore, $R_{n,1}=o_p(1)$.

It remains to bound the mismatch term. Using the same $b_g$ and the Lipschitz property above,
\[
\left\|
\frac{1}{N_gN_{\pi(g)}}\sum_{i\in\lambda_g}\sum_{k\in\lambda_{\pi(g)}}
(\mu_{i,n}-\mu_{k,n})(\mu_{i,n}-\mu_{k,n})^\top
\right\|_F
\le C^2 b_g^2.
\]
Hence, on $\{\max_g N_g\le\bar N\}$,
\[
\left\|
\sum_{g\in\mathcal P}(w_g+w_{\pi(g)})\,
\frac{1}{2N_gN_{\pi(g)}}\sum_{i\in\lambda_g}\sum_{k\in\lambda_{\pi(g)}}
(\mu_{i,n}-\mu_{k,n})(\mu_{i,n}-\mu_{k,n})^\top
\right\|_F
\le
C^2\bar N\cdot\frac{1}{G}\sum_{g=1}^G b_g^2
=o_p(1),
\]
where we used $n\ge G$ and Assumption~\ref{ass:CB-neighbor}. Putting everything together,
\[
\E\!\left[\widehat\rho_n(d,d;\theta_n)\mid X^{(n)}\right]
=
\frac{1}{n}\sum_{i=1}^n \mu_{i,n}\mu_{i,n}^\top+o_p(1).
\]

Finally, by i.i.d.\ sampling and local square integrability of $\mu_d(X,\theta)$ for $\|\theta-\theta_0\|<\delta$
(from Assumption~\ref{ass:CB-var-regularity}(a) and Jensen),
\[
\frac{1}{n}\sum_{i=1}^n \mu_d(X_i,\theta_n)\mu_d(X_i,\theta_n)^\top
-\E\!\big[\mu_d(X,\theta_n)\mu_d(X,\theta_n)^\top\big]
\xrightarrow{p}0.
\]
Moreover, since $\theta_n\to\theta_0$ and
\begin{align*}
\E\|\mu_d(X,\theta_n)-\mu_d(X,\theta_0)\|^2
&\le \E\|m(X,d,Y(d);\theta_n)-m(X,d,Y(d);\theta_0)\|^2 \\
&\to 0
\qquad\text{(by Jensen and Assumption~\ref{ass:C}(c))}.
\end{align*}
we have $\E[\mu_d(X,\theta_n)\mu_d(X,\theta_n)^\top]\to \E[\mu_d(X)\mu_d(X)^\top]=\rho(d,d)$.
Combining yields \eqref{eq:C3_condmean}.

Combining \eqref{eq:C3_center} and \eqref{eq:C3_condmean} gives
$\widehat\rho_n(d,d;\theta_n)\xrightarrow{p}\rho(d,d)$.
\end{proof}

\subsubsection{Stability of unit and block averages under consistent parameter estimation}

\noindent The next lemma is a technical plug-in result used repeatedly in the proof of theorem \ref{prop:var_het} to replace $\theta_0$ by the random estimator
$\hat\theta_n$ inside sample and block-level averages. Its conditions are implied by our standing moment regularity and
local integrability assumptions (in particular, local $L^2$ control and $L^2$ continuity in $\theta$), together with the
bounded block-size requirement $\max_g N_g\le \bar N$ used throughout the covariate-based heterogeneous-shares stratification setting.

\begin{lemma}[Random plug-in for (block-)averages]\label{lem:plugin_block_avg}
Let $\{W_i\}_{i=1}^n$ be i.i.d.\ draws from a distribution $Q$, and let $\hat\theta_n\xrightarrow{p}\theta_0$.
Fix $\bar N<\infty$. For each $n$, let $\{\lambda_g\}_{g=1}^{G}$ be a partition of $\{1,\dots,n\}$ with
$\max_g N_g\le \bar N$, and let $\eta_g\in(0,1)$ be deterministic given the design (possibly random through $X^{(n)}$).

Let $f:\mathcal W\times\Theta\to\mathbb R^k$ be measurable and suppose there exists $\delta>0$ such that
\begin{enumerate}[label=(\roman*)]
\item $\E\big[\sup_{\|\theta-\theta_0\|<\delta}\|f(W,\theta)\|^2\big]<\infty$,
\item $\E[\|f(W,\theta)-f(W,\theta_0)\|^2]\to 0$ as $\theta\to\theta_0$.
\end{enumerate}
Then:
\begin{enumerate}[label=(\alph*)]
\item (Unit-level averages) $\displaystyle \frac{1}{n}\sum_{i=1}^n \|f(W_i,\hat\theta_n)-f(W_i,\theta_0)\| \xrightarrow{p} 0$.
\item (Block means) For any collection of indicators $A_{i,g}\in\{0,1\}$ such that
$0<\sum_{i\in\lambda_g}A_{i,g}\le N_g\le \bar N$ for all $g$, define
\[
\bar f_{g,n}(\theta):=\frac{1}{\sum_{i\in\lambda_g}A_{i,g}}\sum_{i\in\lambda_g}A_{i,g}\,f(W_i,\theta).
\]
Then $\displaystyle \frac{1}{G}\sum_{g=1}^G \|\bar f_{g,n}(\hat\theta_n)-\bar f_{g,n}(\theta_0)\| \xrightarrow{p} 0$.
\end{enumerate}
\end{lemma}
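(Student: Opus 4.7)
The plan is to reduce both parts to a single local envelope plus continuity argument. For $0<\delta'<\delta$, define the shrinking envelope
\[
g_{\delta'}(W):=\sup_{\|\theta-\theta_0\|<\delta'}\big\|f(W,\theta)-f(W,\theta_0)\big\|,
\]
which by (i) is dominated by $2F(W)$ with $F(W):=\sup_{\|\theta-\theta_0\|<\delta}\|f(W,\theta)\|$ and $\E[F^2]<\infty$, so by Jensen $F\in L^1(Q)$. The countable pointwise measurability in Assumption~\ref{ass:C}(d), which is the regime in which the lemma is applied, lets one reduce the supremum above to a countable dense subset of the ball, so $g_{\delta'}$ is measurable and the displays below are literal expectations.

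For part (a), I would localize: since $\hat\theta_n\xrightarrow{p}\theta_0$, the event $A_n:=\{\|\hat\theta_n-\theta_0\|<\delta'\}$ satisfies $\Pr(A_n)\to 1$, and on $A_n$ the pointwise bound $\|f(W_i,\hat\theta_n)-f(W_i,\theta_0)\|\le g_{\delta'}(W_i)$ holds for every $i$. Hence, for any $\eta>0$,
\[
\Pr\!\Big(\tfrac{1}{n}\sum_{i=1}^n\|f(W_i,\hat\theta_n)-f(W_i,\theta_0)\|>\eta\Big)\ \le\ \Pr(A_n^c)\ +\ \Pr\!\Big(\tfrac{1}{n}\sum_{i=1}^n g_{\delta'}(W_i)>\eta\Big).
\]
The i.i.d.\ weak LLN gives $n^{-1}\sum_i g_{\delta'}(W_i)\xrightarrow{p}\E[g_{\delta'}(W)]$, so the right-hand side vanishes whenever $\E[g_{\delta'}(W)]<\eta$; sending $\delta'\downarrow 0$ completes part (a) provided $\E[g_{\delta'}(W)]\downarrow 0$.

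For part (b), I would reuse the same envelope together with the bounded-block-size hypothesis $N_g\le\bar N$, which implies $G\ge n/\bar N$. On $A_n$, the triangle inequality yields $\|\bar f_{g,n}(\hat\theta_n)-\bar f_{g,n}(\theta_0)\|\le\max_{i\in\lambda_g} g_{\delta'}(W_i)$; since $\{\lambda_g\}_{g=1}^G$ partitions $\{1,\dots,n\}$,
\[
\frac{1}{G}\sum_{g=1}^G\max_{i\in\lambda_g} g_{\delta'}(W_i)\ \le\ \frac{1}{G}\sum_{g=1}^G\sum_{i\in\lambda_g} g_{\delta'}(W_i)\ =\ \frac{n}{G}\cdot\frac{1}{n}\sum_{i=1}^n g_{\delta'}(W_i)\ \le\ \bar N\cdot\frac{1}{n}\sum_{i=1}^n g_{\delta'}(W_i),
\]
so part (b) reduces to the same LLN-plus-envelope step used in (a), up to the universal constant $\bar N$.

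The main obstacle is showing $\E[g_{\delta'}(W)]\downarrow 0$ as $\delta'\downarrow 0$. Mean-square continuity in (ii) is a pointwise-in-$\theta$ statement and does not by itself force a supremum over a shrinking ball to vanish in expectation. The natural route is dominated convergence: the $L^1$ bound $g_{\delta'}\le 2F$ is already in hand, and what remains is the pointwise decay $g_{\delta'}(W)\downarrow 0$ for $Q$-a.e.\ $W$, equivalently, continuity of $\theta\mapsto f(W,\theta)$ at $\theta_0$ almost surely. In the paper's setting this is automatic because the moment vectors in Propositions~\ref{prop:lee-clt} and \ref{prop:lee_ipw_moments} are sums of linear terms in $\theta$ and indicators at parameter-dependent quantiles, and the latter are a.s.\ continuous at $\theta_0$ whenever the defining quantile is a continuity point of the underlying distribution, a regularity already invoked for identification. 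Once this a.s.\ continuity is recorded as a standing property of the moment family, dominated convergence closes the argument.
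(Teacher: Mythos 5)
Your part (b) coincides with the paper's argument: both reduce the block-mean claim to the unit-level average via $\sum_{i\in\lambda_g}A_{i,g}\ge 1$, the triangle inequality, and $n/G\le\bar N$. The divergence is in part (a), and there your route has a genuine gap --- one you partly diagnose yourself but do not close under the lemma's stated hypotheses. Your argument hinges on $\E[g_{\delta'}(W)]\downarrow 0$ as $\delta'\downarrow 0$, where $g_{\delta'}$ is the supremum of $\|f(W,\theta)-f(W,\theta_0)\|$ over the shrinking ball. Conditions (i)--(ii) do not deliver this: (ii) is an $L^2$ modulus of continuity at each fixed $\theta$, and a family of indicator perturbations whose exceptional sets shrink individually but sweep the whole sample space (e.g.\ $f(w,\theta)=\mathbf 1\{w\in A_\theta\}$ with $Q(A_\theta)=\|\theta-\theta_0\|$ but $\bigcup_{0<\|\theta-\theta_0\|<\delta'}A_\theta$ of full measure for every $\delta'>0$) satisfies (i)--(ii) while $\E[g_{\delta'}(W)]\equiv 1$. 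Your proposed repair --- $Q$-a.s.\ continuity of $\theta\mapsto f(W,\theta)$ at $\theta_0$ plus dominated convergence --- is a strictly stronger hypothesis than the lemma carries; it does hold for the paper's Lee--IPW moment vectors when the relevant quantiles are continuity points, but importing it as a ``standing property'' changes the lemma from the statement the paper actually proves and then reuses in the variance-consistency argument of Theorem~\ref{prop:var_het}.

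The paper's proof of (a) never forms a supremum over a neighborhood, which is precisely how it avoids your obstacle. It bounds $\E\big[n^{-1}\sum_i\|f(W_i,\hat\theta_n)-f(W_i,\theta_0)\|\big]$ by $\big(\E[\|f(W,\hat\theta_n)-f(W,\theta_0)\|^2]\big)^{1/2}$ via Jensen and Cauchy--Schwarz, sets $g(\theta):=\E[\|f(W,\theta)-f(W,\theta_0)\|^2]$, notes $g(\hat\theta_n)\to_p 0$ from (ii) and $\hat\theta_n\to_p\theta_0$, upgrades this to $\E[g(\hat\theta_n)]\to 0$ by bounded convergence using the uniform bound $g(\theta)\le 4\,\E[\sup_{\|\vartheta-\theta_0\|<\delta}\|f(W,\vartheta)\|^2]$ from (i), and finishes with Markov's inequality. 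In short, the paper works with the deterministic $L^2$ modulus $g$ evaluated at the random point $\hat\theta_n$ rather than with an $L^1$ envelope over a ball; that substitution is the step your proposal is missing, and with it the extra a.s.-continuity assumption becomes unnecessary.
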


\begin{proof}
(a) By Jensen and Cauchy--Schwarz,
\[
\E\!\left[\frac{1}{n}\sum_{i=1}^n \|f(W_i,\hat\theta_n)-f(W_i,\theta_0)\|\right]
\le
\Big(\E\big[\|f(W,\hat\theta_n)-f(W,\theta_0)\|^2\big]\Big)^{1/2}.
\]
Let $g(\theta):=\E[\|f(W,\theta)-f(W,\theta_0)\|^2]$. By (ii), $g(\theta)\to 0$ as $\theta\to\theta_0$.
Moreover, by (i), $g(\theta)\le 4\,\E[\sup_{\|\vartheta-\theta_0\|<\delta}\|f(W,\vartheta)\|^2]<\infty$
for all $\|\theta-\theta_0\|<\delta$, so $g(\hat\theta_n)\to 0$ in probability implies
$\E[g(\hat\theta_n)]\to 0$ by dominated convergence along a subsequence and a standard subsequence argument.
Thus the RHS above converges to $0$, and Markov's inequality yields (a).

(b) Since $\sum_{i\in\lambda_g}A_{i,g}\ge 1$ and $N_g\le \bar N$,
\[
\|\bar f_{g,n}(\hat\theta_n)-\bar f_{g,n}(\theta_0)\|
\le
\frac{1}{\sum_{i\in\lambda_g}A_{i,g}}\sum_{i\in\lambda_g}A_{i,g}\,\|f(W_i,\hat\theta_n)-f(W_i,\theta_0)\|
\le
\sum_{i\in\lambda_g}\|f(W_i,\hat\theta_n)-f(W_i,\theta_0)\|.
\]
Averaging over $g$ and using that each $i$ belongs to exactly one block,
\[
\frac{1}{G}\sum_{g=1}^G \|\bar f_{g,n}(\hat\theta_n)-\bar f_{g,n}(\theta_0)\|
\le
\frac{1}{G}\sum_{i=1}^n \|f(W_i,\hat\theta_n)-f(W_i,\theta_0)\|
=
\frac{n}{G}\cdot \frac{1}{n}\sum_{i=1}^n \|f(W_i,\hat\theta_n)-f(W_i,\theta_0)\|.
\]
Under $\max_g N_g\le \bar N$, we have $n=\sum_g N_g\le \bar N G$, so $n/G\le \bar N$.
Part (a) therefore implies the RHS is $o_p(1)$, proving (b).
\end{proof}

\subsubsection{Proof of Theorem \ref{prop:var_het}}\label{proof_hetshares_var}

\noindent This proof adapts the argument of Theorem~3.2 in Appendix A.1 of \citet{bai2023efficiency} to the covariate-based heterogeneous-shares setting used here.

By assumption $\hat M_n \xrightarrow{p} M$ and $M$ is nonsingular. Hence, by Slutsky and the continuous mapping theorem,
it suffices to show
\begin{equation}\label{eq:Omega_consistency_goal_plugin}
\widehat\Omega_n \xrightarrow{p} \Omega_{\eta(\cdot)}.
\end{equation}
We verify consistency term-by-term.

\vspace{0.2em}
\noindent We use the following notation. Let $m_d(W_i;\theta):=m(X_i,d,Y_i(d);\theta)$ for $d\in\{0,1\}$, and write $\hat m_i:=m(X_i,D_i,Y_i;\hat\theta_n)$.
Let $\eta_i:=\eta_{b_i}$ denote the block share for unit $i$, so that $\eta_i\in(0,1)$ and, under $\max_g N_g\le \bar N$,
$\eta_i^{-1}$ and $(1-\eta_i)^{-1}$ are uniformly bounded (since $T_g\in\{1,\dots,N_g-1\}$).
Define the population targets (unconditional meat decomposition in Section~\ref{sec:var_het}):
\[
A_1:=\E\!\big[\eta(X)\,m_1(W;\theta_0)m_1(W;\theta_0)^\top\big],\qquad
A_0:=\E\!\big[(1-\eta(X))\,m_0(W;\theta_0)m_0(W;\theta_0)^\top\big],
\]
\[
\bar m:=\E\!\big[\eta(X)\,\mu_1(X)+ (1-\eta(X))\,\mu_0(X)\big],\qquad
\Omega_{\eta(\cdot)}:=A_1+A_0+B-\bar m\,\bar m^\top,
\]
where $\mu_d(x):=\E[m(X,d,Y(d);\theta_0)\mid X=x]$ and $B$ is the population analogue of $\hat B_n$.

\vspace{0.2em}
\noindent First, we show $\hat A_{1,n}\to_p A_1$ and $\hat A_{0,n}\to_p A_0$. We show $\hat A_{1,n}\xrightarrow{p}A_1$; the argument for $\hat A_{0,n}$ is identical.
Fix an entry $(j,k)$ and define
\[
\phi_{jk}(W,\theta):= m_j(W,1;\theta)\,m_k(W,1;\theta)
\quad\text{where } m_s(W,1;\theta):=m_s(X,1,Y(1);\theta).
\]
Then $\hat A_{1,n}^{jk}=\frac{1}{n}\sum_{i=1}^n D_i\,\phi_{jk}(W_i,\hat\theta_n)$.
Decompose
\[
\hat A_{1,n}^{jk}-A_1^{jk}
=
\underbrace{\Big(\frac{1}{n}\sum_{i=1}^n D_i\big[\phi_{jk}(W_i,\hat\theta_n)-\phi_{jk}(W_i,\theta_0)\big]\Big)}_{\text{plug-in error}}
+
\underbrace{\Big(\frac{1}{n}\sum_{i=1}^n D_i\,\phi_{jk}(W_i,\theta_0)-A_1^{jk}\Big)}_{\text{deterministic-}\theta_0}.
\]
For the deterministic-$\theta_0$ term, apply Lemma~\ref{lem:A9_het_final} with the scalar moment
$\gamma(W_i):=\phi_{jk}(W_i,\theta_0)$ (equivalently, using Assumption~\ref{ass:CB-LLN} with the identification
$\E[D_i\gamma(W_i)]=\E[\eta(X)\gamma(W)]$), yielding
\[
\frac{1}{n}\sum_{i=1}^n D_i\,\phi_{jk}(W_i,\theta_0)\ \xrightarrow{p}\ \E[\eta(X)\phi_{jk}(W,\theta_0)] = A_1^{jk}.
\]
For the plug-in error, note $0\le D_i\le 1$ and $\phi_{jk}(W,\theta)$ satisfies the $L^2$-continuity and envelope
conditions from Assumption~\ref{ass:C}(c),(e)(i) (by Cauchy--Schwarz). Hence Lemma~\ref{lem:plugin_block_avg}(a),
with $f=\phi_{jk}$, implies
\[
\frac{1}{n}\sum_{i=1}^n D_i\big|\phi_{jk}(W_i,\hat\theta_n)-\phi_{jk}(W_i,\theta_0)\big|
\le
\frac{1}{n}\sum_{i=1}^n \big|\phi_{jk}(W_i,\hat\theta_n)-\phi_{jk}(W_i,\theta_0)\big|
\ \xrightarrow{p}\ 0.
\]
Therefore $\hat A_{1,n}^{jk}\to_p A_1^{jk}$ for each $(j,k)$, so $\hat A_{1,n}\to_p A_1$. Likewise $\hat A_{0,n}\to_p A_0$.

\vspace{0.2em}
\noindent Next, we show $\hat A_{3,n}\to_p \bar m\,\bar m^\top$.
Let $\bar m_n:=\frac{1}{n}\sum_{i=1}^n \hat m_i$, so $\hat A_{3,n}=\bar m_n\bar m_n^\top$.
Write $\bar m_n=\frac{1}{n}\sum_{i=1}^n \big(D_i m_1(W_i;\hat\theta_n)+(1-D_i)m_0(W_i;\hat\theta_n)\big)$ and decompose
\begin{align*}
\bar m_n-\bar m
&=
\underbrace{\frac{1}{n}\sum_{i=1}^n
\Big(
D_i\!\left[m_1(W_i;\hat\theta_n)-m_1(W_i;\theta_0)\right]
+(1-D_i)\!\left[m_0(W_i;\hat\theta_n)-m_0(W_i;\theta_0)\right]
\Big)}_{\text{plug-in error}}
\\
&\quad+
\underbrace{\Big(\frac{1}{n}\sum_{i=1}^n \tilde m_i(\theta_0)-\bar m\Big)}_{\text{deterministic-}\theta_0}.
\end{align*}
where $\tilde m_i(\theta_0):=D_i m_1(W_i;\theta_0)+(1-D_i)m_0(W_i;\theta_0)$.
The deterministic-$\theta_0$ term converges by Lemma~\ref{lem:A9_het_final} applied componentwise to $m_s(\cdot)$
(with the treated and control parts handled using Assumption~\ref{ass:CB-LLN}).
The plug-in term is $o_p(1)$ by Lemma~\ref{lem:plugin_block_avg}(a) applied to $f=m_1$ and $f=m_0$ componentwise
(and boundedness of $D_i$ and $1-D_i$).
Thus $\bar m_n\to_p \bar m$, and by continuous mapping $\hat A_{3,n}=\bar m_n\bar m_n^\top\to_p \bar m\,\bar m^\top$.

\vspace{0.2em}
\noindent Next, we show $\hat B_n\to_p B$.
Recall $\hat B_n = -\{\zeta_n(1,1)+\zeta_n(0,0)-2\zeta_n(1,0)\}$, where $\zeta_n(d,d')$ are defined in
Section~\ref{sec:var_het}.
Write, for each $(d,d')$,
\[
\zeta_n(d,d';\hat\theta_n)-\zeta(d,d')
=
\underbrace{\big(\zeta_n(d,d';\hat\theta_n)-\zeta_n(d,d';\theta_0)\big)}_{\text{plug-in error}}
+
\underbrace{\big(\zeta_n(d,d';\theta_0)-\zeta(d,d')\big)}_{\text{deterministic-}\theta_0}.
\]
For the deterministic-$\theta_0$ terms:
\begin{itemize}
\item $\zeta_n(1,0;\theta_0)\to_p \zeta(1,0)$ by Lemma~\ref{lem:C2-het-cov} (cross-arm case).
\item $\zeta_n(d,d;\theta_0)\to_p \zeta(d,d)$ for $d\in\{0,1\}$ by Lemma~\ref{lem:C2-het-cov}, with the
singleton case $N_{d,g}=1$ handled via the involution $\pi(\cdot)$ (and, if desired, modularized using Lemma~\ref{lem:C3-het-cov}).
\end{itemize}
It remains to show the plug-in errors are $o_p(1)$.
Because $\max_g N_g\le \bar N$, each block mean involved in $\zeta_n(d,d';\theta)$ is an average over at most $\bar N$ terms,
and each $\widehat\varsigma_{g,n}(d,d;\theta)$ is either a (finite) within-block pair average ($N_{d,g}\ge2$) or a single
between-block product ($N_{d,g}=1$). Hence, each $\zeta_n(d,d';\theta)$ is a finite sum of products of block averages of
$m_d(W_i;\theta)$, with coefficients bounded by $w_g\eta_g(1-\eta_g)\le w_g/4$.

By Assumption~\ref{ass:C}(c),(e)(i), the map $\theta\mapsto m_d(W;\theta)$ is $L^2$-continuous with a square-integrable
local envelope. Applying Lemma~\ref{lem:plugin_block_avg}(b) componentwise to $f=m_d$ (with $A_{i,g}=\mathbf 1\{D_i=d\}$)
yields that the relevant block means of $m_d(\cdot;\hat\theta_n)$ differ from those at $\theta_0$ by $o_p(1)$ on average over blocks.
Since $\zeta_n(d,d';\theta)$ is built from a bounded average of products of these block means (and, when $N_{d,g}\ge2$, from bounded
finite-order within-block pair averages), a Cauchy--Schwarz bound plus $\max_g N_g\le \bar N$ implies
\[
\zeta_n(d,d';\hat\theta_n)-\zeta_n(d,d';\theta_0)=o_p(1),\qquad \text{for each }(d,d')\in\{0,1\}^2.
\]
Therefore $\zeta_n(d,d';\hat\theta_n)\to_p \zeta(d,d')$ for all $(d,d')$, and consequently $\hat B_n\to_p B$.

\vspace{0.2em}
\noindent Putting these pieces together yields $\widehat\Omega_n\to_p\Omega_{\eta(\cdot)}$.
Combining the previous parts,
\[
\widehat\Omega_n
=
\hat A_{1,n}+\hat A_{0,n}+\hat B_n-\hat A_{3,n}
\ \xrightarrow{p}\
A_1 + A_0 + B - \bar m\,\bar m^\top
=
\Omega_{\eta(\cdot)},
\]
establishing \eqref{eq:Omega_consistency_goal_plugin}.

\vspace{0.2em}
\noindent The conclusion for $\widehat V_{\eta(\cdot)}$ then follows by continuous mapping. Since $\hat M_n\to_p M$ and $\widehat\Omega_n\to_p\Omega_{\eta(\cdot)}$,
\[
\widehat V_{\eta(\cdot)}=\hat M_n^{-1}\widehat\Omega_n\hat M_n^{-T}
\ \xrightarrow{p}\
M^{-1}\Omega_{\eta(\cdot)}M^{-T}
=
V_{\eta(\cdot)}
\]
by Slutsky and the continuous mapping theorem. The statement for the Lee-IPW bound standard errors follows by taking the
appropriate quadratic form (variance of a difference) of $\widehat V_{\eta(\cdot)}$.

\newpage

\section{Extension to Label-Based Stratification}
\label{sec:var_discrete_blocks}

Many field experiments stratify randomization using only a block label such as a school, clinic, village, or survey wave. In this setting, the label itself is the stratifying variable, and units within a block share the same covariate value. We study whether the efficiency gains and design-consistent variance results in \citet{bai2023efficiency} extend to this label-based environment when treatment shares may differ across labels and the label composition may vary across samples. We develop a triangular-array framework that accommodates evolving labels, heterogeneous treatment shares, and nonidentical distributions across labels while retaining independence across units. Within this framework, we prove partial identification at a fixed sample size and also inference for a broad class of moment estimators.

The remainder of this section is organized as follows. Section~\ref{sec:label_setup} states the label-based setup and the finite-sample targets. Section~\ref{sec:label_id} presents identification at a fixed sample size, focusing on the Lee–IPW bounds under heterogeneous shares. Section~\ref{sec:label_asym_small} develops design-consistent asymptotics and a feasible variance formula for the case with many small strata.


\subsection{Setup: Triangular Array for Label-Based Stratification with Heterogeneous Shares}
\label{sec:label_setup}

We study a design in which the only stratifying covariate is a discrete block label. For each sample size \(n\), let \(b_i \in \{1,\dots,G_n\}\) denote the label of unit \(i\), and write \(\lambda_{g}=\{i:\ b_i=g\}\) with size \(N_{g}=|\lambda_{g}|\), so that \(\sum_{g=1}^{G_n} N_{g}=n\). There are covariates \(X_i=b_i\) and allow the number of labels \(G_n\) to vary with \(n\), which yields a triangular array. Throughout, \(Y_i^*(d)\) denotes the potential outcome and \(S_i(d)\) the potential selection indicator.

Sampling is formalized by a sequence \(\{P_n\}\). For each \(n\), conditional on the realized label vector \(b^{(n)}=(b_1,\dots,b_n)\), the collection \(\{(Y_i^*(1),Y_i^*(0),S_i(1),S_i(0))\}_{i=1}^n\) is independent across units but not identically distributed, so different strata may follow different distributions that can vary with \(n\). A complementary interpretation views outcomes and selection as conditionally i.i.d.\ within each block after conditioning on block–level latent shocks, with independence across labels given \(b^{(n)}\). Define \(Q_n := P_n(\cdot\mid b^{(n)})\) as the sampling-only law with assignment suppressed. Furthermore, for each label \(g\), define the within-label sampling kernel \(P_{g,n}(\cdot):=Q_n\big((Y_i^*(1),Y_i^*(0),S_i(1),S_i(0))\in\cdot\ \big|\ X_i=g\big)\). Thus, conditional on \(b^{(n)}\), \(\{(Y_i^*(1),Y_i^*(0),S_i(1),S_i(0))\}_{i\in\lambda_{g,n}}\) are i.i.d.\ \(\sim P_{g,n}\) and independent across labels.

Assignment is imposed after sampling. Within each label $g$ at size $n$, exactly $T_g$ units are assigned to treatment without replacement, with share $\eta_g:=T_g/N_g\in(0,1)$. The quota vector $(T_g)_{g\le G_n}$ is a deterministic function of the realized labels $b^{(n)}$ and thus fixed given $b^{(n)}$. Hence, conditional on $b^{(n)}$, assignment vectors are mutually independent across labels and are independent of the potential outcomes and selection indicators.  Within each label, assignment draws are uniform over all \(0\text{--}1\) vectors with exactly \(T_{g}\) ones.

Observed selection and outcomes are \(S_i = D_i S_i(1) + (1-D_i) S_i(0)\) and \(Y_i = S_i\,[D_i Y_i^*(1) + (1-D_i) Y_i^*(0)]\). Because treatment counts are fixed within labels, the without-replacement assignment induces negative dependence among assignment indicators within the same label. Hence, any cross-unit dependence in \(\{(X_i,D_i,S_i,Y_i)\}_{i=1}^n\) comes only from the within-label assignment constraint. In general, the observed data are not identically distributed across strata, since the label-specific sampling laws \(P_{g,n}\) can vary with \(g\) and \(n\).

We condition on the realized labels \(b^{(n)}\) across this section. We also write \(X^{(n)}:=(X_1,\dots,X_n)\) and \(D^{(n)}:=(D_1,\dots,D_n)\) when conditioning on full label or assignment vectors. Unless noted otherwise, expectations and variances indexed by \(n\) are taken under the joint design law at size \(n\). This means the sampling law \(P_n(\cdot\mid b^{(n)})\) together with the within-label assignment. We denote these by \(\E_n[\cdot]\) and \(\Var_n(\cdot)\). When we later report superpopulation quantities for comparison with \citet{bai2023efficiency}, we use \(\E[\cdot]\) and \(\Var(\cdot)\) and say so explicitly.

To formalize the preceding discussion, we collect the sampling, assignment, and growth conditions. Our asymptotic results in this section adopt a many-small-blocks regime, which is stated in Assumption~\ref{ass:TA-LB-growth} below.

\begin{assumption}[Label-wise i.i.d.\ triangular array]\label{ass:TA-LB-PO}
For each \(n\), conditional on \(b^{(n)}=(b_1,\dots,b_n)\), the within-strata collections
\[
\big\{ (Y_i^*(1),Y_i^*(0),S_i(1),S_i(0)) : i\in \lambda_{g} \big\},\qquad g=1,\dots,G_n,
\]
are i.i.d.\ with a label-specific law \(P_{g,n}(\cdot\mid b^{(n)})\). These collections are independent across labels. Potential outcomes and selection depend only on own treatment.

\noindent Equivalently, for some family of unit-level laws \(\{P_{i,n}(\cdot\mid b^{(n)})\}_{i\le n}\) satisfying \(P_{i,n}(\cdot\mid b^{(n)})=P_{g,n}(\cdot\mid b^{(n)})\) for all \(i\in\lambda_{g}\),
\[
P_n\big(\cdot\mid b^{(n)}\big)
\;=\;
\Pi_{i=1}^{n}
P_{i,n}\big(\cdot\mid b^{(n)}\big).
\]
\end{assumption}

\begin{assumption}[Heterogeneous-shares block randomization]\label{ass:TA-LB-assign}
For each \(n\) and label \(g\), exactly \(T_{g}\in\{1,\dots,N_g-1\}\) units in \(\lambda_{g}\) are treated, with share \(\eta_{g}:=T_{g}/N_{g}\in(0,1)\). The quota vector \(\{T_{g}\}_{g\le G_n}\) is measurable with respect to \(b^{(n)}\). Conditional on \(b^{(n)}\), the block assignment vectors \(\{(D_i:\ i\in\lambda_{g})\}_{g=1}^{G_n}\) are mutually independent and satisfy
\[
(D_i:\ i\in\lambda_{g}) \sim
\operatorname{Unif}\Big\{\mathbf d\in\{0,1\}^{N_{g}}:\ \sum_{i\in\lambda_{g}} d_i=T_{g}\Big\},
\qquad g=1,\dots,G_n.
\]
Moreover, there exist constants \(0<\underline{\eta}\le \overline{\eta}<1\) such that \(\underline{\eta}\le \eta_{g}\le \overline{\eta}\) for all \(g\). The assignments are independent of \(\{(Y_i^*(1),Y_i^*(0),S_i(1),S_i(0))\}_{i=1}^n\) given \(b^{(n)}\). For feasibility of the variance estimators used later, assume \(T_{g}\ge 2\) and \(N_{g}-T_{g}\ge 2\) for all \(g\).
\end{assumption}

\begin{assumption}[Many small blocks growth]\label{ass:TA-LB-growth}
\(G_n\to\infty\) and \(\displaystyle \max_{1\le g\le G_n}\frac{N_{g}}{n}\to 0\) as \(n\to\infty\).
\end{assumption}

\begin{remark}[Few large strata versus many small strata]
Assumption~\ref{ass:TA-LB-growth} imposes a many-\--small–strata regime, where $G_n\to\infty$ and each stratum is negligible relative to $n$. Appendix~\ref{app:few-vs-many} shows, via a random partition argument, that designs with a fixed number of strata whose sizes grow asymptotically and within–stratum uniform assignment with fixed totals are asymptotically equivalent, for our estimators, to designs satisfying Assumption~\ref{ass:TA-LB-growth}. Thus, the central limit theorem and design–consistent variance formulas derived below apply both to many small strata and to a fixed number of large strata that grow with the sample.
\end{remark}

\subsection{Partial Identification with Lee–IPW Bounds under Label-Based Stratification with Heterogeneous Shares}
\label{sec:label_id}

This subsection establishes finite-sample partial identification of the always–observed average treatment effect when randomization is stratified by a discrete label and treatment shares may differ across labels. We work at a fixed sample size \(n\), condition on the realized labels \(b^{(n)}\) and the quotas in Assumption~\ref{ass:TA-LB-assign}, and take expectations under the design-based law \(P_n(\cdot\mid b^{(n)})\). The target parameter is
\[
\Delta_{AO,n}
=\frac{\displaystyle \frac{1}{n}\sum_{i=1}^n \E_n\!\Big[(Y_i^*(1)-Y_i^*(0))\,\mathbf 1\{S_i(1)=S_i(0)=1\}\Big]}
{\displaystyle \frac{1}{n}\sum_{i=1}^n {\Pr}_n\!\big(S_i(1)=S_i(0)=1\big)}.
\]

Partial identification rests on stratum–level independence and monotone selection. At fixed $n$, we construct Lee–IPW bounds by mapping control outcomes to the treated label composition, scaling treated outcomes by the known within–label assignment shares, and calibrating a single trimming share from observed response rates. These steps yield observable array–average functionals that bracket the always–observed effect. The resulting bounds are collected in Proposition~\ref{prop:label_identification}.

\begin{proposition}[Partial identification with Lee–IPW under label-based stratification]
\label{prop:label_identification}
Fix \(n\) and condition on the realized labels \(b^{(n)}\) and the quotas in Assumption~\ref{ass:TA-LB-assign}. 
For ease of notation, let \(\overline{\E}_n\) and \(\overline{\Pr}_n\) denote the array–average expectation and probability for unit–level quantities \(Z_i\) and events \(A_i\): \(\overline{\E}_n[Z_i]:=\frac{1}{n}\sum_{i=1}^n \E_n[Z_i]\) and \(\overline{\Pr}_n(A_i):=\frac{1}{n}\sum_{i=1}^n \Pr_n(A_i)\).
Under Assumption~\ref{ass:MONO} and the stratum-level independence condition implied by the label-based stratification in Assumption~\ref{ass:TA-LB-assign} (proof analogous to the one in Lemma \ref{lem:SLI_hetero}), the always–observed effect \(\Delta_{AO,n}\) is partially identified by
\[
\Big[\,\tfrac{L_{1,n}-\mu_{0,n}}{\pi_{AO,n}}\ ,\ \tfrac{U_{1,n}-\mu_{0,n}}{\pi_{AO,n}}\,\Big],
\]
where
\begin{align*}
\pi_{AO,n}
&=\ \overline{\E}_n\!\big[(1-D_{g,i})\,S_{g,i}\,w_{c,g,n}\big]
\ =\ \overline{\Pr}_n\!\big(S_{g,i}(1)=S_{g,i}(0)=1\big),\\[2mm]
\mu_{0,n}
&=\ \overline{\E}_n\!\big[(1-D_{g,i})\,S_{g,i}\,w_{c,g,n}\,Y_{g,i}\big]\,,\\[2mm]
L_{1,n}
&=\ (1-q_n)\,\overline{\Pr}_n\!\big(D_{g,i}=1,S_{g,i}=1\big)\,
     \overline{\E}_n\!\big[\widetilde Y_{g,i}\ \bigm|\ D_{g,i}=1,S_{g,i}=1,\ \widetilde Y_{g,i}\le \tilde y_{1-q_n}\big],\\[1mm]
U_{1,n}
&=\ (1-q_n)\,\overline{\Pr}_n\!\big(D_{g,i}=1,S_{g,i}=1\big)\,
     \overline{\E}_n\!\big[\widetilde Y_{g,i}\ \bigm|\ D_{g,i}=1,S_{g,i}=1,\ \widetilde Y_{g,i}\ge \tilde y_{q_n}\big],\\[2mm]
q_n
&=\ \frac{\overline{\Pr}_n(S_{g,i}=1\mid D_{g,i}=1)\ -\ \overline{\E}_n\!\big[S_{g,i}\,w_{q,g,n}\mid D_{g,i}=0\big]}
           {\overline{\Pr}_n(S_{g,i}=1\mid D_{g,i}=1)}\,.
\end{align*}
Here \(w_{c,g,n}:=\dfrac{1-p_n}{\,1-\eta_g\,}\), \(w_{q,g,n}:=\dfrac{\eta_g(1-p_n)}{(1-\eta_g)p_n}\), \(p_n:=\sum_{g}(N_g/n)\eta_g\), \(\widetilde Y_{g,i}:=\dfrac{1}{\eta_g}Y_{g,i}\), and \(\tilde y_{q_n}\), \(\tilde y_{1-q_n}\) are the \(q_n\)- and \((1-q_n)\)-quantiles of \(\widetilde Y_{g,i}\) within \(\{D_{g,i}=1,S_{g,i}=1\}\) under the pooled law \(\overline{\Pr}_n\).
\end{proposition}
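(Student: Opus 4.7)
The plan is to adapt the identification argument of Appendix~\ref{sec:identification} from the i.i.d.\ superpopulation setting to the label-based triangular array, working at fixed $n$ and conditioning on the realized labels $b^{(n)}$ and quotas $\{T_g\}$. Writing
\[
\Delta_{AO,n}\;=\;\frac{\overline{\E}_n\!\big[(Y_i^*(1)-Y_i^*(0))\,\mathbf 1\{S_i(1)=S_i(0)=1\}\big]}{\pi_{AO,n}},
\]
it suffices to identify the denominator $\pi_{AO,n}$, the array-average control-arm numerator $\mu_{0,n}$, and to bracket the treated-arm numerator $\overline{\E}_n[Y_i^*(1)\,\mathbf 1\{S_i(1)=S_i(0)=1\}]$, and then to combine.

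First, I would verify the denominator and control identities. The stratum-level independence implied by Assumption~\ref{ass:TA-LB-assign} follows by the same argument as in Lemma~\ref{lem:SLI_hetero}, now applied at fixed $n$ conditional on $b^{(n)}$; combined with Assumption~\ref{ass:MONO}, this reduces $\{S(1)=S(0)=1\}$ to $\{S(0)=1\}$. Within label $g$, $\E_n[(1-D_i)\mid b^{(n)}]=1-\eta_g$, so the weight $w_{c,g,n}=(1-p_n)/(1-\eta_g)$ yields the per-unit identity $\E_n[(1-D_i)S_i w_{c,g,n}\,\phi(Y_i)]=(1-p_n)\,\E_n[S_i(0)\,\phi(Y_i^*(0))]$ for $\phi\in\{1,\mathrm{id}\}$. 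Averaging over $i$ with weights $N_g/n$ and dividing through delivers $\pi_{AO,n}=\overline{\Pr}_n(S(1)=S(0)=1)$ and $\mu_{0,n}=\overline{\E}_n[Y_i^*(0)\,\mathbf 1\{S_i(1)=S_i(0)=1\}]$, the array-level analogues of \eqref{eq:id_Y0_left_g}--\eqref{eq:iden_Y0}.

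Next, for the treated arm I would mirror Appendix~\ref{sec:iden_Y1}. The rescaling $\widetilde Y_i=Y_i/\eta_g$ corrects for heterogeneous assignment because $\E_n[D_i/\eta_g\mid b^{(n)},\text{potentials}]=1$ inside each label, giving the IPW representation $\overline{\E}_n[D_iS_i\widetilde Y_i\,\mathbf 1\{S_i(1)=S_i(0)=1\}]=\overline{\E}_n[Y_i^*(1)\,\mathbf 1\{S_i(1)=S_i(0)=1\}]$. For the trimming share, Bayes' rule plus monotonicity identifies the fraction of treated-observed units whose observation is induced by treatment as the gap between the treated response rate and the IPW-reweighted control response rate, with weights $w_{q,g,n}$ restoring the treated label composition (needed because $\overline{\Pr}_n(b=g\mid D=d)$ depends on $d$ when shares are heterogeneous); this yields the stated $q_n$. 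Applying the monotone-selection trimming inequality of \citet{lee2009training} to the pooled law $\overline{\Pr}_n$ of $\widetilde Y_i$ on $\{D=1,S=1\}$ with trimming share $q_n$ then brackets the treated-arm numerator by $L_{1,n}$ and $U_{1,n}$; dividing through by $\pi_{AO,n}$ and subtracting the control piece gives the stated interval.

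The main obstacle will be the bookkeeping that lifts per-label identities to array-average ones. Because the label-specific laws $P_{g,n}$ need not be identical and the shares $\eta_g$ vary with $g$, every per-unit identity must be aggregated with the block weight $N_g/n$ before comparison with the pooled array average $\overline{\E}_n$, and the definition $p_n=\sum_g (N_g/n)\eta_g$ is precisely what makes the normalizing factors in $w_{c,g,n}$ and $w_{q,g,n}$ telescope. One also has to verify that the quantile cutoffs $\tilde y_{q_n},\tilde y_{1-q_n}$ defined under $\overline{\Pr}_n$ correspond, via the IPW change of measure, to a valid trimming rule for the always-observed subpopulation. Once these conversions are in place, the two-sided bound follows from Lee's classical argument applied inside the $\overline{\E}_n$ notation.
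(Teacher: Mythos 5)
Your proposal follows essentially the same route as the paper's Appendix proof: IPW identification of the denominator and control-arm numerator using $w_{c,g,n}$ under monotonicity and label-level independence, the bridge identity $\overline{\E}_n[D_iS_i\widetilde Y_i\,\mathbf 1\{AO_i\}]=\overline{\E}_n[Y_i^*(1)\,\mathbf 1\{AO_i\}]$ for the reweighted treated outcome, the mixture decomposition that identifies $q_n$ via the $w_{q,g,n}$-reweighted control response rate, and Lee's trimming inequality applied under the pooled law $\overline{\Pr}_n$, followed by scaling the conditional trimmed means by $(1-q_n)\overline{\Pr}_n(D=1,S=1)$. The steps you flag as remaining bookkeeping (aggregating per-label identities with weights $N_g/n$ and normalizing by $1-p_n$ to pass to the pooled conditional expectations) are precisely the conversions the paper carries out, so the argument is correct and not materially different.
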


\begin{proof}
See Appendix~\ref{sec:identification-label} for a detailed proof and derivation.
\end{proof}

To interpret \(\Delta_{AO,n}\) as a population object, we drop the conditioning on \(b^{(n)}\) and impose stabilization of the design and of the conditional laws.

\begin{assumption}[Superpopulation stabilization for label arrays]\label{ass:SP-LB}
Work under Assumptions~\ref{ass:TA-LB-PO}--\ref{ass:TA-LB-assign} and, when needed, \ref{ass:TA-LB-growth}. The marginal (over \(X\)) law of \((Y^*(1),Y^*(0),S(1),S(0))\) under \(P_n\) converges weakly to a limit distribution \(P\). Moments are uniformly integrable on the always-observed set, for example
\[
\sup_{n}\ \overline{\E}_n\!\Big[(Y^*(d))^2\,\mathbf 1\{S(1)=S(0)=1\}\Big]\;<\;\infty,\qquad d\in\{0,1\}.
\]
Within-label treatment shares are uniformly bounded away from \(0\) and \(1\), and the overall treated fraction
\[
p_n\ :=\ \overline{\E}_n[D_i]\ =\ \sum_{g=1}^{G_n}\frac{N_{g}}{n}\,\eta_g
\ \to\ p\ \in(0,1).
\]
The stratum-level independence condition and Assumption~\ref{ass:MONO} hold uniformly in \(n\). Let \(\pi_{AO,n}:=\overline{\Pr}_n\{S(1)=S(0)=1\}\) and assume \(\pi_{AO,n}\to \pi_{AO}\in(0,1]\). For the Lee–IPW construction, suppose \(q_n\to q\in[0,1)\). Let \(F_n\) denote, at sample size \(n\), the marginal (over \(X\)) CDF of the reweighted treated outcome \(\widetilde Y:=Y/\eta_g\) among units with \(\{D=1,S=1\}\), that is, \(F_n(y):=\overline{\Pr}_n(\widetilde Y\le y\mid D=1,S=1)\). Assume \(F_n \Rightarrow F\) and that \(F\) is continuous at its \(q\)- and \((1-q)\)-quantiles. In addition, the mass of the treated-and-observed set stabilizes, so \(\overline{\Pr}_n(D=1,S=1)\to\tau\in(0,1]\), and the reweighted outcomes are uniformly integrable on this set:
$
\sup_{n}\ \overline{\E}_n\!\big[\,|\widetilde Y|\,\mathbf 1\{D=1,S=1\}\big]\;<\;\infty.
$
\end{assumption}

Define the population always–observed effect as the large–sample limit of the finite–sample estimand
\[
\Delta_{AO}\;:=\;\lim_{n\to\infty}\Delta_{AO,n},
\]
which exists under Assumption~\ref{ass:SP-LB}. Let \(\mu_{0}\), \(L_{1}\), and \(U_{1}\) denote the population counterparts of the Lee–IPW components in Proposition~\ref{prop:label_identification}, defined as the limits of
\(\mu_{0,n}\), \(L_{1,n}\), and \(U_{1,n}\), respectively, and let \(\pi_{AO}:=\lim_{n\to\infty}\pi_{AO,n}\).

\begin{proposition}[Finite-to-population convergence]\label{prop:AO-stab}
Under Assumption~\ref{ass:SP-LB}, \(\Delta_{AO,n}\to \Delta_{AO}\) as \(n\to\infty\). Moreover,
\[
\pi_{AO,n}\to \pi_{AO},\qquad
\mu_{0,n}\to \mu_0,\qquad
L_{1,n}\to L_{1},\qquad
U_{1,n}\to U_{1},
\]
and hence
\[
\Big[\,(L_{1,n}-\mu_{0,n})/\pi_{AO,n}\ ,\ (U_{1,n}-\mu_{0,n})/\pi_{AO,n}\,\Big]
\ \longrightarrow\
\Big[\,(L_{1}-\mu_{0})/\pi_{AO}\ ,\ (U_{1}-\mu_{0})/\pi_{AO}\,\Big].
\]
\end{proposition}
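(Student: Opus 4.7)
The plan is to pass each constituent of the Lee--IPW construction from its finite-$n$ array-average form to its population limit, and then to assemble the final claim by continuous mapping. Under Assumption~\ref{ass:SP-LB}, several ingredients converge by direct postulate: $\pi_{AO,n}\to \pi_{AO}\in(0,1]$, $p_n\to p\in(0,1)$, $q_n\to q\in[0,1)$, $\overline{\Pr}_n(D=1,S=1)\to\tau$, and $F_n\Rightarrow F$ with $F$ continuous at its $q$- and $(1-q)$-quantiles. What remains to prove is convergence of the IPW-weighted control mean $\mu_{0,n}\to\mu_0$ and of the trimmed treated means $L_{1,n}\to L_1$ and $U_{1,n}\to U_1$; the interval claim and $\Delta_{AO,n}\to\Delta_{AO}$ then follow by continuous mapping, with $\pi_{AO}>0$ controlling the denominator.

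For $\mu_{0,n}$, I would first apply the IPW identity underlying Proposition~\ref{prop:label_identification} (the label-array analogue of the derivation in Appendix~\ref{sec:identification}) to rewrite
\[
\mu_{0,n}
\;=\; \overline{\E}_n\!\big[(1-D)\,S\,w_{c,g,n}\,Y\big]
\;=\; (1-p_n)\,\overline{\E}_n\!\big[S(0)\,Y^*(0)\big].
\]
The right-hand side depends only on $p_n$ and on the marginal (over $X$) sampling law $P_n$. Combining $p_n\to p$, the weak convergence $P_n\Rightarrow P$, and the uniform square-integrability of $Y^*(0)$ on the always-observed set, the standard criterion that uniformly integrable test functions pass through weak limits yields $\mu_{0,n}\to \mu_0:=(1-p)\,\E_P[S(0)Y^*(0)]$.

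The main obstacle lies in the trimmed treated means, since these involve a moving quantile $\tilde y_{1-q_n}$ and a truncated expectation of the reweighted outcome $\widetilde Y$. By the definition of $\tilde y_{1-q_n}$ as the $(1-q_n)$-quantile of $\widetilde Y$ on $\{D=1,S=1\}$, the prefactor and conditional expectation in $L_{1,n}$ telescope to an unconditional truncation,
\[
L_{1,n}
\;=\;
\overline{\E}_n\!\big[\widetilde Y\,\mathbf{1}\{D=1,S=1,\widetilde Y\le \tilde y_{1-q_n}\}\big]
\;=\;
\overline{\Pr}_n(D=1,S=1)\,\int_{-\infty}^{\tilde y_{1-q_n}} y\,dF_n(y),
\]
and analogously for $U_{1,n}$ on the upper tail. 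I would then proceed in two steps. First, since $F_n\Rightarrow F$, $q_n\to q$, and $F$ is continuous at its $q$- and $(1-q)$-quantiles, the standard quantile-continuity lemma (e.g.\ Lemma~21.2 of \citet{van2000asymptotic}) delivers $\tilde y_{1-q_n}\to \tilde y_{1-q}$ and $\tilde y_{q_n}\to \tilde y_q$. Second, I would decompose
\[
\int_{-\infty}^{\tilde y_{1-q_n}} y\,dF_n(y)
\;=\;
\int_{-\infty}^{\tilde y_{1-q}} y\,dF_n(y)
\;+\;
\int_{I_n} y\,dF_n(y),
\]
where $I_n$ is the narrow interval between $\tilde y_{1-q}$ and $\tilde y_{1-q_n}$. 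The strip term is $o(1)$ by continuity of $F$ at $\tilde y_{1-q}$ together with the uniform integrability of $\widetilde Y$ under $F_n$, while the fixed-quantile term converges to $\int_{-\infty}^{\tilde y_{1-q}} y\,dF(y)$ by the same weak-convergence-plus-UI criterion used for $\mu_{0,n}$. Multiplying by $\overline{\Pr}_n(D=1,S=1)\to\tau$ yields $L_{1,n}\to L_1$, and an identical argument on the upper tail gives $U_{1,n}\to U_1$.

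Finally, since $\pi_{AO}>0$, the map $(\pi_{AO},\mu_0,L_1,U_1)\mapsto \big((L_1-\mu_0)/\pi_{AO},\,(U_1-\mu_0)/\pi_{AO}\big)$ is continuous at the limit, so the continuous mapping theorem delivers convergence of the Lee--IPW interval and of $\Delta_{AO,n}\to \Delta_{AO}$. The hardest step is the handling of the moving trimming point inside the expectation: weak convergence alone does not justify replacing $\tilde y_{1-q_n}$ by $\tilde y_{1-q}$, and both the quantile continuity of $F$ at $\tilde y_{1-q}$ and the uniform integrability of $\widetilde Y$ on $\{D=1,S=1\}$ are essential for closing the strip-term argument. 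Everything else is routine bookkeeping once these two ingredients are in place.
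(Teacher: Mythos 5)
Your proof is correct and follows essentially the same route as the paper's: stabilized design aggregates ($p_n$, $q_n$, $\pi_{AO,n}$, $\tau$), weak convergence of the marginal laws combined with uniform integrability for the moment functionals, quantile continuity of $F$ at $\tilde y_q$ and $\tilde y_{1-q}$ for the trimming cutoffs, and continuous mapping with $\pi_{AO}>0$ for the final ratio. Your explicit strip-term decomposition for the moving quantile inside the trimmed mean supplies a detail the paper compresses into ``continuous mapping together with uniform integrability,'' but the substance is identical.
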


\begin{proof}
See Appendix~\ref{app:prop:AO-stab}.
\end{proof}

\subsection{Many Small Strata: Design-Consistent Asymptotics and Feasible Variance}
\label{sec:label_asym_small}

This subsection treats the regime with many small strata, where \(G_n\to\infty\) and \(\max_{g\le G_n} N_g/n\to 0\). We work under the label-based triangular-array setup in Section~\ref{sec:label_setup}, conditioning on the realized labels and the heterogeneous-shares block randomization, and impose the moment regularity in Assumption~\ref{ass:C_mod}. For the class of moment estimators considered in the paper, including the Lee–IPW bounds, we establish an influence-function representation and a central limit theorem that are consistent with the design. The variance reduces to the within-label outcome component. The assignment-fluctuation term vanishes under fixed within-label quotas, and the between-label term vanishes under our convention of conditioning on \(b^{(n)}\). A feasible plug-in estimator based on within-label sample moments and the known label shares \(\eta_g\) is consistent in this many-small-strata regime.

The argument proceeds by adapting \cite{bai2023efficiency}'s approach to the label-based, triangular-array setting. Lemma~\ref{lem:LB7} verifies stratum-level independence, and a uniform law of large numbers for the sample moments follows from Assumption~\ref{ass:C_mod} together with the arguments in Appendix~\ref{app:consistency-proof}. Lemma~\ref{lem:LB6} gives a label-wise central limit theorem for linear forms under rejective sampling. These ingredients yield the asymptotic linearization and normality in Theorem~\ref{thm:LB-AN}. After that, we derive a closed-form, design-consistent variance estimator and show its consistency under the same growth and moment conditions. The detailed consistency argument for the label-based plug-in variance estimator is given in Appendix~\ref{app:var_consistency_label}.

We consider $\theta_0\in\Theta\subset\mathbb{R}^{d_\theta}$ defined by the (design-based) population moment condition at size $n$,
\begin{equation}\label{eq:pop-moment}
\overline{\E}_n\!\big[m(X_i,D_i,Y_i;\theta_0)\big]\;=\;0,
\end{equation}
and assume that the same $\theta_0$ solves \eqref{eq:pop-moment} for all $n$ (that is, $\theta_0$ is fixed across $n$).

The next assumption mirrors Assumption~\ref{ass:C} but is stated under $P_n$. Parts (a)–(b) deliver identification and smooth local behavior uniformly in $n$. Parts (c)–(e) provide the uniform LLN machinery for the classes $\{m(\cdot;\theta):\theta\in\Theta\}$. Part (f) is used to control the within–label centering step that appears with heterogeneous shares.

\begin{assumption}[Moment regularity]\label{ass:C_mod}
Let $m(\cdot)=(m_s(\cdot):1\le s\le d_\theta)'$. 
For each $n$, all expectations $\E_n[\cdot]$ are taken under the joint law at sample size $n$
(conditional on $b^{(n)}$ and the assignment).
The moment functions are such that

\begin{enumerate}[label=(\alph*)]
\item For every $\epsilon>0$, there exists $c_\epsilon>0$ such that
\[
\inf_{n\ge 1}\ \inf_{\theta\in\Theta:\,\|\theta-\theta_0\|>\epsilon}\ 
\big\|\E_n\big[m(X_i,D_i,Y_i;\theta)\big]\big\|\ \ge\ c_\epsilon.
\]

\item $\E_n[m(X_i,D_i,Y_i;\theta)]$ is differentiable at $\theta_0$ with a nonsingular derivative
\[
M_n \;=\; \left.\frac{\partial}{\partial\theta'}\,\E_n\!\left[m(X,D,Y;\theta)\right]\right|_{\theta=\theta_0}\, ,
\]
and there exists $c>0$ such that $\inf_n \sigma_{\min}(M_n)\ge c$.

\item For $1\le s\le d_\theta$ and $d\in\{0,1\}$,
\[
\E_n\!\left[\big(m_s(X,d,Y^*(d),\theta)-m_s(X,d,Y^*(d),\theta_0)\big)^2\right]\ \to\ 0
\quad\text{as }\theta\to\theta_0.
\]

\item For $1\le s\le d_\theta$, $\{m_s(x,d,y,\theta):\theta\in\Theta\}$ is pointwise measurable in the sense that there exists a countable set $\Theta^\ast$ such that for each $\theta\in\Theta$, there exists a sequence $\{\theta_m\}\subset\Theta^\ast$ such that $m_s(x,d,y,\theta_m)\to m_s(x,d,y,\theta)$ as $m\to\infty$ for all $x,d,y$.

\item (i) $\displaystyle \sup_{n\ge1}\ \sup_{\theta\in\Theta}\ \E_n\!\left[\|m(X,d,Y^*(d),\theta)\|\right]<\infty$ for $d\in\{0,1\}$. \\
(ii) $\{m(x,1,y,\theta):\theta\in\Theta^\ast\}$ and $\{m(x,0,y,\theta):\theta\in\Theta^\ast\}$ are Donsker uniformly in $n$.
 
\item For $d\in\{0,1\}$, $\E_n\!\left[m(X,d,Y^*(d),\theta_0)\mid X=x\right]$ is Lipschitz.
\end{enumerate}
\end{assumption}

The method-of-moments estimator $\hat\theta_n$ for $\theta_0$ is defined as any measurable solution to the sample analogue of \eqref{eq:pop-moment}:
\begin{equation}\label{eq:est-def}
\frac{1}{n}\sum_{i=1}^n m(X_i,D_i,Y_i;\hat\theta_n)\;=\;0.
\end{equation}
When multiple solutions exist, we take any measurable selection in a neighborhood of $\theta_0$. Under Assumption~\ref{ass:C_mod}, existence and local uniqueness hold with probability approaching one.

We now establish the consistency of generic moment-based estimators under the label-wise triangular array with heterogeneous treatment shares.

\begin{lemma}[Consistency of $\hat\theta_n$]\label{lem:LB8}
Suppose Assumptions~\ref{ass:TA-LB-PO}--\ref{ass:TA-LB-assign} and \ref{ass:C_mod} hold. Then, $\hat\theta_n \xrightarrow{p} \theta_0$.
\end{lemma}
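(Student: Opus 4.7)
The plan is to follow the Z-estimator consistency template of \citet[Thm.~5.9]{van2000asymptotic}. By Assumption~\ref{ass:C_mod}(a), the population moment $\E_n[m(X,D,Y;\theta)]$ is uniformly separated from zero outside any neighborhood of $\theta_0$ (uniformly in $n$), so it suffices to establish, for each component $1\le s\le d_\theta$,
\[
\sup_{\theta\in\Theta}\left|\frac{1}{n}\sum_{i=1}^n m_s(X_i,D_i,Y_i;\theta)-\E_n\big[m_s(X_i,D_i,Y_i;\theta)\big]\right|\xrightarrow{p}0.
\]
Using Assumption~\ref{ass:C_mod}(d) together with the envelope in (e)(i) and dominated convergence, the supremum can be restricted to the countable dense set $\Theta^\ast$ at the cost of reinterpreting it as an outer supremum, exactly as in the reduction step of Lemma~\ref{lem:A8-het-cov}.

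First, I would decompose the moment by treatment status, writing $m_s(X_i,D_i,Y_i;\theta)=D_im_s(X_i,1,Y_i^*(1);\theta)+(1-D_i)m_s(X_i,0,Y_i^*(0);\theta)$, and add and subtract the within-label conditional mean $\mu_{d,s}(b_i,\theta):=\E_n[m_s(X_i,d,Y_i^*(d);\theta)\mid X_i=b_i]$, following the template of Lemma~\ref{lem:A8-het-cov}. The resulting error splits into an arm-specific term that averages centered potential outcomes with assignments held fixed, plus an assignment-fluctuation term of the form $n^{-1}\sum_i(D_i-\eta_{b_i})\mu_{d,s}(b_i,\theta)$. The crucial simplification relative to the covariate-based setting is that $X_i=b_i$ takes a single value on each block, so $\mu_{d,s}(b_i,\theta)$ is constant across $i\in\lambda_g$. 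Because $\sum_{i\in\lambda_g}(D_i-\eta_g)=T_g-N_g\eta_g=0$ identically under the fixed-quota design of Assumption~\ref{ass:TA-LB-assign}, the fluctuation term vanishes exactly, and no Lipschitz or covariate-balancing argument is required.

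For the arm-specific terms, I would condition on $\mathcal Z_n:=\sigma\bigl(b^{(n)},\{(Y_i^*(0),Y_i^*(1))\}_{i=1}^n\bigr)$ and apply Proposition~C.1 of \citet{han2021complex} block-by-block. This dominates the conditional expectation of the arm-$d$ supremum by a multiplier process $n^{-1/2}\sum_i(\xi_{i,d}-p_{i,d})a_i(\theta)$, where the $\xi_{i,d}$ are conditionally independent Bernoulli draws with success probabilities $\eta_g$ (for $d=1$) or $1-\eta_g$ (for $d=0$). The uniform-in-$n$ Donsker condition from Assumption~\ref{ass:C_mod}(e)(ii), together with the envelope in (e)(i) and a standard multiplier maximal inequality such as Corollary~2.3.12 of \citet{vaart1997weak}, controls this process and yields an arm-specific ULLN. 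Markov's inequality delivers convergence in probability, and combining the two arm-specific contributions with the identically zero fluctuation term produces the required ULLN, whence $\hat\theta_n\xrightarrow{p}\theta_0$.

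The main obstacle is maintaining the uniform-in-$n$ empirical process machinery across the triangular array, rather than any single estimation step. Specifically, the Donsker property and the envelope bound in Assumption~\ref{ass:C_mod}(e) must hold uniformly over the sequence $\{P_n\}$ that carries the possibly heterogeneous label-specific laws $\{P_{g,n}\}$, rather than under a single superpopulation $Q$, and the multiplier inequality must be applied conditionally in a way that is insensitive to the growing number of labels under Assumption~\ref{ass:TA-LB-growth}. Once this uniform framework is accepted, the argument closely parallels that of Lemma~\ref{lem:A8-het-cov}, but is substantially simpler because the imbalance term vanishes exactly under label-based stratification, eliminating the covariate-balancing step that was the principal technical complication there.
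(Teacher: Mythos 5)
Your proposal is correct and follows essentially the same route as the paper's proof: reduction to the countable set $\Theta^\ast$ via Assumption~\ref{ass:C_mod}(d) and the envelope, decomposition by treatment arm with within-label centering, exact cancellation of the assignment-fluctuation term because $\mu_{d,s}(\cdot,\theta)$ is label-constant and quotas are fixed, and control of the arm-specific suprema via Proposition~C.1 of \citet{han2021complex} together with the uniform Donsker and envelope conditions. The only cosmetic difference is the final convergence step, where you invoke a multiplier maximal inequality (Corollary~2.3.12 of \citet{vaart1997weak}) while the paper cites the backward submartingale convergence argument from Theorem~3.1 of \citet{han2021complex}; both are standard consequences of the same Donsker hypothesis.
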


\begin{proof}
See Appendix~\ref{app:consistency-proof}.
\end{proof}


Also, we state the asymptotic distribution for this label-based, heterogeneous-shares regime. The result is proven under the design-based view adopted in this section.

\begin{theorem}[Asymptotic Distribution of $\hat\theta_n$]\label{thm:LB-AN}
Suppose the treatment assignment satisfies Assumptions~\ref{ass:TA-LB-PO}–\ref{ass:TA-LB-assign} and the moment functions satisfy Assumption~\ref{ass:C_mod}. Let $\hat\theta_n$ be defined as in \eqref{eq:est-def}. Set $M_n:=\E_n[\partial_\theta m(X,D,Y;\theta_0)]$ and assume $M_n\to M$ with $M$ nonsingular. Then
\begin{equation}\label{eq:LB-IF}
\sqrt{n}\big(\hat\theta_n-\theta_0\big)
=\frac{1}{\sqrt{n}}\sum_{i=1}^n \psi_n^\ast(X_i,D_i,Y_i;\theta_0)+o_p(1),
\end{equation}
where
\[
\begin{aligned}
\psi_n^\ast(X_i,D_i,Y_i;\theta_0)
= &-\,M_n^{-1}\Big(
\mathbf 1\{D_i=1\}\!\big(m(X_i,1,Y_i^*(1),\theta_0)-\E_n[m(X_i,1,Y_i^*(1),\theta_0)\mid X_i]\big)\\
&+\mathbf 1\{D_i=0\}\!\big(m(X_i,0,Y_i^*(0),\theta_0)-\E_n[m(X_i,0,Y_i^*(0),\theta_0)\mid X_i]\big)\\
&+\eta(X_i)\,\E_n[m(X_i,1,Y_i^*(1),\theta_0)\mid X_i]\\
&+\{1-\eta(X_i)\}\,\E_n[m(X_i,0,Y_i^*(0),\theta_0)\mid X_i]\Big),
\end{aligned}
\]
and $\eta(X_i):=\Pr(D_i=1\mid X_i,b^{(n)})$. Further,
\begin{equation}\label{eq:LB-CLT}
\sqrt{n}\big(\hat\theta_n-\theta_0\big)\ \xRightarrow{d}\ \mathcal N\big(0,\,V_\ast\big),
\qquad
V_\ast=\lim_{n\to\infty}\Var_n\!\big[\psi_n^\ast(X_i,D_i,Y_i;\theta_0)\big].
\end{equation}
\end{theorem}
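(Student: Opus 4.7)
The plan is to mirror the Z-estimator expansion used to prove Theorem~\ref{thm:clt_het_gmm}, exploiting two structural simplifications available in the label-based regime: the only covariate is the label itself, and the many-small-blocks growth condition in Assumption~\ref{ass:TA-LB-growth}. First, I would invoke Lemma~\ref{lem:LB8} to obtain $\hat\theta_n \xrightarrow{p} \theta_0$. Combined with a stochastic equicontinuity step leveraging the uniformly Donsker condition in Assumption~\ref{ass:C_mod}(e)(ii), applied conditional on $b^{(n)}$ via a within-label symmetrization analogous to the one used in Lemma~\ref{lem:A8-het-cov}, this yields the standard Z-estimator linearization
\[
\sqrt{n}(\hat\theta_n-\theta_0) \;=\; -M_n^{-1}\,\frac{1}{\sqrt{n}}\sum_{i=1}^n m(X_i,D_i,Y_i;\theta_0) \;+\; o_p(1).
\]

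Next, I would decompose the sample moment exactly as in the proof of Theorem~\ref{thm:clt_het_gmm}. With $\mu_{d,n}(x) := \E_n[m(X,d,Y^*(d);\theta_0)\mid X=x]$, $\Delta_n(x) := \mu_{1,n}(x)-\mu_{0,n}(x)$, and $\eta_i := \eta_{b_i}$, one has
\[
m(X_i,D_i,Y_i;\theta_0) \;=\; -M_n\,\psi_n^\ast(X_i,D_i,Y_i;\theta_0) + (D_i-\eta_i)\Delta_n(X_i).
\]
The crucial simplification in the label-based setting is that $X_i=b_i$, so $\Delta_n(X_i) = \Delta_n(g)$ is constant on each $\lambda_g$; since the treatment count $T_g$ is fixed by design under Assumption~\ref{ass:TA-LB-assign}, $\sum_{i\in\lambda_g}(D_i-\eta_g)=0$ identically, and hence the imbalance term vanishes \emph{exactly}:
\[
\sum_{i=1}^n (D_i-\eta_i)\Delta_n(X_i) \;=\; \sum_{g=1}^{G_n}\Delta_n(g)\sum_{i\in\lambda_g}(D_i-\eta_g) \;\equiv\; 0.
\]
In contrast with the covariate-based case of Theorem~\ref{thm:clt_het_gmm}, this bypasses entirely the Lipschitz and covariate-balancing machinery, reducing the linearization to $\sqrt n(\hat\theta_n-\theta_0)=n^{-1/2}\sum_i\psi_n^\ast(X_i,D_i,Y_i;\theta_0)+o_p(1)$.

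It then remains to establish the CLT for $n^{-1/2}\sum_i\psi_n^\ast(X_i,D_i,Y_i;\theta_0)$. Writing $S_{g,n} := \sum_{i\in\lambda_g}\psi_n^\ast(X_i,D_i,Y_i;\theta_0)$, Assumptions~\ref{ass:TA-LB-PO} and~\ref{ass:TA-LB-assign} imply that $\{S_{g,n}\}_{g=1}^{G_n}$ are mutually independent conditional on $b^{(n)}$, with randomness coming from the i.i.d.\ within-label potential outcomes and the within-label rejective assignment with fixed $T_g$. I would apply Lemma~\ref{lem:LB6} to characterize the within-label variance of each $S_{g,n}$ under rejective sampling, and then combine across labels via a Lindeberg--Feller argument for triangular arrays of independent vectors. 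Assumption~\ref{ass:TA-LB-growth} ($G_n\to\infty$ and $\max_g N_g/n\to 0$) delivers the Lindeberg condition once the local moment bounds in Assumption~\ref{ass:C_mod}(e)(i), together with the $L^2$ continuity in part (c), are used to produce uniform $L^2$ envelopes across the array. A direct calculation then identifies $n^{-1}\Var_n\!\bigl(\sum_g S_{g,n}\bigr) = M_n^{-1}\overline{\Sigma}_n(M_n^{-1})^\top$, where $\overline{\Sigma}_n$ averages the within-label conditional variances weighted by $\eta_g$ and $1-\eta_g$ (the cross-label component of the usual variance being absent by our conditioning on $b^{(n)}$); passing $n\to\infty$ identifies the limit as $V_\ast$.

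The main obstacle will be combining the within-label combinatorial variance from the rejective assignment with the cross-label Lindeberg argument in a setting where both the within-label laws $P_{g,n}$ and the conditional means $\mu_{d,n}(g)$ may vary across $g$ and with $n$. In particular, verifying a conditional Lindeberg (or Lyapunov) condition uniformly across the array requires upgrading the local $L^1$ envelopes in Assumption~\ref{ass:C_mod}(e)(i) to a uniform integrability statement for the $\|S_{g,n}\|^2$, accounting for the coupling induced by the fixed within-label quotas; and showing that $\overline{\Sigma}_n$ admits a well-defined limit despite the evolving label composition will require stability arguments analogous in spirit to Assumption~\ref{ass:SP-LB} but used here only for second-moment purposes.
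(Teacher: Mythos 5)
Your proposal is correct and follows essentially the same route as the paper's proof: consistency plus a label-wise stochastic-equicontinuity argument (via Proposition~C.1 of \citet{han2021complex} and the Donsker condition) gives the Z-estimator linearization, the imbalance term $\sum_i(D_i-\eta_i)\Delta_n(X_i)$ vanishes identically because the conditional means are label-constant and the within-label quotas are fixed, and a conditional CLT delivers the limit with only the within-label variance component surviving. The only cosmetic difference is that the paper applies Lemma~\ref{lem:LB6} directly to $a^\top\psi_n^\ast$ with Cram\'er--Wold, verifying a unit-level conditional Lyapunov condition, whereas you propose a block-level Lindeberg--Feller argument for the sums $S_{g,n}$; the moment-strengthening issue you flag at the end is real but is handled in the paper by the explicit $(2+\delta)$-moment hypothesis of Lemma~\ref{lem:LB6} rather than by Assumption~\ref{ass:C_mod}(e)(i) alone.
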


\begin{proof}
See Appendix~\ref{app:asymptotics-proof}.
\end{proof}

While Theorem~\ref{thm:LB-AN} is a design-based result, there is one implementation difference relative to the covariate-based setting. With label-based stratification, there is no scope to pool across nearby strata to estimate the conditional variance components when a block has too few observations per arm. We therefore require at least two treated and two control observations in each block to compute the plug-in \(\widehat V_{\ast}\) as in \citet{bai2023efficiency}.

With this estimation caveat in mind, the asymptotic variance framework developed earlier extends directly to stratification by discrete labels, accommodating both equal-share and heterogeneous-shares designs. This broadens the applicability of the theory laid out in \citet{bai2023efficiency}, allowing their central limit theorem and variance formulas to apply verbatim. Our results confirm that both the consistency and efficiency gains established for covariate-based stratification also hold when blocks are defined purely by discrete labels, requiring no additional assumptions or implementation complexity.

In applied work, researchers should retain the block structure in their analysis, use the design-consistent plug-in variance estimator \(\widehat V_{\ast}\), and report the resulting confidence intervals, which are typically tighter than those based on i.i.d.\ assumptions. Overall, the theory guarantees that finely stratified designs yield variance no greater than i.i.d.\ assignment, highlighting the efficiency gains and the necessity of respecting the experimental design in empirical applications. For Lee bounds in particular, where confidence intervals are already interpreted as ranges of plausible effects, it is especially important to avoid further inflating their width through conservative standard errors. Applying the appropriate variance estimator ensures that the intervals reported for the bounds themselves reflect the true uncertainty and maximize the informational value of the analysis.

\newpage
\section{Proofs for Extension to Label-Based Stratification}\label{sec:proofs-label}
\subsection{Label-Based: Identification of Bounds on Treatment Effects with Lee's Approach and IPW}\label{sec:identification-label}

We derive bounds on the average treatment effect for the always observed subpopulation when treatment probabilities differ across labels \(g=1,\dots,G_n\). Throughout this section, we work under the design-based joint law at size \(n\), namely \(P_n(\cdot\mid b^{(n)})\) together with the label-wise assignment in Section~\ref{sec:label_setup}, and we write \(\E_n[\cdot]\) for expectations under this law.

We work with the pooled (mixture) law at size \(n\). For any unit–level random variable \(Z_i\) and event \(A_i\),
\[
\overline{\E}_n[Z_i]\ :=\ \frac{1}{n}\sum_{i=1}^n \E_n[Z_i],
\qquad
\overline{\Pr}_n(A_i)\ :=\ \frac{1}{n}\sum_{i=1}^n \Pr_n(A_i).
\]
These formulas define a probability measure \(\overline{\Pr}_n\) and its expectation operator \(\overline{\E}_n\) on the pooled population. Thus, conditional objects are taken with respect to this pooled law, and so for events \(A_i,B_i\) with \(\overline{\Pr}_n(B_i)>0\) and integrable \(Z_i\),
\[
\overline{\Pr}_n(A_i\mid B_i)\ :=\ \frac{\overline{\Pr}_n(A_i\cap B_i)}{\overline{\Pr}_n(B_i)},
\qquad
\overline{\E}_n[Z_i\mid B_i]\ :=\ \frac{\overline{\E}_n[Z_i\,\mathbf 1\{B_i\}]}{\overline{\Pr}_n(B_i)}.
\]
All conditional probabilities, expectations, and quantiles below are taken with respect to this pooled law.

The finite-sample target (at size \(n\)) is
\[
\Delta_{AO,n}
\;\equiv\;
\frac{\displaystyle \overline{\E}_n\!\Big[(Y_{g,i}^*(1)-Y_{g,i}^*(0))\,\mathbf 1\{AO_i\}\Big]}
{\displaystyle \overline{\Pr}_n\!\big(AO_i\big)},
\]
where \(AO_i:=\{S_{g,i}(1)=1,\ S_{g,i}(0)=1\}\).

\subsubsection{Identification of the control-arm array–average component and the denominator for \texorpdfstring{$\Delta_{AO,n}$}{Delta\_AO,n}}
\label{sec:iden_Y0label}

Under conditional independence given labels (CIA) and monotonicity, the control-side components needed for the finite–sample target are identified under the label–based triangular array. Throughout, expectations and probabilities are taken under the joint law at size $n$, i.e., $P_n(\cdot\mid b^{(n)})$ and the label–wise assignment. Let
\[
p_n:=\overline{\E}_n[D_{g,i}]=\sum_{g=1}^{G_n}(N_g/n)\,\eta_g, \
w_{c,g,n}:=\frac{1-p_n}{1-\eta_g}, \
AO_i:=\{S_{g,i}(1)=1,\ S_{g,i}(0)=1\}.
\]

The finite–sample target is
\[
\Delta_{AO,n}
=\frac{\overline{\E}_n\!\big[(Y_{g,i}^*(1)-Y_{g,i}^*(0))\,\mathbf 1\{AO_i\}\big]}
       {\overline{\Pr}_n(AO_i)}\,.
\]
In this subsection, we point-identify the denominator $\overline{\Pr}_n(AO_i)$ and the control contribution to the numerator $\overline{\E}_n\!\big[Y_{g,i}^*(0)\,\mathbf 1\{AO_i\}\big]$, and the treated contribution $\overline{\E}_n\!\big[Y_{g,i}^*(1)\,\mathbf 1\{AO_i\}\big]$ is handled in Section~\ref{sec:iden_Y1_label}.

By monotonicity, $\mathbf 1\{AO_i\}=S_{g,i}(0)$ a.s., so both objects can be written in terms of $S_{g,i}(0)$ and $S_{g,i}(0)Y^*_{g,i}(0)$ and then related to observables under $D_{g,i}=0$. For the denominator,
\begin{align}
\overline{\Pr}_n(AO_i)
&=\ \overline{\E}_n\!\big[\mathbf{1}\{AO_i\}\big]
\ \overset{(\text{Monot.})}{=}\ 
\overline{\E}_n\!\big[S_{g,i}(0)\big]
\notag\\
&\overset{(\text{LIE})}{=}\ 
\overline{\E}_n\!\Big[\E_n\!\big[S_{g,i}(0)\mid X_{g,i}\big]\Big]
\notag\\
&\overset{(\text{CIA \& }S_{g,i}=S_{g,i}(0)\text{ on }\{D_{g,i}=0\})}{=}\ 
\overline{\E}_n\!\Big[\E_n\!\big[S_{g,i}\mid X_{g,i},D_{g,i}=0\big]\Big]
\notag\\
&\overset{(\text{IPW})}{=}\ 
\overline{\E}_n\!\Big[\E_n\!\big[S_{g,i}\tfrac{1-D_{g,i}}{\,1-\eta_g\,}\ \bigm|\ X_{g,i}\big]\Big]
\ =\ 
\overline{\E}_n\!\Big[S_{g,i}\tfrac{1-D_{g,i}}{\,1-\eta_g\,}\Big]
\notag\\
&=\ \frac{1}{\,1-p_n\,}\ 
\overline{\E}_n\!\Big[S_{g,i}\, \frac{1-p_n}{\,1-\eta_g\,}\ (1-D_{g,i})\Big]
\ =\ 
\overline{\E}_n\!\Big[S_{g,i}\,w_{c,g,n}\ \Bigm|\ D_{g,i}=0\Big],
\label{eq:AO-denom-IPW-final}
\end{align}

For the control contribution to the numerator,
\begin{align}
\overline{\E}_n\!\big[Y^{\ast}_{g,i}(0)\,\mathbf 1\{AO_i\}\big]
&\;\overset{\text{(Monot.)}}{=} \overline{\E}_n\!\big[S_{g,i}(0)\,Y^{\ast}_{g,i}(0)\big]
 \;\overset{\text{(LIE)}}{=}\;
 \overline{\E}_n\!\Big[\E_n\!\big[S_{g,i}(0)\,Y^{\ast}_{g,i}(0)\mid X_{g,i}\big]\Big]
 \notag\\
&\overset{\text{(Obs.\ under }D=0:\ S_{g,i}=S_{g,i}(0),\ Y_{g,i}=S_{g,i}(0)Y_{g,i}^*(0))}{=}\;
 \overline{\E}_n\!\Big[\E_n\!\big[S_{g,i}\,Y_{g,i}\mid X_{g,i},D_{g,i}=0\big]\Big] \notag\\
 &\overset{\text{(IPW)}}{=}\;
 \overline{\E}_n\!\Big[S_{g,i}\,Y_{g,i}\,w_{c,g,n}\ \Bigm|\ D_{g,i}=0\Big].
\label{eq:AO-num-IPW-final}
\end{align}

Displays \eqref{eq:AO-denom-IPW-final}–\eqref{eq:AO-num-IPW-final} provide observable, array–average IPW formulas for the denominator and the control part of the numerator of $\Delta_{AO,n}$.

\subsubsection{Identification of the treated-arm array–average component for \texorpdfstring{$\Delta_{AO,n}$}{Delta\_AO,n}}
\label{sec:iden_Y1_label}

We now identify the treated contribution to the numerator, $\overline{\E}_n\!\big[Y_{g,i}^*(1)\,\mathbf 1\{AO_i\}\big]$, where $AO_i:=\{S_{g,i}(1)=1,\ S_{g,i}(0)=1\}$. Because the set $\{D_i=1,S_i=1\}$ mixes (i) always–observed units and (ii) units observed only when treated, we reweight treated outcomes by label using the design share. For a unit $i$ with label $X_i=b_i$, define
\[
\widetilde Y_{g,i}\ :=\ \frac{1}{\eta_g}\,Y_{g,i},
\]
where $\eta_{g}$ is the known treatment share per stratum.

Now, for any $\tilde y\in\mathbb{R}$, the mixture within the treated-and-observed group satisfies
\begin{align}
\overline{\E}_n\!\Big[\mathbf{1}\{\widetilde Y_{g,i}\le \tilde y\}\,\Big|\,D_{g,i}=1,S_{g,i}=1\Big]
&=\ (1-q_n)\,\overline{\E}_n\!\Big[\mathbf{1}\{\widetilde Y_{g,i}\le \tilde y\}\,\Big|\,D_{g,i}=1,\,S_{g,i}(1)=1,\,S_{g,i}(0)=1\Big]\notag\\
&\quad+\ q_n\,\overline{\E}_n\!\Big[\mathbf{1}\{\widetilde Y_{g,i}\le \tilde y\}\,\Big|\,D_{g,i}=1,\,S_{g,i}(1)=1,\,S_{g,i}(0)=0\Big],
\label{Ytilde-decomposition-n}
\end{align}
where
\begin{align}
q_n
&\equiv \overline{\Pr}_n\!\Bigl[S_{g,i}(1)=1,\,S_{g,i}(0)=0 \,\Big|\, D_{g,i}=1,\,S_{g,i}=1\Bigr] \nonumber\\
&=\frac{\overline{\Pr}_n\!\bigl[S_{g,i}=1\,\big|\,D_{g,i}=1\bigr]-\overline{\Pr}_n\!\bigl[S_{g,i}(0)=1\,\big|\,D_{g,i}=1\bigr]}
{\overline{\Pr}_n\!\bigl[S_{g,i}=1\,\big|\,D_{g,i}=1\bigr]} \nonumber\\
&=\frac{\overline{\Pr}_n\!\bigl[S_{g,i}=1\,\big|\,D_{g,i}=1\bigr]
-\overline{\E}_n\!\Bigl[\ \Bigl.S_{g,i}\,\Bigl\{\frac{\eta_g(1-p_n)}{(1-\eta_g)\,p_n}\Bigr\}\ \Bigr|\ D_{g,i}=0\Bigr]}
{\overline{\Pr}_n\!\bigl[S_{g,i}=1\,\big|\,D_{g,i}=1\bigr]},
\label{att_selection_n}
\end{align}
with the IPW factor $w_{q,g,n}:=\dfrac{\eta_g(1-p_n)}{(1-\eta_g)p_n}$ and $p_n=\sum_g (N_g/n)\eta_g$. The last equality in \eqref{att_selection_n} uses the IPW representation of $\overline{\E}_n[S_{g,i}(0)\,|\,D_{g,i}=1]$ under heterogeneous label shares.

Let \(\tilde y_{q_n}\) and \(\tilde y_{1-q_n}\) be the \(q_n\)- and \((1-q_n)\)-quantiles of \(\widetilde Y_{g,i}\) within \(\{D_{g,i}=1,S_{g,i}=1\}\) under \(\overline{\Pr}_n\) and, for ease of notation, let $T:=\{D_{g,i}=1,S_{g,i}=1\}$. \cite{lee2009training} implies the conditional bounds
\[
\overline{\E}_n\!\big[\widetilde Y_{g,i}\mid T,\,\widetilde Y_{g,i}\le \tilde y_{1-q_n}\big]
\ \le\
\overline{\E}_n\!\big[\widetilde Y_{g,i}\mid T,\,AO_i\big]
\ \le\
\overline{\E}_n\!\big[\widetilde Y_{g,i}\mid T,\,\widetilde Y_{g,i}\ge \tilde y_{q_n}\big].
\]
We multiply each term by $\overline{\Pr}_n(T,AO_i)=(1-q_n)\,\overline{\Pr}_n(T)$. Then, the bounds that partially identify $\overline{\E}_n\!\big[Y_{g,i}^*(1)\,\mathbf 1\{AO_i\}\big]$ are
\begin{align}
(1-q_n)\,\overline{\Pr}_n(T)\,
\overline{\E}_n\!\big[\widetilde Y_{g,i}\mid T,\,\widetilde Y_{g,i}\le \tilde y_{1-q_n}\big]
&\ \le\
\overline{\E}_n\!\big[\widetilde Y_{g,i}\,\mathbf 1\{T,AO_i\}\big] \nonumber\\
&\ \le\
(1-q_n)\,\overline{\Pr}_n(T)\,
\overline{\E}_n\!\big[\widetilde Y_{g,i}\mid T,\,\widetilde Y_{g,i}\ge \tilde y_{q_n}\big].
\label{eq:trim_bounds_n}
\end{align}

Next, note that on \(\{D_{g,i}=1\}\) we have \(Y_{g,i}=Y^*_{g,i}(1)\) and \(S_{g,i}=S_{g,i}(1)\); on \(AO_i\) also \(S_{g,i}(1)=1\), so \(\mathbf 1\{D_{g,i}=1,S_{g,i}=1,AO_i\}=\mathbf 1\{D_{g,i}=1,AO_i\}\). Using \(\widetilde Y_{g,i}=Y_{g,i}/\eta_g\), apply LIE over \(X_i\) and the design fact \(\overline{\Pr}_n(D_{g,i}=1\mid X_i)=\eta_g\):
{\footnotesize
\begin{align}
\overline{\E}_n\!\big[\widetilde Y_{g,i}\,\mathbf 1\{D_{g,i}=1,S_{g,i}=1,\,AO_i\}\big]
&\overset{\substack{S_{g,i}=S_{g,i}(1)\ \text{on }\{D=1\}\\ AO_i\Rightarrow S_{g,i}(1)=1}}{=}
\overline{\E}_n\!\big[\widetilde Y_{g,i}\,\mathbf 1\{D_{g,i}=1,\,AO_i\}\big]\notag\\
&\overset{\widetilde Y_{g,i}=Y_{g,i}/\eta_g,\ Y_{g,i}=Y^*_{g,i}(1)\ \text{on }\{D=1\}}{=}
\overline{\E}_n\!\Big[\tfrac{1}{\eta_g}\,\mathbf 1\{D_{g,i}=1\}\,Y^*_{g,i}(1)\,\mathbf 1\{AO_i\}\Big]\notag\\
&\overset{\mathrm{LIE}}{=}
\overline{\E}_n\!\Big[
\underbrace{\E_n\!\big[\tfrac{1}{\eta_g}\,\mathbf 1\{D_{g,i}=1\}\mid X_i\big]}_{\overset{\overline{\Pr}_n(D=1\mid X_i)=\eta_g}{=\,1}}\ 
\E_n\!\big[Y^*_{g,i}(1)\,\mathbf 1\{AO_i\}\mid X_i\big]
\Big]\notag\\
&=\ \overline{\E}_n\!\big[Y^*_{g,i}(1)\,\mathbf 1\{AO_i\}\big].
\label{ate-treated-1-n}
\end{align}}

Therefore, by the bridge in \eqref{ate-treated-1-n}, the bounds in \eqref{eq:trim_bounds_n} deliver bounds for the treated–arm product moment \(\overline{\E}_n[Y^*_{g,i}(1)\mathbf 1\{AO_i\}]\). Stacking these with the observable IPW expressions for the denominator and the control component in \eqref{eq:AO-denom-IPW-final}–\eqref{eq:AO-num-IPW-final} yields valid bounds on the finite-\(n\) target \(\Delta_{\mathrm{AO},n}\).

\subsection{Proof of Proposition \ref{prop:AO-stab}}
\label{app:prop:AO-stab}

Under Assumption~\ref{ass:SP-LB}, each ingredient of the Lee–IPW construction is a continuous functional of primitives that stabilize: the marginal law of \((Y^*(1),Y^*(0),S(1),S(0))\), the design quantities \(p_n\) (with \(p_n\to p\in(0,1)\)) and the trimming share \(q_n\) (with \(q_n\to q\in[0,1)\)).

\emph{(i) Deconditioning over labels.} All objects are defined after marginalizing over the label \(X\). Under Assumption~\ref{ass:SP-LB}, the relevant unconditional (deconditioned) distributions exist for each \(n\) and the sequences of interest (e.g., \(p_n,\,q_n,\,F_n\)) stabilize.

\emph{(ii) Unconditional laws and moments.}
By Assumption~\ref{ass:SP-LB}, the marginal (over \(X\)) distribution of \((Y^*(1),Y^*(0),S(1),S(0))\) stabilizes, and moments are uniformly integrable on the always–observed set. Hence unconditional moments of \(Y^*(d)\mathbf 1\{S(1)=S(0)=1\}\) converge. The mixture objects used by Lee–IPW are formed after deconditioning over \(X\). Thus, their stabilization follows from these marginal limits together with the stabilized aggregates \(p_n\), \(q_n\), and the assumed convergence \(F_n\Rightarrow F\) for the reweighted treated outcome under \(\overline{\Pr}_n(\cdot\mid D=1,S=1)\). Because treatment shares are uniformly bounded away from \(0\) and \(1\), Lee–IPW weights remain uniformly bounded, so ratios of convergent expectations are well behaved.

\emph{(iii) Weights and trimming share.} Since \(p_n\to p\in(0,1)\) and the within–label shares are uniformly bounded away from \(0\) and \(1\), the IPW weights \(w_{c,g,n}=(1-p_n)/(1-\eta_g)\) and \(w_{q,g,n}=\eta_g(1-p_n)/\{(1-\eta_g)p_n\}\) are uniformly bounded and stable. Also, the reweighted treated outcome \(\widetilde Y_{g,i}:=Y_{g,i}/\eta_g\) is well scaled. The trimming share satisfies
\[
q_n
=\frac{\overline{\Pr}_n(S=1\mid D=1)-\overline{\E}_n[\,S\,w_{q,g,n}\mid D=0\,]}{\overline{\Pr}_n(S=1\mid D=1)}
=1-\frac{\overline{\E}_n[(1-D)S\,w_{q,g,n}]/(1-p_n)}{\overline{\E}_n[DS]/p_n}\,,
\]
and by Assumption~\ref{ass:SP-LB}, \(q_n\to q\in[0,1)\). Under monotonicity, \(\overline{\Pr}_n(S=1\mid D=d)\ge \pi_{AO}>0\) for \(d\in\{0,1\}\), hence \(\overline{\E}_n[DS]/p_n=\overline{\Pr}_n(S=1\mid D=1)\ge\pi_{AO}\) is bounded away from zero; with bounded weights, \(0\le q<1\).

\emph{(iv) Quantiles and trimmed means.}
By Assumption~\ref{ass:SP-LB}, the conditional CDFs \(F_n\) of the reweighted treated outcome \(\widetilde Y:=Y/\eta_g\) under \(\overline{\Pr}_n(\cdot\mid D=1,S=1)\) satisfy \(F_n\Rightarrow F\), and \(F\) is continuous at the \(q\)- and \(1-q\)-quantiles. Therefore, \(\tilde y_{q_n}\to \tilde y_q\) and \(\tilde y_{1-q_n}\to \tilde y_{1-q}\).
Moreover, \(\overline{\Pr}_n(D=1,S=1)\to\tau\in(0,1]\) and \(\sup_n \overline{\E}_n[\,|\widetilde Y|\,\mathbf 1\{D=1,S=1\}]<\infty\).
Hence, the trimmed conditional means on \(\{D=1,S=1\}\) converge by the continuous mapping theorem together with uniform integrability, and since \((1-q_n)\to(1-q)\) and \(\overline{\Pr}_n(D=1,S=1)\to\tau\), it follows that \(L_{1,n}\to L_{1}\) and \(U_{1,n}\to U_{1}\).

\emph{(v) Control part, denominator, and target.}
The reweighted control component
\[
\mu_{0,n}\ =\ \overline{\E}_n\!\big[ (1-D_{g,i})\,S_{g,i}\,Y_{g,i}\,w_{c,g,n} \big]
\]
and the denominator
\[
\pi_{AO,n}\ =\ \overline{\E}_n\!\big[ (1-D_{g,i})\,S_{g,i}\,w_{c,g,n} \big]
\]
both converge, with \(\pi_{AO,n}\) bounded away from zero. Writing
\[
\Delta_{AO,n}
=\frac{\overline{\E}_n\!\big[Y^*(1)\,\mathbf 1\{AO\}\big]-\overline{\E}_n\!\big[Y^*(0)\,\mathbf 1\{AO\}\big]}
{\overline{\E}_n\!\big[\mathbf 1\{AO\}\big]}\,,\qquad AO:=\{S(1)=S(0)=1\},
\]
stabilization and uniform integrability yield convergence of numerator and denominator, and \(\overline{\E}_n\!\big[\mathbf 1\{AO\}\big]\to \pi_{AO}>0\).
Thus \(\Delta_{AO,n}\to \Delta_{AO}\).

\subsection{Consistency Proof in the Label-Based, Heterogeneous-Shares Stratification Setting}

\subsubsection{Auxiliary Lemma for the Consistency Proof}\label{app:aux-lemmas}

\noindent The next lemma adapts Lemma~A.7 in \citet{bai2023efficiency} to the label-based triangular-array setting used here.

\begin{lemma}\label{lem:LB7}
Suppose Assumptions~\ref{ass:TA-LB-PO} and \ref{ass:TA-LB-assign} hold. Then
\[
\big(Y_i^*(1),Y_i^*(0)\big)\ \perp\!\!\!\perp\ D_i\ \big|\ X_i,\ b^{(n)}\,.
\]
\end{lemma}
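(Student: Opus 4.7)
The claim is essentially a direct consequence of how the design is specified in Assumption~\ref{ass:TA-LB-assign}, so the plan is to unpack the two pieces of structure and assemble them.

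First, I would observe that in the label-based setup of Section~\ref{sec:label_setup} the covariate is literally the label, $X_i = b_i$, so $X_i$ is $\sigma(b^{(n)})$-measurable. Consequently the conditioning $\sigma$-algebra $\sigma(X_i,b^{(n)})$ coincides with $\sigma(b^{(n)})$, and it suffices to prove $(Y_i^\ast(1),Y_i^\ast(0)) \perp\!\!\!\perp D_i \mid b^{(n)}$.

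Second, I would invoke the key clause of Assumption~\ref{ass:TA-LB-assign}, namely that, conditional on $b^{(n)}$, the full assignment vector $D^{(n)}=(D_1,\dots,D_n)$ is jointly independent of the full potential-outcome/selection array $\{(Y_j^\ast(1),Y_j^\ast(0),S_j(1),S_j(0))\}_{j=1}^n$. By the standard measure-theoretic fact that marginalization preserves conditional independence, projecting the left side onto the $i$-th coordinate $D_i$ and the right side onto $(Y_i^\ast(1),Y_i^\ast(0))$ yields
\[
D_i \perp\!\!\!\perp (Y_i^\ast(1),Y_i^\ast(0)) \,\big|\, b^{(n)}.
\]
Combining this with the first step gives the stated conclusion.

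The only thing worth double-checking is that Assumption~\ref{ass:TA-LB-PO}, which places potential outcomes i.i.d.\ within label and independent across labels given $b^{(n)}$, is not in tension with Assumption~\ref{ass:TA-LB-assign}: the latter imposes independence between the (fully joint) outcome/selection array and $D^{(n)}$, irrespective of the within-array dependence structure, so both can coexist. I do not anticipate any technical obstacle; the lemma is a direct re-expression of the design assumption, and the proof will essentially consist of the two sentences above plus a short invocation of the marginalization property.
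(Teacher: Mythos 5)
Your proof is correct and follows essentially the same route as the paper: both rest on the clause of Assumption~\ref{ass:TA-LB-assign} that the assignment vector is independent of the potential-outcome array given $b^{(n)}$, together with the observation that conditioning on $(X_i,b^{(n)})$ is the same as conditioning on $b^{(n)}$ since $X_i=b_i$. The paper merely spells this out as an explicit factorization of the joint conditional probability over Borel sets, whereas you invoke the $\sigma$-algebra identity and marginalization of conditional independence; the content is identical.
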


\begin{proof}
All expectations are taken under the joint law at size $n$ (i.e., $P_n(\cdot\mid b^{(n)})$ and the assignment).
Fix $d\in\{0,1\}$ and Borel sets $B\subset\mathbb{R}^{d_Y}\times\mathbb{R}^{d_Y}$ and $C\subset\{1,\dots,G_n\}$. Then
\begin{align*}
&\E_n\!\Big[{\Pr}_n\!\big\{(Y_i^*(1),Y_i^*(0))\in B,\ D_i=d\ \big|\ X_i,b^{(n)}\big\}\,\mathbf 1\{X_i\in C\}\Big] \\
&\quad=\ \E_n\!\Big[{\Pr}_n\!\big\{(Y_i^*(1),Y_i^*(0))\in B\ \big|\ X_i,b^{(n)}\big\}\,
                               {\Pr}_n\!\big\{D_i=d\ \big|\ X_i,b^{(n)}\big\}\,
                               \mathbf 1\{X_i\in C\}\Big],
\end{align*}
where the equality uses Assumption~\ref{ass:TA-LB-assign}, which implies that $D_i$ is independent of the potential outcomes conditional on $b^{(n)}$ (and hence conditional on $(X_i,b^{(n)})$). Under the uniform without-replacement design with fixed totals within labels,
\[
{\Pr}_n\!\{D_i=1\mid X_i=g,b^{(n)}\}=\eta_g\,.
\]
Thus the integrand factorizes into a function of $(X_i,b^{(n)})$ times another function of $(X_i,b^{(n)})$, which is the defining property of
\(
(Y_i^*(1),Y_i^*(0))\perp\!\!\!\perp D_i\mid X_i,b^{(n)}.
\)
\end{proof}

\subsubsection{Proof of Lemma~\ref{lem:LB8}}\label{app:consistency-proof}

\noindent This proof adapts the argument of Lemma~A.8 in \citet{bai2023efficiency} to the label-based triangular-array setting used here.

By Assumption~\ref{ass:C_mod}(a) and Theorem 5.9 in \cite{van2000asymptotic}, it suffices to show for each $1\le s\le d_\theta$,
\begin{equation}\label{eq:LB8-ULLN}
\sup_{\theta\in\Theta}
\Bigl|
\frac{1}{n}\sum_{i=1}^n \bigl(m_s(X_i,D_i,Y_i;\theta)-\E_n[m_s(X_i,D_i,Y_i;\theta)]\bigr)
\Bigr|
\ \xrightarrow{p}\ 0.
\end{equation}
Notice Assumption~\ref{ass:C_mod}(d) and the dominated convergence theorem (under $P_n$ using the envelope in Assumption~\ref{ass:C_mod}(e\textnormal{-}i)) imply that if
$m_s(x,d,y,\theta_m)\to m_s(x,d,y,\theta)$ as $m\to\infty$ for $\{\theta_m\}\subset\Theta^\ast$,
then $\E_n[m_s(X_i,D_i,Y_i;\theta_m)]\to \E_n[m_s(X_i,D_i,Y_i;\theta)]$. Here, the required dominating function exists by Problem 2.4.1 in \cite{vaart1997weak}. 
Then, Assumption~\ref{ass:C_mod}(c) implies
\begin{align}\label{eq:meas-LB}
\sup_{\theta\in\Theta}
\Bigg|
\frac{1}{n}\sum_{i=1}^n\!\Big(m_s(X_i,D_i,Y_i;&\theta)-\E_n[m_s(X_i,D_i,Y_i;\theta)]\Big)
\Bigg| \notag\\
&=
\sup_{\theta\in\Theta^\ast}
\Bigg|
\frac{1}{n}\sum_{i=1}^n\!\Big(m_s(X_i,D_i,Y_i;\theta)-\E_n[m_s(X_i,D_i,Y_i;\theta)]\Big)
\Bigg| ,
\end{align}
which is measurable. Next, note
\begin{equation}\label{eq:decomp-LB}
m(X_i,D_i,Y_i;\theta)
=
D_i\,m(X_i,1,Y_i^*(1);\theta)
+\{1-D_i\}\,m(X_i,0,Y_i^*(0);\theta).
\end{equation}
By Lemma~\ref{lem:LB7},
\begin{align}\label{eq:mix-LB}
\E_n\!\big[m(X_i,D_i,Y_i;\theta)\big]
&= \E_n\!\Big[\E_n\!\big[m(X_i,D_i,Y_i;\theta)\mid X_i\big]\Big]\notag \\[2mm]
&\overset{(\text{by }\eqref{eq:decomp-LB})}{=}
\E_n\!\Big[\E_n\!\big[D_i\,m(X_i,1,Y_i^*(1);\theta)\mid X_i\big]\Big] \notag\\
&\qquad\qquad+\E_n\!\Big[\E_n\!\big[(1-D_i)\,m(X_i,0,Y_i^*(0);\theta)\mid X_i\big]\Big]\notag \\[2mm]
&\overset{\text{(Lemma \ref{lem:LB7})}}{=}
\E_n\!\Big[\eta(X_i)\,\E_n\!\big[m(X_i,1,Y_i^*(1);\theta)\mid X_i\big]\notag\\
&\qquad\qquad+\{1-\eta(X_i)\}\,\E_n\!\big[m(X_i,0,Y_i^*(0);\theta)\mid X_i\big]\Big],
\end{align}
where $\eta(X_i):=\Pr(D_i=1\mid X_i,b^{(n)})=\eta_{b_i}$. 
Notice that $\mu_{d,n}(x,\theta):=\E_n[m_s(X,d,Y^*(d),\theta)\mid X=x]$ is constant within each stratum (depends on $x$ only through the label).

From \eqref{eq:mix-LB}, we obtain
\begin{align*}
&\E_n\!\Bigg[\sup_{\theta\in\Theta}\Bigl|\frac{1}{n}\sum_{i=1}^n
\bigl(m_s(X_i,D_i,Y_i;\theta)-\E_n[m_s(X_i,D_i,Y_i;\theta)]\bigr)\Bigr|\Bigg] \\[1mm]
&=\E_n\!\Bigg[\sup_{\theta\in\Theta^\ast}\Bigl|\frac{1}{n}\sum_{i=1}^n
\bigl(m_s(X_i,D_i,Y_i;\theta)-\E_n[m_s(X_i,D_i,Y_i;\theta)]\bigr)\Bigr|\Bigg] \\[1mm]
&\le \E_n\!\Bigg[\sup_{\theta\in\Theta^\ast}\Bigl|\frac{1}{n}\sum_{i=1}^n
D_i\bigl(m_s(X_i,1,Y_i^*(1),\theta)-\eta(X_i)\mu_{1,n}(X_i,\theta)\bigr)\Bigr|\Bigg] \\
&\quad + \E_n\!\Bigg[\sup_{\theta\in\Theta^\ast}\Bigl|\frac{1}{n}\sum_{i=1}^n
(1-D_i)\bigl(m_s(X_i,0,Y_i^*(0),\theta)-(1-\eta(X_i))\mu_{0,n}(X_i,\theta)\bigr)\Bigr|\Bigg] \\[1mm]
&= \E_n\!\Bigg[\sup_{\theta\in\Theta^\ast}\Bigl|\frac{1}{n}\sum_{i=1}^n
\Big\{D_i\bigl(m_s(X_i,1,Y_i^*(1),\theta)-\mu_{1,n}(X_i,\theta)\bigr)
+(D_i-\eta(X_i))\,\mu_{1,n}(X_i,\theta)\Big\}\Bigr|\Bigg] \\
&\quad + \E_n\!\Bigg[\sup_{\theta\in\Theta^\ast}\Bigl|\frac{1}{n}\sum_{i=1}^n
\Big\{(1-D_i)\bigl(m_s(X_i,0,Y_i^*(0),\theta)-\mu_{0,n}(X_i,\theta)\bigr)
-(D_i-\eta(X_i))\,\mu_{0,n}(X_i,\theta)\Big\}\Bigr|\Bigg] \\[1mm]
&= \E_n\!\Bigg[\sup_{\theta\in\Theta^\ast}\Bigl|\frac{1}{n}\sum_{i=1}^n
D_i\bigl(m_s(X_i,1,Y_i^*(1),\theta)-\mu_{1,n}(X_i,\theta)\bigr)\Bigr|\Bigg] \\
&\quad + \E_n\!\Bigg[\sup_{\theta\in\Theta^\ast}\Bigl|\frac{1}{n}\sum_{i=1}^n
(1-D_i)\bigl(m_s(X_i,0,Y_i^*(0),\theta)-\mu_{0,n}(X_i,\theta)\bigr)\Bigr|\Bigg] \\[-1mm]
&\qquad\text{since }\frac{1}{n}\sum_{i=1}^n (D_i-\eta(X_i))\,\mu_{d,n}(X_i,\theta)
=\frac{1}{n}\sum_{g=1}^{G_n}\mu_{d,n}(g,\theta)\!\sum_{i\in\lambda_{g,n}}\!(D_i-\eta_g)=0 \\[1mm]
&\lesssim \E_n\!\Bigg[\sup_{\theta\in\Theta^\ast}\Bigl|\frac{1}{n}\sum_{i=1}^n
\bigl(m_s(X_i,1,Y_i^*(1),\theta)-\mu_{1,n}(X_i,\theta)\bigr)\Bigr|\Bigg]
\;+\;\\
&\qquad\E_n\!\Bigg[\sup_{\theta\in\Theta^\ast}\Bigl|\frac{1}{n}\sum_{i=1}^n
\bigl(m_s(X_i,0,Y_i^*(0),\theta)-\mu_{0,n}(X_i,\theta)\bigr)\Bigr|\Bigg] \\[1mm]
&\longrightarrow\;0.
\end{align*}
The first equality uses \eqref{eq:meas-LB}. The first inequality is the triangle inequality splitting treated and control parts after inserting the conditional expectations from \eqref{eq:mix-LB}. The next equality adds and subtracts the label-wise means \(\mu_{d,n}(X_i,\theta)\). The following equality uses that these means are constant within labels and that fixed quotas imply \(\sum_{i\in\lambda_{g,n}}(D_i-\eta_g)=0\) for every label \(g\), so all \((D_i-\eta(X_i))\mu_{d,n}(X_i,\theta)\) terms cancel exactly. The \(\lesssim\) step applies Proposition~C.1 of \citet{han2021complex} within each label and then sums over labels, reducing to the corresponding sums without the assignment indicators up to a universal constant. Finally, convergence is delivered by Assumption~\ref{ass:C_mod}(e) and an application of the backward submartingale convergence theorem (e.g., \citet[Theorem~12.30]{le2022measure}), as detailed in \citet[Theorem~3.1]{han2021complex}. Markov’s inequality then yields \eqref{eq:LB8-ULLN}.

\subsection{Asymptotic Distribution Proof in the Label-Based, Heterogeneous-Shares Stratification Setting}

\subsubsection{Auxiliary Lemma for the Asymptotic Distribution Proof}

\noindent The next lemma adapts Lemma~A.6 in \citet{bai2023efficiency} to the label-based triangular-array setting used here.

We begin with two assignment properties used in the lemma. Throughout, \(X_i=b_i\in\{1,\dots,G_n\}=: \mathcal X\) and we condition on the realized labels \(b^{(n)}\).

\begin{assumption}\label{ass:LB-41}
The treatment assignment mechanism is such that for any integrable $\gamma:\mathcal X\to\mathbb R$,
\[
\frac{1}{n}\sum_{i=1}^n D_i\,\gamma(X_i)\ \xrightarrow{P}\ \E\!\big[\eta(X)\,\gamma(X)\big],
\qquad \eta(x):=\Pr(D=1\mid X=x).
\]
\end{assumption}

\begin{assumption}\label{ass:LB-42}
Let $\rho$ be any metric that metrizes weak convergence. The treatment assignment mechanism is such that for any square-integrable $\gamma:\mathcal X\to\mathbb R^{d_\theta}$ with $\E[\gamma(X)]=0$,
\[
\rho\!\left(\ \frac{1}{\sqrt{n}}\sum_{i=1}^n \big(D_i-\eta(X_i)\big)\,\gamma(X_i)\ ,\ \mathcal N\!\big(0,\ V_{\mathrm{imb},\gamma}\big)\ \middle|\ X^{(n)}\right)\ \xrightarrow{P}\ 0,
\]
for some deterministic variance $V_{\mathrm{imb},\gamma}$.
\end{assumption}

Under the heterogeneous-shares, label-based block randomization in Assumptions~\ref{ass:TA-LB-PO} and \ref{ass:TA-LB-assign}, each label $g$ has exactly $T_g=\eta_g N_g$ treated units. For any integrable $\gamma$,
\begin{align*}
\frac{1}{n}\sum_{i=1}^n D_i\,\gamma(b_i)
&= \frac{1}{n}\sum_{g=1}^{G_n}\ \sum_{i\in\lambda_g} D_i\,\gamma(b_i)\\
&= \frac{1}{n}\sum_{g=1}^{G_n} T_g\,\gamma(g)
= \frac{1}{n}\sum_{i=1}^n \eta_{b_i}\,\gamma(b_i).
\end{align*}
Thus
\(
\frac{1}{n}\sum_{i=1}^n D_i\,\gamma(X_i)
=\frac{1}{n}\sum_{i=1}^n \eta(X_i)\,\gamma(X_i)
\)
holds identically. Under Assumption~\ref{ass:SP-LB}, the right-hand side converges to $\E[\eta(X)\gamma(X)]$, so Assumption~\ref{ass:LB-41} holds.

Furthermore, for any square-integrable $\gamma$,
\begin{align*}
\frac{1}{\sqrt{n}}\sum_{i=1}^n \big(D_i-\eta_{b_i}\big)\,\gamma(b_i)
&= \frac{1}{\sqrt{n}}\sum_{g=1}^{G_n}\ \sum_{i\in\lambda_g}\big(D_i-\eta_g\big)\,\gamma(b_i)\\
&= \frac{1}{\sqrt{n}}\sum_{g=1}^{G_n}\gamma(g)\sum_{i\in\lambda_g}\big(D_i-\eta_g\big)\\
&= \frac{1}{\sqrt{n}}\sum_{g=1}^{G_n}\gamma(g)\,\big(T_g-\eta_g N_g\big)\\
&= 0,
\end{align*}
since $T_g=\eta_g N_g$ for all $g$. Hence Assumption~\ref{ass:LB-42} holds with $V_{\mathrm{imb},\gamma}=0$.

\begin{lemma}
\label{lem:LB6}
Suppose the treatment assignment mechanism satisfies Assumptions \ref{ass:TA-LB-assign}, \ref{ass:TA-LB-PO}, \ref{ass:LB-41} and \ref{ass:LB-42}.
Let $f(x,d,y)$ be a vector-valued function such that
$\overline{\E}_n\!\big[f(X,D,Y)\big]=0$ and there exists $\delta>0$ with
$\sup_{n\ge1}\ \max_{d\in\{0,1\}}\ \overline{\E}_n\!\big[\|f(X,d,Y^*(d))\|^{\,2+\delta}\big]\;<\;\infty.$
Then
\[
\frac{1}{\sqrt{n}}\sum_{i=1}^n f(X_i,D_i,Y_i)\ \xRightarrow{d}\ \mathcal N(0,V_f),
\]
where $V_f=V_{3,f}$ with
\begin{align*}
V_{3,f}
&=\overline{\E}_n\!\Big[
\eta(X)\,\Var_n\!\big(f(X,1,Y^*(1))\mid X\big)
+\big(1-\eta(X)\big)\,\Var_n\!\big(f(X,0,Y^*(0))\mid X\big)
\Big].
\end{align*}
\end{lemma}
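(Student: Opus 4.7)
The plan is to adapt the three-term decomposition used in Lemma~\ref{lem:A6-het-cov-correct} to the label-based triangular-array setting, exploiting two simplifications specific to this design: conditioning on $b^{(n)}$ makes the label-wise conditional means $\mu_{d,n}(X_i):=\E_n[f(X_i,d,Y^*(d))\mid X_i]$ deterministic functions of the stratum, and the fixed within-label quotas $T_g=\eta_g N_g$ imply the hard identity $\sum_{i\in\lambda_g}(D_i-\eta_g)=0$ for every $g$. Together these collapse the ``superpopulation'' and ``imbalance'' contributions to zero identically, leaving only the within-label outcome component to drive the CLT.

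The first step is to substitute $Y_i=Y_i(D_i)$, add and subtract $\mu_{d,n}(X_i)$, and write
\[
\frac{1}{\sqrt n}\sum_{i=1}^n f(X_i,D_i,Y_i) \;=\; C_{1,n}+C_{2,n}+C_{3,n},
\]
where $C_{1,n}:=n^{-1/2}\sum_i[\eta_{b_i}\mu_{1,n}(b_i)+(1-\eta_{b_i})\mu_{0,n}(b_i)]$, $C_{2,n}:=n^{-1/2}\sum_i(D_i-\eta_{b_i})[\mu_{1,n}(b_i)-\mu_{0,n}(b_i)]$, and $C_{3,n}$ collects the residuals $D_i[f(X_i,1,Y^*(1))-\mu_{1,n}(X_i)]+(1-D_i)[f(X_i,0,Y^*(0))-\mu_{0,n}(X_i)]$ summed over $i$ and normalized by $\sqrt n$. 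By direct computation $C_{1,n}=\sqrt n\,\overline{\E}_n[f(X,D,Y)]=0$ by the centering hypothesis. For $C_{2,n}$, partition by label and pull out the constant $\Delta_n(g):=\mu_{1,n}(g)-\mu_{0,n}(g)$ to get $C_{2,n}=n^{-1/2}\sum_g\Delta_n(g)\sum_{i\in\lambda_g}(D_i-\eta_g)=0$, because $\sum_{i\in\lambda_g}D_i=T_g=\eta_g N_g$. This is precisely the mechanism that produces $V_{\mathrm{imb},\gamma}=0$ in the label-based design, as already noted in the discussion of Assumption~\ref{ass:LB-42}.

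Next, I would establish $C_{3,n}\Rightarrow\mathcal N(0,V_{3,f})$ by a conditional Lyapunov CLT, conditioning on $(X^{(n)},D^{(n)})$. Given $b^{(n)}$, Assumption~\ref{ass:TA-LB-PO} makes the potential outcomes independent across units (and i.i.d.\ within each label), while Assumption~\ref{ass:TA-LB-assign} makes the assignment independent of the potential outcomes given $b^{(n)}$. Hence the summands $\zeta_i:=D_i[f(X_i,1,Y^*(1))-\mu_{1,n}(X_i)]+(1-D_i)[f(X_i,0,Y^*(0))-\mu_{0,n}(X_i)]$ are conditionally independent and mean zero. Blockwise computation, followed again by $\sum_{i\in\lambda_g}D_i=\eta_g N_g$, gives the exact (non-random) conditional variance
\[
s_{3,n}^2 \;=\; \overline{\E}_n\!\Big[\eta(X)\,\Sigma_{1,n}(X)+(1-\eta(X))\,\Sigma_{0,n}(X)\Big] \;=\; V_{3,f},
\]
where $\Sigma_{d,n}(x):=\Var_n(f(x,d,Y^*(d))\mid X=x)$. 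The uniform $(2+\delta)$-moment bound then verifies Lyapunov's condition, since each $\E[|\zeta_i|^{2+\delta}\mid X^{(n)},D^{(n)}]$ is uniformly bounded, so the Lyapunov ratio is $O(n^{-\delta/2})\to 0$.

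The only genuinely delicate point is the final unconditioning step: I would upgrade the conditional weak convergence of $C_{3,n}$ to unconditional weak convergence via the characteristic-function argument used in the proof of Lemma~\ref{lem:A6-het-cov-correct}. Because $s_{3,n}^2$ is nonrandom and coincides exactly with $V_{3,f}$, the conditional characteristic function converges deterministically to $\exp(-\tfrac12 t'V_{3,f}t)$, and bounded convergence removes the conditioning; if $V_{3,f}$ itself depends on $n$ through $\overline{\E}_n$, the limit is read off under the superpopulation stabilization in Assumption~\ref{ass:SP-LB}. Combining with $C_{1,n}=C_{2,n}=0$ and Slutsky's theorem then yields the claim with $V_f=V_{3,f}$.
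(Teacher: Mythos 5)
Your proof is correct and follows essentially the same route as the paper's: the same $C_{1,n}+C_{2,n}+C_{3,n}$ decomposition, with $C_{1,n}=0$ from conditioning on $b^{(n)}$ and the centering hypothesis, $C_{2,n}\equiv 0$ from the fixed within-label quotas, and a conditional Lyapunov CLT for $C_{3,n}$ followed by a deconditioning step. Your observation that $s_{3,n}^2$ is exactly deterministic (because the conditional variances are label-constant and $\sum_{i\in\lambda_g}D_i=T_g$) slightly streamlines the paper's argument, which instead invokes Assumption~\ref{ass:LB-41} to conclude $s_{3,n}^2-V_{3,f,n}=o_p(1)$; both yield the same conclusion.
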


\begin{proof}
Notice
\[
C_n\;:=\;\frac{1}{\sqrt{n}}\sum_{i=1}^{n} f(X_i,D_i,Y_i)
\;=\; C_{1,n}+C_{2,n}+C_{3,n},
\]
where
\begin{align*}
C_{1,n}
&=\frac{1}{\sqrt{n}}\sum_{i=1}^{n}\E_n\!\big[f(X_i,D_i,Y_i)\mid X_i\big]  \\
&=\frac{1}{\sqrt{n}}\sum_{i=1}^{n}\Big(\eta(X_i)\,\E_n\!\big[f(X_i,1,Y_i^*(1))\mid X_i\big]
+\big(1-\eta(X_i)\big)\,\E_n\!\big[f(X_i,0,Y_i^*(0))\mid X_i\big]\Big), \\[1ex]
C_{2,n}
&=\frac{1}{\sqrt{n}}\sum_{i=1}^{n}\big(D_i-\eta(X_i)\big)\,
\E_n\!\big[f(X_i,1,Y_i^*(1))-f(X_i,0,Y_i^*(0))\mid X_i\big], \\[1ex]
C_{3,n}
&=\frac{1}{\sqrt{n}}\sum_{i=1}^{n}\Big\{
D_i\Big(f(X_i,1,Y_i^*(1))-\E_n\!\big[f(X_i,1,Y_i^*(1))\mid X_i\big]\Big)\\[-0.25ex]
&\hspace{4.5cm}
+\ \big(1-D_i\big)\Big(f(X_i,0,Y_i^*(0))-\E_n\!\big[f(X_i,0,Y_i^*(0))\mid X_i\big]\Big)
\Big\}.
\end{align*}

Note that $C_{1,n}$ has mean zero because
\[
\overline{\E}_n\!\big[\E_n[f(X_i,D_i,Y_i)\mid X_i]\big]
\;=\;\overline{\E}_n\!\big[f(X_i,D_i,Y_i)\big]\;=\;0.
\]
Moreover,
\[
C_{1,n}=\E_n[C_n\mid X^{(n)}],\;
C_{2,n}=\E_n[C_n\mid X^{(n)},D^{(n)}]-\E_n[C_n\mid X^{(n)}],\;
C_{3,n}=C_n-\E_n[C_n\mid X^{(n)},D^{(n)}].
\]
In the present design-based analysis, we condition on $b^{(n)}$, so $X^{(n)}$ is fixed and
\[
C_{1,n}=\E_n[C_n\mid X^{(n)}]=0,
\]
hence $V_{1,f}=0$.

Next, condition on $b^{(n)}$ and sum label by label:
\[
C_{2,n}
=\frac{1}{\sqrt{n}}\sum_{g=1}^{G_n}\ \sum_{i\in\lambda_g}\big(D_i-\eta_g\big)\,
\E_n\!\big[f(X_i,1,Y_i^*(1))-f(X_i,0,Y_i^*(0))\mid X_i\big].
\]
By Assumption~\ref{ass:TA-LB-assign}, the inner conditional expectation is constant within each label $g$, so it factors out of the inner sum. Under fixed within–label quotas, $\sum_{i\in\lambda_g}(D_i-\eta_g)=0$ for every $g$, which is the zero–imbalance case of Assumption~\ref{ass:LB-42} (i.e., $V_{\mathrm{imb},\gamma}=0$ with $\gamma(x)=\E_n[f(x,1,Y^*(1))-f(x,0,Y^*(0))\mid X=x]$). Hence $C_{2,n}\equiv 0$, and thus
\[
\rho\!\left(C_{2,n},\,\mathcal N(0,0)\,\big|\,X^{(n)}\right)\ \xrightarrow{P}\ 0.
\]

For $C_{3,n}$,
\begin{align*}
s_{3,n}^2
&:=\Var_n\!\big[C_{3,n}\mid X^{(n)},D^{(n)}\big] \\
&=\overline{\E}_n\!\Big[
D\,\Var_n\!\big(f(X,1,Y^*(1))\mid X\big)
+(1-D)\,\Var_n\!\big(f(X,0,Y^*(0))\mid X\big)
\Big].
\end{align*}
Taking expectation under the joint design conditional on $X^{(n)}$ and using Assumption~\ref{ass:LB-41} (so that $\E_n[D\mid X]=\eta(X)$) yields
\[
\E_n\!\big[s_{3,n}^2\mid X^{(n)}\big]
=\overline{\E}_n\!\Big[
\eta(X)\,\Var_n\!\big(f(X,1,Y^*(1))\mid X\big)
+\big(1-\eta(X)\big)\,\Var_n\!\big(f(X,0,Y^*(0))\mid X\big)
\Big]
=: V_{3,f,n},
\]
and Assumption~\ref{ass:LB-41} further implies $s_{3,n}^2 - V_{3,f,n}=o_p(1)$. If $V_{3,f,n}\to V_{3,f}$, then $s_{3,n}^2\ \xrightarrow{P}\ V_{3,f}$.
Moreover, writing
\begin{align*}
\zeta_{i,n}
&:=D_i\!\left(f(X_i,1,Y_i^*(1))-\E_n[f(X_i,1,Y_i^*(1))\mid X_i]\right) \\
&\qquad+(1-D_i)\!\left(f(X_i,0,Y_i^*(0))-\E_n[f(X_i,0,Y_i^*(0))\mid X_i]\right),
\end{align*}
we have $C_{3,n}=\frac{1}{\sqrt{n}}\sum_{i=1}^n \zeta_{i,n}$ with $\E_n[\zeta_{i,n}\mid X^{(n)},D^{(n)}]=0$ and conditional
independence across $i$. Let $Z_{i,n}:=\zeta_{i,n}/\sqrt{n}$ and
$s_{3,n}^2=\sum_{i=1}^n \Var_n(Z_{i,n}\mid X^{(n)},D^{(n)})$.
By Jensen’s inequality and the uniform $(2+\delta)$-moment bound on $f(X,d,Y^*(d))$,
\[
\sum_{i=1}^n \E_n\!\big[\|\zeta_{i,n}\|^{2+\delta}\mid X^{(n)},D^{(n)}\big]\ \lesssim\ n,
\]
and thus
\[
\frac{1}{s_{3,n}^{2+\delta}}\sum_{i=1}^n \E_n\!\big[\|Z_{i,n}\|^{2+\delta}\mid X^{(n)},D^{(n)}\big]
=\frac{1}{s_{3,n}^{2+\delta}}\cdot \frac{1}{n^{1+\delta/2}}
\sum_{i=1}^n \E_n\!\big[\|\zeta_{i,n}\|^{2+\delta}\mid X^{(n)},D^{(n)}\big]
\;=\;o_p(1).
\]
Hence, the conditional Lyapunov condition holds. By the conditional Lyapunov CLT (and Cramér–Wold for vector $f$),
\[
\rho\!\left(C_{3,n},\,\mathcal N(0,V_{3,f})\mid X^{(n)},D^{(n)}\right)\ \xrightarrow{P}\ 0.
\]

It remains to pass from the conditional limits to the unconditional limit for $C_n=C_{1,n}+C_{2,n}+C_{3,n}$.
Recall $C_{2,n}\equiv0$. Fix $t\in\mathbb R$ and define the conditional CDF
\(
F_{n,t}(X^{(n)},D^{(n)}) := \Pr\{\,C_{1,n}+C_{3,n}\le t \mid X^{(n)},D^{(n)}\}.
\)
By the conditional Lyapunov CLT proved above,
\(
\rho\!\big(C_{3,n},\mathcal N(0,V_{3,f})\mid X^{(n)},D^{(n)}\big)\to^P 0.
\)
Thus, by a subsequence principle, there exists a subsequence $n_k$ along which
\[
\sup_{t\in\mathbb R}\Big|
\Pr\{\,C_{1,n_k}+C_{3,n_k}\le t \mid X^{(n_k)},D^{(n_k)}\}
-\Phi\!\Big(\frac{t-C_{1,n_k}}{\sqrt{V_{3,f}}}\Big)
\Big|\ \xrightarrow{\text{a.s.}}\ 0.
\]
Taking design-based expectations and using bounded convergence,
\[
\sup_{t\in\mathbb R}\Big|
\Pr\{\,C_{1,n_k}+C_{3,n_k}\le t\}
-\E_n\!\Big[\Phi\!\Big(\frac{t-C_{1,n_k}}{\sqrt{V_{3,f}}}\Big)\Big]
\Big|\ \longrightarrow\ 0.
\]
Under the design-based view we condition on $b^{(n)}$, hence $C_{1,n}=\E_n[C_n\mid X^{(n)}]=0$ almost surely.
Therefore,
\[
\E_n\!\Big[\Phi\!\Big(\frac{t-C_{1,n_k}}{\sqrt{V_{3,f}}}\Big)\Big]
=\Phi\!\Big(\tfrac{t}{\sqrt{V_{3,f}}}\Big),
\]
and so
\[
\sup_{t\in\mathbb R}\Big|
\Pr\{\,C_{1,n_k}+C_{3,n_k}\le t\}
-\Phi\!\Big(\tfrac{t}{\sqrt{V_{3,f}}}\Big)
\Big|\ \longrightarrow\ 0.
\]
Hence $C_{1,n_k}+C_{3,n_k}\Rightarrow \mathcal N(0,V_{3,f})$. Since every subsequence admits a further subsequence with this property, we conclude $C_n\Rightarrow \mathcal N(0,V_{3,f})$.
\end{proof}

\subsubsection{Proof of Theorem \ref{thm:LB-AN}}\label{app:asymptotics-proof}

\noindent This proof adapts the argument of Theorem~3.1 from Appendix A.1 of \citet{bai2023efficiency} to the label-based triangular-array setting used here.

First, for any fixed $a\in\mathbb R^{d_\theta}$, apply Lemma~\ref{lem:LB6} to the scalar array
$a^\top \psi_n^\ast(X_i,D_i,Y_i;\theta_0)$. By the Cramér–Wold device, \eqref{eq:LB-CLT} follows from \eqref{eq:LB-IF}.
In particular, in the Lemma~\ref{lem:LB6} decomposition the “assignment–fluctuation’’ term $C_{2,n}$ is identically zero under fixed within–label quotas, because the inner conditional expectations are label–measurable and $\sum_{i\in\lambda_g}(D_i-\eta_g)=0$ for every label $g$.

To show \eqref{eq:LB-IF}, we first establish the standard linearization
\begin{equation}\label{eq:LB-linear}
\sqrt{n}(\hat\theta_n-\theta_0)
= -\,M_n^{-1}\,\frac{1}{\sqrt{n}}\sum_{i=1}^n m(X_i,D_i,Y_i;\theta_0)+o_p(1).
\end{equation}
By the proof of Theorem 5.21 in \citet{van2000asymptotic}, it suffices to verify
\begin{equation}\label{eq:LB-Ln}
\mathbb L_n(\hat\theta_n)\ \xrightarrow{p}\ 0,
\qquad
\mathbb L_n(\theta):=\big(\mathbb L_n^{(1)}(\theta),\ldots,\mathbb L_n^{(d_\theta)}(\theta)\big)^\prime,
\end{equation}
where, for $1\le s\le d_\theta$,
\begin{align*}
\mathbb L_n^{(s)}(\theta)
=
\frac{1}{\sqrt{n}}\sum_{i=1}^n\!\big(m_s(X_i,D_i,Y_i;\theta)-&\E_n[m_s(X_i,D_i,Y_i;\theta)]\big) \\
&-
\frac{1}{\sqrt{n}}\sum_{i=1}^n\!\big(m_s(X_i,D_i,Y_i;\theta_0)-\E_n[m_s(X_i,D_i,Y_i;\theta_0)]\big).
\end{align*}
By Assumption~\ref{ass:C_mod}(c)–(d), Proposition 8.11 in \citet{kosorok2008introduction}, and the measurability argument used for \eqref{eq:meas-LB},
\[
\sup_{\theta\in\Theta:\ \|\theta-\theta_0\|<\delta}\,
\big|\mathbb L_n^{(s)}(\theta)\big|
=\sup_{\theta\in\Theta^\ast:\ \|\theta-\theta_0\|<\delta}\,
\big|\mathbb L_n^{(s)}(\theta)\big|.
\]
Since $\hat\theta_n \xrightarrow{p}\theta_0$ by Lemma~\ref{lem:LB8}, to show \eqref{eq:LB-Ln} it suffices that for every $\epsilon>0$ and every sequence $\delta_n\downarrow 0$ (p.89 of \citet{vaart1997weak}),
\begin{equation}\label{eq:LB-26}
\lim_{n\to\infty}\Pr\!\left(\ \sup_{\theta\in\Theta^\ast:\ \|\theta-\theta_0\|<\delta_n}
\big|\mathbb L_n^{(s)}(\theta)\big|>\epsilon\ \right)=0.
\end{equation}
Decompose $\mathbb L_n^{(s)}(\theta)=\mathbb L_{n,1}^{(s)}(\theta)+\mathbb L_{n,0}^{(s)}(\theta)$, where
\[
\begin{aligned}
\mathbb L_{n,1}^{(s)}(\theta)
&=\frac{1}{\sqrt{n}}\sum_{i=1}^n
D_i\Big(m_s(X_i,1,Y_i^*(1);\theta)-m_s(X_i,1,Y_i^*(1);\theta_0)\Big) \\
&\qquad\qquad-\frac{1}{\sqrt{n}}\sum_{i=1}^n
\E_n\!\Big[m_s(X_i,1,Y_i^*(1);\theta)-m_s(X_i,1,Y_i^*(1);\theta_0)\Big],\\[1mm]
\mathbb L_{n,0}^{(s)}(\theta)
&=\frac{1}{\sqrt{n}}\sum_{i=1}^n
(1-D_i)\Big(m_s(X_i,0,Y_i^*(0);\theta)-m_s(X_i,0,Y_i^*(0);\theta_0)\Big) \\
&\qquad\qquad-\frac{1}{\sqrt{n}}\sum_{i=1}^n
\E_n\!\Big[m_s(X_i,0,Y_i^*(0);\theta)-m_s(X_i,0,Y_i^*(0);\theta_0)\Big],
\end{aligned}
\]
For $d\in\{0,1\}$, define
\begin{align*}
\rho_{Q_n}(\theta,\theta_0)
:=\Bigg\{\overline{\E}_n\!\Big[\Big(m_s(X,d,Y^*(d);\theta)&-m_s(X,d,Y^*(d);\theta_0) \\
&-\overline{\E}_n\!\big[m_s(X,d,Y^*(d);\theta)-m_s(X,d,Y^*(d);\theta_0)\big]\Big)^2\Big]\Bigg\}^{1/2}.
\end{align*}
Then, by assumption~\ref{ass:C_mod}(c),
\begin{align*}
\rho_{Q_n}^2(\theta,\theta_0)
&= \Var_{\overline{P}_n}\!\big(m_s(X,d,Y^*(d);\theta)-m_s(X,d,Y^*(d);\theta_0)\big) \\
&\le \ \overline{\E}_n\!\Big[\big(m_s(X,d,Y^*(d);\theta)-m_s(X,d,Y^*(d);\theta_0)\big)^2\Big]\to0,
\end{align*}
and so $\rho_{Q_n}(\theta,\theta_0)\to0$ as $\|\theta-\theta_0\|\to0$, meaning $\rho_{Q_n}(\theta,\theta_0)$ is continuous in $\theta$. \\
Fix any sequence $\tilde\delta_n\downarrow0$. For each $n$, there exists $n'$ such that
\[
\{\theta\in\Theta^\ast:\ \|\theta-\theta_0\|<\delta_{n'}\}
\subset
\{\theta\in\Theta^\ast:\ \rho_{Q_n}(\theta,\theta_0)<\tilde{\delta}_n\}.
\]
By Proposition C.1 in \citet{han2021complex}, applied label-wise to the within-label assignment (as in Lemma~\ref{lem:LB8}),
\begin{align*}
\E_n\!\left[
  \sup_{\rho_{Q_n}(\theta,\theta_0)<\tilde\delta_n}
  \big|\mathbb L^{(s)}_{n,d}(\theta)\big|
\right]
&\ \lesssim\
\E_n\!\left[
  \sup_{\rho_{Q_n}(\theta,\theta_0)<\tilde\delta_n}
  \left|
    \frac{1}{\sqrt{n}}\sum_{i=1}^n
    \Big(
      m_s\!\big(X_i,d,Y_i^*(d);\theta\big)-m_s\!\big(X_i,d,Y_i^*(d);\theta_0\big) \right.\right.\\
&\hspace{3.5em}\left.\left.
      \ -\E_n\!\big[m_s\!\big(X_i,d,Y_i^*(d);\theta\big)-m_s\!\big(X_i,d,Y_i^*(d);\theta_0\big)\big]
    \Big)
  \right|
\right]
 \to\ 0\,,
\end{align*}
where the convergence follows from Assumption~\ref{ass:C_mod}(e) and Corollary 2.3.12 in \citet{vaart1997weak}.
Markov’s inequality then gives \eqref{eq:LB-26}, hence \eqref{eq:LB-Ln}, and thus \eqref{eq:LB-linear}.
Assumption~\ref{ass:C_mod}(b) implies $M_n\to M$ (nonsingular), so $M_n^{-1}=M^{-1}+o_p(1)$.

It remains to pass from \eqref{eq:LB-linear} to \eqref{eq:LB-IF}. Let
\begin{align*}
m^{\ast}(X_i,D_i,Y_i;\theta_0)
&:= \eta(X_i)\,\E_n\!\big[m(X_i,1,Y_i^*(1);\theta_0)\mid X_i\big] \\
   &\quad+\{1-\eta(X_i)\}\,\E_n\!\big[m(X_i,0,Y_i^*(0);\theta_0)\mid X_i\big]\\
&\quad
+\mathbf 1\{D_i=1\}\Big(m(X_i,1,Y_i^*(1);\theta_0)
   -\E_n\!\big[m(X_i,1,Y_i^*(1);\theta_0)\mid X_i\big]\Big)\\
&\quad
+\mathbf 1\{D_i=0\}\Big(m(X_i,0,Y_i^*(0);\theta_0)
   -\E_n\!\big[m(X_i,0,Y_i^*(0);\theta_0)\mid X_i\big]\Big).
\end{align*}
Since
$Y_i=D_iY_i^*(1)+(1-D_i)Y_i^*(0)$, adding and subtracting conditional means given $X_i$ yields
\[
\frac{1}{\sqrt{n}}\sum_{i=1}^n m(X_i,D_i,Y_i;\theta_0)
=
\frac{1}{\sqrt{n}}\sum_{i=1}^n m^{\ast}(X_i,D_i,Y_i;\theta_0)
+\frac{1}{\sqrt{n}}\sum_{i=1}^n \big(D_i-\eta(X_i)\big)\,\Omega(X_i),
\]
where \(\Omega(x):=\E_n[m(x,1,Y^*(1);\theta_0)-m(x,0,Y^*(0);\theta_0)\mid X=x]\).
In the label-based design \(X\) is the label, so \(\Omega(X_i)\) is constant within each label:
for each \(g\) and all \(i\in\lambda_{g,n}\), \(\Omega^{(\ell)}(X_i)=\Omega^{(\ell)}_{g,n}\).
With fixed within–label quotas, \(\sum_{i\in\lambda_{g,n}}(D_i-\eta_{g,n})=0\). Hence, for each
\(1\le \ell\le d_\theta\),
\[
\frac{1}{\sqrt{n}}\sum_{i=1}^n (D_i-\eta(X_i))\,\Omega^{(\ell)}(X_i)
=\frac{1}{\sqrt{n}}\sum_{g=1}^{G_n}\Omega^{(\ell)}_{g,n}
\sum_{i\in\lambda_{g,n}}(D_i-\eta_{g,n})
=0 \quad\text{(conditionally on }X^{(n)}\text{)}.
\]
Collecting components,
\[
\frac{1}{\sqrt{n}}\sum_{i=1}^n (D_i-\eta(X_i))\,\Omega(X_i)\equiv 0,
\]
and therefore
\[
-\,M_n^{-1}\,\frac{1}{\sqrt{n}}\sum_{i=1}^n m(X_i,D_i,Y_i;\theta_0)
=
-\,M_n^{-1}\,\frac{1}{\sqrt{n}}\sum_{i=1}^n m^{\ast}(X_i,D_i,Y_i;\theta_0).
\]
Combining this identity with \eqref{eq:LB-linear} and \(M_n^{-1}=M^{-1}+o_p(1)\) yields \eqref{eq:LB-IF}.$\quad\blacksquare$

\subsection{Consistency of the Variance Estimator in the Label-Based, Heterogeneous-Shares Stratification Setting}
\label{app:var_consistency_label}

\noindent This subsection adapts the variance-consistency arguments in \citet[Appendix C, Lemmas C.1--C.3]{bai2024inference} to the label-based triangular-array stratified design and notation used here. The next three lemmas restate their moment and cross-moment limits under our block-randomization setup, and the final argument combines these limits to show consistency of the plug-in variance estimator.

\subsubsection{LLN for Label-Based Treatment Moments}

\begin{lemma}[LLN for Label-Based Treatment Moments]
\label{lem:C1-LB}
Fix $d\in\{0,1\}$ and $r\in\{1,2\}$. Under Assumptions~\ref{ass:TA-LB-PO}, \ref{ass:TA-LB-assign}, and \ref{ass:SP-LB},
\[
\frac{1}{n}\sum_{i=1}^n Y_i^{\,r}(d)\,\mathbf 1\{D_i=d\}
\ \xrightarrow{P}\ 
\E\!\big[\eta_d(X)\,Y^{\,r}(d)\big],
\qquad
\eta_1(x)=\eta(x),\ \eta_0(x)=1-\eta(x).
\]
\end{lemma}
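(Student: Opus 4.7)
The plan is to condition on the realized labels $b^{(n)}$ and the potential outcomes $\{Y_i^*(d)\}_{i=1}^n$ and then decompose the target into a conditional-mean piece and a design-fluctuation piece:
\[
\frac{1}{n}\sum_{i=1}^n Y_i^r(d)\,\mathbf 1\{D_i=d\}
\;=\;
\underbrace{\frac{1}{n}\sum_{i=1}^n \eta_d(X_i)\,Y_i^r(d)}_{A_n}
\;+\;
\underbrace{\frac{1}{n}\sum_{i=1}^n \bigl(\mathbf 1\{D_i=d\}-\eta_d(X_i)\bigr)\,Y_i^r(d)}_{B_n},
\]
where $\eta_d(X_i)=\eta_g$ when $d=1$ and $1-\eta_g$ when $d=0$, for $i\in\lambda_g$. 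I would then show $A_n\xrightarrow{P}\E[\eta_d(X)Y^r(d)]$ and $B_n\xrightarrow{P}0$, and conclude by Slutsky.

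For $A_n$, conditional on $b^{(n)}$ the summands are independent across units (by Assumption~\ref{ass:TA-LB-PO}) and nonidentically distributed only through the label-specific law $P_{g,n}$. The weights $\eta_d(X_i)\in[\underline\eta,1-\underline\eta]$ are uniformly bounded by Assumption~\ref{ass:TA-LB-assign}, and the moment stabilization in Assumption~\ref{ass:SP-LB} ensures $\overline{\E}_n[\eta_d(X)Y^r(d)]\to \E[\eta_d(X)Y^r(d)]$ together with uniform integrability of the summands. A triangular-array weak LLN (Chebyshev applied to the centered sum, exploiting independence and the uniform second-moment bound from Assumption~\ref{ass:SP-LB}) then yields $A_n-\overline{\E}_n[\eta_d(X)Y^r(d)]\xrightarrow{P}0$, and hence convergence of $A_n$ to the claimed limit.

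For $B_n$, I would use the block structure. Conditional on $b^{(n)}$ and $Y^{(n)}(d)$, the assignment vectors are independent across blocks and within each block $\lambda_g$ the assignment is complete randomization with exactly $T_g$ treated units. A direct computation using the first and second moments of the within-block indicators (which carry the negative correlation $-\eta_g(1-\eta_g)/(N_g-1)$) gives
\[
\Var\!\Bigl(\,\sum_{i\in\lambda_g}(\mathbf 1\{D_i=d\}-\eta_{d,g})\,Y_i^r(d)\ \Bigm|\ b^{(n)},Y^{(n)}(d)\Bigr)
\;=\;\frac{\eta_g(1-\eta_g)\,N_g^2}{N_g-1}\,s_{g,d}^2,
\]
with $s_{g,d}^2=N_g^{-1}\sum_{i\in\lambda_g}(Y_i^r(d)-\bar Y_g^r(d))^2$. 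Using $N_g^2/(N_g-1)\le 2N_g$, independence across blocks, and $s_{g,d}^2\le N_g^{-1}\sum_{i\in\lambda_g}Y_i^{2r}(d)$, I would bound the total conditional variance by
\[
\Var(B_n\mid b^{(n)},Y^{(n)}(d))
\;\le\;\frac{C}{n}\cdot\frac{1}{n}\sum_{i=1}^n Y_i^{2r}(d),
\]
for an absolute constant $C$. Assumption~\ref{ass:SP-LB} controls $n^{-1}\sum_i Y_i^{2r}(d)=O_p(1)$, so $\Var(B_n\mid\cdot)=o_p(1)$ and a conditional Chebyshev argument yields $B_n\xrightarrow{P}0$.

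The main obstacle is managing the triangular-array stabilization in $A_n$, since the label laws $P_{g,n}$ are only assumed to stabilize in the marginal sense of Assumption~\ref{ass:SP-LB}; one must verify that the mixture $\overline{\E}_n[\eta_d(X)Y^r(d)]$ really tends to $\E[\eta_d(X)Y^r(d)]$ and that the second-moment control required for Chebyshev holds uniformly in $n$. A secondary issue, for $r=2$, is ensuring a uniform bound on the fourth-moment average $n^{-1}\sum_i Y_i^4(d)$; if Assumption~\ref{ass:SP-LB}'s uniform integrability is understood to include the relevant higher moment (or a standard truncation plus tail argument is applied), the fluctuation bound goes through unchanged.
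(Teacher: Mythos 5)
Your proof is correct and follows essentially the same route as the paper: the identical decomposition into a conditional-mean term (handled by the triangular-array LLN from Assumption~\ref{ass:SP-LB}) and a design-fluctuation term (bounded via the within-label without-replacement variance formula and conditional Chebyshev). Your closing remark about needing control of $n^{-1}\sum_i Y_i^{2r}(d)$ for $r=2$ is well taken — the paper's own proof invokes $\sup_n\overline{\E}_n[Z_i^2]<\infty$ with $Z_i=Y_i^r(d)$, which for $r=2$ likewise presupposes a uniform fourth-moment bound beyond what Assumption~\ref{ass:SP-LB} states explicitly.
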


\begin{proof}
Let $Z_i:=Y_i^{\,r}(d)$ and $\eta_{d,i}:=\eta_d(X_i)$. Decompose
\begin{align}
\frac{1}{n}\sum_{i=1}^n Z_i\,\mathbf 1\{D_i=d\}
&=\frac{1}{n}\sum_{i=1}^n\!\Big(Z_i\mathbf 1\{D_i=d\}-\E_n[Z_i\mathbf 1\{D_i=d\}\mid X^{(n)},Y^{(n)}(d)]\Big) \notag\\
&\qquad+\frac{1}{n}\sum_{i=1}^n \E_n[Z_i\mathbf 1\{D_i=d\}\mid X^{(n)},Y^{(n)}(d)] \notag\\
&=:A_n+B_n. \label{eq:C1-decomp}
\end{align}

By Assumptions~\ref{ass:TA-LB-PO}–\ref{ass:TA-LB-assign}, $\E_n[Z_i\mathbf 1\{D_i=d\}\mid X^{(n)},Y^{(n)}(d)]=\eta_{d,i}\,Z_i$, so
\[
B_n=\frac{1}{n}\sum_{i=1}^n \eta_{d,i}\,Z_i \ \xrightarrow{P}\ \E[\eta_d(X)\,Y^{\,r}(d)]
\]
by Assumption~\ref{ass:SP-LB} (triangular–array LLN/integrability).

For $A_n$, condition on $b^{(n)}$ so $X^{(n)}$ is fixed and write
\[
A_n \;=\; \frac{1}{n}\sum_{g=1}^{G_n}\ \sum_{i\in\lambda_g}\Big(Z_i\mathbf 1\{D_i=d\}-\eta_{d,g}Z_i\Big),
\qquad
\eta_{d,g}:=\eta_d(X_i)\ \text{ for }i\in\lambda_g.
\]
Let $\overline Z_g:=N_g^{-1}\sum_{i\in\lambda_g}Z_i$. Under fixed quotas $T_{g,d}=\eta_{d,g}N_g$ and simple random sampling without replacement within label $g$, for $N_g\ge2$,
\[
\Var_n\!\Big(\sum_{i\in\lambda_g}\!\big(Z_i\mathbf 1\{D_i=d\}-\eta_{d,g}Z_i\big)\,\Big|\,X^{(n)},Y^{(n)}(d)\Big)
= \eta_{d,g}(1-\eta_{d,g})\frac{N_g}{N_g-1}\sum_{i\in\lambda_g}(Z_i-\overline Z_g)^2,
\]
while for $N_g=1$ the variance is $0$. Hence, for a universal constant $C<\infty$,
\[
\Var_n\!\Big(\sum_{i\in\lambda_g}\!\big(Z_i\mathbf 1\{D_i=d\}-\eta_{d,g}Z_i\big)\,\Big|\,X^{(n)},Y^{(n)}(d)\Big)
\ \le\ C\sum_{i\in\lambda_g}(Z_i-\overline Z_g)^2.
\]
Therefore,
\[
\Var_n(A_n\mid X^{(n)},Y^{(n)}(d))
\le \frac{C}{n^2}\sum_{g=1}^{G_n}\sum_{i\in\lambda_g}(Z_i-\overline Z_g)^2.
\]
Taking design expectations and using $(a-b)^2\le 2a^2+2b^2$,
\[
\E_n\!\big[\Var_n(A_n\mid X^{(n)},Y^{(n)}(d))\big]
\ \le\ \frac{2C}{n^2}\,\E_n\!\Big[\sum_{i=1}^n Z_i^2\Big]
=\frac{2C}{n}\,\overline{\E}_n[Z_i^2]
\ \to\ 0,
\]
since $\sup_n \overline{\E}_n[Z_i^2]<\infty$ by Assumption~\ref{ass:SP-LB}. By Chebyshev, $A_n=o_P(1)$.
Combining with \eqref{eq:C1-decomp} completes the proof.
\end{proof}

\subsubsection{Consistency of Cross-Treatment Block Products}

\begin{lemma}[Consistency of Cross-Treatment Block Products]
\label{lem:LB-C2}
Suppose Assumptions~\ref{ass:TA-LB-PO}--\ref{ass:TA-LB-assign} and \ref{ass:SP-LB} hold (including fixed within–label quotas with at least two treated and two control units per label), the moment boundedness in Assumption~\ref{ass:C_mod}(e) holds, and we are in the many–small–strata regime $G_n\to\infty$ with $\max_g N_g/n\to0$.
Fix $d\neq d'$. For each label $g$, define
\[
\hat\rho_g(d,d')
\ :=\
\bigg(\frac{1}{T_g(d)}\sum_{i\in\lambda_g} Y_i\,\mathbf 1\{D_i=d\}\bigg)
\bigg(\frac{1}{T_g(d')}\sum_{k\in\lambda_g} Y_k\,\mathbf 1\{D_k=d'\}\bigg),
\]
and the size–weighted average
\[
\hat\rho_n(d,d')\ :=\ \frac{1}{n}\sum_{g=1}^{G_n} N_g\,\hat\rho_g(d,d')\, .
\]
Then
\[
\hat\rho_n(d,d')\ \xrightarrow{P}\ \E\!\big[\Gamma_d(X)\Gamma_{d'}(X)\big],
\qquad \Gamma_d(x):=\E\!\big[Y^*(d)\mid X=x\big].
\]
\end{lemma}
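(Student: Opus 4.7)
The plan is to condition on $b^{(n)}$ (equivalently on $X^{(n)}$) and decompose
\[
\hat\rho_n(d,d')
\;=\; \E_n\!\bigl[\hat\rho_n(d,d')\mid X^{(n)}\bigr]
\;+\; \bigl(\hat\rho_n(d,d')-\E_n\!\bigl[\hat\rho_n(d,d')\mid X^{(n)}\bigr]\bigr).
\]
I will show that the first piece equals $\overline{\E}_n[\Gamma_{d,X_i,n}\Gamma_{d',X_i,n}]$ and converges in probability to $\E[\Gamma_d(X)\Gamma_{d'}(X)]$ under Assumption~\ref{ass:SP-LB}, while the mean-zero fluctuation is controlled by Chebyshev using between-label independence together with the many-small-strata regime in Assumption~\ref{ass:TA-LB-growth}.

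For the conditional mean, within each label $g$, I would use that treatment is independent of potential outcomes given $b^{(n)}$ (Assumption~\ref{ass:TA-LB-assign}) and that $Y_i=Y_i(d)$ on $\{D_i=d\}$. Since $d\neq d'$, the diagonal terms $i=k$ vanish automatically, and for $i\neq k\in\lambda_g$ the uniform without-replacement design yields the hypergeometric cross-arm probability $\Pr(D_i=d,D_k=d'\mid X^{(n)})=T_g(d)T_g(d')/[N_g(N_g-1)]$. These probabilities cancel the normalizing factor $1/[T_g(d)T_g(d')]$ in $\hat\rho_g$, leaving
\[
\E_n\!\bigl[\hat\rho_g(d,d')\mid X^{(n)}\bigr]
\;=\; \frac{1}{N_g(N_g-1)}\sum_{\substack{i\neq k\\ i,k\in\lambda_g}}\!\E_n[Y_i(d)Y_k(d')\mid X^{(n)}].
\]
Within-label cross-unit independence (Assumption~\ref{ass:TA-LB-PO}) then factorizes $\E_n[Y_i(d)Y_k(d')\mid X^{(n)}]=\Gamma_{d,g,n}\Gamma_{d',g,n}$, so $\E_n[\hat\rho_g(d,d')\mid X^{(n)}]=\Gamma_{d,g,n}\Gamma_{d',g,n}$; aggregating with weights $N_g/n$ yields $\E_n[\hat\rho_n(d,d')\mid X^{(n)}]=\overline{\E}_n[\Gamma_{d,X_i,n}\Gamma_{d',X_i,n}]$, and convergence to $\E[\Gamma_d(X)\Gamma_{d'}(X)]$ follows from the stabilization and uniform-integrability content of Assumption~\ref{ass:SP-LB}.

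For the fluctuation, both the potential outcomes and the assignment vectors are independent across labels given $X^{(n)}$ (Assumptions~\ref{ass:TA-LB-PO}--\ref{ass:TA-LB-assign}), so $\{\hat\rho_g(d,d')\}_{g=1}^{G_n}$ are conditionally independent, giving
\[
\Var_n\!\bigl(\hat\rho_n(d,d')\mid X^{(n)}\bigr)
\;=\;\frac{1}{n^2}\sum_{g=1}^{G_n} N_g^2\,\Var_n\!\bigl(\hat\rho_g(d,d')\mid X^{(n)}\bigr).
\]
A Jensen bound on each within-arm fourth moment of the form $\E_n[\bar Y^D_g(d)^4\mid X^{(n)}]\le \eta_{d,g}^{-1}\E_{P_{g,n}}[Y(d)^4]$, together with Cauchy--Schwarz across the two arms and the boundedness of $\eta_g\in[\underline\eta,\overline\eta]$ in Assumption~\ref{ass:TA-LB-assign}, then gives $\sup_{n,g}\Var_n(\hat\rho_g(d,d')\mid X^{(n)})<\infty$. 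Since $\sum_g N_g^2/n^2\le (\max_g N_g)/n\to 0$ by Assumption~\ref{ass:TA-LB-growth}, the conditional variance vanishes, and Chebyshev together with the tower property closes the fluctuation term. The main obstacle will be securing a uniform fourth-moment bound on $Y^*(d)$, which goes one moment beyond the square-integrability content of Assumption~\ref{ass:SP-LB}; I would either record this as a mild strengthening of that assumption (implicit in the Donsker content of Assumption~\ref{ass:C_mod}(e)(ii) when the lemma is applied to the moment class underlying $\widehat V_\ast$), or circumvent it by a standard truncation argument that splits each $Y_i(d)$ at a level $M\uparrow\infty$ and controls the tail by uniform integrability.
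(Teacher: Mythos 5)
Your proposal matches the paper's proof: the same conditioning on $b^{(n)}$, the same hypergeometric cross-arm probability $T_g(d)T_g(d')/[N_g(N_g-1)]$ cancelling the normalizers $1/[T_g(d)T_g(d')]$, the same factorization $\E_n[Y_i^*(d)Y_k^*(d')\mid X^{(n)}]=\Gamma_d(X_i)\Gamma_{d'}(X_k)$ via within-label cross-unit independence (with $\Gamma_d$ label-constant so the block mean reduces to $\overline\Gamma_{d,g}\overline\Gamma_{d',g}$), and the same conditional-independence-across-labels control of the fluctuation under $\max_g N_g/n\to 0$. The only divergence is in how the fluctuation is closed: the paper invokes a conditional WLLN for triangular arrays resting on the second-moment bounds of Assumption~\ref{ass:C_mod}(e) applied to the within-label arm means (a uniform-integrability route), whereas your default Chebyshev bound needs fourth moments of $Y^*(d)$ that the stated assumptions do not supply --- you correctly flag this, and your truncation fallback is precisely the device that brings your argument in line with the paper's.
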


\begin{proof}
Condition on $b^{(n)}$, so $X^{(n)}$ and the label memberships are fixed. By consistency $Y_i=Y_i^*(D_i)$. Within a label $g$, under fixed quotas and simple random sampling without replacement,
\[
\E_n\!\left[\frac{1}{T_g(d)}\sum_{i\in\lambda_g} Y_i\,\mathbf 1\{D_i=d\}\,\Big|\,X^{(n)},Y^{(n)}(d)\right]
=\frac{1}{T_g(d)}\sum_{i\in\lambda_g} Y_i^*(d)\,\E_n\!\big[\mathbf 1\{D_i=d\}\mid X^{(n)}\big]
=\overline{Y^*_{d,g}},
\]
where $\overline{Y^*_{d,g}}:=N_g^{-1}\sum_{i\in\lambda_g}Y_i^*(d)$ and \(\E_n[\mathbf 1\{D_i=d\}\mid X^{(n)}]=T_g(d)/N_g\).
Taking the superpopulation expectation conditional on $X^{(n)}$,
\[
\E\!\Big[\ \E_n\!\Big[\frac{1}{T_g(d)}\sum_{i\in\lambda_g} Y_i\,\mathbf 1\{D_i=d\}\,\Big|\,X^{(n)},Y^{(n)}(d)\Big]\ \Big|\ X^{(n)}\Big]
=\overline\Gamma_{d,g},
\quad
\overline\Gamma_{d,g}:=\frac{1}{N_g}\sum_{i\in\lambda_g}\Gamma_d(X_i).
\]
Similarly,
\[
\E\!\Big[\ \E_n\!\Big[\frac{1}{T_g(d')}\sum_{k\in\lambda_g} Y_k\,\mathbf 1\{D_k=d'\}\,\Big|\,X^{(n)},Y^{(n)}(d')\Big]\ \Big|\ X^{(n)}\Big]
=\overline\Gamma_{d',g}.
\]

For the product, since $d\neq d'$ only cross–unit terms contribute. Under fixed quotas,
\[
\E_n\!\big[\mathbf 1\{D_i=d\}\mathbf 1\{D_k=d'\}\mid X^{(n)}\big]
=\frac{T_g(d)\,T_g(d')}{N_g(N_g-1)}\qquad (i\neq k,\ i,k\in\lambda_g).
\]
Conditioning also on $Y^{(n)}(d),Y^{(n)}(d')$,
\[
\E_n\!\big[Y_i^*(d)Y_k^*(d')\,\mathbf 1\{D_i=d\}\mathbf 1\{D_k=d'\}\mid X^{(n)},Y^{(n)}(d),Y^{(n)}(d')\big]
= Y_i^*(d)\,Y_k^*(d')\,\frac{T_g(d)\,T_g(d')}{N_g(N_g-1)}.
\]
Taking the superpopulation expectation conditional on $X^{(n)}$ and using conditional independence of potential outcomes across distinct units given labels (Assumption~\ref{ass:TA-LB-PO}), for $i\neq k$ with $i,k\in\lambda_g$,
\[
\E\!\big[Y_i^*(d)Y_k^*(d')\mid X^{(n)}\big]
=\Gamma_d(X_i)\,\Gamma_{d'}(X_k).
\]
Hence
\[
\E\!\Big[\ \E_n\!\big[\hat\rho_g(d,d')\mid X^{(n)},Y^{(n)}(d),Y^{(n)}(d')\big]\ \Big|\ X^{(n)}\Big]
=\frac{1}{N_g(N_g-1)}\!\sum_{\substack{i\neq k\\ i,k\in\lambda_g}}\Gamma_d(X_i)\Gamma_{d'}(X_k).
\]
In the label-based setup $X_i=b_g$ is label-constant, so $\Gamma_d(X_i)=\Gamma_d(b_g)$ and $\Gamma_{d'}(X_k)=\Gamma_{d'}(b_g)$ for all $i,k\in\lambda_g$, and the last display equals
\(
\overline\Gamma_{d,g}\,\overline\Gamma_{d',g}.
\)

Aggregating across labels and using $\sum_g N_g=n$,
\begin{align*}
\E\!\Big[\ \E_n\!\big[\hat\rho_n(d,d')\mid X^{(n)},Y^{(n)}(d),Y^{(n)}(d')\big]\ \Big|\ X^{(n)}\Big]
&=\frac{1}{n}\sum_{g=1}^{G_n} N_g\,\overline\Gamma_{d,g}\,\overline\Gamma_{d',g} \\
&=\frac{1}{n}\sum_{i=1}^n \Gamma_d(X_i)\Gamma_{d'}(X_i)
\\ &\xrightarrow{P}\ \E\!\big[\Gamma_d(X)\Gamma_{d'}(X)\big],
\end{align*}
by Assumption~\ref{ass:SP-LB} (triangular–array LLN for label functions with the moment bound in \ref{ass:C_mod}(e)).

For fluctuations, let $\tilde\rho_g:=\hat\rho_g-\E_n[\hat\rho_g\mid X^{(n)},Y^{(n)}(d),Y^{(n)}(d')]$ and
\[
\Delta_n:=\hat\rho_n-\E_n[\hat\rho_n\mid X^{(n)},Y^{(n)}(d),Y^{(n)}(d')]=\frac{1}{n}\sum_{g=1}^{G_n} N_g\,\tilde\rho_g.
\]
By Assumption~\ref{ass:TA-LB-assign}, assignments are independent across labels conditional on $X^{(n)}$ (with potential outcomes fixed in this conditional view), so $\{\tilde\rho_g\}$ are conditionally independent and mean zero. By \ref{ass:C_mod}(e), the within–label treated/control means have uniformly bounded second moments (the “at least two per arm” quota ensures denominators $T_g(d),T_g(d')$ are bounded away from zero); with $\max_g N_g/n\to0$, a conditional WLLN for triangular arrays gives $\Delta_n\to^P 0$. Combining with the convergence of the conditional mean yields the claim.
\end{proof}

\subsubsection{Consistency of Within-Treatment Second Moments}

\begin{lemma}[Consistency of Within-Treatment Second Moments]
\label{lem:C3}
Fix $d\in\{0,1\}$. Suppose Assumptions~\ref{ass:TA-LB-PO}--\ref{ass:TA-LB-assign}, \ref{ass:SP-LB}, and \ref{ass:C_mod}(e) hold, the many–small–strata condition $G_n\to\infty$ and $\max_{g\le G_n}N_g/n\to0$ holds, and each label $g$ satisfies $T_g(d)\ge2$. Define
\[
\hat\rho_n(d,d)
\ :=\ \frac{1}{n}\sum_{g=1}^{G_n}
N_g\,\frac{2}{T_g(d)\{T_g(d)-1\}}
\sum_{\substack{i<i'\\ i,i'\in\lambda_g}}
Y_i\,Y_{i'}\,\mathbf 1\{D_i=D_{i'}=d\}.
\]
Then
\[
\hat\rho_n(d,d)\ \xrightarrow{P}\ \E\!\big[\Gamma_d^2(X)\big],
\qquad \Gamma_d(x):=\E\!\big[Y^*(d)\mid X=x\big].
\]
\end{lemma}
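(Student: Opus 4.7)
The plan is to mirror the template of Lemma \ref{lem:LB-C2} and decompose
\[
\hat\rho_n(d,d) \;=\; A_n + B_n,
\qquad
B_n := \E_n\!\big[\hat\rho_n(d,d)\,\big|\,X^{(n)},Y^{(n)}(d)\big],
\quad
A_n := \hat\rho_n(d,d) - B_n,
\]
so that $A_n$ captures within-label assignment fluctuations and $B_n$ captures the conditional mean. I would then show $B_n \xrightarrow{P} \E[\Gamma_d^2(X)]$ and $A_n = o_P(1)$.

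For the conditional mean, I would exploit the fact that the normalizer $2/[T_g(d)(T_g(d)-1)]$ is precisely chosen to cancel the without-replacement joint probability. Under Assumption~\ref{ass:TA-LB-assign} with $T_g(d)\ge 2$, for distinct $i,i'\in\lambda_g$,
\[
\Pr_n(D_i=D_{i'}=d \mid X^{(n)}) = \frac{T_g(d)(T_g(d)-1)}{N_g(N_g-1)},
\]
hence
\[
B_n \;=\; \frac{1}{n}\sum_{g=1}^{G_n} N_g\,\frac{2}{N_g(N_g-1)}\sum_{\substack{i<i'\\ i,i'\in\lambda_g}} Y_i^*(d)\,Y_{i'}^*(d).
\]
Taking the superpopulation expectation conditional on $X^{(n)}$ and using Assumption~\ref{ass:TA-LB-PO} (conditional independence of potential outcomes across units given labels), together with the label-constant nature of $X_i=b_g$ so that $\Gamma_d(X_i)=\Gamma_d(X_{i'})=\Gamma_d(b_g)$ for all $i,i'\in\lambda_g$, the double sum per label telescopes to $N_g(N_g-1)\Gamma_d(b_g)^2/2$. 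Aggregating,
\[
\E\!\big[B_n\,\big|\,X^{(n)}\big] \;=\; \frac{1}{n}\sum_{g=1}^{G_n} N_g\,\Gamma_d(b_g)^2 \;=\; \frac{1}{n}\sum_{i=1}^n \Gamma_d(X_i)^2 \;\xrightarrow{P}\; \E[\Gamma_d^2(X)],
\]
by Assumption~\ref{ass:SP-LB} and the moment control in Assumption~\ref{ass:C_mod}(e). A second-moment calculation using the same moment bound and conditional independence of potential outcomes across distinct units then controls $B_n$ around its superpopulation mean.

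For the fluctuation, writing $A_n = n^{-1}\sum_g N_g\,\tilde\rho_g$ with block contributions centered at their conditional means, Assumption~\ref{ass:TA-LB-assign} implies the $\{\tilde\rho_g\}$ are conditionally mean-zero and independent across $g$ given $X^{(n)},Y^{(n)}(d)$. Since $T_g(d)\ge 2$, the normalizer $2/[T_g(d)(T_g(d)-1)]\le 1$ is uniformly bounded, and the within-label products $Y_i Y_{i'}$ have uniformly bounded second moments by Assumption~\ref{ass:C_mod}(e). A label-wise variance bound combined with the many-small-strata condition $\max_g N_g/n\to 0$ then yields $\Var_n(A_n\mid X^{(n)},Y^{(n)}(d))\to 0$, so Chebyshev delivers $A_n = o_P(1)$.

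The main obstacle will be controlling $\Var_n(\tilde\rho_g\mid X^{(n)},Y^{(n)}(d))$ uniformly across $g$: the within-label variance of the (degree-two) U-statistic under simple random sampling without replacement involves several combinatorial covariances among pairs $(i,i')$ and $(j,j')$ that share zero, one, or two indices, each weighted by probabilities of the form $T_g(d)(T_g(d)-1)\cdots/[N_g(N_g-1)\cdots]$. The bookkeeping must show that, after aggregation, $\sum_g N_g^2 \Var_n(\tilde\rho_g)/n^2$ is of order $\max_g N_g/n$ (or smaller), which is where the uniform denominator lower bound $T_g(d)(T_g(d)-1)\ge 2$ and the envelope from Assumption~\ref{ass:C_mod}(e) do the essential work. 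The remainder of the proof is assembly: Slutsky combines $B_n \xrightarrow{P} \E[\Gamma_d^2(X)]$ with $A_n = o_P(1)$ to conclude $\hat\rho_n(d,d)\xrightarrow{P}\E[\Gamma_d^2(X)]$.
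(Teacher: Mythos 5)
Your proposal is correct and follows essentially the same route as the paper: the same pair-inclusion probability $T_g(d)\{T_g(d)-1\}/[N_g(N_g-1)]$ cancels the normalizer, the label-constant $\Gamma_d$ and conditional independence of potential outcomes across units collapse the conditional mean to $n^{-1}\sum_i\Gamma_d^2(X_i)$, and independence across labels plus $\max_g N_g/n\to0$ kills the fluctuation via Chebyshev. The only (harmless) difference is that you condition additionally on $Y^{(n)}(d)$ at the intermediate stage, so you must control two fluctuation terms ($A_n$ and $B_n$ around its superpopulation mean) where the paper bounds the single conditional variance given $X^{(n)}$; both reduce to the same $O(\max_g N_g/n)$ bound under the same moment conditions.
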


\begin{proof}
Work design-based, conditioning on $b^{(n)}$, so labels and $\{X_i\}$ are fixed. Within label $g$, $X_i=b_g$ and hence $\Gamma_d(X_i)=\Gamma_{d,g}$ for all $i\in\lambda_g$. Under \ref{ass:TA-LB-assign} (fixed within-label quotas with simple random sampling without replacement), for $i\neq i'$ in $\lambda_g$,
\[
\Pr_n(D_i=D_{i'}=d\mid X^{(n)})=\frac{T_g(d)\{T_g(d)-1\}}{N_g(N_g-1)}.
\]
By independence of assignments from potential outcomes within labels (\ref{ass:TA-LB-assign}) and by \ref{ass:TA-LB-PO} (potential outcomes are independent across distinct units given labels),
\[
\E\!\big[Y_i^*(d)Y_{i'}^*(d)\mid X^{(n)}\big]
=\Gamma_d(X_i)\,\Gamma_d(X_{i'})=\Gamma_{d,g}^2,
\]
so
\[
\E_n\!\big[Y_i\,Y_{i'}\,\mathbf 1\{D_i=D_{i'}=d\}\mid X^{(n)}\big]
=\Gamma_{d,g}^2\,\frac{T_g(d)\{T_g(d)-1\}}{N_g(N_g-1)}.
\]
Averaging over unordered pairs in $\lambda_g$ with the prefactor $2/\{T_g(d)[T_g(d)-1]\}$ yields
\[
\E_n\!\left[
\frac{2}{T_g(d)\{T_g(d)-1\}}
\sum_{i<i'\in\lambda_g} Y_i Y_{i'} \mathbf 1\{D_i=D_{i'}=d\}
\,\Big|\,X^{(n)}\right]
=\Gamma_{d,g}^2.
\]
Therefore,
\[
\E_n\!\big[\hat\rho_n(d,d)\mid X^{(n)}\big]
=\frac{1}{n}\sum_{g=1}^{G_n} N_g\,\Gamma_{d,g}^2
=\frac{1}{n}\sum_{i=1}^n \Gamma_d^2(X_i)
\ \xrightarrow{P}\ \E\!\big[\Gamma_d^2(X)\big],
\]
by the triangular-array law of large numbers in \ref{ass:SP-LB} under the many–small–strata condition.

To show $\hat\rho_n(d,d)-\E_n[\hat\rho_n(d,d)\mid X^{(n)}]\to^P 0$, note that, conditional on $X^{(n)}$, the label-level terms are independent across $g$ (by \ref{ass:TA-LB-PO} for potential outcomes and \ref{ass:TA-LB-assign} for assignments). Let
\[
U_g
:=\frac{2}{T_g(d)\{T_g(d)-1\}}
\sum_{i<i'\in\lambda_g} Y_i Y_{i'} \mathbf 1\{D_i=D_{i'}=d\}.
\]
Standard variance bounds for order-2 U-statistics under sampling without replacement, together with \ref{ass:C_mod}(e), imply
\[
\E_n\!\big[U_g^2\mid X^{(n)}\big]\ \lesssim\ \frac{1}{T_g(d)}\cdot
\frac{1}{N_g}\sum_{i\in\lambda_g}\E\!\big[Y_i^2(d)\mid X_i\big],
\]
which is uniformly bounded in $n$ since $T_g(d)\ge2$ and $\sup_n n^{-1}\sum_{i=1}^n \E[Y_i^2(d)]<\infty$. Hence
\[
\Var_n\!\big(\hat\rho_n(d,d)\mid X^{(n)}\big)
=\Var_n\!\Big(\frac{1}{n}\sum_{g=1}^{G_n} N_g\,U_g\ \Big|\ X^{(n)}\Big)
\ \lesssim\ \frac{1}{n^2}\sum_{g=1}^{G_n} N_g^2
\ \le\ \frac{\max_g N_g}{n}\ \longrightarrow\ 0,
\]
because $\sum_g N_g=n$ and $\max_g N_g/n\to0$. Chebyshev’s inequality then gives
\(
\hat\rho_n(d,d)-\E_n[\hat\rho_n(d,d)\mid X^{(n)}]\to^P 0.
\)
Combining with the convergence of the conditional mean proves the claim.
\end{proof}

\subsubsection{Consistency of the Label-Based Variance Estimator}

Let
\[
\widehat V_n
\;:=\;
\hat\rho_n(1,1)\;+\;\hat\rho_n(0,0)\;-\;2\,\hat\rho_n(1,0),
\]
where for $d\in\{0,1\}$ and $d'\in\{0,1\}$ the quantities
$\hat\rho_n(d,d')$ are the size–weighted within–label products defined
in Lemma~\ref{lem:LB-C2} (for $d\neq d'$) and Lemma~\ref{lem:C3} (for $d=d$).

\noindent\textbf{Claim.} Under Assumptions~\ref{ass:TA-LB-PO}--\ref{ass:TA-LB-assign}, \ref{ass:SP-LB}, and \ref{ass:C_mod}(e), with $G_n\to\infty$ and $\max_g N_g/n\to0$ and with the within–label quotas as stated, the variance estimator $\widehat V_n$ is consistent for
\begin{align*}
&V\;:=\;\E\!\big[(\Gamma_1(X)-\Gamma_0(X))^2\big]
\;=\;\E\!\big[\Gamma_1^2(X)\big]\;+\;\E\!\big[\Gamma_0^2(X)\big]\;-\;2\,\E\!\big[\Gamma_1(X)\Gamma_0(X)\big], \\
&\Gamma_d(x):=\E\!\big[Y^*(d)\mid X=x\big].
\end{align*}

\begin{proof}[Argument]
By Lemma~\ref{lem:C3} (applied with $d=1$ and $d=0$ respectively),
\[
\hat\rho_n(1,1)\ \xrightarrow{P}\ \E\!\big[\Gamma_1^2(X)\big],
\qquad
\hat\rho_n(0,0)\ \xrightarrow{P}\ \E\!\big[\Gamma_0^2(X)\big].
\]
By Lemma~\ref{lem:LB-C2} (with $d\neq d'$),
\[
\hat\rho_n(1,0)\ \xrightarrow{P}\ \E\!\big[\Gamma_1(X)\Gamma_0(X)\big].
\]
Therefore, by the continuous mapping theorem (linearity and continuity of
$(a,b,c)\mapsto a+b-2c$),
\[
\widehat V_n
=\hat\rho_n(1,1)+\hat\rho_n(0,0)-2\hat\rho_n(1,0)
\ \xrightarrow{P}\
\E\!\big[\Gamma_1^2(X)\big]+\E\!\big[\Gamma_0^2(X)\big]-2\,\E\!\big[\Gamma_1(X)\Gamma_0(X)\big]
=V.
\]
Hence $\widehat V_n$ is a consistent estimator of $V$.
\end{proof}

\noindent\textbf{Remark.} In our asymptotic distribution result, $V$ is the population limit of the variance component driving the leading term of the estimator’s influence function (the “between–label” variation in $\Gamma_1(X)-\Gamma_0(X)$). The proofs of Lemmas~\ref{lem:LB-C2} and \ref{lem:C3} verify that each plug–in building block converges to its corresponding population moment under the many–small–strata regime and fixed within–label quotas, which is exactly what is needed for the consistency of $\widehat V_n$.

\subsection{Equivalence of few large labels and many small labels}
\label{app:few-vs-many}

Fix a label $\lambda_g$ of size $N_g$ with target $T_g$ treated units. Let $R_g\ge1$ and let $(n_{g,1},\dots,n_{g,R_g})$ be positive integers that sum to $N_g$. Partition $\lambda_g$ into ordered subblocks $B_{g,1},\dots,B_{g,R_g}$ with $|B_{g,r}|=n_{g,r}$. Draw the ordered partition $\Pi_g=\{B_{g,1},\dots,B_{g,R_g}\}$ uniformly over all ordered partitions with that size profile. Draw the partitions independently across labels. The draws are ancillary and independent of the full collection of potential variables $(Y^*(1),Y^*(0),S(1),S(0),X)$. Conditional on $\Pi_g$, draw a quota vector $t_{g,\cdot}=(t_{g,1},\dots,t_{g,R_g})$ with $\sum_r t_{g,r}=T_g$ from the multivariate hypergeometric law
\[
\Pr\!\big(t_{g,1},\dots,t_{g,R_g}\mid \Pi_g\big)
=\frac{\prod_{r=1}^{R_g} {\,n_{g,r}\,\choose\, t_{g,r}\,}}{{\,N_g\,\choose\, T_g\,}}.
\]
Within each $B_{g,r}$ assign $t_{g,r}$ treated units uniformly without replacement. Do this independently across labels. All statements below condition on realized labels $b^{(n)}$.

\begin{proposition}[Equivalence of assignment laws]\label{prop:equivalence}
For any assignment vector $d_g\in\{0,1\}^{N_g}$ with $\sum_{i\in\lambda_g} d_{g,i}=T_g$,
\[
\Pr(D_g=d_g)=\frac{1}{{\,N_g\,\choose\, T_g\,}}.
\]
Hence the induced assignment on $\{0,1\}^{N_g}$ equals the uniform $T_g$ of $N_g$ law, and the joint assignment across labels matches the baseline blockwise design.
\end{proposition}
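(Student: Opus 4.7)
}

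The plan is to condition on the realized ordered partition $\Pi_g$ and the quota vector $t_{g,\cdot}$, exploit the uniform within-subblock assignment, and then sum out the quotas, observing that a pleasant telescoping occurs. The result is an exact finite-sample identity, so there is no asymptotics involved and no probabilistic approximation.

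First, I would fix a target assignment $d_g\in\{0,1\}^{N_g}$ with $\sum_{i\in\lambda_g} d_{g,i}=T_g$. Conditional on $\Pi_g=\{B_{g,1},\dots,B_{g,R_g}\}$, define the \emph{induced quotas}
\[
t^{\ast}_{g,r}(d_g,\Pi_g) \;:=\; \sum_{i\in B_{g,r}} d_{g,i},\qquad r=1,\dots,R_g,
\]
which satisfy $\sum_r t^{\ast}_{g,r}=T_g$ by construction. Under the described assignment mechanism, the event $\{D_g=d_g\}$ is only compatible with the quota vector $t_{g,\cdot}=t^{\ast}_{g,\cdot}(d_g,\Pi_g)$, since every other quota choice places different totals in some subblock. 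Thus,
\[
\Pr\!\big(D_g=d_g\,\bigm|\,\Pi_g\big)
\;=\;\Pr\!\big(t_{g,\cdot}=t^{\ast}_{g,\cdot}\bigm|\Pi_g\big)\,\Pr\!\big(D_g=d_g\bigm|\Pi_g,t_{g,\cdot}=t^{\ast}_{g,\cdot}\big).
\]

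Next, I would evaluate the two factors. The first is the multivariate hypergeometric mass,
\[
\Pr\!\big(t_{g,\cdot}=t^{\ast}_{g,\cdot}\bigm|\Pi_g\big)
\;=\;
\frac{\prod_{r=1}^{R_g}\binom{n_{g,r}}{t^{\ast}_{g,r}}}{\binom{N_g}{T_g}}.
\]
For the second, within each $B_{g,r}$ the assignment is uniform over subsets of size $t^{\ast}_{g,r}$, and subblocks are independent, so
\[
\Pr\!\big(D_g=d_g\bigm|\Pi_g,t_{g,\cdot}=t^{\ast}_{g,\cdot}\big)
\;=\;\prod_{r=1}^{R_g}\frac{1}{\binom{n_{g,r}}{t^{\ast}_{g,r}}}.
\]
Multiplying the two expressions produces an exact cancellation of the subblock binomials, yielding
\[
\Pr\!\big(D_g=d_g\,\bigm|\,\Pi_g\big)\;=\;\frac{1}{\binom{N_g}{T_g}}.
\]
Because this conditional probability does not depend on the realized partition $\Pi_g$, marginalizing over the distribution of $\Pi_g$ gives the unconditional probability $1/\binom{N_g}{T_g}$, which is the uniform $T_g$-of-$N_g$ law. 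Finally, since the partitions, quotas, and within-subblock assignments are drawn independently across labels by construction, the joint law of $(D_1,\dots,D_{G_n})$ factors into a product of such uniform laws, matching the baseline blockwise design in Assumption~\ref{ass:TA-LB-assign}.

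The main obstacle is essentially bookkeeping rather than anything substantive: one must verify that the only quota vector giving positive mass to $\{D_g=d_g\}$ is the induced vector $t^{\ast}_{g,\cdot}(d_g,\Pi_g)$, which is immediate from the deterministic relation between $d_g$ and the subblock totals. Once this is noted, the cancellation between the hypergeometric numerator and the product of subblock uniform denominators is the whole content of the proof, and the independence across labels transfers directly from the assumed independence of the construction.
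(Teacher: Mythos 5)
Your proof is correct and follows essentially the same route as the paper's: condition on $\Pi_g$, note that only the induced quota vector $t^\ast_{g,\cdot}(d_g,\Pi_g)$ is compatible with $\{D_g=d_g\}$, multiply the multivariate hypergeometric mass by the product of within-subblock uniform probabilities, and cancel to obtain $1/\binom{N_g}{T_g}$ independently of $\Pi_g$. The paper writes the same computation as a sum over quota vectors with an indicator, which collapses to exactly your two-factor product, so there is no substantive difference.
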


\begin{proof}
Fix $\Pi_g$. For a given $d_g$, define $t_{g,r}(d_g,\Pi_g)=\sum_{i\in B_{g,r}} d_{g,i}$. Then
\[
\Pr(D_g=d_g\mid \Pi_g)
=\sum_{t_{g,\cdot}}\Pr(t_{g,\cdot}\mid \Pi_g)\,
\prod_{r=1}^{R_g}\frac{\mathbf 1\{t_{g,r}=t_{g,r}(d_g,\Pi_g)\}}{{\,n_{g,r}\,\choose\, t_{g,r}\,}}
=\frac{1}{{\,N_g\,\choose\, T_g\,}}.
\]
The same value holds unconditionally. Independence across labels holds by construction. This is the uniform $T_g$ of $N_g$ assignment.
\end{proof}

The equivalence of laws implies that for any statistic $\mathcal T=\mathcal T(\{(X_i,D_i,Y_i)\}_{i\le n})$ that is measurable in the data, the finite-sample distribution under the subblock scheme is identical to that under the original uniform $T_g$ of $N_g$ design. Linearizations, influence functions, and design variances coincide. The fast-balancing term equals zero because within each label exactly $T_g$ units are treated and $\sum_{i\in\lambda_g}(D_i-\eta_g)=0$ with $\eta_g=T_g/N_g$.

We now transfer the many small labels asymptotics to the few large labels regime using the random partition as a proof device. Consider a fixed label $g$ with $N_g\to\infty$. Choose size profiles so that $R_g\to\infty$, $\max_r n_{g,r}/N_g\to0$, and $\min_r n_{g,r}\to\infty$ as $N_g\to\infty$. Maintain treatment shares $\eta_g=T_g/N_g$ uniformly bounded away from $0$ and $1$. Consider the array of subblocks $\{B_{g,r}\}$ across all labels when $n=\sum_g N_g$ grows.

\begin{theorem}[Transfer from many small to few large]\label{thm:transfer}
Work under the label-based triangular array with independence across units given $b^{(n)}$ and with within-label uniform assignment with fixed totals. Let $\hat\theta_n$ be a method-of-moments estimator that admits the linearization and central limit theorem proved for the many small labels regime in the main text, with influence function $\psi_n^\ast$ and asymptotic variance $V_\ast$. Suppose $G$ is fixed and $N_g\to\infty$ for each $g$, and choose the random partitions and quotas in each label as above with $R_g\to\infty$, $\max_r n_{g,r}/N_g\to0$, and $\min_r n_{g,r}\to\infty$. Then the array of subblocks satisfies $\sum_g R_g\to\infty$ and $\max_{g,r} n_{g,r}/n\to0$. Conditional on $(\Pi_g,t_{g,\cdot})$ assignments are independent across subblocks, and within each subblock treatment is uniform without replacement. Conditional on $(\Pi_g,t_{g,\cdot})$ and $b^{(n)}$ the subblocks satisfy the same label-wise i.i.d.\ triangular array structure and the same fixed-total assignment as in the many small results. The many small central limit theorem applies to $\hat\theta_n$ with the same influence function $\psi_n^\ast$ and the same variance $V_\ast$. Removing the conditioning leaves the limit unchanged because the unconditional assignment law on the original units is identical to the baseline design and $(\Pi_g,t_{g,\cdot})$ are ancillary and independent of the potential variables. The same central limit theorem holds in the few large labels regime.
\end{theorem}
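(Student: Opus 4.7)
The plan is to use the random subblock scheme as a proof device and then invoke Theorem \ref{thm:LB-AN}. By Proposition \ref{prop:equivalence} applied label by label, the assignment on the original units induced by the subblock scheme coincides with the baseline uniform $T_g$ of $N_g$ design, and independence across labels is preserved. Because the partitions $(\Pi_g,t_{g,\cdot})$ are ancillary and independent of $(Y^*(1),Y^*(0),S(1),S(0),X)$, the joint law of $\{(X_i,D_i,Y_i)\}_{i\le n}$ under the subblock scheme equals its law under the baseline few-large-labels design. It therefore suffices to establish the CLT for $\hat\theta_n$ under the subblock scheme.

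First I would verify the many-small-labels conditions for the subblock array $\{(g,r): 1\le g\le G,\ 1\le r\le R_g\}$, treating $(g,r)$ as the effective label, conditional on $(\Pi_g,t_{g,\cdot})$. Assumption \ref{ass:TA-LB-PO} transfers because the partition is independent of potential variables, so within each subblock $B_{g,r}$ the potential-outcome vectors remain i.i.d.\ with the label-$g$ law $P_{g,n}$, and independence across subblocks follows from independence across the original labels. Assumption \ref{ass:TA-LB-assign} holds conditional on $t_{g,\cdot}$ with subblock share $\eta_{g,r}^{\mathrm{sub}}:=t_{g,r}/n_{g,r}$, which lies in a compact subinterval of $(0,1)$ with probability tending to one by hypergeometric concentration of $t_{g,r}/n_{g,r}$ around $\eta_g$ together with $\min_r n_{g,r}\to\infty$. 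The growth condition is built into the size-profile choice since $\sum_g R_g\to\infty$ and $\max_{g,r} n_{g,r}/n\to 0$, and Assumption \ref{ass:C_mod} is unchanged because it depends only on the marginal law of $(X,D,Y)$, which is preserved.

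Next I would apply Theorem \ref{thm:LB-AN} conditional on $(\Pi_g,t_{g,\cdot})$ to obtain
\[
\sqrt{n}\,(\hat\theta_n-\theta_0)\ \dto\ \mathcal{N}\!\left(0,\,V_\ast^{\mathrm{sub}}\right)
\qquad\text{conditional on }(\Pi_g,t_{g,\cdot}),
\]
and then show that $V_\ast^{\mathrm{sub}}=V_\ast$ for every realization of the random quotas. Because the subblock partition is independent of potential outcomes, the conditional means $\mu_d((g,r))$ and conditional variances of $m(X,d,Y^*(d);\theta_0)$ under $P_{g,n}$ depend only on $g$. Using the telescoping identities $\sum_r n_{g,r}\,\eta_{g,r}^{\mathrm{sub}}=T_g$ and $\sum_r n_{g,r}(1-\eta_{g,r}^{\mathrm{sub}})=N_g-T_g$, the within-subblock variance component in $V_\ast^{\mathrm{sub}}$ collapses to the label-level expression underlying $V_\ast$, and the fast-balancing term vanishes at both the subblock and label levels because each admits fixed quotas. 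Hence $V_\ast^{\mathrm{sub}}=V_\ast$ holds deterministically given $(\Pi_g,t_{g,\cdot})$.

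Finally, since the conditional limit does not depend on the partition, a dominated-convergence argument over $(\Pi_g,t_{g,\cdot})$ applied to characteristic functions converts the conditional CLT into an unconditional CLT under the subblock scheme, and Proposition \ref{prop:equivalence} transfers it to the baseline few-large-labels regime. The main obstacle is the variance reconciliation step: showing that $V_\ast^{\mathrm{sub}}$ does not depend on the random subblock quotas. This rests on two facts that must be tracked through each component of the sandwich variance, namely the ancillarity of the partition, which forces conditional moments of potential outcomes to be label-measurable, and the exact aggregation identity for the subblock shares, which follows from fixing $T_g$ within each label. A subsidiary technicality is confirming that $t_{g,r}\in\{1,\dots,n_{g,r}-1\}$ uniformly in $(g,r)$ with probability tending to one, so that the conditional application of Assumption \ref{ass:TA-LB-assign} is legitimate; this follows from hypergeometric concentration combined with $\min_r n_{g,r}\to\infty$.
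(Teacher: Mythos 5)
Your proposal is correct and follows essentially the same route as the paper's proof: condition on the ancillary partition and quotas, verify the many-small-strata assumptions for the subblock array, invoke the many-small CLT, and decondition using Proposition~\ref{prop:equivalence} and the independence of $(\Pi_g,t_{g,\cdot})$ from the potential variables. Your explicit variance reconciliation via the telescoping identity $\sum_r n_{g,r}\,\eta_{g,r}^{\mathrm{sub}}=T_g$ (together with label-measurability of the conditional moments) and the hypergeometric-concentration check on the subblock shares spell out details the paper compresses into the assertion that the linearization and variance decomposition depend only on fixed within-stratum totals and independence across strata.
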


\begin{proof}
Condition on $(\Pi_g,t_{g,\cdot})$ and on $b^{(n)}$. The subblocks $B_{g,r}$ are disjoint sets of units. Potential variables are independent across units given $b^{(n)}$. Assignments are independent across subblocks by construction, and within each subblock they are uniform without replacement with total $t_{g,r}$. The growth conditions $\sum_g R_g\to\infty$ and $\max_{g,r} n_{g,r}/n\to0$ follow from $R_g\to\infty$ and $\max_r n_{g,r}/N_g\to0$ with $G$ fixed and $n=\sum_g N_g$. Since $\min_r n_{g,r}\to\infty$ and $\eta_g$ is bounded away from $0$ and $1$, both $t_{g,r}$ and $n_{g,r}-t_{g,r}$ are at least $2$ with probability approaching one. The feasible variance estimators that require at least two per arm in each stratum are therefore well defined with probability approaching one. The many small labels central limit theorem from the main text applies under these conditional laws and yields the same influence function and the same variance $V_\ast$ because the linearization and the variance decomposition depend only on within-stratum uniform assignment with fixed totals and on independence across strata. Finally, remove the conditioning. Proposition~\ref{prop:equivalence} implies the unconditional assignment law on the original units coincides with the baseline uniform $T_g$ of $N_g$ law. The partitions and quotas are ancillary and independent of the potential variables, hence averaging over them does not alter the limit.
\end{proof}

The partition is a proof device rather than a change in implementation. In applications keep the actual labels and use the design-consistent plug-in variance based on within-label moments and the known shares. The inference is valid when the design features few large labels that grow within label and it is valid when the design features many small labels that grow in number.

\newpage

\singlespace

\putbib[bib]

\end{bibunit}

\pagebreak


\begin{thebibliography}{}

\bibitem[\protect\citeauthoryear{Andrews}{Andrews}{1992}]{andrews1992generic}
Andrews, D.~W. (1992).
\newblock Generic uniform convergence.
\newblock {\em Econometric theory\/}~{\em 8\/}(2), 241--257.

\bibitem[\protect\citeauthoryear{Angrist, Bettinger, and Kremer}{Angrist et~al.}{2006}]{angrist2006long}
Angrist, J., E.~Bettinger, and M.~Kremer (2006).
\newblock Long-term educational consequences of secondary school vouchers: Evidence from administrative records in colombia.
\newblock {\em American economic review\/}~{\em 96\/}(3), 847--862.

\bibitem[\protect\citeauthoryear{Attanasio, Kugler, and Meghir}{Attanasio et~al.}{2011}]{attanasio2011subsidizing}
Attanasio, O., A.~Kugler, and C.~Meghir (2011).
\newblock Subsidizing vocational training for disadvantaged youth in colombia: Evidence from a randomized trial.
\newblock {\em American Economic Journal: Applied Economics\/}~{\em 3\/}(3), 188--220.

\bibitem[\protect\citeauthoryear{Bai, Liu, Shaikh, and Tabord-Meehan}{Bai et~al.}{2025}]{bai2023efficiency}
Bai, Y., J.~Liu, A.~M. Shaikh, and M.~Tabord-Meehan (2025).
\newblock On the efficiency of finely stratified experiments.
\newblock {\em arXiv preprint arXiv:2307.15181\/}.

\bibitem[\protect\citeauthoryear{Bai, Liu, and Tabord-Meehan}{Bai et~al.}{2024}]{bai2024inference}
Bai, Y., J.~Liu, and M.~Tabord-Meehan (2024).
\newblock Inference for matched tuples and fully blocked factorial designs.
\newblock {\em Quantitative Economics\/}~{\em 15\/}(2), 279--330.

\bibitem[\protect\citeauthoryear{Bai, Romano, and Shaikh}{Bai et~al.}{2022}]{bai2022inference}
Bai, Y., J.~P. Romano, and A.~M. Shaikh (2022).
\newblock Inference in experiments with matched pairs.
\newblock {\em Journal of the American Statistical Association\/}~{\em 117\/}(540), 1726--1737.

\bibitem[\protect\citeauthoryear{Bruhn and McKenzie}{Bruhn and McKenzie}{2009}]{bruhn2009pursuit}
Bruhn, M. and D.~McKenzie (2009).
\newblock In pursuit of balance: Randomization in practice in development field experiments.
\newblock {\em American economic journal: applied economics\/}~{\em 1\/}(4), 200--232.

\bibitem[\protect\citeauthoryear{Duflo, Glennerster, and Kremer}{Duflo et~al.}{2007}]{duflo2007using}
Duflo, E., R.~Glennerster, and M.~Kremer (2007).
\newblock Using randomization in development economics research: A toolkit.
\newblock {\em Handbook of development economics\/}~{\em 4}, 3895--3962.

\bibitem[\protect\citeauthoryear{H{\'a}jek}{H{\'a}jek}{1960}]{hajek1960limiting}
H{\'a}jek, J. (1960).
\newblock Limiting distributions in simple random sampling from a finite population.
\newblock {\em Publications of the Mathematical Institute of the Hungarian Academy of Sciences\/}~{\em 5}, 361--374.

\bibitem[\protect\citeauthoryear{Han and Wellner}{Han and Wellner}{2021}]{han2021complex}
Han, Q. and J.~A. Wellner (2021).
\newblock Complex sampling designs.
\newblock {\em The Annals of Statistics\/}~{\em 49\/}(1), 459--485.

\bibitem[\protect\citeauthoryear{Horowitz and Manski}{Horowitz and Manski}{2000}]{horowitz2000nonparametric}
Horowitz, J.~L. and C.~F. Manski (2000).
\newblock Nonparametric analysis of randomized experiments with missing covariate and outcome data.
\newblock {\em Journal of the American statistical Association\/}~{\em 95\/}(449), 77--84.

\bibitem[\protect\citeauthoryear{Kline and Tamer}{Kline and Tamer}{2023}]{kline2023recent}
Kline, B. and E.~Tamer (2023).
\newblock Recent developments in partial identification.
\newblock {\em Annual Review of Economics\/}~{\em 15\/}(1), 125--150.

\bibitem[\protect\citeauthoryear{Kosorok}{Kosorok}{2008}]{kosorok2008introduction}
Kosorok, M.~R. (2008).
\newblock {\em Introduction to empirical processes and semiparametric inference}.
\newblock Springer.

\bibitem[\protect\citeauthoryear{Krueger and Whitmore}{Krueger and Whitmore}{2001}]{krueger2001effect}
Krueger, A.~B. and D.~M. Whitmore (2001).
\newblock The effect of attending a small class in the early grades on college-test taking and middle school test results: Evidence from project star.
\newblock {\em The Economic Journal\/}~{\em 111\/}(468), 1--28.

\bibitem[\protect\citeauthoryear{Le~Gall}{Le~Gall}{2022}]{le2022measure}
Le~Gall, J.-F. (2022).
\newblock {\em Measure theory, probability, and stochastic processes}.
\newblock Springer.

\bibitem[\protect\citeauthoryear{Lee}{Lee}{2009}]{lee2009training}
Lee, D.~S. (2009).
\newblock Training, wages, and sample selection: Estimating sharp bounds on treatment effects.
\newblock {\em The Review of Economic Studies\/}, 1071--1102.

\bibitem[\protect\citeauthoryear{Manski}{Manski}{2003}]{manski2003partial}
Manski, C.~F. (2003).
\newblock {\em Partial identification of probability distributions}.
\newblock Springer.

\bibitem[\protect\citeauthoryear{Newey and McFadden}{Newey and McFadden}{1994}]{newey1994large}
Newey, W.~K. and D.~McFadden (1994).
\newblock Large sample estimation and hypothesis testing.
\newblock {\em Handbook of econometrics\/}~{\em 4}, 2111--2245.

\bibitem[\protect\citeauthoryear{Semenova}{Semenova}{2023}]{semenova2023debiased}
Semenova, V. (2023).
\newblock Debiased machine learning of set-identified linear models.
\newblock {\em Journal of Econometrics\/}~{\em 235\/}(2), 1725--1746.

\bibitem[\protect\citeauthoryear{Semenova}{Semenova}{2025a}]{semenova2025aggint}
Semenova, V. (2025a).
\newblock Debiased machine learning of aggregated intersection bounds and other causal parameters.
\newblock arXiv:2303.00982.
\newblock version 3, May 2025.

\bibitem[\protect\citeauthoryear{Semenova}{Semenova}{2025b}]{semenova2025generalized}
Semenova, V. (2025b).
\newblock Generalized lee bounds.
\newblock {\em Journal of Econometrics\/}~{\em 251}, 106055.

\bibitem[\protect\citeauthoryear{Tamer}{Tamer}{2010}]{tamer2010partial}
Tamer, E. (2010).
\newblock Partial identification in econometrics.
\newblock {\em Annu. Rev. Econ.\/}~{\em 2\/}(1), 167--195.

\bibitem[\protect\citeauthoryear{Van~der Vaart}{Van~der Vaart}{1998}]{van2000asymptotic}
Van~der Vaart, A.~W. (1998).
\newblock {\em Asymptotic statistics}, Volume~3.
\newblock Cambridge university press.

\bibitem[\protect\citeauthoryear{van~der Vaart and Wellner}{van~der Vaart and Wellner}{1996}]{vaart1997weak}
van~der Vaart, A.~W. and J.~A. Wellner (1996).
\newblock {\em Weak Convergence and Empirical Processes: With Applications to Statistics}.
\newblock Springer Series in Statistics. New York: Springer-Verlag.

\end{thebibliography}
\end{document}